\numberwithin{equation}{section}
\renewcommand{\P}{\mathbb{P}}
\newcommand{\N}{\mathbb{N}}
\newcommand{\M}{\mathcal{M}}
\newcommand{\C}{\mathbb{C}}
\newcommand{\U}{\mathcal{U}}
\newcommand{\iy}{\infty}
\newcommand{\K}{\mathcal{K}}
\newcommand{\R}{\mathbb{R}}
\newcommand{\E}{\mathbb{E}}
\newcommand{\D}{\mathcal{D}}
\newcommand{\norm}[1]{\left\Vert #1\right\Vert}
\newcommand{\normt}[1]{\left\Vert #1\right\Vert_{(t)}}
\newcommand{\scalar}[2]{\langle #1 , #2\rangle}
\newcommand{\ket}[1]{| #1 \rangle}
\newcommand{\bra}[1]{\langle #1 |}
\renewcommand{\H}{\mathcal{H}}
\newcommand{\ol}{\overline}
\DeclareMathOperator{\Gr}{Gr}
\DeclareMathOperator{\trace}{Tr}
\DeclareMathOperator{\I}{I}
\DeclareMathOperator{\id}{id}
\DeclareMathOperator{\Wg}{Wg}
\DeclareMathOperator{\Mob}{Mob}
\DeclareMathOperator{\Rem}{Rem}
\newtheorem{theorem}{Theorem}[section]
\newtheorem{definition}[theorem]{Definition}
\newtheorem{proposition}[theorem]{Proposition}
\newtheorem{corollary}[theorem]{Corollary}
\newtheorem{lemma}[theorem]{Lemma}
\newtheorem{remark}[theorem]{Remark}
\newtheorem{question}[theorem]{Question}
\begin{document}

\date{\today}

\title{Random matrix techniques in quantum information theory}

\author {Beno{\^i}t Collins}
\address{BC: 
Department of Mathematics, 
Kyoto University, Kyoto 606-8502 JAPAN
and
D\'epartement de Math\'ematique et Statistique, Universit\'e d'Ottawa,
585 King Edward, Ottawa, ON, K1N6N5 Canada, and CNRS} 
\email{collins@math.kyoto-u.ac.jp}

\author{Ion Nechita}
\address{IN: Zentrum Mathematik, M5, Technische Universit\"at M\"unchen, Boltzmannstrasse 3, 85748 Garching, Germany
 and CNRS, Laboratoire de Physique Th\'{e}orique, IRSAMC, Universit\'{e} de Toulouse, UPS, F-31062 Toulouse, France}
\email{nechita@irsamc.ups-tlse.fr}

\subjclass[2000]{}
\keywords{}

\begin{abstract}
The purpose of this review article is to present some of the latest
developments using random techniques, and in particular,
random matrix techniques in quantum information theory.
Our review is a blend of a rather exhaustive review, combined
with more detailed examples -- coming from research projects
in which the authors were involved. 
We focus on two main topics, random quantum states and random quantum channels. We present results related to entropic quantities, entanglement of typical states, entanglement thresholds, the output set of quantum channels, and violations of the minimum output entropy of random channels.  
\end{abstract}

\maketitle

\tableofcontents

\section{Introduction}

Quantum computing was initiated in 1981 by Richard Feynman at a conference on physics and computation at the MIT, where he 
asked whether one can simulate physics on a computer. 
As for quantum information theory, it is somehow the backbone
of quantum computing, although it emerged independently in the 
60s and 70s (with, among others, Bell inequalities). In the last 20-30 years, it has witnessed a very fast  development, and it is now a major scientific field of its own.

In classical information theory, probabilistic methods have
been at the heart of the theory from its inception \cite{shannon1948mathematical}. 
In contrast, 
probabilistic methods have arguably played a less important role
in the infancy of quantum information theory -- although probability
theory itself was cast at the heart of the postulates
of quantum mechanics (Bell inequalities themselves are a probabilistic
statement). 
However, the situation has dramatically changed in the last 10-15 years,
and probabilistic techniques have nowadays proven to be very useful
in quantum information theory. Quite often, these probabilistic 
techniques are very closely related to problems in random matrix theory. 

Let us explain heuristically why random matrix theory is a natural 
tool for quantum information theory, by comparing with the use
of elementary probability in the concept of classical information. 
In classical information, the first -- and arguably one of the greatest --
success of the theory was to compute the relative volume of a typical set
with Shannon's entropy function. 
Here, the main probabilistic tool was the central limit theorem 
(more precisely, an exponential version thereof). 
The central limit theorem is a tool that is very well adapted 
to the study of product measures on the product of (finite) sets. 
In quantum information, sets and probability measures on theses
sets, as well as measurements, are all replaced by matrices, and
their non-commutative structure is central to quantum information
theory.
In addition, the use of `probabilistic techniques' in classical information
theory is not a goal \emph{per se}. It is a convenient mathematical tool to 
prove existence theorems, for which non-random proofs are much more
difficult to achieve. 
Incidentally, it is rather natural to wonder what a `typical' set looks like.

The situation is quite similar in quantum information theory: 
although there was no a priori need for random techniques, some
problems -- in particular the minimum output entropy additivity 
problem, which we discuss at length in this review -- did not have
an obvious non-random answer, therefore it became not only
natural, but also important, to consider how `typical' quantum objects
behave. This was arguably initiated by the paper
\cite{hayden2004randomizing}.
The results obtained in the first papers were of striking importance
in quantum information theory, and they pointed at the fact that
well established mathematical tools could be expected
to be very useful tools to solve problems in quantum information theory. 
All these problems were naturally linked with probability measures
on matrix spaces. The first tool that was recognized to play an 
important role was concentration of measure. 
But more generally, all techniques were 
connected to the theory of  random matrices.
Random matrix theory relies on a wide range of technical tools, and
concentration of measure is one of them among others. 

The first results using random techniques in quantum information theory
attracted a few mathematicians -- including the authors of this
review, see also the bibliography -- who undertook to 
apply systematically the state of the art in random matrix theory
in order to study questions in quantum information theory. 

Random matrix theory itself has a long history.
Although it is considered as a field of mathematics (or mathematical
physics), it was not born in mathematics, but rather in statistics and physics.
Wishart introduced the distribution that bears his name in the 1920's,
in order to explain the discrepancy between the eigenvalues of a measured
covariance matrix, and an expected covariance matrix. 
And Wigner had motivations from quantum
physics when he introduced the semi-circle distribution.
Since then, random matrix theory has played a role in many fields
of mathematics and science, including: 
\begin{itemize}
\item
Theoretical physics \cite{witten1986physics,kontsevich1994homological,mehta2004random}
\item
Combinatorics and algebraic geometry 
\cite{harer1986euler,okounkov2004random}
\item
Integrable systems and PDE 
\cite{tracy2002distribution}
\item
Complex analysis and Riemann-Hilbert problems 
\cite{kuijlaars2004riemann}
\item
Operator algebras \cite{voiculescu1992free}
\item
Telecommunication \cite{tulino2004random}
\item
Finance \cite{bouchaud2003theory}
\item
Number theory 
\cite{katz1999random}
\end{itemize}

The above list does certainly not exhaust the list of fields of application of random matrix theory, 
but quantum information theory is definitely one of the most
recent of them (and a very natural one, too!). 

Our goal in this paper is to provide an overview of a few important uses
that random matrix theory had in quantum information theory. 
Instead of being exhaustive, we chose to pick a few topics that look
important to us, and hopefully emblematic of the roles that 
random matrices could play in the future in QIT. Obviously,
our choices are biased by our own experience.

This paper is organized as follows
Section \ref{sec:background} provides some mathematical notation for quantum information. 
It is followed by Section \ref{sec:RMT-FPT}, that supplies background for random matrix and free probability theory. 
The remaining sections are a selection of applications of random matrix theory to quantum information, namely:
\ref{sec:entanglement}: Entanglement of random quantum states, \ref{sec:single-output}: properties of output of deterministic states (of interest) under quantum channels, \ref{sec:random-output}: study of \emph{all} outputs under specific random quantum channels, 
\ref{sec:additivity}: the solution to the MOE additivity problems, and finally \ref{sec:other}: other applications of RMT in quantum physics, and \ref{sec:questions}: a selection of open questions.

\section{Background on quantum information: quantum states and channels}\label{sec:background}

\subsection{Quantum states}

We denote the set of $d$-dimensional mixed quantum states (or density matrices) by $\mathcal D_d$:
\begin{equation}\label{eq:density-matrices}
\mathcal D_d := \{ \rho \in M_d^{sa}(\mathbb C) \, : \, \rho \geq 0 \text{ and } \mathrm{Tr}(\rho) = 1 \}.
\end{equation}

The collection of density matrices $\mathcal D_d$ is naturally associated to the Hilbert
space $\C^d$. One of the fundamental postulates of quantum mechanics is that 
two disjoint systems can be studied together by taking their Hilbert tensor product. 
For example, a system with state space $\C^{d_1}$ and another one with
state space $\C^{d_2}$ are studied together with the state space
$\C^{d_1}\otimes\C^{d_1}$.
In particular, the density matrices have the structure of $\mathcal D_{d_1d_2}$.
Of particular interest is the subset 
$$\mathcal{SEP}_{d_1,d_2} := \mathrm{conv}(\mathcal D_{d_1}\otimes \mathcal D_{d_2}) \subseteq \mathcal D_{d_1d_2}.$$
This convex compact subset is interpreted as the collection of all `classical' density matrices on the 
bipartite state. Unless $d_1=1$ or $d_2=1$ it is a strict subset of $\mathcal D_{d_1d_2}$.
It is called the collection of \emph{separable states}. States which are not
separable are called \emph{entangled}.
The study of entangled states is one of the cornerstones of quantum information theory. As a first (and paramount) example of entangled state, consider the qubit Hilbert space $\mathbb C^2$ endowed with an orthonormal basis $\{e_1,e_2\}$, and the state
$$\mathcal D_{4} \ni\Omega_2 = \omega_2 \omega_2^*, \quad \text{ where } \quad \omega_2 = \frac{1}{\sqrt 2}(e_1 \otimes e_1 + e_2 \otimes e_2),$$
called the maximally entangled qubit state. More generally, the maximally entangled state of two qudits is
\begin{equation}\label{eq:maximally-entangled-state}
\mathcal D_{d^2} \ni \Omega_d = \omega_d \omega_d^*, \quad \text{ where } \quad \omega_d = \frac{1}{\sqrt d}\sum_{i=1}^d e_i \otimes e_i.
\end{equation}

\subsection{Entropies}

As in classical information theory \cite{shannon1948mathematical}, entropic quantities play a very important role in quantum information theory. We define next the quantities of interest for the current work. 
Let $\Delta_k = \{x \in \R_+^k \, | \, \sum_{i=1}^k x_i = 1\}$ be the $(k-1)$-dimensional 
probability simplex. For a positive real number $p>0$, define the \emph{R\'enyi entropy of order $p$} of a probability vector $x \in \Delta_k$ to be
\[H^p(x) = \frac{1}{1-p}\log\sum_{i=1}^k x_i^p.\]
Since $\lim_{p \to 1} H^p(x)$ exists, we define the \emph{Shannon entropy} of $x$ to be
this limit, namely:
\[H(x) = H^1(x) = -\sum_{i=1}^k x_i \log x_i.\]
We extend these definitions to density matrices by functional calculus:
\begin{align*}
H^p(\rho) &= \frac{1}{1-p}\log \trace \rho^p;\\
H(\rho) &= H^1(\rho) = -\trace \rho \log \rho.
\end{align*}

\subsection{Quantum Channels}

In Quantum Information Theory, a \emph{quantum channel} is the most general transformation 
of a quantum system. Quantum channels generalize the unitary evolution of isolated quantum systems 
to \emph{open quantum systems}. Mathematically, we recall that a quantum channel is a linear 
completely positive trace preserving map $\Phi$ from $\M_n(\C)$ to itself. 
The trace preservation condition is necessary since quantum channels should map density matrices 
to density matrices. The complete positivity condition can be stated as
\[ \forall d \geq 1, \quad \Phi \otimes \I_d : \M_{nd}(\C) \to \M_{nd}(\C) \text{ is a positive map.}\]

The following two characterizations of quantum channels turn out to be very useful.

\begin{proposition}\label{prop:stinespring_kraus}
A linear map $\Phi : \M_n(\C) \to \M_n(\C)$ is a quantum channel if and only if one of the following two equivalent conditions holds.
\begin{enumerate}
\item \emph{(Stinespring dilation)} There exists a finite dimensional Hilbert space $\K = \C^{d}$, a density matrix $Y \in \M_{d}(\C)$ and an unitary operator $U \in \U(nd)$ such that
\begin{equation}\label{eq:Stinespring_form}
\Phi(X) = \trace_\K\left[ U(X \otimes Y) U^* \right], \quad \forall X \in \M_n(\C).
\end{equation}
\item \emph{(Kraus decomposition)} There exists an integer $k$ and matrices $L_1, \ldots, L_k \in \M_n(\C)$ such that
\[\Phi(X) = \sum_{i=1}^{k} L_i X L_i^* ,\quad \forall X \in \M_n(\C).\]

and
\[\sum_{i=1}^{k} L_i^* L_i = \I_n.\]
\item \emph{(Choi matrix)} The following matrix, called the \emph{Choi matrix} of $\Phi$
\begin{equation}\label{eq:def-Choi-matrix}
\mathcal M_{n^2}(\mathbb C) \ni C_\Phi = [\mathrm{id} \otimes \Phi](\Omega_d) = \sum_{i,j=1}^n E_{ij} \otimes \Phi(E_{ij})
\end{equation}
is positive-semidefinite. 
\end{enumerate}
\end{proposition}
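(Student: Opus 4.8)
The plan is to prove the equivalence of the three characterizations by establishing a cycle of implications, which is the standard and cleanest approach for such a statement. I would first recall that $\Phi$ being a quantum channel means it is linear, completely positive, and trace-preserving. The implications I would set up are: Stinespring dilation $\Rightarrow$ quantum channel, quantum channel $\Rightarrow$ Kraus decomposition, Kraus decomposition $\Rightarrow$ Choi matrix positivity, and Choi matrix positivity $\Rightarrow$ Stinespring dilation. Together these close the loop and also show all three conditions are mutually equivalent to $\Phi$ being a channel.

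For Stinespring $\Rightarrow$ channel: if $\Phi(X) = \trace_\K[U(X \otimes Y)U^*]$, then $\Phi$ is manifestly linear; trace preservation follows from $\trace \Phi(X) = \trace(U(X\otimes Y)U^*) = \trace(X\otimes Y) = \trace(X)\trace(Y) = \trace(X)$ since $Y$ is a density matrix; and complete positivity follows because $X \mapsto X \otimes Y$ is completely positive (as $Y \geq 0$), conjugation by $U$ is completely positive, and the partial trace is completely positive, so $\Phi$ is a composition of completely positive maps. For channel $\Rightarrow$ Kraus: this is where I would invoke the Choi isomorphism idea directly — consider the Choi matrix $C_\Phi = \sum_{i,j} E_{ij} \otimes \Phi(E_{ij})$, which is positive-semidefinite precisely because $\Omega_d \geq 0$ and $\mathrm{id} \otimes \Phi$ is positive by complete positivity. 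Writing a spectral decomposition $C_\Phi = \sum_{i=1}^k |v_i\rangle\langle v_i|$ with $v_i \in \C^n \otimes \C^n$, and reshaping each vector $v_i$ into a matrix $L_i \in \M_n(\C)$ via $v_i = \sum_{a,b} (L_i)_{ab}\, e_a \otimes e_b$ (up to normalization conventions for $\Omega_d$), one recovers $\Phi(X) = \sum_i L_i X L_i^*$; the trace-preservation condition then translates exactly into $\sum_i L_i^* L_i = \I_n$.

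For Kraus $\Rightarrow$ Choi: given $\Phi(X) = \sum_i L_i X L_i^*$, one computes $C_\Phi = \sum_i (\I \otimes L_i)\Omega_d(\I\otimes L_i)^* = \sum_i w_i w_i^*$ where $w_i = (\I\otimes L_i)\omega_d$, which is visibly positive-semidefinite. For Choi $\Rightarrow$ Stinespring: starting from $C_\Phi \geq 0$, take its spectral/Cholesky-type decomposition to get Kraus operators $L_i$ as above, pack them into an isometry $V = \sum_i L_i \otimes e_i : \C^n \to \C^n \otimes \C^k$ (the condition $\sum_i L_i^* L_i = \I_n$ is exactly $V^*V = \I_n$), extend $V$ to a unitary $U$ on $\C^n \otimes \C^k$ by augmenting appropriately (choosing $Y = |e_1\rangle\langle e_1|$ and $d = k$), and verify $\trace_\K[U(X\otimes Y)U^*] = V X V^* $ traced over $\K$ gives back $\sum_i L_i X L_i^* = \Phi(X)$.

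I expect the main obstacle to be bookkeeping rather than conceptual: one must be careful with the normalization factor $1/\sqrt d$ in $\omega_d$ (so the Choi matrix as defined in \eqref{eq:def-Choi-matrix} via $\Omega_d$ carries a factor $1/d$ which must be tracked when extracting Kraus operators), and with the identification of $\C^n\otimes\C^n$ with $\M_n(\C)$ used to reshape eigenvectors into matrices. The dilation step also requires justifying that an isometry $V: \C^n \to \C^{nk}$ with $nk \geq n$ extends to a unitary on a possibly enlarged space; the clean way is to take the ancilla of dimension $d = k$, embed $\C^n \hookrightarrow \C^n\otimes\C^d$ as $X\otimes Y$-support, and complete the partial isometry $U$ on the orthogonal complement arbitrarily, which is always possible by dimension count. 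None of these steps is deep, but stating the reshaping conventions precisely is what makes the proof rigorous.
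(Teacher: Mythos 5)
The paper itself offers no proof of this proposition -- it is stated as standard background -- so there is nothing to compare your route against; what you propose is the classical Choi--Jamio{\l}kowski/Stinespring cycle, and in outline it is the right argument: Stinespring $\Rightarrow$ channel by composing completely positive maps and tracing, channel $\Rightarrow$ Kraus by spectrally decomposing the Choi matrix and reshaping eigenvectors, Kraus $\Rightarrow$ Choi positivity by writing $C_\Phi=\sum_i w_iw_i^*$ with $w_i=(\I\otimes L_i)\omega_d$, and Choi $\Rightarrow$ Stinespring by packing the Kraus operators into an isometry $V=\sum_i L_i\otimes e_i$ and completing it to a unitary (the dimension count $nk-n$ on both sides of the complement makes the extension possible). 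Your attention to the normalization of $\Omega$ and to the reshaping convention is also warranted, since the paper's formula \eqref{eq:def-Choi-matrix} is itself off by a factor $1/n$ and writes $\Omega_d$ where $\Omega_n$ is meant.

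The one genuine wrinkle is in the leg ``Choi $\Rightarrow$ Stinespring'': positivity of $C_\Phi$ alone is equivalent only to complete positivity, not to trace preservation, so the identity $\sum_i L_i^*L_i=\I_n$ (equivalently $V^*V=\I_n$) that you invoke to build the isometry does not follow from condition (3) as literally stated; a completely positive, non-trace-preserving map has a positive Choi matrix but admits no dilation of the form \eqref{eq:Stinespring_form}, since that form forces $\trace\Phi(X)=\trace X$. This is partly an imprecision in the proposition itself (condition (3) should be supplemented by $\trace_2 C_\Phi=\I_n/n$, or trace preservation should be carried as a standing hypothesis throughout the cycle), but as written your chain of implications does not literally close, and you should say explicitly that trace preservation is assumed, or encoded via the partial-trace condition on $C_\Phi$, when passing from the Choi matrix back to the Kraus and Stinespring forms. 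With that bookkeeping made explicit, the proof is complete and is the standard one.
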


It can be shown that the dimension of the ancilla space $\K$ in the Stinespring dilation theorem can be chosen $d = \dim \K = n^2$ and that the state $Y$ can always be considered to be a rank one projector. A similar result holds for the number of Kraus operators: one can always find a decomposition with $k=n^2$ operators. 

Going back to the entropic quantities, of special interest for the computation of capacities of quantum channels to transmit classical information are the following expressions, indexed by some positive real parameter $p$
$$H_{\min}^p(\Phi)=\min_{\rho \in \mathcal D_n} H^p (\Phi (\rho)).$$

\subsection{Graphical notation for tensors}
\label{sec:graphical-tensors}

Quantum states, tensors, and operations
between these objects (composition,
tensor product, applying a state through
a quantum channel, etc) can be efficiently represented
graphically. The leading idea is that a string in a diagram
means a tensor contraction. 
Many graphical theories for tensors and linear algebra computations have been developed in the literature \cite{penrose2005road,coecke2010quantum}.
Although they are all more or less equivalent, 
we will stick to the one introduced in 
\cite{collins2010random}, as it allows to compute the expectation
of random diagrams in a diagrammatic way
subsequently. For more details on this method,
we refer the reader to the paper \cite{collins2010random} and to other work which make use of this technique \cite{collins2011gaussianization,collins2010eigenvalue,collins2010randoma,collins2013area,fukuda2013partial,collins2013matrix,lancien2015extendibility}

In the graphical calculus matrices (or, more generally, tensors) are represented by boxes.
Each box has differently shaped symbols, 
where the number of different types of them equals that of different spaces (exceptions are mentioned bellow). 
Those symbols are empty (while) or filled (black), corresponding to primal or dual spaces. 
Wires connect these symbols, corresponding to tensor contractions. A diagram is a collection of such boxes and wires and corresponds to an element of an abstract element in a tensor product space.

Rather than going through the whole theory, we
focus on a few key examples. 

\begin{figure}
\includegraphics[width=2cm]{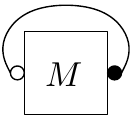}\quad\quad\quad
\includegraphics[width=4cm]{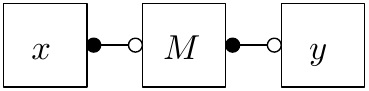}\quad\quad\quad
\includegraphics[width=2cm]{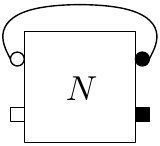}
\caption{Some simple diagrams}
\label{fig:simple}
\end{figure}

Suppose that each diagram in Figure \ref{fig:simple} comes equipped with two vector spaces $V_1$ and $V_2$ 
which we shall represent respectively by circle and square shaped symbols. 
In the first diagram, $M$ is a tensor (or a matrix, depending on which point of view we adopt) $M \in  V_1^* \otimes V_1$, and the wire applies the contraction $V_1^* \otimes V_1 \to \C$ to $M$. 
The result of the diagram $\D_a$ is thus $T_{\D_a} = \trace(M) \in \C$. 
In the second diagram, again there are no free decorations, hence the result is the complex number $T_{\D_b} = \scalar{y}{Mx}$. 
Finally, in the third example, $N$ is a $(2,2)$ tensor or a linear map $N \in \mathrm{End}(V_1 \otimes V_2, V_1 \otimes V_2)$. 
When one applies to the tensor $N$ the contraction of the couple $(V_1, V_1^*)$, 
the result is the partial trace of $N$ over the space $V_1$: $T_{\D_c} = \trace_{V_1}(N) \in \mathrm{End}(V_2, V_2)$.

\section{Background on random matrix theory and free probability}
\label{sec:RMT-FPT}

\subsection{Gaussian random variables}
\label{sec:Gaussian-random-variables}

The probability density of 
the \emph{normal distribution} is:
$$f(x \; | \; \mu, \sigma) = \frac{1}{\sigma\sqrt{2\pi} } \; e^{ -\frac{(x-\mu)^2}{2\sigma^2} }$$
Here, $\mu$ is the \emph{mean}. 
The parameter $\sigma$ is its 
\emph{standard deviation} with its variance then $\sigma^2$. A random variable with a Gaussian distribution is said to be normally distributed

Suppose $X$ and $Y$ are random vectors in $\R^k$ such that $(X,Y)$ is a $2k$-dimensional normal 
random vector. Then we say that the complex random vector
$Z = X + iY$
has the \emph{complex normal distribution}.

Historically the first ensemble of random matrices having been studied is the Wishart ensemble \cite{wishart1928generalised}, see \cite[Chapter 3]{bai2010spectral} or \cite[Section 2.1]{anderson2010introduction} for a modern presentation.

\begin{definition}\label{def:Wishart}
Let $X \in \mathcal M_{d \times s}(\mathbb C)$ be a random matrix with complex, standard, i.i.d.~ Gaussian entries. The distribution of the positive-semidefinite matrix $W = XX^* \in \mathcal M_d(\mathbb C)$ is called a \emph{Wishart distribution} of parameters $(d,s)$ and is denoted by $\mathcal W_{d,s}$. 
\end{definition}
The study of the asymptotic behavior of Wishart random matrices is due to Mar{\v{c}}enko and Pastur \cite{marcenko1967distribution}, while the strong convergence in the theorem below has
has been proved
by analytic tools such as 
determinantal point processes. 
Let us also record it as a direct
consequence of the much more general results  \cite{male2012norm}.

\begin{theorem}\label{thm:marchenko-pastur}
Consider a sequence $s_d$ of positive integers which behaves as $s_d \sim cd$ as $d \to \infty$, for some constant $c \in (0,\infty)$. Let $W_d$ be a sequence of positive-semidefinite random matrices such that $W_d \sim \mathcal W_{d,s_d}$. Then, the sequence $W_d$ converges strongly to the \emph{Mar{\v{c}}enko-Pastur  distribution} $\pi_c$ given by
\begin{equation}\label{eq:Marchenko-Pastur}
\pi_c=\max (1-c,0)\delta_0+\frac{\sqrt{(b-x)(x-a)}}{2\pi x} \; \mathbf{1}_{(a,b)}(x) \, dx,
\end{equation}
where $a = (1-\sqrt c)^2$ and $b=(1+\sqrt c)^2$.
\end{theorem}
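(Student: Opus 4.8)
The plan is to establish two things separately: convergence in distribution of the empirical spectral measure to $\pi_c$ (the Mar\v{c}enko--Pastur law), and the stronger statement that the extreme eigenvalues converge to the edges $a$ and $b$ of its support, which together give strong convergence. For the first part, I would use the moment method. Writing $W_d = XX^*$ with $X$ of size $d \times s_d$, the normalized traces $\frac1d \E \trace(W_d^k)$ can be expanded as sums over closed walks; the combinatorics is governed by non-crossing pair partitions, and in the limit $s_d/d \to c$ one recovers the free Poisson (Mar\v{c}enko--Pastur) moments $m_k = \sum_{\pi \in NC(k)} c^{|\pi|}$. A standard variance estimate (the fluctuations of $\frac1d \trace(W_d^k)$ are $O(d^{-2})$) plus Borel--Cantelli upgrades this to almost sure convergence of all moments, hence almost sure weak convergence of the empirical spectral distribution to $\pi_c$, since $\pi_c$ is determined by its moments (it has compact support).

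For the second part — convergence of the operator norm $\|W_d\|$ to $b = (1+\sqrt c)^2$ and, when $c < 1$, of the smallest eigenvalue to $a = (1-\sqrt c)^2$ — the cleanest route is to invoke the general strong convergence results of Male \cite{male2012norm} as the excerpt explicitly suggests: a single Wishart matrix, being a polynomial in a Gaussian (Ginibre) matrix and its adjoint, converges strongly, which by definition means that for every polynomial $P$ one has $\|P(W_d)\| \to \|P(w)\|_{\text{op in the free limit}}$, and specializing to $P(x)=x$ pins the edge of the spectrum. Alternatively, one can give a self-contained argument: an upper bound on $\|W_d\|$ via the trace method with growing powers $k = k_d \to \infty$ slowly (controlling $\E \trace(W_d^{k_d})^{1/k_d}$), combined with a net argument or the Bai--Yin theorem, and a matching lower bound from the weak convergence already established (the support of $\pi_c$ is contained in $[\liminf \lambda_{\min}, \limsup \lambda_{\max}]$ almost surely). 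Since no eigenvalues escape to the right of $b$ and the bulk fills $[a,b]$, the extreme eigenvalues converge to the edges.

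Combining the two parts: almost sure weak convergence to $\pi_c$ together with almost sure convergence of $\lambda_{\min}(W_d)$ and $\lambda_{\max}(W_d)$ to the endpoints of $\mathrm{supp}(\pi_c)$ is precisely the statement that $W_d$ converges strongly to $\pi_c$ (equivalently, for every continuous $f$ one has $\frac1d \trace f(W_d) \to \int f \, d\pi_c$ \emph{and} $\max_i f(\lambda_i(W_d)) \to \max_{x \in \mathrm{supp}\,\pi_c} f(x)$). The atom $\max(1-c,0)\delta_0$ appears naturally: when $c<1$, $W_d = XX^*$ is generically full rank so there is no atom, while when $c>1$ the matrix $X$ has more rows than columns (after the appropriate reading of the parameters) forcing a kernel of proportional dimension $1 - c^{-1}$... one must be careful here, as $\mathcal W_{d,s}$ with $s \sim cd$ and $c>1$ gives full rank; the $\delta_0$ atom of weight $1-c$ is relevant for $c<1$, and this bookkeeping should be stated precisely.

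\textbf{Main obstacle.} The genuinely hard input is the strong convergence (the operator-norm statement), i.e.\ ruling out outlier eigenvalues; the weak convergence via moments is routine. If one does not wish to cite \cite{male2012norm} as a black box, reproving the no-outliers statement requires either the delicate high-moment trace estimates of Bai--Yin type or the linearization-plus-free-probability machinery, and that is where essentially all the technical work lies.
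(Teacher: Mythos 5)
Your proposal is correct and follows essentially the same route the paper takes: the paper gives no proof of its own but records the weak convergence as the classical Mar\v{c}enko--Pastur result (provable by the moment method exactly as you sketch) and the strong (operator-norm) convergence as a direct consequence of the general results of Male \cite{male2012norm}, which is precisely the black box you invoke. One small correction to your bookkeeping of the atom: for $c<1$ one has $s_d<d$, so $W_d=XX^*$ has rank $s_d\approx cd$ and is \emph{not} full rank --- this is exactly what produces the atom $(1-c)\delta_0$ --- while for $c>1$ the matrix is generically full rank and there is no atom; your final sentence lands on the right answer, but the preceding clause has the two cases reversed.
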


The Mar{\v{c}}enko-Pastur distribution $\pi_c$ is sometimes called the \emph{free Poisson distribution}. 
We plotted in Figure \ref{fig:Marchenko-Pastur} its density in the cases $c=1$ and $c=4$.

\begin{figure}[htbp]
\begin{center}
\includegraphics[width=0.4\textwidth]{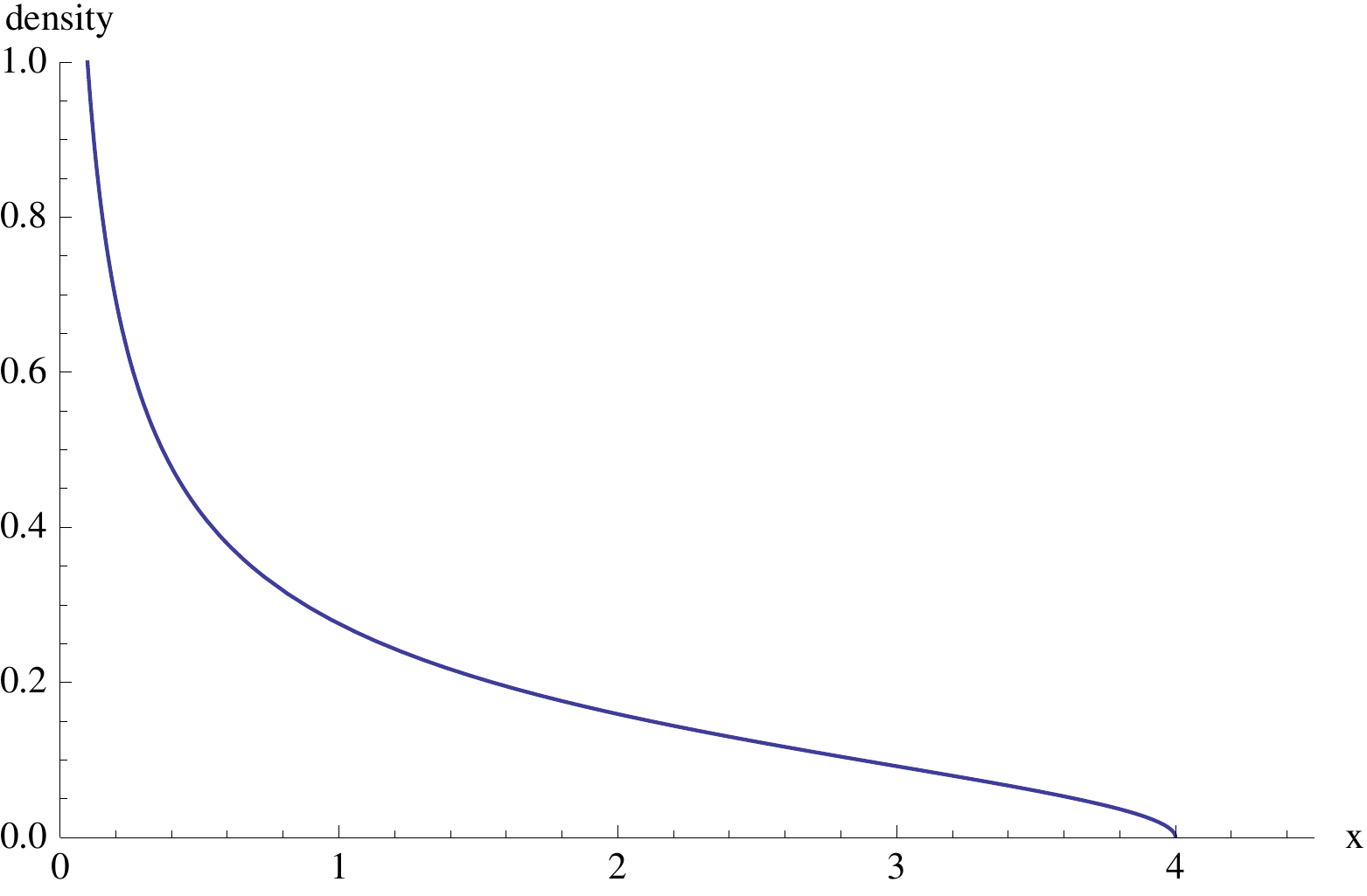} \quad \includegraphics[width=0.4\textwidth]{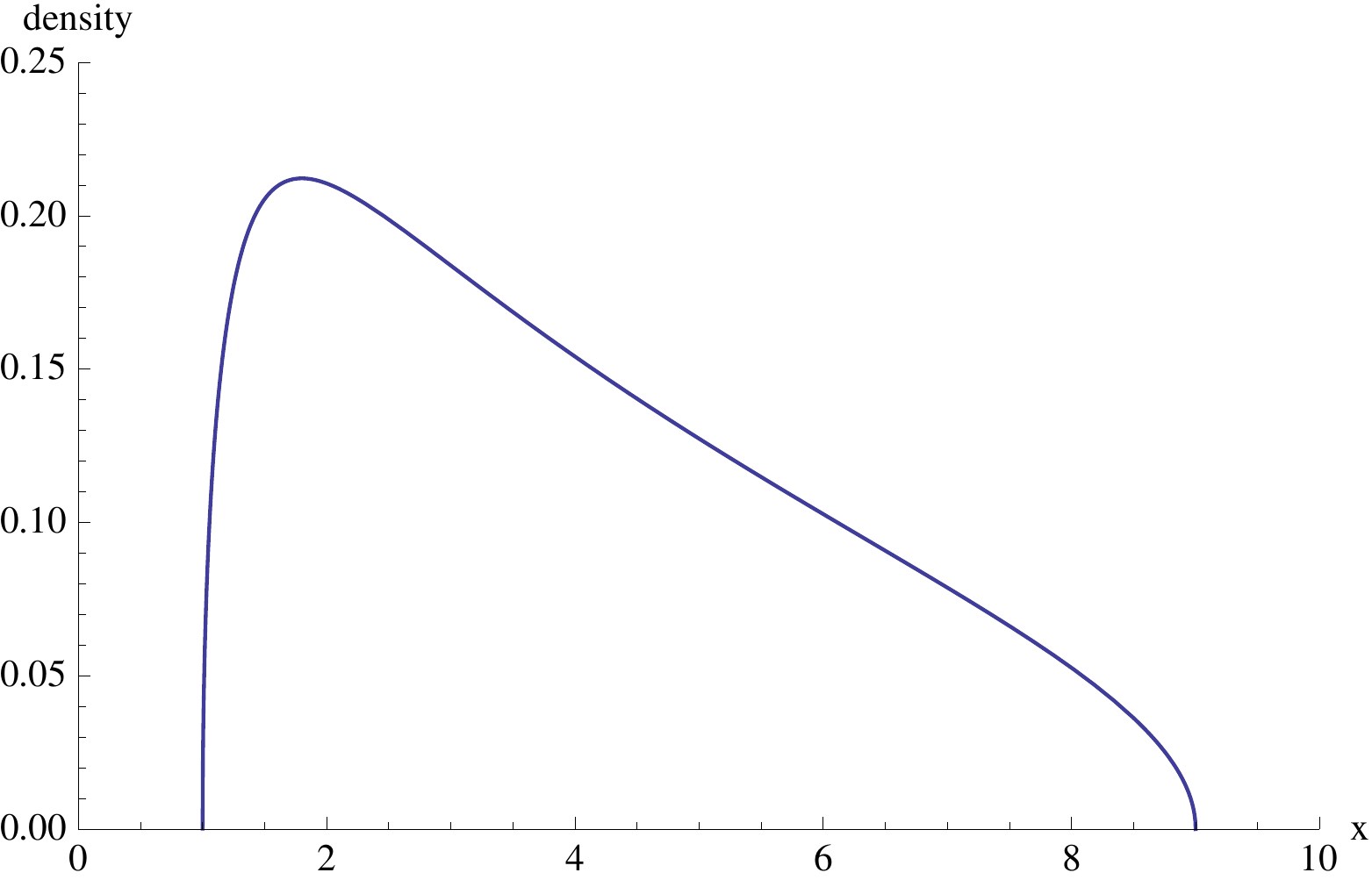}
\caption{The density of the Mar{\v{c}}enko-Pastur distributions $\pi_1$ (left) and $\pi_4$ (right).}
\label{fig:Marchenko-Pastur}
\end{center}
\end{figure}

The following theorem is the link between
combinatorics and probability theory
for Gaussian vectors: 
it allows to compute moments 
of any Gaussian vector thanks
to its covariance matrix. 

A \emph{Gaussian space} $V$ is a real vector space of random variables
having moments of all orders, with the property that each of these random variables
has centered Gaussian distributions. In order to specify the covariance information, such a Gaussian space comes with a positive symmetric bilinear form 
 $(x,y)\to \E[xy]$.
 Gaussian spaces are in one-to-one correspondence
with euclidean spaces.
In particular, the euclidean norm of a random variable
determines it fully (via its variance) and if two 
random variables are given, their joint distribution is
determined by their angle. 
The following is usually called the Wick Lemma:

\begin{theorem}\label{thm:wick-formula}
Let $V$ be a Gaussian space and $x_{1}, \ldots , x_k$ be elements in $V$.
If $k=2l+1$ then $\E[x_1\cdots x_k]=0$ and
if $k=2l$ then
\begin{equation}\label{eq:Wick}
\E[x_1\cdots x_k]=\sum_{\substack{p=\{\{i_1,j_1\},\ldots , \{i_l,j_l\}\} \\ \text{pairing of }\{1,\ldots ,k\} }} \quad \prod_{m=1}^l \E[x_{i_m}x_{j_m}]
\end{equation}
In particular it follows that if $x_1,\ldots, x_p$ are independent standard Gaussian random variables, then
\[\E[x_1^{k_1}\ldots x_{p}^{k_p}]=\prod_{i=1}^p (2k_i)!! \quad.\]
\end{theorem}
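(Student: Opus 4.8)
The plan is to derive \eqref{eq:Wick} from the moment generating function of a Gaussian, which converts the combinatorial identity into a routine coefficient extraction. First I would note that for any real numbers $t_1,\ldots,t_k$ the combination $\sum_i t_i x_i$ is again an element of the Gaussian space $V$, hence a centered Gaussian random variable; its variance, read off from the bilinear form carried by $V$, equals $\sigma^2=\sum_{i,j}t_i t_j\,\E[x_i x_j]$. Since a centered Gaussian of variance $\sigma^2$ has moment generating function $e^{\sigma^2/2}$ (a one-line computation with the Gaussian density), this yields
\[
\E\Big[\exp\Big(\sum_{i=1}^k t_i x_i\Big)\Big] \;=\; \exp\Big(\tfrac12\sum_{i,j=1}^k t_i t_j\,\E[x_i x_j]\Big),
\]
an identity valid for all $t\in\R^k$; as both sides are entire in the $t_i$, it holds equally as an identity of (everywhere convergent) power series.

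Next I would compare the coefficients of the squarefree monomial $t_1 t_2\cdots t_k$ on the two sides. On the left, expanding $\exp(\sum_i t_i x_i)=\sum_{m\ge 0}\frac1{m!}(\sum_i t_i x_i)^m$ and using that the $x_i$ commute, only $m=k$ contributes to $t_1\cdots t_k$, and the $k!$ orderings of the indices cancel the $1/k!$, leaving coefficient $\E[x_1\cdots x_k]$. On the right, expanding $\exp(\tfrac12\sum_{i,j}t_i t_j\,\E[x_i x_j])=\sum_{l\ge 0}\frac1{l!\,2^l}(\sum_{i,j}t_i t_j\,\E[x_i x_j])^l$, only $l=k/2$ can produce a degree-$k$ monomial; when $k$ is odd none does, which gives the vanishing assertion. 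When $k=2l$, collecting $t_1\cdots t_k$ from the $l$-th power amounts to choosing an ordered list of $l$ ordered pairs from $\{1,\ldots,k\}$ that partition the set, and each unordered pairing $p$ arises from exactly $l!\,2^l$ such lists with the same product $\prod_m\E[x_{i_m}x_{j_m}]$ (using $\E[x_i x_j]=\E[x_j x_i]$); hence the $1/(l!\,2^l)$ is exactly absorbed and the coefficient is $\sum_p\prod_{m=1}^l\E[x_{i_m}x_{j_m}]$. Equating the two coefficients is precisely \eqref{eq:Wick}.

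The ``in particular'' statement then follows by specializing \eqref{eq:Wick} to arguments that are copies of finitely many independent standard Gaussians: since $\E[x_i x_j]=\delta_{ij}$ in that case, a pairing contributes only when it matches copies of the same variable, so the count factorizes over the distinct variables, each group of copies of a fixed variable contributing the number of pairings of its index set.

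The whole argument is essentially bookkeeping; the only genuine analytic input is the Gaussian moment generating function $e^{\sigma^2/2}$, and the step that needs care is the coefficient count on the right, namely checking that the factor $l!\,2^l$ coming from ordering the pairs and their two entries cancels precisely the $1/(l!\,2^l)$ of the exponential series. If one wishes to avoid generating functions, an equivalent route is induction on $k$: establish the Gaussian integration-by-parts identity $\E[x_1\,g(x_2,\ldots,x_k)]=\sum_{j=2}^k\E[x_1 x_j]\,\E[\partial_j g(x_2,\ldots,x_k)]$ for jointly Gaussian families — itself reduced, via an orthonormal basis of $\mathrm{span}(x_1,\ldots,x_k)$ (a family of i.i.d.\ standard Gaussians), to the one-dimensional $\int x\,h(x)\,e^{-x^2/2}\,dx=\int h'(x)\,e^{-x^2/2}\,dx$ — apply it with $g=x_2\cdots x_k$, and recognize the resulting recursion (pick the partner of the index $1$, delete both, sum over the remaining indices; base cases $k=0,1$) as the one satisfied by the right-hand side of \eqref{eq:Wick}.
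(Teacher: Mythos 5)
The paper never proves this statement: it records the Wick lemma as a standard fact and only uses it later, so there is no in-house argument to compare against; your proof therefore has to stand on its own, and it does. The moment-generating-function derivation of \eqref{eq:Wick} is correct and standard: $\sum_i t_i x_i$ is a centered Gaussian with variance $\sum_{i,j} t_i t_j\,\E[x_i x_j]$ by the very definition of a Gaussian space, and the crux — checking that the $l!\,2^l$ ways of ordering the pairs and their entries cancels the $1/(l!\,2^l)$ from the exponential series, using symmetry of the covariance — is done correctly; the only step you pass over quickly is the interchange of $\E$ with the power-series expansion in $t$, which is legitimate because Gaussians have finite exponential moments, so differentiation under the expectation at $t=0$ is justified, but this deserves one explicit sentence. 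The integration-by-parts alternative you sketch (Gaussian ``pick a partner'' recursion) is an equally valid classical route. One point you should flag explicitly: in the ``in particular'' clause your (correct) specialization yields, for each variable occurring $k_i$ times, the number of pairings of a $k_i$-element set, i.e.\ $(k_i-1)!!$ when every $k_i$ is even and $0$ otherwise; this is \emph{not} the printed $\prod_i (2k_i)!!$ (e.g.\ $\E[x_1^2]=1\neq 4!!=8$), so the displayed identity in the statement is a typo — it should read $\prod_i (k_i-1)!!$, or $\prod_i (2k_i-1)!!$ if the exponents are meant to be $2k_i$ — and your argument proves the corrected version rather than the formula as printed.
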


\subsection{Unitary integration. Weingarten calculus}
\label{sec:Weingarten}

This section contains some basic material on unitary integration and Weingarten calculus. A more complete exposition of these matters can be found in \cite{collins2003moments, collins2006integration}. We start with the definition of the Weingarten function.

\begin{definition}
The unitary Weingarten function 
$\Wg(n,\sigma)$
is a function of a dimension parameter $n$ and of a permutation $\sigma$
in the symmetric group $\S_p$. 
It is the inverse of the function $\sigma \mapsto n^{\#  \sigma}$ under the convolution for the symmetric group ($\# \sigma$ denotes the number of cycles of the permutation $\sigma$).
\end{definition}

Note that the  function $\sigma \mapsto n^{\# \sigma}$ is invertible when $n$ is large, as it
behaves like $n^p\delta_e$
as $n\to\infty$.
If $n<p$ the function is not invertible any more.
For the definition to make sense, one needs
to take the  pseudo inverse
(we refer to \cite{collins2006integration} for historical references and further details). We  use the shorthand notation $\Wg(\sigma) = \Wg(n, \sigma)$ when the dimension parameter $n$ is clear from context.

The function $\Wg$  is used to compute integrals with respect to 
the Haar measure on the unitary group (we shall denote by $\U(n)$ the unitary group acting on an $n$-dimensional Hilbert space). The first theorem is as follows:

\begin{theorem}
\label{thm:Weingarten-formula}
 Let $n$ be a positive integer and
$\mathbf{i}=(i_1,\ldots ,i_p)$, $\mathbf{i'}=(i'_1,\ldots ,i'_p)$,
$\mathbf{j}=(j_1,\ldots ,j_p)$, $\mathbf{j'}=(j'_1,\ldots ,j'_p)$
be $p$-tuples of positive integers from $\{1, 2, \ldots, n\}$. Then
\begin{multline}
\label{eq:Weingarten} \int_{\U(n)}U_{i_1j_1} \cdots U_{i_pj_p}
\overline{U_{i'_1j'_1}} \cdots
\overline{U_{i'_pj'_p}}\ dU=\\
\sum_{\sigma, \tau\in \S_{p}}\delta_{i_1i'_{\sigma (1)}}\ldots
\delta_{i_p i'_{\sigma (p)}}\delta_{j_1j'_{\tau (1)}}\ldots
\delta_{j_p j'_{\tau (p)}} \Wg (n,\tau\sigma^{-1}).
\end{multline}

If $p\neq p'$ then
\begin{equation} \label{eq:Wg_diff} \int_{\U(n)}U_{i_{1}j_{1}} \cdots
U_{i_{p}j_{p}} \overline{U_{i'_{1}j'_{1}}} \cdots
\overline{U_{i'_{p'}j'_{p'}}}\ dU= 0.
\end{equation}
\end{theorem}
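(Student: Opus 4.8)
The plan is to lift the scalar identity to an operator identity and then invoke Schur--Weyl duality. The object to study is
$$P_p := \int_{\U(n)} U^{\otimes p} \otimes \ol U^{\otimes p}\, dU,$$
an operator on $(\C^n)^{\otimes p}\otimes(\C^n)^{\otimes p}$ whose entry with row multi-index $(\mathbf i,\mathbf i')$ and column multi-index $(\mathbf j,\mathbf j')$ is exactly the integral on the left-hand side of \eqref{eq:Weingarten}. Since for any unitary representation of a compact group the average of the representation over the group is the orthogonal projection onto the invariant vectors, $P_p$ is the orthogonal projection onto the space of vectors fixed by $V^{\otimes p}\otimes\ol V^{\otimes p}$ for every $V\in\U(n)$.

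The second step is to describe that fixed space. Via the identification $(\C^n)^{\otimes p}\otimes(\C^n)^{\otimes p}\cong\mathrm{End}\big((\C^n)^{\otimes p}\big)$ (conjugate-linear in the second tensor leg), fixed vectors correspond to operators commuting with $V^{\otimes p}$ for all $V\in\U(n)$, and by Schur--Weyl duality this commutant is the linear span of the tensor-permutation operators $\{\rho(\sigma):\sigma\in\S_p\}$. Transporting this back, the range of $P_p$ is spanned by the ``permutation vectors'' $\ket\sigma$, $\sigma\in\S_p$, whose Gram matrix is the standard count $\scalar\sigma\tau = n^{\#(\sigma^{-1}\tau)}$ --- precisely the function whose convolution inverse on $\S_p$ is, by definition, $\Wg(n,\cdot)$.

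It then remains to expand the projection in this (generally overcomplete) family. Writing $P_p=\sum_{\sigma,\tau}c_{\sigma\tau}\ketbra\sigma\tau$ and using that $P_p$ restricts to the identity on its range, application to each $\ket\pi$ gives the linear system $\sum_\tau c_{\sigma\tau}\,n^{\#(\tau^{-1}\pi)}=\delta_{\sigma\pi}$, i.e.\ $(c_{\sigma\tau})$ inverts the Gram matrix; a short manipulation with the convolution product on $\S_p$ identifies the solution as $c_{\sigma\tau}=\Wg(n,\tau\sigma^{-1})$. Reading off matrix entries --- the contractions $\scalar{(\mathbf i,\mathbf i')}{\sigma}$ yield the products $\delta_{i_1 i'_{\sigma(1)}}\cdots\delta_{i_p i'_{\sigma(p)}}$, and similarly for the $j$-indices with $\tau$ --- then produces \eqref{eq:Weingarten}. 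The one genuinely delicate point, and the main obstacle, is the regime $n<p$: there the permutation vectors are linearly dependent, the Gram matrix is singular, and ``inverse'' must be read as Moore--Penrose pseudoinverse; one must check that the pseudoinverse still represents $P_p$ on its range, which is exactly the convention built into the definition of $\Wg$. Finally, \eqref{eq:Wg_diff} follows at once by replacing $U$ with $e^{i\theta}U$: Haar invariance leaves the integral unchanged while multiplying it by $e^{i(p-p')\theta}$, forcing it to vanish when $p\neq p'$.
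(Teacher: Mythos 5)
Your argument is correct and is essentially the canonical proof of this theorem: the paper itself states Theorem \ref{thm:Weingarten-formula} without proof, deferring to \cite{collins2003moments,collins2006integration}, and the route you take --- identifying $\int U^{\otimes p}\otimes\ol U^{\otimes p}\,dU$ as the projection onto the invariants, invoking Schur--Weyl duality to span the range by permutation vectors with Gram matrix $n^{\#(\sigma^{-1}\tau)}$, inverting (pseudo-inverting when $n<p$) by convolution to produce $\Wg$, and killing the $p\neq p'$ case by the phase substitution $U\mapsto e^{i\theta}U$ --- is precisely the argument carried out in those references. No gaps.
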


Since we  perform integration over large unitary groups, we are interested in the values of the Weingarten function in the limit $n \to \iy$. The following result encloses all the information 
we need for our computations
about the asymptotics of the $\Wg$ function; see \cite{collins2003moments} for a proof.

\begin{theorem}\label{thm:mob} For a permutation $\sigma \in \mathcal S_p$, let $\text{Cycles}(\sigma)$ denote the set of cycles of $\sigma$. Then
\begin{equation}
\Wg (n,\sigma )=(-1)^{n-\# \sigma}
\prod_{c\in \text{Cycles} (\sigma )}\Wg (n,c)(1+O(n^{-2}))
\end{equation}
and 
\begin{equation}
\Wg (n,(1,\ldots ,d) ) = (-1)^{d-1}c_{d-1}\prod_{-d+1\leq j \leq d-1}(n-j)^{-1}
\end{equation}
where $c_i=\frac{(2i)!}{(i+1)! \, i!}$ is the $i$-th Catalan number.
\end{theorem}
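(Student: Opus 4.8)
The plan is to deduce both displayed equations from a single explicit ``spectral'' formula for $\Wg(n,\cdot)$, obtained by diagonalising the defining convolution equation over the representation theory of $\S_p$, and then to extract the two statements by an exact finite-sum evaluation and an asymptotic (parity) analysis respectively.

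\emph{From the convolution equation to a master formula.} Since $\sigma\mapsto n^{\#\sigma}$ is a class function, the element $Z_n:=\sum_{\sigma\in\S_p}n^{\#\sigma}\sigma$ is central in $\mathbb C[\S_p]$, hence acts on each irreducible module $V_\lambda$ ($\lambda\vdash p$) as a scalar $z_\lambda(n)$. Jucys' identity $\prod_{k=1}^{p}(t+J_k)=\sum_\sigma t^{\#\sigma}\sigma$, with $J_k=\sum_{i<k}(i\,k)$ the Jucys--Murphy elements, together with the fact that the $J_k$ act on a standard Young tableau of shape $\lambda$ by the contents $c(u)=j-i$ of the boxes $u=(i,j)\in\lambda$, gives $z_\lambda(n)=\prod_{u\in\lambda}(n+c(u))$ (which vanishes precisely when $\ell(\lambda)>n$). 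Inverting $Z_n$ on the modules with $z_\lambda(n)\neq0$ (this is exactly the pseudo-inverse convention in the definition of $\Wg$) and reading off the coefficient of $\sigma$ via the central idempotents $P_\lambda=\tfrac{f^\lambda}{p!}\sum_\tau\chi^\lambda(\tau)\tau$ yields
\[
\Wg(n,\sigma)=\frac{1}{p!}\sum_{\substack{\lambda\vdash p\\ \ell(\lambda)\le n}}\frac{f^\lambda\,\chi^\lambda(\sigma)}{\prod_{u\in\lambda}(n+c(u))},\qquad f^\lambda:=\chi^\lambda(e).
\]

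\emph{The single-cycle formula (second equation).} Specialise $\sigma=(1\,2\,\cdots\,d)\in\S_d$. By Murnaghan--Nakayama, $\chi^\lambda(\sigma)=0$ unless $\lambda$ is a hook $(d-k,1^k)$, in which case $\chi^{(d-k,1^k)}(\sigma)=(-1)^k$, $f^{(d-k,1^k)}=\binom{d-1}{k}$, and the arm and leg contents give $\prod_u(n+c(u))=(n+d-k-1)!/(n-k-1)!$. The master formula collapses to the finite sum
\[
\Wg(n,(1\,2\,\cdots\,d))=\frac{1}{d!}\sum_{k=0}^{d-1}(-1)^k\binom{d-1}{k}\,\frac{(n-k-1)!}{(n+d-k-1)!}.
\]
Expanding each summand into partial fractions in $n$ and comparing residues at each pole $n=-l$ reduces the claimed identity with $(-1)^{d-1}c_{d-1}\prod_{-d+1\le j\le d-1}(n-j)^{-1}$ to the Chu--Vandermonde identity $\sum_k\binom{d-1}{k}\binom{d-1}{k+l}=\binom{2d-2}{d-1-l}$ (after recognising $c_{d-1}=\tfrac{1}{d}\binom{2d-2}{d-1}$). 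This is elementary and completes the second part.

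\emph{The cycle factorisation (first equation).} For general $\sigma$ with cycles $c_1,\dots,c_r$ I would work from $\sum_\sigma\Wg(n,\sigma)\sigma=\prod_{k=p}^{1}(n+J_k)^{-1}=n^{-p}\prod_{k=p}^{2}(\mathbf 1+J_k/n)^{-1}$, expand each factor geometrically, and extract the coefficient of $\sigma$: this exhibits $\Wg(n,\sigma)$ as $n^{-p}$ times a generating series $\sum_{m\ge p-\#\sigma}(-1)^m n^{-m} N_m(\sigma)$, where $N_m(\sigma)$ counts length-$m$ transposition words of the structured form $(\,\cdot\,p)\cdots(\,\cdot\,p)(\,\cdot\,p{-}1)\cdots$ whose product is $\sigma$. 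The leading term sits at $m=p-\#\sigma$, i.e.\ at order $n^{-(2p-\#\sigma)}$. Now two facts finish it. First, a minimal transposition factorisation of $\sigma$ is \emph{cycle-local}: each factor joins two points of a single $\sigma$-cycle (a transposition merging two cycles would decrease, not increase, the cycle count), and since factors from distinct cycles commute, $N_{p-\#\sigma}(\sigma)=\prod_t N_{|c_t|-1}(c_t)$; by the second part already proved, the per-cycle count equals the Catalan number $c_{|c_t|-1}$. Second, the master formula is invariant under $\lambda\leftrightarrow\lambda'$, which negates all contents and multiplies $\chi^\lambda(\sigma)$ by $\operatorname{sgn}\sigma$; pairing $\lambda$ with $\lambda'$ gives the exact symmetry $\Wg(-n,\sigma)=(-1)^{\#\sigma}\Wg(n,\sigma)$, so the $1/n$-expansion of $\Wg(n,\sigma)$ (and, likewise, of each $\Wg(n,c_t)$) contains only powers of the parity of its leading term, hence proceeds in steps of $n^{-2}$. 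Combining, $\Wg(n,\sigma)=(-1)^{p-\#\sigma}\bigl(\prod_t c_{|c_t|-1}\bigr)n^{-(2p-\#\sigma)}(1+O(n^{-2}))=\prod_t\Wg(n,c_t)\,(1+O(n^{-2}))$, using the exact single-cycle formula and $\prod_{|j|\le d-1}(n-j)=n^{2d-1}(1+O(n^{-2}))$ for the last step. The main obstacle is this last paragraph—specifically the bookkeeping showing that the leading coefficient of $\Wg(n,\sigma)$ is genuinely multiplicative over the cycles of $\sigma$ (the genus-zero term of the genus expansion of $\Wg$, equivalently the Möbius function of the non-crossing partition lattice, which is multiplicative over blocks); once that and the parity identity are in place, everything else is routine.
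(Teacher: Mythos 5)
The paper states this theorem without proof and cites \cite{collins2003moments}, so the comparison is with the argument there. For the exact single-cycle formula your route is essentially the cited one: the master formula
$\Wg(n,\sigma)=\frac{1}{p!}\sum_{\ell(\lambda)\le n} f^\lambda\chi^\lambda(\sigma)\big/\prod_{u\in\lambda}(n+c(u))$
is precisely the character expansion of \cite{collins2003moments} (obtained there from Schur--Weyl duality; your derivation via Jucys' identity $\sum_\sigma n^{\#\sigma}\sigma=\prod_k(n+J_k)$ is an equivalent and arguably cleaner route, and it correctly identifies the truncation to $\ell(\lambda)\le n$ with the pseudo-inverse convention in the definition), and the hook/Murnaghan--Nakayama collapse followed by a Vandermonde evaluation is how the closed form for a full cycle is established; your intermediate data ($\chi^{(d-k,1^k)}=(-1)^k$, $f^{(d-k,1^k)}=\binom{d-1}{k}$, content product $(n+d-k-1)!/(n-k-1)!$) all check out. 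For the multiplicativity over cycles you genuinely depart from the cited source: rather than extracting the asymptotics from the character sum, you read them off the monotone (Jucys--Murphy) factorization expansion, getting the leading coefficient from cycle-locality of geodesic factorizations and the $O(n^{-2})$ error from the $\lambda\leftrightarrow\lambda'$ parity $\Wg(-n,\sigma)=(-1)^{\#\sigma}\Wg(n,\sigma)$. This is the Matsumoto--Novak viewpoint; it buys a combinatorial meaning for the leading coefficient (Catalan numbers per cycle, i.e.\ the M\"obius function of the noncrossing partition lattice) at the price of justifying that minimal monotone factorizations split over cycles. That fact is true, but your parenthetical justification is garbled: the correct statement is that in a geodesic factorization every step must \emph{merge} two cycles of the running product, which forces every transposition to have both points inside a single cycle of $\sigma$; independence and commutation of the per-cycle factors then give the product formula for $N_{p-\#\sigma}(\sigma)$.

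Two further remarks. Your conclusion is $\Wg(n,\sigma)=\prod_c\Wg(n,c)\,(1+O(n^{-2}))$ with no prefactor; this is the correct statement, consistent with the multiplicativity of $\Mob$ asserted immediately after the theorem, and the factor $(-1)^{n-\#\sigma}$ in the first displayed equation (a sign depending on the matrix dimension $n$) is a typo in the review, since the signs $(-1)^{|c|-1}$ are already carried by the factors $\Wg(n,c)$. The remaining sketched steps --- the partial-fraction/Chu--Vandermonde verification of the single-cycle identity and the count of minimal monotone factorizations of a cycle --- are standard, and deducing the latter from the already-proved exact formula is legitimate; the only caution is that the $\lambda\leftrightarrow\lambda'$ symmetry holds literally only for $n\ge p$, where the truncation is vacuous and $\Wg(n,\cdot)$ is an honest rational function of $n$, which suffices for an asymptotic statement but should be said.
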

As a shorthand for the
quantities in Theorem \ref{thm:mob}, we introduce the function $\Mob$ on the symmetric
group. $\Mob$ is invariant under conjugation and multiplicative over the cycles; further, it satisfies
for any permutation $\sigma \in \S_p$:
\begin{equation}
\Wg(n,\sigma) = n^{-(p + |\sigma|)} (\Mob(\sigma) + O(n^{-2}))
\end{equation}
where $|\sigma |=p-\# \sigma $ is the \emph{length} of $\sigma$, i.e. the minimal number of transpositions that multiply to $\sigma$. We refer to \cite{collins2006integration} for details about the function $\Mob$.
We finish this section by a well known lemma which we will use several times towards the end of the paper. This result is contained in \cite{nica2006lectures}.
\begin{lemma}\label{lem:S_p}
The function
$d(\sigma,\tau) = |\sigma^{-1} \tau|$ is an integer valued distance on $\S_p$. Besides, it has the following properties:
\begin{itemize}
\item the diameter of $\S_p$ is $p-1$;
\item $d(\cdot, \cdot)$ is left and right translation invariant;
\item for three permutations $\sigma_1,\sigma_2, \tau \in \S_p$, the quantity $d(\tau,\sigma_1)+d(\tau,\sigma_2)$
has the same parity as $d(\sigma_1,\sigma_2)$;
\item the set of geodesic points between the identity permutation $\id$ and some permutation $\sigma \in \S_p$ is in bijection with the set of non-crossing partitions smaller than $\pi$, where the partition $\pi$ encodes the cycle structure of $\sigma$. Moreover, the preceding bijection preserves the lattice structure. 
\end{itemize}
\end{lemma}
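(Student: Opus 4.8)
The plan is to establish each of the four bulleted assertions about $d(\sigma,\tau)=|\sigma^{-1}\tau|$ in turn, reducing everything to the interpretation of $|\sigma|$ as a word-length (graph) metric on the Cayley graph of $\S_p$ with respect to the generating set of all transpositions. First I would note that $|\cdot|$ is a length function: $|\sigma|=0$ iff $\sigma=\id$, $|\sigma^{-1}|=|\sigma|$ since a product of transpositions equalling $\sigma$ inverts to one equalling $\sigma^{-1}$ of the same length, and $|\sigma\tau|\le|\sigma|+|\tau|$ by concatenating factorizations. From these three properties the function $d(\sigma,\tau)=|\sigma^{-1}\tau|$ is immediately a metric: symmetry follows from $|\cdot^{-1}|=|\cdot|$ together with $(\sigma^{-1}\tau)^{-1}=\tau^{-1}\sigma$, and the triangle inequality $d(\sigma,\rho)\le d(\sigma,\tau)+d(\tau,\rho)$ is exactly $|\sigma^{-1}\rho|=|(\sigma^{-1}\tau)(\tau^{-1}\rho)|\le|\sigma^{-1}\tau|+|\tau^{-1}\rho|$. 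Left and right translation invariance is then built in: $d(\alpha\sigma,\alpha\tau)=|\sigma^{-1}\alpha^{-1}\alpha\tau|=d(\sigma,\tau)$, and for right invariance one uses $d(\sigma\alpha,\tau\alpha)=|\alpha^{-1}\sigma^{-1}\tau\alpha|$, which equals $|\sigma^{-1}\tau|$ because conjugation preserves cycle type and hence the number of cycles, so it preserves $|\cdot|=p-\#(\cdot)$.

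The key computational input I would isolate first is the identity $|\sigma|=p-\#(\sigma)$, i.e. that the minimal number of transpositions multiplying to $\sigma$ equals $p$ minus the number of cycles of $\sigma$. The lower bound comes from observing that multiplying a permutation by a transposition changes its number of cycles by exactly $\pm 1$ (a transposition either merges two cycles or splits one), so $\#(\sigma)\ge p-(\text{number of transpositions used})$; the upper bound comes from factoring a $k$-cycle as a product of $k-1$ transpositions and summing over cycles, giving $\sum_{c}(|c|-1)=p-\#(\sigma)$. With this in hand, the diameter statement is immediate: $|\sigma|=p-\#(\sigma)$ is maximized when $\#(\sigma)=1$, i.e. on a full $p$-cycle, giving diameter $p-1$. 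For the parity statement, $d(\tau,\sigma_1)+d(\tau,\sigma_2)=|\tau^{-1}\sigma_1|+|\tau^{-1}\sigma_2|$; since $|\rho|$ and the sign of $\rho$ are linked by $\mathrm{sgn}(\rho)=(-1)^{|\rho|}$, the parity of $|\tau^{-1}\sigma_1|+|\tau^{-1}\sigma_2|$ equals the parity of $|\tau^{-1}\sigma_1(\tau^{-1}\sigma_2)^{-1}|$ only up to the ambiguity controlled by signs — more precisely, $(-1)^{|\tau^{-1}\sigma_1|+|\tau^{-1}\sigma_2|}=\mathrm{sgn}(\tau^{-1}\sigma_1)\mathrm{sgn}(\tau^{-1}\sigma_2)=\mathrm{sgn}(\sigma_1^{-1}\sigma_2)=(-1)^{d(\sigma_1,\sigma_2)}$, which is exactly the claimed equality of parities.

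The real substance — and the step I expect to be the main obstacle — is the last bullet: identifying the set of geodesic points between $\id$ and $\sigma$ (the permutations $\alpha$ with $|\alpha|+|\alpha^{-1}\sigma|=|\sigma|$) with the lattice of non-crossing partitions below the partition $\pi$ recording the cycle structure of $\sigma$, compatibly with the lattice order. Here I would invoke the Biane/Kreweras theory as presented in \cite{nica2006lectures}: when $\sigma$ is a single $p$-cycle, the geodesics from $\id$ to $\sigma$ in the transposition Cayley graph are in bijection with maximal chains refining the associated data, and the "midpoints" $\alpha$ correspond precisely to non-crossing partitions of $\{1,\dots,p\}$, with $|\alpha|=p-\#(\alpha)$ matching the rank in the non-crossing partition lattice $NC(p)$. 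For general $\sigma$ one works cycle by cycle, since the metric and the factorization problem split over the cycles of $\sigma$, so the geodesic interval factors as a product of intervals $NC$ of each cycle, which is exactly the interval $[\hat 0,\pi]$ in the full partition lattice intersected with non-crossing partitions; the order-isomorphism (lattice structure preservation) then follows from the fact that refinement of partitions corresponds to the geodesic order $\alpha\le\beta\iff|\alpha|+|\alpha^{-1}\beta|=|\beta|$ and $|\beta|+|\beta^{-1}\sigma|=|\sigma|$. Since this is stated in the excerpt as a "well known lemma" contained in \cite{nica2006lectures}, I would present the metric and parity parts in full and cite \cite{nica2006lectures} for the geodesic-to-$NC$ correspondence rather than reproving Biane's theorem.
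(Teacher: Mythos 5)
Your proposal is correct, and it follows the same route the paper does: the paper gives no proof of this lemma at all, simply recording it as well known and citing \cite{nica2006lectures}, which is exactly the reference you invoke for the only nontrivial part (the Biane correspondence between geodesic points and non-crossing partitions). The elementary parts you write out in full (the identity $|\sigma|=p-\#\sigma$, the metric axioms, translation invariance, and the parity argument via $\mathrm{sgn}(\rho)=(-1)^{|\rho|}$) are all sound, so your write-up is if anything more self-contained than the paper's.
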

Finally, we introduce a definition which generalizes the trace function: for some matrices $A_1, A_2, \ldots, A_p \in \M_n(\C)$ and some permutation $\sigma \in \S_p$, we define
\[\trace_\sigma(A_1, \ldots, A_p) = \prod_{\substack{c\in \text{Cycles} (\sigma) \\ c=(i_1 \, i_2 \, \cdots \, i_k)}} \trace \left( A_{i_1} A_{i_2} \cdots A_{i_k}\right).\]
We also put $\trace_\sigma(A) = \trace_\sigma(A, A, \ldots, A)$.

\subsection{Graphical interpretation of
Wick and Weingarten calculus}\label{sec:graphical-Wg-wick}

Our main motivation for the graphical 
calculus from to allow
to interpret nicely the above integration theorems
\ref{thm:wick-formula}, \ref{thm:Weingarten-formula}.
We consider first the case of the Weingarten calculus. 
The key to an interpretation
relies on the concept of \emph{removal} of boxes $U$ and $\ol U$.

A removal $r$ is a way to pair decorations of the $U$ and $\ol U$ boxes appearing in a diagram. 
It  consists in  a pairing $\alpha $ of the white decorations of $U$  boxes with the white decorations of $\ol U$ boxes, 
together with a pairing $\beta $ between the black decorations of $U$ boxes and the black decorations of $\ol U$ boxes. 
Assuming that $\D$ contains $p$ boxes of type $U$ and that the boxes $U$ (resp. $\ol U$) are labeled from $1$ to $p$, 
then $r=(\alpha,\beta)$ where $\alpha,\beta$ are permutations of $\mathcal{S}_p$. The set of all removals of $U$ and $\ol U$ boxes is denoted by $\Rem_U(\D)$.

A removal $r \in \Rem_U(\D)$, yields a new diagram $\D_r$ associated to $r$, which has the important property that it no longer contains boxes of type $U$ or $\ol U$. 
One starts by erasing the boxes $U$ and $\ol U$ but keeps the decorations attached to them. 
Assuming that one has labeled the erased boxes $U$ and $\ol U$ with integers from $\{1, \ldots, p\}$, one connects \emph{all} the (inner parts of the) \emph{white} decorations of the $i$-th erased $U$ box with the corresponding (inner parts of the) \emph{white} decorations of the $\alpha(i)$-th erased $\ol U$ box. In a similar manner, one uses the permutation $\beta$ to connect black decorations. 
In \cite{collins2010random}, we proved the following result:
\begin{theorem}\label{thm:Wg_diag}
The following holds true:
\[\E_U(\D)=\sum_{r=(\alpha, \beta) \in \Rem_U(\D)} \D_r \Wg (n, \alpha\beta^{-1}).\]
\end{theorem}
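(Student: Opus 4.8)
The plan is to reduce Theorem \ref{thm:Wg_diag} to the scalar Weingarten formula (Theorem \ref{thm:Weingarten-formula}) by expanding the diagram $\D$ into its tensor entries and tracking how each string/wire contributes a Kronecker delta. First I would fix bases for all the vector spaces decorating the boxes of $\D$, and write the element $T_\D$ (or its relevant contraction) as a sum over all index assignments to the wires, with each summand a product of: (i) entries of the non-random boxes, (ii) entries $U_{i_aj_a}$ coming from the $p$ boxes of type $U$, and (iii) entries $\overline{U_{i'_bj'_b}}$ coming from the $p$ boxes of type $\ol U$. Here the white decorations of a $U$ box carry the ``row'' indices $i_a$ and the black decorations carry the ``column'' indices $j_a$ (and conversely for $\ol U$, up to the conjugation convention fixed in Section \ref{sec:graphical-tensors}); the precise matching of decoration colours to the roles in \eqref{eq:Weingarten} is exactly the bookkeeping one must get right.

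Next I would take the expectation $\E_U$ inside the finite sum over indices and apply \eqref{eq:Weingarten} to each term $\E_U\left[\prod_a U_{i_aj_a}\prod_b \overline{U_{i'_bj'_b}}\right]$. This produces a sum over pairs of permutations $(\sigma,\tau)\in\S_p\times\S_p$ of a product of delta functions $\prod_a \delta_{i_ai'_{\sigma(a)}}$ and $\prod_b \delta_{j_bj'_{\tau(b)}}$, times $\Wg(n,\tau\sigma^{-1})$. The key step is then to recognize that, for a fixed $(\sigma,\tau)$, summing the remaining index products against these deltas is precisely the contraction described by the removal operation: identifying each white decoration of the $a$-th $U$ box with the white decoration of the $\sigma(a)$-th $\ol U$ box is the rule ``connect white decorations of box $i$ with those of box $\alpha(i)$'', so $\alpha=\sigma$ governs the white wiring and $\beta=\tau$ governs the black wiring, yielding exactly the diagram $\D_r$ with $r=(\sigma,\tau)$. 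Re-summing over index assignments with these new wires in place rebuilds $T_{\D_r}$, and since $\tau\sigma^{-1}=\alpha^{-1}\beta$ has the same conjugacy class (hence the same $\Wg$ value, as $\Wg$ is a class function) as $\alpha\beta^{-1}$, the weight $\Wg(n,\alpha\beta^{-1})$ in the statement matches. Relabelling $(\sigma,\tau)$ as $r=(\alpha,\beta)\in\Rem_U(\D)$ gives the claimed identity.

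The main obstacle I expect is purely notational rather than conceptual: one has to set up the correspondence between the combinatorial data of a removal (a pairing of white decorations and a pairing of black decorations, encoded by $\alpha,\beta\in\S_p$) and the index-matching data appearing in \eqref{eq:Weingarten}, being careful about (a) which colour of decoration plays the role of the $U$-row index $i_a$ versus the $U$-column index $j_a$, (b) the fact that a single ``box'' $U$ may carry several white (resp. black) decorations corresponding to a tensor power, so that the delta $\delta_{i_ai'_{\sigma(a)}}$ must be read as a product of deltas over all white legs of that box simultaneously, and (c) the vanishing case $p\neq p'$ from \eqref{eq:Wg_diff}, which corresponds to $\Rem_U(\D)=\emptyset$ and gives $\E_U(\D)=0$, consistent with the convention that an empty sum is zero. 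Once this dictionary is pinned down, the proof is a direct substitution; a cleaner alternative would be to verify the identity first on the generating ``elementary'' diagrams (a single $U$ box wired to a single $\ol U$ box) and then invoke multilinearity and functoriality of the graphical calculus to extend to arbitrary $\D$, but the bookkeeping above is essentially what \cite{collins2010random} carries out in detail.
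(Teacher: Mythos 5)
Your proposal is correct and follows essentially the same route as the proof the paper points to in \cite{collins2010random}: expand the diagram in coordinates, apply the entrywise Weingarten formula \eqref{eq:Weingarten}, and observe that the Kronecker deltas indexed by $(\sigma,\tau)$ implement exactly the removal wiring, with the case $p\neq p'$ handled by \eqref{eq:Wg_diff} and $\Rem_U(\D)=\emptyset$. Your conjugacy-class remark resolving the harmless discrepancy between $\tau\sigma^{-1}$ and $\alpha\beta^{-1}$ (note $\tau\sigma^{-1}=\beta\alpha^{-1}$ under your identification, which is still conjugate to $\alpha\beta^{-1}$) is the same bookkeeping the original argument relies on, since $\Wg$ is a class function.
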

In the case where  diagrams also involve a  box $G$ corresponding to a \emph{Gaussian random matrix},
we are also able to compute the expected value
conditional to the $\sigma$-algebra of $G$
by graphical methods, yielding 
a new interpretation of Wick formula

Namely,
the expectation value of  a random diagram $\D$ can be computed by a \emph{removal} 
procedure as in the unitary case. Without loss of generality, we assume that we do not have in our diagram adjoints of Gaussian matrices, 
but instead their complex conjugate box. This assumption allows for a more straightforward use of the Wick Lemma \ref{thm:wick-formula}. 
As in the unitary case, we can assume that $\D$ contains only one type of random Gaussian box $G$; the other independent random Gaussian matrices are assumed constant at this stage as they shall be removed in the same manner afterwards. 

A removal of the diagram $\D$ is a pairing between \emph{Gaussian boxes} $G$ and their conjugates $\ol G$. The set of removals is denoted by $\Rem_G(\D)$ and it may be empty: if the number of $G$ boxes is different from the number of $\ol G$ boxes, then $\Rem_G(\D) = \emptyset$ (this is consistent with the first case of the Wick formula \eqref{eq:Wick}). Otherwise, a removal $r$ can identified with a permutation $\alpha \in \mathcal S_p$, where $p$ is the number of $G$ and $\ol G$ boxes. 
The main difference between the notion of a removal in the Gaussian and the Haar unitary cases is as follows: in the Haar unitary (Weingarten) case, a removal was associated with a \emph{pair of permutations}: 
one has to pair white decorations of $U$ and $\ol U$ boxes and, independently, black decorations of conjugate boxes. On the other hand, in the Gaussian/Wick case, one pairs {conjugate boxes}: white and black decorations are paired in an identical manner, hence only one permutation is needed to encode the removal.

To each removal $r$ associated to a permutation $\alpha \in \mathcal S_p$ corresponds a removed diagram $\D_r$ constructed as follows. One starts by erasing the boxes $G$ and $\ol G$, but keeps the decorations attached to these boxes. Then, the decorations (white \emph{and} black) of the $i$-th $G$ box are paired with the decorations of the $\alpha(i)$-th $\ol G$ box in a coherent manner, see Figure \ref{fig:expectation_Gaussian}.

\begin{figure}
\includegraphics[width=0.4\textwidth]{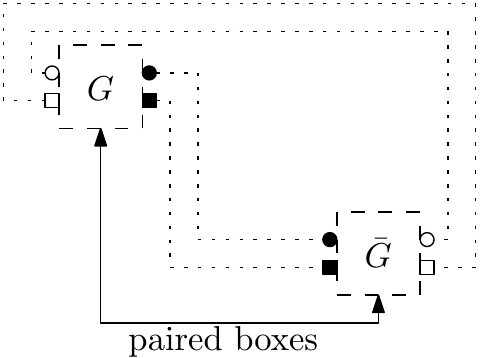}
\caption{Pairing of boxes in the Gaussian case}
\label{fig:expectation_Gaussian}
\end{figure}

The graphical reformulation of the Wick Lemma \ref{thm:wick-formula} becomes the following theorem, which we state without proof.

\begin{theorem}\label{thm:Wick_diag}
The following holds true:
\[\E_G[\D]=\sum_{r \in \Rem_G(\D)} \D_r .\]
\end{theorem}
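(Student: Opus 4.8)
The plan is to deduce the statement from the scalar Wick Lemma (Theorem~\ref{thm:wick-formula}) by expanding the diagram into index notation. First, recall the standing conventions in force here: Gaussian boxes occur only as $G$ or $\ol G$ (never as adjoints), and all random matrices other than the chosen Gaussian family are frozen. By conditioning and Fubini it therefore suffices to prove the formula for a single Gaussian family $G$, treated as an $m \times n$ matrix with i.i.d.\ standard complex Gaussian entries $G_{ab}$, for which $\E[G_{ab}] = \E[G_{ab}G_{cd}] = 0$ and $\E[G_{ab}\ol{G_{cd}}] = \delta_{ac}\delta_{bd}$; the remaining Gaussian families are then eliminated one at a time by the same procedure.

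Next I would write the tensor $T_\D$ attached to $\D$ as a sum over all wire indices: each wire contributes an index summed over the relevant dimension, each box contributes the matching entry, and among these entries there appear $p$ factors $G_{a_i b_i}$ coming from the $i$-th $G$ box (here $a_i$ decorates its white symbol and $b_i$ its black symbol) together with $q$ factors $\ol{G_{c_j d_j}}$ from the $j$-th $\ol G$ box. Taking $\E_G$ and using independence from the frozen boxes, the only genuinely random quantity is
\[\E\Big[\prod_{i=1}^p G_{a_i b_i}\prod_{j=1}^q \ol{G_{c_j d_j}}\Big],\]
which I evaluate with the Wick Lemma. Since the only non-vanishing pair correlation is $\E[G_{ab}\ol{G_{cd}}]$, every surviving pairing matches each $G_{a_i b_i}$ to some $\ol{G_{c_j d_j}}$; in particular the expectation vanishes unless $p = q$ (in which case $\Rem_G(\D) = \emptyset$ and both sides of the claimed identity are $0$), and for $p = q$ the surviving pairings are exactly the permutations $\alpha \in \mathcal S_p$, the expectation becoming $\sum_{\alpha \in \mathcal S_p}\prod_{i=1}^p \delta_{a_i c_{\alpha(i)}}\,\delta_{b_i d_{\alpha(i)}}$.

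Finally I would recognize each term of this sum as a reconnection of wires. Plugging $\sum_\alpha \prod_i \delta_{a_i c_{\alpha(i)}}\,\delta_{b_i d_{\alpha(i)}}$ back into the finite index sum for $T_\D$ and interchanging summations, the $\alpha$-th summand is precisely the index sum of the diagram obtained by deleting all boxes $G$ and $\ol G$, keeping their decorations, and joining the white decoration of the $i$-th $G$ box to the white decoration of the $\alpha(i)$-th $\ol G$ box (this is the effect of $\delta_{a_i c_{\alpha(i)}}$) and likewise the black decoration of the $i$-th $G$ box to the black decoration of the $\alpha(i)$-th $\ol G$ box (the effect of $\delta_{b_i d_{\alpha(i)}}$). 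That diagram is by definition the removed diagram $\D_r$ associated with $r = \alpha$ (see Figure~\ref{fig:expectation_Gaussian}), so summing over $\alpha \in \mathcal S_p$, which is exactly $\Rem_G(\D)$, yields $\E_G[\D] = \sum_{r \in \Rem_G(\D)} \D_r$.

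The point requiring genuine care — the main obstacle — is this last identification: one must check that a \emph{single} permutation $\alpha$ simultaneously governs the white and the black reconnections, i.e.\ that the two families of Kronecker deltas $\delta_{a_i c_{\alpha(i)}}$ and $\delta_{b_i d_{\alpha(i)}}$ come with the \emph{same} $\alpha$. This is precisely the feature distinguishing the Gaussian case from the Weingarten case of Theorem~\ref{thm:Weingarten-formula}: a Wick pairing pairs entire Gaussian letters $G_{a_i b_i} \leftrightarrow \ol{G_{c_j d_j}}$, hence ties together row and column indices at once, whereas for the Haar unitary the $U$'s and $\ol U$'s are paired by two independent permutations $\sigma, \tau$. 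Once this is made precise — together with the routine verification that the ``coherent'' reconnection of decorations matches the contraction pattern carried by the original wires, and that several Gaussian families are treated by iterated conditioning — the proof is complete.
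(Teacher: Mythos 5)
Your proof is correct, and it is exactly the argument the paper has in mind: the theorem is stated without proof here, but the index expansion of $T_\D$, the complex Wick formula forcing each $G_{a_ib_i}$ to pair with some $\ol{G_{c_jd_j}}$ (hence a single permutation $\alpha\in\S_p$ governing both the white and black reconnections, and the vanishing when $p\neq q$), and the identification of each $\alpha$-term with the removed diagram $\D_r$ are precisely the content of the removal procedure described just before the statement and of the proof in \cite{collins2010random}. Your closing remark correctly isolates the one point that distinguishes this from the Weingarten case.
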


\subsection{Some elements of free probability theory}
\label{sec:free-probability}

A {\it non-commutative probability space \it} is
an algebra $\mathcal A$ with unit endowed with a tracial 
state $\phi$. 
An element of $\mathcal A$ is called
a (non-commutative) random variable. In this paper we shall be mostly concerned with the non-commutative probability space of \emph{random matrices} $(\M_n(L^{\iy-}(\Omega, \P)), \E[n^{-1}\trace(\cdot)])$ (we use the standard notation $L^{\iy-}(\Omega, \P) = \cap_{p\geq 1} L^p(\Omega, \P)$). 

Let $\mathcal A_1, \ldots ,\mathcal A_k$ be subalgebras of $\mathcal A$ having the same unit as $\mathcal A$.
They are said to be \emph{free} if for all $a_i\in \mathcal  A_{j_i}$ ($i=1, \ldots, k$) 
such that $\phi(a_i)=0$, one has  
$$\phi(a_1\cdots a_k)=0$$
as soon as $j_1\neq j_2$, $j_2\neq j_3,\ldots ,j_{k-1}\neq j_k$.
Collections $S_{1},S_{2},\ldots $ of random variables are said to be 
free if the unital subalgebras they generate are free.

Let $(a_1,\ldots ,a_k)$ be a $k$-tuple of selfadjoint random variables and let
$\mathbb{C}\langle X_1 , \ldots , X_k \rangle$ be the
free $*$-algebra of non commutative polynomials on $\mathbb{C}$ generated by
the $k$ indeterminates $X_1, \ldots ,X_k$. 
The {\it joint distribution\it} of the family $\{a_i\}_{i=1}^k$ is the linear form
\begin{align*}
\mu_{(a_1,\ldots ,a_k)} : \C\langle X_1, \ldots ,X_k \rangle &\to \C \\
P &\mapsto \phi (P(a_1,\ldots ,a_k)).
\end{align*}
In the case of a single, self-adjoint random variable $x$, if the moments of $x$ coincide with those of a compactly supported probability measure $\mu$, i.e.
$$\forall p \geq 1, \qquad \phi(x^p) = \int t^p d\mu(t),$$
we say that $x$ has distribution $\mu$. The most important distribution in free probability theory is the semicircular distribution 
$$\mu_{SC(0,1)} = \frac{\sqrt{4-x^2}}{2\pi} \mathbf{1}_{[-2,2]}(x) dx,$$
which is, for reasons we will not get into, the free world equivalent of the Gaussian distribution in classical probability (see \cite[Lecture 8]{nica2006lectures} for the details). A random variable $x$ having distribution $\mu_{SC(0,1)}$ has the Catalan number for moments:
$$\phi(x^p) = \begin{cases}
\mathrm{Cat}_p := \frac{1}{p+1}\binom{2p}{p}, \qquad &\text{ if $p$ is even}\\
0, \qquad &\text{ if $p$ is odd.}
\end{cases}$$
More generally, if $x$ has distribution $\mu_{SC(0,1)}$, we say that $y=\sigma x + m$ has distribution 
\begin{equation}
\label{eq:def-semicircular}
\mu_{SC(m,\sigma^2)} = \frac{\sqrt{4\sigma^2-(x-m)^2}}{2\pi \sigma^2} \mathbf{1}_{[m=2\sigma,m+2\sigma]}(x) dx.
\end{equation}

\begin{figure}[htbp]
\begin{center}
\includegraphics[width=0.4\textwidth]{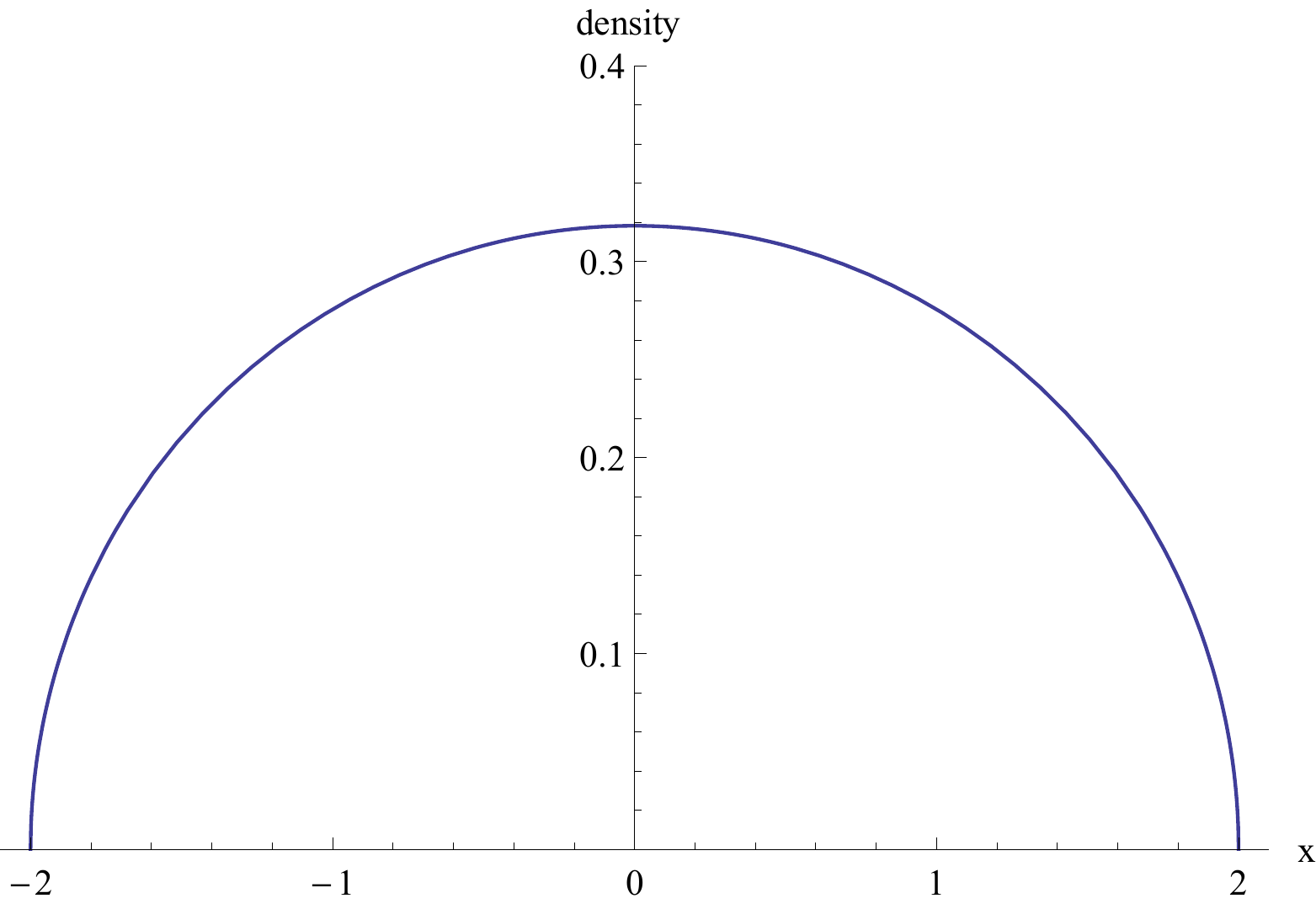} \quad \includegraphics[width=0.4\textwidth]{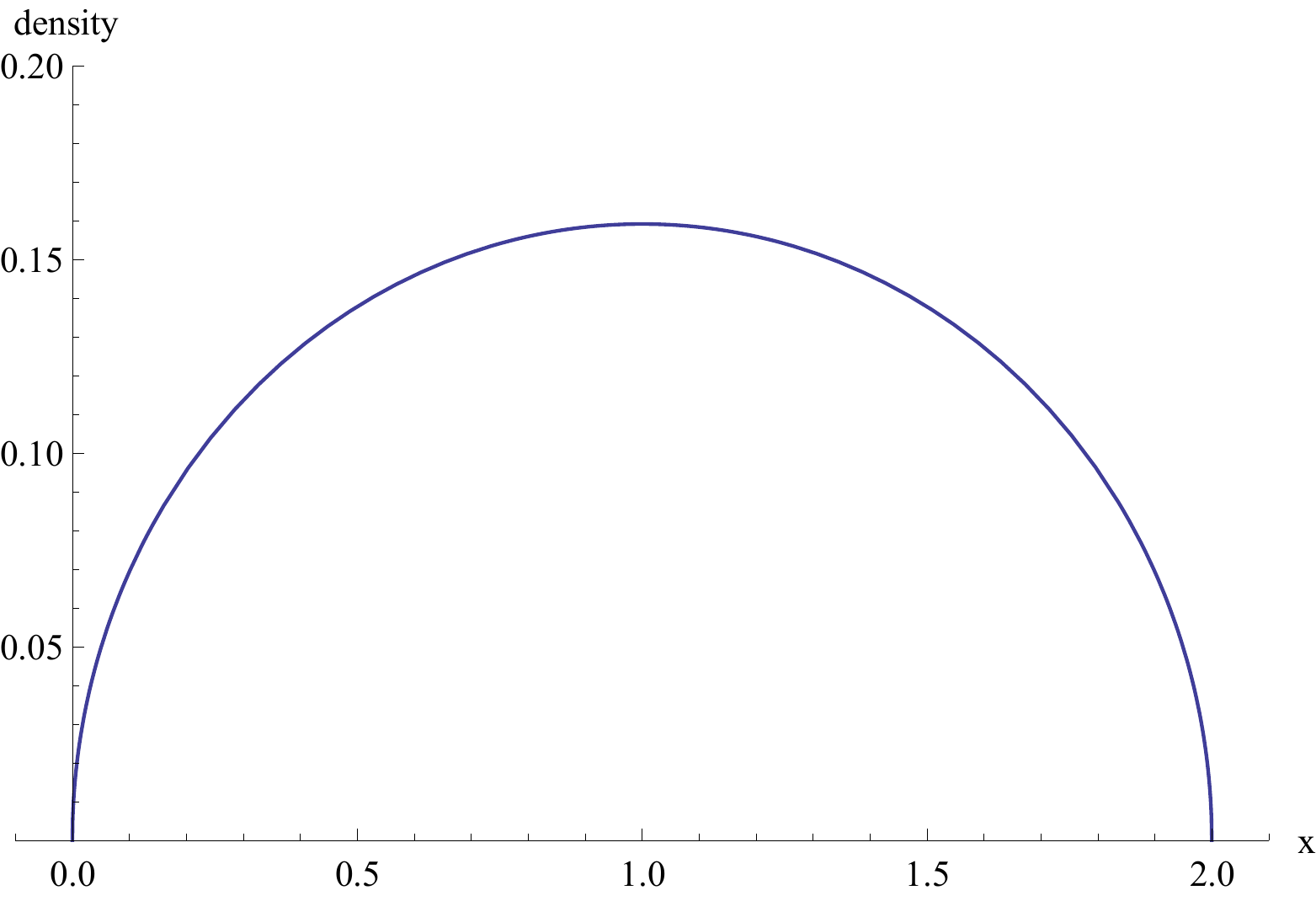}
\caption{The density of the semicircular distributions $\mu_{SC(0,1)}$ (left) and $\mu_{SC(1,1/4)}$ (right).}
\label{fig:semicircular}
\end{center}
\end{figure}

Given a $k$-tuple $(a_1,\ldots ,a_k)$ of free 
random variables such that the distribution of $a_i$ is $\mu_{a_i}$, the joint distribution
$\mu_{(a_1,\ldots ,a_k)}$ is uniquely determined by the
$\mu_{a_i}$'s.
A family $(a_1^{n},\ldots ,a_k^{n})_n$ of $k$-tuples of random
variables is said to \emph{converge in distribution} towards $(a_1,\ldots ,a_k)$
iff for all $P\in \C \langle X_1, \ldots ,X_k \rangle$, 
$\mu_{(a_1^n,\ldots ,a_k^n)}(P)$ converges towards
$\mu_{(a_1,\ldots ,a_k)}(P)$ as $n\to\infty$. 
Sequences of random variables  $(a_1^{n})_n,\ldots ,(a_k^{n})_n$ are called \emph{asymptotically free} as $n \to \iy$
iff the $k$-tuple $(a_1^{n},\ldots ,a_k^{n})_n$ converges in distribution towards a family of free random variables.

The following result was contained in \cite{voiculescu1998strengthened} (see also \cite{collins2006integration}).

\begin{theorem}\label{libre}
Let $\{U^{(n)}_k\}_{k \in \N}$ be a collection of independent 
Haar distributed random matrices of $\M_n (\C )$ and $\{W^{(n)}_k\}_{k\in \N}$ be a 
set of constant matrices of $\M_n (\C )$ 
admitting a joint limit distribution as $n \to \iy$ with respect to the
state $n^{-1}\trace$.
Then, almost surely,
 the family $\{U^{(n)}_k, W^{(n)}_k\}_{k \in \N}$ admits a limit $*$-distribution $\{u_k, w_k\}_{k \in \N}$ with respect to $n^{-1}\trace$, such that $u_1$, $u_2$, \ldots, $\{w_1, w_2, \ldots\}$ are free.
\end{theorem}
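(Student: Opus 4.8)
The plan is to upgrade convergence in expectation to almost sure convergence by means of a variance estimate, so the heart of the matter is a moment computation via the Weingarten formula. The empirical $*$-distribution of $\{U^{(n)}_k, W^{(n)}_k\}$ is encoded by the countable family of numbers $n^{-1}\trace\, w$, with $w$ ranging over the $*$-monomials in the generators; hence it suffices to prove, for each fixed $w$, that
\begin{enumerate}
\item $\E\big[n^{-1}\trace\, w\big]$ converges as $n\to\iy$ to a limit $\psi(w)$, where $\psi$ is the joint $*$-distribution of the \emph{free product}: each $u_k$ a Haar unitary, all the $u_k$ free, and $\{u_k\}$ free from the $*$-algebra generated by the $w_k$, the latter carrying the prescribed limiting distribution of the $W^{(n)}_k$; and
\item $\mathrm{Var}\big[n^{-1}\trace\, w\big] = O(n^{-2})$.
\end{enumerate}
Granting these, Chebyshev's inequality gives $\sum_n \P\big(\,|n^{-1}\trace\, w - \E[n^{-1}\trace\, w]| > \varepsilon\,\big) < \iy$, so $n^{-1}\trace\, w \to \psi(w)$ almost surely; intersecting the countably many full-probability events, one per monomial, yields a single probability-one event on which the entire $*$-distribution converges to $\psi$.

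For step (1) I would not compute $\psi(w)$ head-on, but rather verify the defining relation of freeness in the limit, using the standard centering recursion. Under the trace, every monomial is a product of \emph{blocks}, each being either a power $U^{k}_j$ with $k\neq 0$ or a word $B$ in the deterministic matrices $W^{(n)}_k, (W^{(n)}_k)^*$, and by alternation two consecutive unitary blocks carry distinct labels. Writing each $B$-block as $\big(B - n^{-1}\trace(B)\,\I_n\big) + n^{-1}\trace(B)\,\I_n$ and expanding, one reduces by induction on the number of blocks — with base cases $\E[n^{-1}\trace\, U^{k}_j]\to\delta_{k,0}$, immediate from \eqref{eq:Wg_diff}, and convergence of $\E[n^{-1}\trace\, B]$, which is the hypothesis on the $W^{(n)}_k$ — to the assertion that $\E[n^{-1}\trace\, w]\to 0$ whenever $w$ is an alternating word all of whose $B$-blocks are \emph{centered}, i.e. $n^{-1}\trace(B_j)\to 0$. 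This assertion is where the Weingarten machinery is needed.

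The main obstacle is precisely this vanishing assertion, which is a genus-type estimate. Integrating the independent Haar unitaries one label at a time, Theorem \ref{thm:Weingarten-formula} — equivalently, the graphical removal procedure of Theorem \ref{thm:Wg_diag} — expresses $\E[n^{-1}\trace\, w]$ as a finite sum over tuples of permutations, with the Kronecker deltas reconnecting the $B$-blocks into products whose normalized traces appear, each term being weighted by a product of $\Wg$ values. Substituting $\Wg(n,\sigma)=n^{-(p+|\sigma|)}\big(\Mob(\sigma)+O(n^{-2})\big)$ from Theorem \ref{thm:mob}, the net power of $n$ carried by a term is controlled by the lengths of the permutations involved, and the triangle inequality for $d(\sigma,\tau)=|\sigma^{-1}\tau|$ together with the bijection between geodesic points and non-crossing partitions (Lemma \ref{lem:S_p}) forces this power to be $\le 0$, with equality only for the ``non-crossing'' configurations; these alone survive the limit. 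But every non-crossing configuration has an innermost matched pair of unitary slots; since a matched pair consists of two slots of the same label and opposite conjugation, whereas two consecutive unitary blocks carry distinct labels, this innermost pair must enclose exactly one centered block $B_j$, and contracting it produces the factor $n^{-1}\trace(B_j)\to 0$. Hence the limit vanishes — this is the freeness relation — and, with the base cases, this identifies $\psi$ with the stated free-product distribution.

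For the variance bound (2) one writes $\mathrm{Var}[n^{-1}\trace\, w] = n^{-2}\,\E\big[\trace\, w\,\cdot\,\overline{\trace\, w}\big] - \big|\E[n^{-1}\trace\, w]\big|^2$ and applies the same Weingarten expansion to $\trace\, w\cdot\overline{\trace\, w}$, now a word with twice as many $U$'s and $U^*$'s. The genus count via Lemma \ref{lem:S_p} shows that the permutation configurations factoring through the two separate traces reconstruct $\big|\E[n^{-1}\trace\, w]\big|^2$ up to an $O(n^{-2})$ error, while any configuration linking the two traces is itself $O(n^{-2})$; subtracting gives $\mathrm{Var}[n^{-1}\trace\, w]=O(n^{-2})$, which completes the proof. (Passing from one to several independent Haar unitaries introduces no new idea: one integrates them successively, and all the power counts are additive over these integrations.)
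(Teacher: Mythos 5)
The paper offers no proof of Theorem \ref{libre}: it is quoted as a known result of Voiculescu \cite{voiculescu1998strengthened}, with \cite{collins2006integration} cited for the Weingarten-calculus treatment, so there is no in-paper argument to compare against. Your sketch is, in outline, exactly the proof from that second reference: reduce to moments of words, expand via Theorem \ref{thm:Weingarten-formula}, isolate the surviving non-crossing configurations through the geodesic condition of Lemma \ref{lem:S_p} and the asymptotics of Theorem \ref{thm:mob}, observe that each innermost matched pair contracts onto the normalized trace of a single centered block (so the limit of an alternating centered word vanishes, which is the freeness relation), and upgrade to almost sure convergence by an $O(n^{-2})$ variance bound plus Borel--Cantelli. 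The outline is sound; the one step you assert rather than establish is the power count itself (that the exponent of $n$ is at most zero, with equality exactly on the non-crossing configurations), which is the technical heart of the argument and is precisely what the cited Weingarten machinery supplies. One bookkeeping remark: two $U$-blocks of the same label may be separated by a $B$-block, so when you replace that block by its normalized trace the two unitary blocks merge into a longer one; the induction is therefore cleaner on the total length of the word than on the number of blocks.
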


Given two free random variables $a,b\in\mathcal{A}$, the distribution $\mu_{a+b}$
is uniquely determined by $\mu_a$ and $\mu_b$. The free additive convolution
of $\mu_a$ and $\mu_b$ is defined by $\mu_a\boxplus\mu_b=\mu_{a+b}$. 
When $x=x^*\in\mathcal{A}$, we identify $\mu_x$ with
the spectral measure of $x$ with respect to $\tau$.
The operation $\boxplus$ induces a binary operation 
on the set of probability measures on $\mathbb{R}$.

\section{Entanglement of random quantum states}
\label{sec:entanglement}

\subsection{Probability distributions on the set of quantum states}
\label{sec:random-states}

\subsubsection{Random pure quantum states}

The first model for random quantum states we look at is the \emph{uniform measure on pure quantum states}. Indeed, the set of pure quantum states of a finite dimensional Hilbert space $\mathcal H = \mathbb C^d$ can be identified, up to a phase, with the set of points on the unit sphere of $\mathcal H$, $\{ x \in \mathbb C^d \, : \, \|x\| = 1\}$. On this set, there is a canonical probability measure, the uniform (or Lebesgue) measure.  

\begin{definition}\label{def:random-pure-state}
	A random pure quantum state $x \in \mathbb C^d$ is said to follow the \emph{uniform distribution} if $x$ is uniformly distributed on the unit sphere of $\mathbb C^d$. We denote the uniform distribution of pure states in $\mathbb C^d$ by $\chi_d$. 
\end{definition}

The uniform distribution has the following important properties \cite[Section 2.1]{nechita2007asymptotics}.
\begin{proposition}\label{prop:random-pure-states}
Let $x \in \mathbb C^d$ be a uniformly distributed pure quantum state, $x \sim \chi_d$. Then:
\begin{enumerate}
\item For any unitary operator $U \in \mathcal U_d$ ($U$ can either be fixed or random, but independent from $x$), the random pure state $Ux$ also has the uniform distribution, $Ux \sim \chi_d$.
\item If $G \in \mathbb C^d$ is random complex Gaussian vector, $X \sim \mathcal N_\mathbb C(0, I_n)$, then $X/\|X\|$ is a uniform quantum pure state, $X/\|X\| \sim \chi_d$.
\item Let $U$ be a random unitary matrix distributed along the Haar measure on $\mathcal U_n$  and let $y$ be the first column of $U$. Then $y \in \mathbb C^d$ is a uniform quantum pure state, $y \sim \chi_d$.
\end{enumerate}
\end{proposition}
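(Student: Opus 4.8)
The plan is to establish the three characterizations of the uniform distribution $\chi_d$ by exploiting the defining property of $\chi_d$ as the unique rotationally invariant probability measure on the unit sphere of $\mathbb{C}^d$, together with the rotational invariance of the complex Gaussian measure. The three items are logically intertwined: item (1) is the invariance property itself, item (2) produces the uniform measure by normalizing a Gaussian, and item (3) realizes it as a column of a Haar unitary. I would prove (1) first, then deduce (2) and (3) from it.

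\medskip

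\noindent\textbf{Step 1 (Proof of (1)).} By definition $x$ is uniformly distributed on $S := \{x \in \mathbb{C}^d : \|x\|=1\}$, i.e.\ its law is the normalized surface measure, which is the unique Borel probability measure on $S$ invariant under the action of $\mathcal{U}_d$. For a \emph{fixed} unitary $U$, the map $x \mapsto Ux$ is a homeomorphism of $S$ pushing the surface measure to itself, so $Ux \sim \chi_d$. If $U$ is random but independent of $x$, condition on $U$: for every bounded continuous $f$, $\mathbb{E}[f(Ux)] = \mathbb{E}_U\big[\mathbb{E}_x[f(Ux) \mid U]\big] = \mathbb{E}_U\big[\mathbb{E}_{z\sim\chi_d}[f(z)]\big] = \mathbb{E}_{z\sim\chi_d}[f(z)]$, using the fixed-$U$ case inside. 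Hence $Ux \sim \chi_d$.

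\medskip

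\noindent\textbf{Step 2 (Proof of (2)).} Let $X \sim \mathcal{N}_\mathbb{C}(0, I_d)$. The density of $X$ with respect to Lebesgue measure on $\mathbb{C}^d \cong \mathbb{R}^{2d}$ is proportional to $\exp(-\|X\|^2)$, which depends on $X$ only through $\|X\|$; equivalently, for any fixed $U \in \mathcal{U}_d$ we have $UX \stackrel{d}{=} X$ since $U$ acts as an orthogonal transformation of $\mathbb{R}^{2d}$ preserving the Euclidean norm. First note $X \neq 0$ almost surely, so $X/\|X\|$ is well defined a.s. Now for fixed $U$, $UX/\|UX\| = U(X/\|X\|)$ has the same law as $X/\|X\|$ by the rotation invariance of $X$. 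Thus the law of $X/\|X\|$ is a $\mathcal{U}_d$-invariant probability measure supported on $S$, hence equals $\chi_d$ by uniqueness. (Alternatively one writes $X$ in polar coordinates $X = r\omega$ with $r = \|X\|$ and checks that $r$ and $\omega$ are independent with $\omega$ uniform on $S$; the invariance argument is cleaner.)

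\medskip

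\noindent\textbf{Step 3 (Proof of (3)).} Let $U$ be Haar distributed on $\mathcal{U}_n$ (with $n = d$) and $y = Ue_1$ its first column, a unit vector in $\mathbb{C}^d$. For any fixed $V \in \mathcal{U}_d$, left-invariance of Haar measure gives $VU \stackrel{d}{=} U$, so $Vy = (VU)e_1 \stackrel{d}{=} Ue_1 = y$. Therefore the law of $y$ is a $\mathcal{U}_d$-invariant probability measure on $S$, and again by uniqueness it is $\chi_d$. The main point in each part is the same soft fact — uniqueness of the invariant probability measure on the sphere under the transitive unitary action — and the only genuine content is checking that the candidate distributions are invariant and, in (2), almost surely well defined; no computation is required. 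The one mild subtlety, and the place to be careful, is the independent-random-$U$ case in (1), where one must condition properly rather than argue naively; this is the ``hard part,'' though it is still routine.
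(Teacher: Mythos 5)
Your proof is correct, and it uses the standard argument (uniqueness of the $\mathcal U_d$-invariant probability measure on the unit sphere, rotational invariance of the complex Gaussian, and left-invariance of the Haar measure), which is exactly what the reference the paper defers to does; the paper itself gives no proof, only the citation to \cite[Section 2.1]{nechita2007asymptotics}. The one point you rightly flag --- conditioning on $U$ in the independent-random-unitary case of (1) --- is handled properly, so there is nothing to add.
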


In applications, whenever one needs to consider generic pure quantum states and that there is no underlying structure in the Hilbert space where the states live, the uniform measure is used indiscriminately. Later, in Section \ref{sec:random-states-graphs}, we shall encounter another probability distribution on a Hilbert space $\mathcal H$, which is to be used in the case where the space has a tensor product structure $\mathcal H = \mathcal H_1 \otimes \cdots \mathcal H_k$.

A different possibility was considered in \cite{nechita2013random}, starting from the first point in \ref{prop:random-pure-states}, and replacing the Haar unitary $U$ with the value of the unitary Brownian motion at some fixed time $t$ (recall that the Haar measure is recovered at the limit $t \to \infty$). The resulting measure depends on the time $t>0$ and on the initial vector $x$ on which the unitary acts. We refer the interested reader to \cite{nechita2013random} for the details.

\subsubsection{The induced ensemble}
\label{sec:induced-measure}

We introduce in this section a family of probability distributions on the set of (mixed) quantum states $\mathcal D_d$ which has a nice physical interpretation and, at the same time, a simple mathematical presentation. 

The following family was introduced by Braunstein in \cite{braunstein1996geometry} and studied by Hall \cite{hall1998random}, and later, in great detail, by  {\.Z}yczkowski and Sommers \cite{zyczkowski2001induced, sommers2004statistical}. 

\begin{definition}\label{def:induced-measures}
Given two positive integers $d,s$, consider a random pure quantum state $x \in \mathbb C^d \otimes \mathbb C^s$. The distribution of the random variable 
M$$\rho = [\mathrm{id}_d \otimes \mathrm{Tr}_s](xx^*) \in \mathcal D_d$$
is called the \emph{induced measure} of parameters $(d,s)$ and it is denoted by $\nu_{d,s}$.
\end{definition}

We gather in the following proposition some basic facts about the measures (for the proofs, see \cite{zyczkowski2001induced}).

\begin{proposition}
Let $\mathcal D_d \ni \rho \sim \nu_{d,s}$ be a density matrix having an induced distribution of parameters $(d,s)$.
\begin{enumerate}
\item With probability one, $\rho$ has rank $\min(d,s)$.
\item For any unitary operator $U \in \mathcal U_d$ (fixed or independent from $\rho$), the density matrix $U \rho U^*$ has the same distribution as $\rho$.
\item There exist a unitary matrix $U \in \mathcal U_d$ and a diagonal matrix $\Delta = \mathrm{diag}(\lambda_1, \ldots, \lambda_d)$ such that $U$ is Haar distributed, $U$ and $\Delta$ are independent, and $\rho = U \Delta U^*$; we say that the radial and the angular part of $\rho$ are independent.
\item The eigenvalues $(\lambda_1, \ldots, \lambda_d)$ have the following joint distribution:
$$C_{d,s} \mathbf{1}_{\lambda_1 + \cdots + \lambda_d = 1} \prod_{i=1}^d  \mathbf{1}_{\lambda_i \geq 0} \prod_{1 \leq i < j \leq d} (\lambda_i - \lambda_j)^2 \prod_{i=1}^d\lambda_i^{s-d},$$
where $C_{d,s}$ is the constant
$$C_{d,s} = \frac{\Gamma(ds)}{\prod_{i=0}^{d-1}\Gamma(s-i)\Gamma(d+1-i)}.$$
\end{enumerate}
\end{proposition}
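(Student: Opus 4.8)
The plan is to obtain all four statements from the explicit description of $\rho$ as a partial trace of a random pure state, i.e.\ from the fact that if $x \in \C^d \otimes \C^s$ is uniform then, writing $x$ as a $d \times s$ matrix $X$ (via the canonical basis isomorphism), one has $\rho = [\mathrm{id}_d \otimes \mathrm{Tr}_s](xx^*) = XX^*$, where by Proposition \ref{prop:random-pure-states}(2) we may replace $X$ by $G/\|G\|_{HS}$ with $G$ a $d \times s$ matrix of i.i.d.\ standard complex Gaussians. Thus $\rho = GG^*/\trace(GG^*)$, the normalization of a Wishart matrix $W \sim \mathcal W_{d,s}$. The whole proof is then a translation of standard facts about complex Wishart matrices together with the unitary invariance of the Gaussian ensemble.

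First I would prove (2): for fixed or independent $U \in \U_d$, the matrix $UG$ has the same distribution as $G$ (left invariance of the Gaussian measure under unitaries), hence $U\rho U^* = (UG)(UG)^*/\trace((UG)(UG)^*)$ has the same law as $\rho$. Statement (1) follows because $GG^*$ has rank $\min(d,s)$ almost surely: if $s \geq d$ the $d \times s$ Gaussian matrix $G$ has full row rank with probability one (the set where it drops rank is an algebraic subvariety of positive codimension, hence Lebesgue-null), and if $s < d$ then $\mathrm{rank}(GG^*) = \mathrm{rank}(G) = s$ almost surely by the same argument; normalization does not change the rank. Statement (3) is the spectral decomposition $W = GG^* = V \tilde\Delta V^*$ combined with the observation that, by the unitary invariance just established, the eigenvector matrix $V$ can be taken Haar distributed and independent of the eigenvalues; then $\rho = V \Delta V^*$ with $\Delta = \tilde\Delta/\trace\tilde\Delta$, and the independence of $V$ from the $\lambda_i$ persists after dividing by the (eigenvalue-measurable) trace.

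For (4) I would start from the classical joint eigenvalue density of the complex Wishart ensemble $\mathcal W_{d,s}$ (valid for $s \geq d$): the unordered eigenvalues $\mu_1, \dots, \mu_d > 0$ of $W = GG^*$ have density proportional to $\prod_{i<j}(\mu_i - \mu_j)^2 \prod_i \mu_i^{s-d} e^{-\mu_i}$ on $\R_+^d$. This is itself derived by writing the Lebesgue density of $G$ in terms of its singular value decomposition and integrating out the angular (unitary) variables; I would either invoke it as a standard result or sketch this change-of-variables computation, the Jacobian producing the Vandermonde-squared factor. Then I would pass from $W$ to $\rho = W/\trace W$ by introducing polar-type coordinates $\mu_i = t \lambda_i$ with $t = \trace W > 0$ and $\sum \lambda_i = 1$: the Jacobian of $(\mu_1,\dots,\mu_d) \mapsto (t, \lambda_1, \dots, \lambda_{d-1})$ is $t^{d-1}$, the Vandermonde scales as $t^{d(d-1)}$, the product $\prod \mu_i^{s-d}$ scales as $t^{d(s-d)}$, and the exponential becomes $e^{-t}$; collecting powers of $t$ gives a total factor $t^{ds-1}e^{-t}$, which integrates to $\Gamma(ds)$, leaving exactly the stated density in the $\lambda_i$ with the Gamma-function normalization constant $C_{d,s}$ determined by a Selberg-type integral. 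The case $s < d$ reduces to the previous one since $GG^*$ and $G^*G$ have the same nonzero spectrum, so $\rho$ has the law of the top $\min(d,s)$ eigenvalues; alternatively the formula with $\lambda_i^{s-d}$ is interpreted with $d$ replaced by $\mathrm{rank}$.

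The main obstacle is step (4): everything else is a soft invariance argument, but (4) requires either importing the Wishart eigenvalue density (which silently contains a nontrivial Jacobian computation over the unitary group) or reproving it, and then carefully tracking the powers of $t$ through the radial change of variables and evaluating the resulting Selberg integral to pin down $C_{d,s}$. Since the excerpt explicitly points to \cite{zyczkowski2001induced} for the proofs, the expected resolution is to cite the Wishart density as known and present only the radial rescaling, which is the genuinely new computation here and is short once the powers of $t$ are assembled correctly.
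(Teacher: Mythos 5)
The paper does not actually prove this proposition --- it refers the reader to \cite{zyczkowski2001induced} for the proofs --- so there is no internal argument to compare yours against. Your route, realizing $\rho$ as a normalized complex Wishart matrix $GG^*/\trace(GG^*)$ and deducing (1)--(3) from unitary invariance of the Ginibre ensemble and (4) from the Laguerre-ensemble eigenvalue density followed by the radial change of variables $\mu_i = t\lambda_i$, is the standard one, and it is precisely the connection the paper itself records later in Proposition \ref{prop:induced-measure-properties}. Your bookkeeping in (4) is correct: the Jacobian $t^{d-1}$, the Vandermonde factor $t^{d(d-1)}$ and the factor $t^{d(s-d)}$ combine with $e^{-t}$ to give $\int_0^\infty t^{ds-1}e^{-t}\,dt = \Gamma(ds)$, which accounts for the $\Gamma(ds)$ in $C_{d,s}$; the remaining product of Gamma functions is the Laguerre/Selberg normalization that you would have to import together with the Wishart density itself. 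Two points to make explicit if you write this up: in (3), the independence of the radial and angular parts requires the usual randomization step (conjugate $W$ by an independent Haar unitary and invoke invariance of its law, noting that this does not alter the eigenvalues), and the density in (4) is only literally meaningful for $s \geq d$; for $s<d$ the spectrum is concentrated on a lower-dimensional stratum and one passes to $G^*G$, as you indicate.
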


\begin{remark}
\label{rem:induced-euclidean}
Importantly, in the case $s=d$, the distribution $\nu_{d,d}$ is precisely the \emph{Lebesgue measure} on the compact set $\mathcal D_d$, seen as a subset of the affine subspace $\{A \in \mathcal M_d^{sa}(\mathbb C) \, : \, \mathrm{Tr}(A) = 0\}$, see \cite[Section 2.4]{zyczkowski2001induced}. The measure $\nu_{d,d}$ is sometimes called the \emph{Hilbert-Schmidt measure}, since it is induced by the Euclidian, or Hilbert-Schmidt, distance. Note that the volume of $\mathcal D_d$ is given by \cite[Equation (4.5)]{zyczkowski2003hilbert}
$$\mathrm{vol}(\mathcal D_d) = \sqrt d (2\pi)^{d(d-1)/2} \frac{(d-1)!}{(d^2-1)!}.$$
\end{remark}

In \cite{nechita2007asymptotics}, the induced measures $\nu_{d,s}$ are shown to be closely related to the  Wishart ensemble $\mathcal W_{d,s}$ from Definition \ref{def:Wishart}.

\begin{proposition}\label{prop:induced-measure-properties}
Let $W \in \mathcal M_d(\mathbb C)$ be a Wishart matrix of parameters $(d,s)$ and put $\rho:=W/ \mathrm{Tr}(W) \in \mathcal D_d$. Then
\begin{enumerate}
\item The random variables $\rho$ and $\mathrm{Tr}(W)$ are independent. 
\item The distribution of $\mathrm{Tr}(W)$ is chi-squared, with $ds$ degrees of freedom.
\item The random density matrix $\rho$ follows the induced measure of parameters $(d,s)$, i.e.~ $\rho \sim \nu_{d,s}$.
\item The random variable $W$, conditioned on the (zero probability) event $\mathrm{Tr}(W)=1$, has distribution $\nu_{d,s}$. 
\end{enumerate}
\end{proposition}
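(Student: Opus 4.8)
The plan is to work directly from the Gaussian description of the Wishart matrix and to exploit the scaling symmetry of Gaussian vectors. Write $W = XX^*$ with $X \in \M_{d\times s}(\C)$ having i.i.d.\ standard complex Gaussian entries, and regard $X$ as a single complex Gaussian vector $g \in \C^{ds}$ (after vectorization). The key structural observation is that the map $g \mapsto (g/\|g\|,\, \|g\|)$ decomposes the Gaussian into an independent pair: a vector uniform on the unit sphere of $\C^{ds}$ and a radial part whose law depends only on $\|g\|^2$. This is exactly the content of Proposition \ref{prop:random-pure-states}(2) read in reverse, combined with the rotational invariance of the Gaussian measure; independence of direction and norm is standard for any rotationally invariant distribution. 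Since $\|g\|^2 = \sum_{i,j}|X_{ij}|^2 = \trace(XX^*) = \trace(W)$, this immediately gives a clean handle on $\trace(W)$.

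From here the four claims follow in sequence. For (2), each $|X_{ij}|^2$ is (in the normalization where $X_{ij}$ has the standard complex Gaussian law, i.e.\ real and imaginary parts each $N(0,1/2)$) exponentially distributed, hence $\trace(W) = \sum_{i,j}|X_{ij}|^2$ is a sum of $ds$ such terms, which is a chi-squared variable with $ds$ (complex, or $2ds$ real) degrees of freedom — one should fix the normalization convention in Definition \ref{def:Wishart} and state it consistently. For (1), note that $\rho = W/\trace(W) = XX^*/\|g\|^2 = (g/\|g\|)(g/\|g\|)^* $ reshaped as a $d\times s$ matrix times its adjoint; thus $\rho$ is a function of the direction $g/\|g\|$ alone, while $\trace(W) = \|g\|^2$ is a function of the radial part alone, so the independence in the previous paragraph transfers verbatim to independence of $\rho$ and $\trace(W)$. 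For (3), let $y = g/\|g\|$, which by the sphere-decomposition is uniform on the unit sphere of $\C^{d}\otimes\C^{s} \cong \C^{ds}$, i.e.\ $y \sim \chi_{ds}$; then $\rho = [\mathrm{id}_d \otimes \trace_s](yy^*)$ is precisely the definition of $\nu_{d,s}$ (Definition \ref{def:induced-measures}). For (4), I would argue that conditioning $W$ on $\{\trace(W)=1\}$ is, by the independence just established in (1), the same as leaving the direction-part $\rho$ untouched and fixing only the (independent) radial coordinate; hence the conditional law of $W$ equals the law of $\rho$, which is $\nu_{d,s}$ by (3). A careful writer should phrase (4) via a disintegration/regular-conditional-probability statement to justify conditioning on a measure-zero event, or simply take it as the limiting statement that $W \mid \{\trace(W) \in [1,1+\varepsilon]\}$ converges in law to $\nu_{d,s}$ as $\varepsilon \to 0$.

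The main obstacle is not any single computation but rather getting the normalization conventions and the measure-zero conditioning in (4) airtight: one must be explicit about whether ``standard complex Gaussian'' means variance $1$ or $1/2$ per complex entry (this changes whether $\trace(W)$ is $\chi^2_{ds}$ or $\chi^2_{2ds}$ up to a factor), and one must either invoke a disintegration theorem or reformulate (4) as a limit to make ``conditioned on the zero-probability event'' meaningful. Everything else reduces to the single clean fact that rotational invariance of the complex Gaussian on $\C^{ds}$ gives an independent polar decomposition, together with matching definitions. I would present the Gaussian polar decomposition as the one lemma worth stating carefully, and then dispatch (1)--(4) as short consequences.
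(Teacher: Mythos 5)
Your argument is correct and is precisely the standard route (the paper itself defers the proof to the cited reference \cite{nechita2007asymptotics}, where exactly this polar decomposition of the vectorized Gaussian is used): the independence of $g/\|g\|$ and $\|g\|$ from rotational invariance yields (1) and (2), the identification $\rho=[\mathrm{id}_d\otimes\trace_s](yy^*)$ with $y=g/\|g\|$ uniform on the sphere yields (3), and (4) follows from (1) and (3). Your caveats about the complex-Gaussian normalization in (2) and the need for a disintegration or limiting formulation in (4) are both well taken and match the care required in the original argument.
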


Let us now discuss the asymptotic behavior of the probability measures $\nu_{d,s}$. We first consider the ``trivial'' regime, where $d$ is fixed and $s \to \infty$. The result here is as follows, see \cite{nechita2007asymptotics}.

\begin{proposition}
For a fixed dimension $d$, consider a sequence of random density matrices $(\rho_s)_s$ having distribution $\rho_s \sim \nu_{d,s}$. Then, almost surely as $s \to \infty$, $\rho_s \to d^{-1}I_d$.
\end{proposition}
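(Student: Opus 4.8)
The plan is to reduce the statement to the strong law of large numbers via the Wishart representation of the induced measure. By Proposition~\ref{prop:induced-measure-properties}, I may realize $\rho_s = W_s/\trace(W_s)$ with $W_s \sim \mathcal W_{d,s}$; moreover I can place all the $\rho_s$ on a single probability space by fixing one $d\times\infty$ array of i.i.d.\ standard complex Gaussian entries, letting $x_1,x_2,\ldots \in \C^d$ be its columns, and setting $W_s = \sum_{k=1}^s x_k x_k^*$. This exhibits $W_s$ as a sum of i.i.d.\ rank-one positive matrices with $\E[x_k x_k^*] = I_d$.

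First I would apply the scalar SLLN entrywise. For each pair $(i,j)$ the variables $(x_k)_i\,\overline{(x_k)_j}$, $k\ge 1$, are i.i.d., integrable (they are polynomials in jointly Gaussian variables, hence have all moments), with mean $\delta_{ij}$; therefore
\[
\tfrac1s\,(W_s)_{ij} = \tfrac1s\sum_{k=1}^s (x_k)_i\,\overline{(x_k)_j} \longrightarrow \delta_{ij}
\]
almost surely. Since $d$ is fixed there are only finitely many entries, so on a single almost sure event $\tfrac1s W_s \to I_d$ in $\M_d(\C)$.

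Taking traces gives $\tfrac1s\trace(W_s)\to d$ almost surely; in particular the denominator is eventually bounded away from $0$. Hence
\[
\rho_s = \frac{\tfrac1s W_s}{\tfrac1s\trace(W_s)} \longrightarrow \frac{I_d}{d} = d^{-1}I_d
\]
almost surely, which is the claim. (Along the way one sees $W_s$ is a.s.\ invertible for all $s\ge d$, consistent with the rank statement recorded earlier.)

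I do not expect a genuine obstacle here. The only points deserving a word of care are the passage to a common probability space, so that an almost sure statement about the whole sequence is meaningful, and the step where the limit of the quotient is identified with the quotient of the limits — legitimate precisely because the limiting trace equals $d\neq 0$. A slicker but much heavier route would invoke a Mar\v{c}enko--Pastur-type law, but that framework is built for the regime $d\to\infty$ and is overkill when $d$ is held fixed.
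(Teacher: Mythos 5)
Your argument is correct, and it is essentially the approach the paper has in mind: the statement is presented as a consequence of the Wishart representation in Proposition~\ref{prop:induced-measure-properties}, with the proof deferred to \cite{nechita2007asymptotics}, where one likewise writes $W_s=\sum_{k\le s}x_kx_k^*$ for i.i.d.\ Gaussian columns and applies the entrywise strong law of large numbers before normalizing by the trace. Your remark about fixing a single Gaussian array to put all the $\rho_s$ on one probability space is exactly the right way to make the almost-sure statement meaningful, so nothing is missing.
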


The interesting scaling is the fixed ration one, where both $d$ and $s = s_d$ grow to infinity, in such a way that $s_d/d \to c$, for a fixed constant $c \in (0,\infty)$, The next result is an easy consequence of Theorem \ref{thm:marchenko-pastur} and Proposition \ref{prop:induced-measure-properties}. 

\begin{proposition}
For a fixed positive constant $c$, consider a sequence of random density matrices $(\rho_d)_d$ having distribution $\rho_d \sim \nu_{d,s_d}$; here we assume that $s_d \sim cd$ as $d \to \infty$. Then, almost surely as $d \to \infty$, the empirical eigenvalue distribution of the random matrix $s_d \rho_d$ converges weakly to the Mar{\v{c}}enko-Pastur distribution $\pi_c$ from \eqref{eq:Marchenko-Pastur}
$$\lim_{d \to \infty} \frac{1}{d} \sum_{i=1}^d \delta_{s_d \lambda_i(\rho_d)} = \pi_c.$$
\end{proposition}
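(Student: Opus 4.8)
The plan is to reduce the assertion to the Marčenko–Pastur Theorem \ref{thm:marchenko-pastur} through the Wishart realization of the induced measure given by Proposition \ref{prop:induced-measure-properties}. Fix, on a single probability space, an infinite array of i.i.d.\ standard complex Gaussians, and for each $d$ let $X_d \in \M_{d\times s_d}(\C)$ be the corresponding $d\times s_d$ block, $W_d = X_d X_d^* \sim \mathcal W_{d,s_d}$. By Proposition \ref{prop:induced-measure-properties}(3) the matrix $\rho_d := W_d/\trace(W_d)$ has distribution $\nu_{d,s_d}$, so it suffices to analyze the spectrum of $s_d\rho_d$. The key identity is
$$s_d \rho_d \;=\; \frac{d\,s_d}{\trace(W_d)}\cdot \frac{W_d}{d},$$
so, writing $\alpha_d := d\,s_d/\trace(W_d)$ and $\widehat W_d := W_d/d$, the eigenvalues of $s_d\rho_d$ are exactly $\alpha_d$ times those of $\widehat W_d$. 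Hence it is enough to prove (i) $\alpha_d \to 1$ almost surely, and (ii) the empirical eigenvalue distribution of $\widehat W_d$ converges weakly to $\pi_c$ almost surely, and then to transfer (ii) through the scalar dilation by $\alpha_d$.

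Statement (ii) is exactly Theorem \ref{thm:marchenko-pastur} for the normalization $\widehat W_d = d^{-1}W_d$: strong convergence in particular yields weak convergence of the empirical spectral distribution, and it simultaneously gives $\|\widehat W_d\| \to (1+\sqrt c)^2$, so that almost surely the spectra of the $\widehat W_d$ lie eventually inside a fixed compact set. For (i), I would invoke Proposition \ref{prop:induced-measure-properties}(2): $\trace(W_d)$ is (proportional to) a chi-squared variable, in particular $\E[\trace(W_d)] = d s_d$ and $\mathrm{Var}(\trace(W_d)) = O(d s_d)$. Consequently $\mathrm{Var}\!\big(\trace(W_d)/(d s_d)\big) = O\big((d s_d)^{-1}\big) = O(d^{-2})$, and Chebyshev's inequality gives $\P\big(|\trace(W_d)/(d s_d) - 1| > \varepsilon\big) = O(d^{-2})$ for every fixed $\varepsilon>0$. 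This bound is summable in $d$, so Borel–Cantelli yields $\trace(W_d)/(d s_d)\to 1$, hence $\alpha_d \to 1$, almost surely.

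Finally I would combine (i) and (ii) on the almost-sure event on which all three hold: the ESD of $\widehat W_d$ tends to $\pi_c$, $\sup_d\|\widehat W_d\| < \infty$, and $\alpha_d \to 1$. For any bounded Lipschitz test function $f$, since $0 \le \lambda_i(\widehat W_d) \le \|\widehat W_d\|$,
$$\Big| \frac1d\sum_{i=1}^d f\big(\alpha_d \lambda_i(\widehat W_d)\big) - \frac1d\sum_{i=1}^d f\big(\lambda_i(\widehat W_d)\big) \Big| \;\le\; \mathrm{Lip}(f)\,|\alpha_d - 1|\,\sup_d\|\widehat W_d\| \;\longrightarrow\; 0,$$
so the ESD of $s_d\rho_d$ has the same weak limit $\pi_c$ as that of $\widehat W_d$; the atom $\max(1-c,0)\delta_0$ is automatically produced, since $\rho_d$ — like $W_d$ — has rank $\min(d,s_d)$, contributing mass $1 - s_d/d \to 1-c$ at $0$ when $c<1$.

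\textbf{Main obstacle.} Nothing here is deep; the only point requiring care is making sure convergence in (i) holds \emph{almost surely along the whole sequence} $d\to\infty$ rather than merely in probability, and then passing the almost-sure weak convergence of (ii) through the \emph{random} dilation factor $\alpha_d$. The first is handled by the summable Chebyshev/Borel–Cantelli estimate above (the $d^{-2}$ decay being what makes it work), and the second by the uniform bound on $\|\widehat W_d\|$ that comes precisely from the \emph{strong} (not merely weak) convergence asserted in Theorem \ref{thm:marchenko-pastur}.
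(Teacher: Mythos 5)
Your proof is correct and takes essentially the same route the paper intends: the paper offers no detailed argument, merely declaring the result ``an easy consequence'' of Theorem \ref{thm:marchenko-pastur} and Proposition \ref{prop:induced-measure-properties}, which is exactly your reduction $s_d\rho_d=\alpha_d\,(W_d/d)$ via the Wishart realization, with the Mar{\v{c}}enko--Pastur law applied to $W_d/d$. The details you supply beyond the paper's one-line justification --- the Chebyshev/Borel--Cantelli argument giving $\alpha_d\to1$ almost surely, and the transfer of the weak limit through the random dilation using the operator-norm bound furnished by strong convergence --- are correct and simply make the ``easy consequence'' explicit.
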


Informally, the result above can be stated as follows: consider a tensor product Hilbert space $\mathcal H = \mathbb C^d \otimes \mathbb C^{\lfloor cd \rfloor}$ and random, uniform pure state $\psi \in \mathcal H$. Then, the eigenvalues of the partial trace $\rho = [\mathrm{id} \otimes \mathrm{Tr}](\psi \psi^*)$ are, up to a scaling of $cd$, distributed along the Mar{\v{c}}enko-Pastur distribution $\pi_c$  \eqref{eq:Marchenko-Pastur}.

Finally, as suggested by Proposition \ref{prop:induced-measure-properties}, in order to simulate on a computer quantum states having distribution $\nu_{d,s}$, one sets
$$\rho = \frac{GG^*}{\mathrm{Tr}(GG^*)},$$
where $G \in \mathcal M_{d \times s}(\mathbb C)$ is an element from the \emph{Ginibre ensemble}, i.e.~$G$ has i.i.d.~standard complex Gaussian entries; see \cite[Section III.D]{zyczkowski2011generating}.

\subsubsection{The Bures measure}

The Bures metric on the set of density matrices (see \cite{bengtsson2006geometry}) is defined as 
$$d_B(\rho,\sigma) = \sqrt{2 - 2\mathrm{Tr}[(\sqrt \rho \sigma \sqrt \rho)^{1/2}]}.$$
From this metric, one can define a probability distribution $\nu_B$ on $\mathcal D_d$, by asking that Bures balls of equal radius have the same volume. 

The properties of the measure $\nu_B$ have been extensively studied in \cite{hall1998random,sommers2003bures,osipov2010random}, we recall in the next proposition the main facts. 

\begin{proposition}
Let $\rho \in \mathcal D_d$ be a random density matrix having distribution $\nu_B$. Then
\begin{enumerate}
\item The eigenvalues $\lambda_1, \ldots, \lambda_d$ of $\rho$ have distribution
$$C_{B} \mathbf{1}_{\lambda_1 + \cdots + \lambda_d = 1} \prod_{i=1}^d  \mathbf{1}_{\lambda_i > 0} \lambda_i^{-1/2} \prod_{1 \leq i < j \leq d} \frac{(\lambda_i - \lambda_j)^2}{\lambda_i + \lambda_j},$$
where the constant $C_B$ reads
$$C_B  = 2^{d^2-d} \frac{\Gamma(d^2/2)}{\pi^{d/2} \prod_{i=1}^d \Gamma(i+1)}.$$
\item If $A \in \mathcal M_d(\mathbb C)$ is a random Ginibre matrix and $U \in \mathcal U_d$ is a Haar random unitary independent from $A$, then the random matrix 
$$\sigma = \frac{(I+U)AA^*(I+U)^*}{\mathrm{Tr}[(I+U)AA^*(I+U)^*]}$$
has distribution $\nu_B$.
\end{enumerate}
\end{proposition}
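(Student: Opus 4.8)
The plan is to establish both parts at once: first compute the Riemannian volume of the Bures metric, which yields the eigenvalue density in~(1) up to a constant; then fix that constant by a Laplace-transform/homogeneity argument; and finally check that the random-matrix model in~(2) is unitarily invariant with precisely that eigenvalue density. For Step~1, recall that $\nu_B$ is (the normalization of) the Riemannian volume measure of the Bures metric, whose infinitesimal form at $\rho = U\Lambda U^*$, $\Lambda = \mathrm{diag}(\lambda_1,\dots,\lambda_d)$, reads in the eigenbasis of $\rho$ as $ds_B^2 = \tfrac12\sum_{i,j}|d\rho_{ij}|^2/(\lambda_i+\lambda_j)$ (the Helstrom / quantum Fisher metric, up to the customary factor $\tfrac14$). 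Conjugating $d\rho$ into the eigenbasis gives $(U^*d\rho\,U)_{ii} = d\lambda_i$ and $(U^*d\rho\,U)_{ij} = (\lambda_j-\lambda_i)(U^*dU)_{ij}$ for $i\neq j$, so the metric is a block sum of a radial part $\tfrac14\sum_i d\lambda_i^2/\lambda_i$ and an angular part $\tfrac12\sum_{i\neq j}\frac{(\lambda_i-\lambda_j)^2}{\lambda_i+\lambda_j}|(U^*dU)_{ij}|^2$ on the coset $\mathcal U_d/\mathcal U_1^d$. Taking square roots of the two Gram determinants and imposing the trace-one constraint, the volume element factorizes as $\mathrm{const}\cdot\big(\prod_i\lambda_i^{-1/2}\prod_{i<j}\tfrac{(\lambda_i-\lambda_j)^2}{\lambda_i+\lambda_j}\big)\,d\lambda$ times the Haar volume on $\mathcal U_d/\mathcal U_1^d$; integrating the angular part out leaves the claimed eigenvalue density up to normalization.

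For Step~2, note that $\prod_i\lambda_i^{-1/2}\prod_{i<j}\frac{(\lambda_i-\lambda_j)^2}{\lambda_i+\lambda_j}$ together with the flat volume form is homogeneous of degree $d^2/2-1$ under $\lambda\mapsto t\lambda$; separating the scale therefore gives $\int_{\mathbb R_+^d}e^{-\sum_i\lambda_i}\prod_i\lambda_i^{-1/2}\prod_{i<j}\frac{(\lambda_i-\lambda_j)^2}{\lambda_i+\lambda_j}\,d\lambda = \Gamma(d^2/2)\,C_B^{-1}$, with the simplex measure normalized consistently with the statement. It remains to evaluate the left-hand side and check it equals $2^{-(d^2-d)}\pi^{d/2}\prod_{i=1}^d i!$: this is a Bures/Cauchy--Selberg integral, and the Cauchy determinant identity $\prod_{i<j}\frac{(\lambda_i-\lambda_j)^2}{\lambda_i+\lambda_j} = 2^d\prod_i\lambda_i\cdot\det\big[(\lambda_i+\lambda_j)^{-1}\big]$ turns it into a determinantal integral that reduces, e.g.\ by the de Bruijn/Andr\'eief integration formula, to a product of Gamma values.

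For Step~3, set $G = (I+U)A$, so $\sigma = GG^*/\trace(GG^*)$. First, for any fixed $V\in\mathcal U_d$ one has $V(I+U) = (I+VUV^*)V$; since $VUV^*$ is again Haar distributed, $VA \stackrel{d}{=} A$, and $U$ is independent of $A$, it follows that $V\sigma V^* \stackrel{d}{=} \sigma$, so the law of $\sigma$ is unitarily invariant and hence determined by its eigenvalue law. Conditionally on $U$, the columns of $G$ are i.i.d.\ $\mathcal N_{\mathbb C}(0,\Sigma)$ with $\Sigma = (I+U)(I+U)^*$, so $GG^* \stackrel{d}{=} \Sigma^{1/2}\widetilde W\Sigma^{1/2}$ with $\widetilde W\sim\mathcal W_{d,d}$ independent of $\Sigma$; writing $U = W\,\mathrm{diag}(e^{i\theta_k})\,W^*$ with $W$ Haar and $(\theta_k)$ the CUE eigenangles, one gets $\Sigma = W\,\mathrm{diag}(4\cos^2(\theta_k/2))\,W^*$, whence the eigenvalues of $GG^*$ coincide in law with those of $D^{1/2}\widetilde W D^{1/2}$, $D = \mathrm{diag}(4\cos^2(\theta_k/2))$. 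Inserting the conditional (``correlated Wishart'') eigenvalue density given $D$ --- a ratio of determinants via the Harish-Chandra--Itzykson--Zuber formula --- and averaging against the CUE weight $\propto\prod_{j<k}|e^{i\theta_j}-e^{i\theta_k}|^2$, one carries out the remaining integral over the eigenangles by another Andr\'eief-type identity; the factors $4\cos^2(\theta_k/2)$ then generate the weights $\lambda_i^{-1/2}$ and the Cauchy-type structure collapses to $\prod_{i<j}\frac{(\lambda_i-\lambda_j)^2}{\lambda_i+\lambda_j}$. Normalizing by the trace and comparing with Step~1 identifies the law of $\sigma$ with $\nu_B$.

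The geometry in Step~1 and the invariance argument in Step~3 are routine; the main obstacle is the pair of exact integral evaluations --- the Bures/Cauchy--Selberg integral fixing $C_B$ and the CUE average in Step~3 --- both of which hinge on recognizing the Cauchy-determinant structure $\prod_{i<j}\frac{(\lambda_i-\lambda_j)^2}{\lambda_i+\lambda_j} = 2^d\prod_i\lambda_i\cdot\det[(\lambda_i+\lambda_j)^{-1}]$ and then applying a de Bruijn/Andr\'eief integration formula, with the bookkeeping of the powers of $2$, of $\pi$ and of the $\Gamma$-factors being the delicate part.
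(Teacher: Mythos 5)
The paper offers no proof of this proposition: it is stated as a recollection of known facts, with the burden carried entirely by the citations to Hall, Sommers--{\.Z}yczkowski and Osipov et al. Your three-step plan (Riemannian volume of the Bures metric for the shape of the density, a Laplace-transform/homogeneity argument reducing $C_B$ to a Selberg-type integral, and a correlated-Wishart/CUE average for the Ginibre model) is exactly the route those references take, and Steps~1 and the unitary-invariance part of Step~3 are correct as written.

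There is, however, a genuine error at the one point where you commit to a concrete identity. The Cauchy determinant gives
$\det\bigl[(\lambda_i+\lambda_j)^{-1}\bigr]=\prod_{i<j}(\lambda_i-\lambda_j)^2\big/\prod_{i,j}(\lambda_i+\lambda_j)$, and since $\prod_{i,j}(\lambda_i+\lambda_j)=2^d\prod_i\lambda_i\prod_{i<j}(\lambda_i+\lambda_j)^2$, the identity actually reads
$\prod_{i<j}\frac{(\lambda_i-\lambda_j)^2}{(\lambda_i+\lambda_j)^{2}}=2^d\prod_i\lambda_i\cdot\det\bigl[(\lambda_i+\lambda_j)^{-1}\bigr]$,
with the denominator \emph{squared} --- already for $d=2$ the right-hand side equals $\bigl(\frac{\lambda_1-\lambda_2}{\lambda_1+\lambda_2}\bigr)^2$, not $\frac{(\lambda_1-\lambda_2)^2}{\lambda_1+\lambda_2}$. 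So the Bures weight is off from your determinant by the factor $\prod_{i<j}(\lambda_i+\lambda_j)$, and the Andr\'eief reduction you invoke would evaluate the wrong integral. The structure that actually works is the mixed one: write $\prod_{i<j}\frac{(\lambda_i-\lambda_j)^2}{\lambda_i+\lambda_j}=\Delta(\lambda)\cdot\prod_{i<j}\frac{\lambda_i-\lambda_j}{\lambda_i+\lambda_j}$, recognize the second factor as a Pfaffian via Schur's identity $\prod_{i<j}\frac{x_i-x_j}{x_i+x_j}=\mathrm{Pf}\bigl[\frac{x_i-x_j}{x_i+x_j}\bigr]$ (for even $d$, with the usual extra row for odd $d$), and apply de Bruijn's determinant-times-Pfaffian integration formula rather than Andr\'eief. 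Since you yourself identify the two exact evaluations (the normalization integral and the CUE average in Step~3) as the crux, and neither is carried out while the one tool named for them is misstated, the argument is incomplete precisely where it needs to be checked; your sanity check that the answer should equal $2^{-(d^2-d)}\pi^{d/2}\prod_i i!$ is correct (e.g.\ it gives $\pi/2$ at $d=2$), but the proposal does not yet prove it.
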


\subsubsection{Random states associated to graphs}
\label{sec:random-states-graphs}

The probability distributions on $\mathcal D_d$ we have considered so far do not make any assumptions on the internal structure of the underlying Hilbert space $\mathbb C^d$. To address this issue, in \cite{collins2010randoma,collins2013area} the authors introduce and study a new family of ensembles of density matrices, called \emph{random graph states}, which encode the underlying structure of the Hilbert space. We introduce next these distributions, referring the interested reader to \cite{collins2010randoma,collins2013area} for the details. 

Consider a graph $G=(V,E)$ having $k$ vertices $V_1, \ldots, V_k$ and $m$ edges $E_1, \ldots, E_m$. Let $N$ be a fixed positive integer, and consider the (total) Hilbert space 
$$\mathcal H = \bigotimes_{i=1}^k \mathcal H_i,$$
where $\mathcal H_i = (\mathbb C^N)^{\otimes d_i}$ is the \emph{local Hilbert space} at vertex $i$ and $d_i$ is the degree of $V_i$ in $G$. Each copy of $\mathbb C^N$ inside $\mathcal H_i$ is associated to some edge $E_j$ incident to $V_i$, in such a way that the total Hilbert space admits two decompositions, relative to vertices and edges:
$$\mathcal H = \bigotimes_{i=1}^k \mathcal H_i = \bigotimes_{j=1}^m \mathcal K_j \simeq (\mathbb C^N)^{\otimes 2m},$$
where $\mathcal K_j = \mathbb C^N \otimes C^N$. Define now the following random pure state
$$\varphi_G = \left[ \bigotimes_{i=1}^k U_i \right] \left[ \bigotimes_{j=1}^k \omega_j \right],$$
where $\{U_i\}_{i=1}^k$ are i.i.d.~Haar distributed random unitary matrices acting on the local Hilbert spaces at the vertices, and $\omega_j$ are maximally entangled states \eqref{eq:maximally-entangled-state}. Note that in the above expression, the unitary operators ``mix'' the product of maximally entangled states at the vertices, yielding, in general, a global entangled state. 

Let us now define mixed quantum states with the above formalism. For a subset $S \subseteq \{1, 2, \ldots, 2m\}$ of copies of $\mathbb C^N$, define
$$\rho_{G,S} = [\mathrm{id}_S \otimes \mathrm{Tr}_{S^c}](\varphi_G \varphi_G^*) \in \mathcal D_{N^{|S|}}.$$
The statistical properties of the distribution of $\rho_{G,S}$ are studied in \cite[Section 5]{collins2010randoma}.

Here, we show that the area law holds \emph{exactly} for graph states, 
provided that the marginal under consideration satisfies a
 particular condition, called \emph{adaptability}.

To any graph state we associate two partitions of the set of $n=2m$ subspaces:
 a vertex partition $\mathcal P_\text{vertex}$ 
which encodes the vertices of the graph, and a pair partition $\mathcal P_\text{edge}$
 which encodes the edges (corresponding to maximally entangled states). More precisely, 
 two subsystems $\H_i$ and $\H_j$ belong to the same block of $\mathcal P_\text{vertex}$ if they are attached to the 
 same vertex of the initial graph. Each edge $(i,j)$ of the graph contributes a block of size two $\{i,j\}$ to the edge 
 partition $\mathcal P_\text{edge}$.
 Recall that a marginal of a random graph state $\varphi_G \varphi_G^*$
 is specified by a  2-set partition $\mathcal P_\text{trace} = \{S, T\}$. 
 
Let us introduce now a fundamental property of the (random) quantum states associated to graphs. 

\begin{definition}
A marginal $\rho_S$ is called \emph{adapted} if
\begin{equation}
\mathcal P_\text{trace} \geq \mathcal P_\text{vertex}
\end{equation}
for the usual refinement order on partitions. In other words, 
a marginal is adapted if and only if the number of traced out systems in 
each vertex is either zero or maximal. If this is the case, then 
the partition boundary,  which splits the graph into parts $ \{S, T\}$,
does not cross any vertices of the graph.
\end{definition}

Because of the above property, for adapted marginals, we can speak about \emph{traced out vertices}, because if one subsystem of a vertex is traced out, then all
 the other systems of that vertex are also traced out. 
We now define precisely what we mean by \emph{area laws} in the context of quantum states associated to graphs. The partition $\{S, T \}$ defines a boundary between 
the set of vertices that are traced out and vertices that survive.

\begin{definition}\label{def:boundary-volume-adapted}
The \emph{boundary} of the adapted partition $\{S, T\}$ is defined as the set of all (unoriented) edges $e=\{i_S, j_T\}$ in the graph
 state with the property that $i_S \in S$ and $j_T \in T$. Equivalently, it is the set of edges of the type
 $\includegraphics{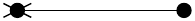}$. The boundary of a partition  shall be denoted by $\partial S$.

The \emph{area} of this boundary is its cardinality $|\partial S |$,  i.e. the number of edges between $S$ and $T$.
\end{definition}

It was shown in \cite{collins2013area} that the \emph{area law} holds \emph{exactly} for adapted marginals of graph states, where we allow arbitrary dimensions of subsystem. Note that, for a given (boundary) edge $\{i,j\}$, we have $d_i = d_j$, the common dimension of the maximally entangled state corresponding to the edge $\{i,j\}$. The following result follows from linear algebra considerations, and one does not need random Haar unitary operators in this case. 

\begin{proposition}\label{prop:area-law-adapted}
Let $\rho_S$ be an \emph{adapted} marginal of a graph state $\varphi_G$. Then, the entropy of $\rho_S$ has the following 
\emph{exact, deterministic} value:

\begin{equation}
H(\rho_S) = |\partial S| \log N.
\end{equation}
\end{proposition}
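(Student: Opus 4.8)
The plan is to exploit the adaptedness hypothesis to decompose the state $\varphi_G$ into a tensor product over the connected components of the graph after cutting the boundary edges, and then to observe that each boundary edge contributes exactly one maximally mixed qudit to the reduced state. First I would use the fact that, since $\{S,T\}$ is adapted, the partition $\mathcal P_\text{trace}$ refines $\mathcal P_\text{vertex}$; hence each vertex lies entirely in $S$ or entirely in $T$, and we may unambiguously speak of ``surviving'' vertices (those in $S$) and ``traced-out'' vertices (those in $T$). The key structural observation is that the local unitaries $U_i$ attached to traced-out vertices can be discarded: since $\mathrm{Tr}_{S^c}$ is applied to the full local Hilbert space $\mathcal H_i$ of any traced-out vertex $i$, and $U_i$ acts unitarily on exactly that space, we have $\mathrm{Tr}_{\mathcal H_i}(U_i (\cdot) U_i^*) = \mathrm{Tr}_{\mathcal H_i}(\cdot)$ by cyclicity/unitary invariance of the partial trace. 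So in computing $\rho_{G,S}$ we may replace $\varphi_G\varphi_G^*$ by $[\bigotimes_{i\in S} U_i]\,\sigma\,[\bigotimes_{i\in S}U_i]^*$ where $\sigma = [\bigotimes_j \omega_j\omega_j^*]$ is the product of the edge maximally entangled states, and then trace out the $T$-systems.

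Next I would classify the edges into three types according to whether their two endpoints are both in $S$, both in $T$, or split. For an edge $\{i,j\}$ with both endpoints in $T$, the corresponding $\omega_j\omega_j^*$ is entirely traced out and contributes a scalar $1$ (it is a normalized state). For an edge with both endpoints in $S$, the factor $\omega_j\omega_j^*$ survives untouched inside the $S$-system. For a boundary edge $e=\{i_S, j_T\}$ with $i_S \in S$, $j_T\in T$, we trace out the $j_T$-half of the maximally entangled pair $\omega_e$; the standard computation $\mathrm{Tr}_2(\omega_d\omega_d^*) = \frac1d I_d$ shows this half-edge contributes the maximally mixed state $N^{-1} I_N$ on the surviving $\mathbb C^N$ (here using $d_i = d_j = N$ for a boundary edge, as noted before the statement). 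Putting these together, before applying the surviving unitaries, the reduced state on the $S$-systems is a tensor product of (i) copies of $\omega_e\omega_e^*$ for interior-$S$ edges, and (ii) copies of $N^{-1}I_N$, one for each boundary edge, i.e. $|\partial S|$ of them, tensored with pure states of the first kind.

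Finally, conjugating by the surviving unitaries $\bigotimes_{i\in S}U_i$ is a unitary operation on the $S$-space, so it does not change the von Neumann entropy. The entropy of the resulting state is then the sum of the entropies of the tensor factors: the pure factors $\omega_e\omega_e^*$ contribute $0$, and each maximally mixed $N^{-1}I_N$ contributes $\log N$; since there are exactly $|\partial S|$ of the latter, we obtain $H(\rho_S) = |\partial S|\log N$, which is the claimed exact deterministic value. The main obstacle — really a bookkeeping point rather than a genuine difficulty — is to set up the identification of the tensor factors carefully enough that the partial trace genuinely splits as a tensor product over half-edges; this uses precisely the adaptedness condition, since if a vertex were split by $\{S,T\}$ the local unitary $U_i$ would entangle traced and surviving systems and the clean factorization (and hence the exact value) would fail.
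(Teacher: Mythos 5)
Your proof is correct and follows essentially the same route as the paper, which observes (via the example after the proposition) that $\rho_S$ is a unitary conjugation of a maximally mixed state of size $N^{|\partial S|}$ tensored with a pure state, so the entropy is exactly $|\partial S|\log N$. Your write-up simply makes explicit the three steps the paper leaves implicit: discarding the unitaries at fully traced-out vertices, classifying edges by how they meet the cut, and using invariance of entropy under the surviving unitaries.
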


\begin{figure}[htbp]
\centering
\includegraphics{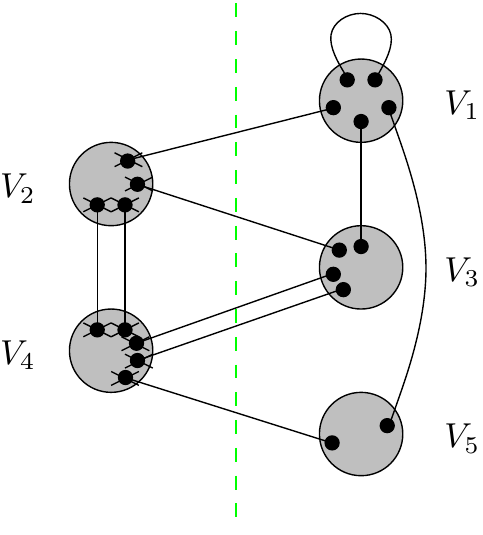}
\caption{An adapted marginal for a graph state. The dashed (green) line represents the \emph{boundary} between the traced--out subsystems $T$ and the surviving subsystems $S$.}
\label{fig:compatible_ex}
\end{figure}
For the system corresponding to the graph shown in Figure \ref{fig:compatible_ex} 
with all subsystems of size $N$  the von Neumann entropy reads
\begin{equation}
H(\rho_S) = 5 \log N .
\end{equation}
This follows from the fact that $\rho_S$ is in this case a unitary conjugation of a maximally mixed 
state of size $N^5$ with an arbitrary pure state of size $N^6$.

We refer the reader to Section \ref{sec:area-laws} for a more general result in this direction (for non-adapted marginals).

\subsection{Moments. Average entropy}

In this section we present results concerning certain quantities of interest in quantum information theory, and in particular their average values over the different ensembles introduced previously. 

Let us start with the case of the uniform measure on the set of pure quantum states. The statistics of the coordinates of a uniform random pure state can be obtained by the so-called \emph{spherical integrals} \cite[Section 2.7]{folland2013real}. The following result could also be deduced from the Wick formula in Section \ref{sec:Gaussian-random-variables} or from the Weingarten formula in Section \ref{sec:Weingarten}.

\begin{lemma}
For any non-negative integers $\alpha_1,\ldots, \alpha_d \geq 0$, we have
$$\mathbb E_{x \sim \chi_d}\left[ |x_1|^{2\alpha_1} |x_2|^{2\alpha_2} \cdots |x_d|^{2\alpha_d} \right] = (d-1)!\frac{\alpha_1! \alpha_2! \cdots \alpha_d!}{(d-1 + \alpha_1 + \alpha_2 + \cdots + \alpha_d)!}.$$
\end{lemma}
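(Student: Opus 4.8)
The plan is to realize the uniform random pure state as a normalized complex Gaussian vector, so that the desired expectation over the sphere reduces to a moment computation for i.i.d.\ Gaussians, which is handled by the Wick formula of Theorem \ref{thm:wick-formula}. Concretely, by Proposition \ref{prop:random-pure-states}(2), if $X \sim \mathcal N_{\mathbb C}(0, I_d)$ then $x = X/\|X\|$ has distribution $\chi_d$. Writing $X = (X_1, \ldots, X_d)$ with the $X_i$ i.i.d.\ standard complex Gaussians, and $\|X\|^2 = \sum_{i=1}^d |X_i|^2$, we have
\[
\mathbb E_{x \sim \chi_d}\left[ \prod_{i=1}^d |x_i|^{2\alpha_i} \right] = \mathbb E\left[ \frac{\prod_{i=1}^d |X_i|^{2\alpha_i}}{\|X\|^{2\alpha}} \right], \qquad \alpha := \alpha_1 + \cdots + \alpha_d.
\]

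The key step is to decouple the numerator from the normalization. I would use the standard fact (a consequence of Proposition \ref{prop:induced-measure-properties}(1)--(2) in the case $d=1$, or equivalently the independence of the radial and angular parts of a Gaussian vector) that the direction $x = X/\|X\|$ is independent of the radius $R := \|X\|$. Hence
\[
\mathbb E\left[ \prod_i |X_i|^{2\alpha_i} \right] = \mathbb E\left[ R^{2\alpha} \right] \cdot \mathbb E\left[ \prod_i |x_i|^{2\alpha_i} \right],
\]
so the sought quantity equals $\mathbb E[\prod_i |X_i|^{2\alpha_i}] / \mathbb E[R^{2\alpha}]$. Now $R^2 = \|X\|^2$ is chi-squared with $2d$ real degrees of freedom (sum of $d$ independent exponentials of mean $1$, since $|X_i|^2 \sim \mathrm{Exp}(1)$), so $\mathbb E[R^{2\alpha}] = \mathbb E[(R^2)^\alpha] = \Gamma(d+\alpha)/\Gamma(d) = (d-1+\alpha)!/(d-1)!$. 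For the numerator, the $X_i$ are independent, so $\mathbb E[\prod_i |X_i|^{2\alpha_i}] = \prod_i \mathbb E[|X_i|^{2\alpha_i}]$, and for a single standard complex Gaussian $\mathbb E[|X_i|^{2\alpha_i}] = \alpha_i!$ — this is exactly the Wick formula, since $|X_i|^{2\alpha_i} = X_i^{\alpha_i}\overline{X_i}^{\alpha_i}$ and the only nonzero pairings pair each $X_i$ with an $\overline{X_i}$, giving $\alpha_i!$ terms each contributing $\mathbb E[|X_i|^2] = 1$. Assembling the pieces yields
\[
\mathbb E_{x \sim \chi_d}\left[ \prod_{i=1}^d |x_i|^{2\alpha_i} \right] = \frac{\prod_{i=1}^d \alpha_i!}{(d-1+\alpha)!/(d-1)!} = (d-1)! \, \frac{\alpha_1! \cdots \alpha_d!}{(d-1+\alpha_1+\cdots+\alpha_d)!},
\]
as claimed.

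The main obstacle is justifying the radius--direction independence and the resulting factorization cleanly; this is where one must be slightly careful, but it is classical. An alternative route that sidesteps this entirely — and which I would mention as a remark — is to compute $\mathbb E_{x\sim\chi_d}[\prod |x_i|^{2\alpha_i}]$ directly via the Weingarten formula of Theorem \ref{thm:Weingarten-formula}, using Proposition \ref{prop:random-pure-states}(3) to write $x$ as the first column of a Haar unitary: then $x_i = U_{i1}$, the integrand is $\prod_i U_{i1}^{\alpha_i}\overline{U_{i1}}^{\alpha_i}$, and the Weingarten sum collapses because all "column" indices equal $1$, leaving a sum over a single symmetric group $\mathcal S_\alpha$ that can be resummed to the stated ratio of factorials; or even more directly, one can integrate the monomial against the explicit density of $(|x_1|^2, \ldots, |x_d|^2)$ on the simplex (a flat Dirichlet-type density), reducing to a standard Dirichlet/Beta integral. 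Any of these three approaches gives the result; I would present the Gaussian one as the shortest.
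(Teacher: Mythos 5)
Your proof is correct, and it follows essentially the route the paper itself indicates: the lemma is stated there without proof, with a pointer to ``spherical integrals'' (Folland) and a remark that it also follows from the Wick formula, and your Gaussian polar decomposition (direction independent of radius, $\mathbb E[|X_i|^{2\alpha_i}]=\alpha_i!$ by Wick, $\mathbb E[\|X\|^{2\alpha}]=\Gamma(d+\alpha)/\Gamma(d)$) is exactly that standard spherical-integral computation. The only cosmetic quibble is that citing Proposition \ref{prop:induced-measure-properties} for the radius--direction independence is a slight mismatch (that proposition concerns Wishart matrices); the correct justification is simply the rotational invariance of the standard complex Gaussian vector, which you also mention.
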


We move now to the case of random density matrices having the induced distributions $\nu_{d,s}$ discussed in Section \ref{sec:induced-measure}. Using the relation between this distribution and the Wishart ensemble, the following result has been shown in \cite{sommers2004statistical,nechita2007asymptotics}. 

\begin{proposition}
The moments of a random density matrix $\rho \in \mathcal D_d$ having distribution $\nu_{d,s}$ are given by 
$$\mathbb E \mathrm{Tr}(\rho^q) =
\frac{\Gamma(ds)}{\Gamma(ds+q)}\sum_{j=1}^{q}{(-1)^{j-1}
\frac{[s+q-j]_q [d+q-j]_q }{(q-j)! (j-1)!}},$$
where $[a]_q = a(a-1)\cdots(a-q+1)$In particular, the first few moments read
\begin{align*}
&\mathbb E \mathrm{Tr}(\rho^2) = \frac{d+s}{ds+1}\\
&\mathbb E \mathrm{Tr}(\rho^3) = \frac{d^2+3ds+s^2+1}{(ds+1)(ds+2)}\\
&\mathbb E \mathrm{Tr}(\rho^4) = \frac{d^3+6d^2s+6ds^2+s^3+5d+5s}{(ds+1)(ds+2)(ds+3)}.
\end{align*}
\end{proposition}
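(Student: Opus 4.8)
The plan is to reduce the computation of $\mathbb E\,\mathrm{Tr}(\rho^q)$ for $\rho\sim\nu_{d,s}$ to a moment computation for the Wishart ensemble $\mathcal W_{d,s}$, and then to evaluate the latter by Wick's formula (Theorem \ref{thm:wick-formula}) together with a combinatorial analysis of the resulting permutation sum. By Proposition \ref{prop:induced-measure-properties}, if $W\sim\mathcal W_{d,s}$ then $\rho = W/\mathrm{Tr}(W)$ has distribution $\nu_{d,s}$, and moreover $\rho$ and $\mathrm{Tr}(W)$ are \emph{independent}, with $\mathrm{Tr}(W)$ chi-squared with $ds$ degrees of freedom. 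Hence
$$\mathbb E\,\mathrm{Tr}(W^q) = \mathbb E\!\left[ (\mathrm{Tr} W)^q \right]\cdot \mathbb E\,\mathrm{Tr}(\rho^q),$$
so that $\mathbb E\,\mathrm{Tr}(\rho^q) = \mathbb E\,\mathrm{Tr}(W^q)\,/\,\mathbb E[(\mathrm{Tr} W)^q]$. The denominator is the $q$-th moment of a $\chi^2_{ds}$ variable, which equals $\mathbb E[(\mathrm{Tr} W)^q] = ds(ds+1)\cdots(ds+q-1) = \Gamma(ds+q)/\Gamma(ds)$; this already accounts for the prefactor $\Gamma(ds)/\Gamma(ds+q)$ in the claimed formula.

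Next I would compute the numerator $\mathbb E\,\mathrm{Tr}(W^q)$ for $W = XX^*$ with $X\in\mathcal M_{d\times s}(\mathbb C)$ having i.i.d.\ standard complex Gaussian entries. Writing $\mathrm{Tr}(W^q) = \sum X_{i_1 k_1}\overline{X_{i_2 k_1}} X_{i_2 k_2}\overline{X_{i_3 k_2}}\cdots X_{i_q k_q}\overline{X_{i_1 k_q}}$ and applying the complex Wick formula, the expectation becomes a sum over pairings of the $q$ unbarred entries with the $q$ barred entries, i.e.\ over permutations $\sigma\in\mathcal S_q$; each such pairing contributes $d^{\#(\sigma\gamma)} s^{\#\sigma}$ (or a similar expression) where $\gamma=(1\,2\,\cdots\,q)$ is the full cycle coming from the trace structure and $\#$ counts cycles. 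This is the standard genus expansion, and summing $d^{\#(\sigma\gamma)}s^{\#\sigma}$ over $\sigma\in\mathcal S_q$ is a classical identity: grouping permutations by the number $j$ of blocks of the associated non-crossing structure (equivalently by $\#(\sigma\gamma)$), and using the Narayana/Kreweras enumeration, one obtains
$$\mathbb E\,\mathrm{Tr}(W^q) = \sum_{j=1}^q \frac{1}{j}\binom{q-1}{j-1}\binom{q}{j-1}\, [s]^{(j)}\,[d]^{(j)}$$
or, after the elementary manipulation $[a]^{(j)} = \prod_{m=0}^{j-1}(a+m)$ versus falling factorials $[a]_q$, the alternating form appearing in the statement. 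The passage between the "rising-factorial Narayana" form and the stated "falling-factorial alternating" form is a finite hypergeometric identity that I would verify by a generating-function or telescoping argument; the low-order cases $q=2,3,4$ can be checked directly to pin down normalizations.

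The main obstacle I anticipate is \emph{not} the Wick expansion itself but the bookkeeping that turns the raw permutation sum $\sum_{\sigma\in\mathcal S_q} d^{\#(\sigma\gamma)} s^{\#\sigma}$ into the specific closed form with the alternating sign and the brackets $[s+q-j]_q[d+q-j]_q/((q-j)!(j-1)!)$. Concretely, one must identify the coefficient of each monomial $d^a s^b$ with a product of two Narayana-type numbers and then recognize that a Vandermonde-type summation collapses it to the stated expression; this is where Lemma \ref{lem:S_p} (the structure of geodesics in $\mathcal S_q$ and the bijection with non-crossing partitions) does the real work, since it is exactly what guarantees that only "non-crossing" $\sigma$ contribute at top order and organizes the subleading terms. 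Once the combinatorial identity is in hand, dividing by $\Gamma(ds+q)/\Gamma(ds)$ yields the formula, and specializing $q=2,3,4$ reproduces the three displayed moments.
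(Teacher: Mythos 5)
Your overall strategy is the right one, and it is essentially the route the paper implicitly endorses: the paper gives no proof of this proposition, only the remark that it follows from ``the relation between this distribution and the Wishart ensemble'' together with citations to Sommers--{\.Z}yczkowski and Nechita. The reduction $\mathbb E\,\mathrm{Tr}(\rho^q)=\mathbb E\,\mathrm{Tr}(W^q)/\mathbb E[(\mathrm{Tr}W)^q]$ via independence of the radial part, the evaluation $\mathbb E[(\mathrm{Tr}W)^q]=\Gamma(ds+q)/\Gamma(ds)$ (which indeed produces the prefactor), and the Wick expansion $\mathbb E\,\mathrm{Tr}(W^q)=\sum_{\sigma\in\mathcal S_q}d^{\#(\sigma\gamma)}s^{\#\sigma}$ are all correct.

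The genuine gap is in the final combinatorial step. Your intermediate identity $\sum_{\sigma}d^{\#(\sigma\gamma)}s^{\#\sigma}=\sum_{j}\frac1j\binom{q-1}{j-1}\binom{q}{j-1}[s]^{(j)}[d]^{(j)}$ is false already at $q=2$: it gives $sd(sd+s+d+2)$, whereas the permutation sum equals $sd(s+d)$. The reason is structural: the Narayana/Kreweras enumeration and the geodesic--non-crossing bijection of Lemma \ref{lem:S_p} only account for permutations satisfying $|\sigma|+|\sigma^{-1}\gamma|=|\gamma|$, i.e.\ the genus-zero terms that dominate asymptotically; the \emph{exact} finite-$(d,s)$ formula requires all permutations, and the higher-genus contributions are not counted by Narayana numbers, so no Vandermonde-type collapse of a Narayana sum can yield the stated closed form. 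The standard way to close this step is the Frobenius character expansion of $\sum_\sigma x^{\#\sigma}y^{\#(\sigma\gamma)}$: since $\gamma$ is a full cycle, only hook shapes $\lambda=(q-j+1,1^{j-1})$ contribute, with $\chi^\lambda(\gamma)=(-1)^{j-1}$, and the content products over hooks produce exactly the falling factorials $[s+q-j]_q\,[d+q-j]_q$ and the single alternating sum over $j$. Finally, your instinct to ``check $q=2,3,4$ to pin down normalizations'' is essential and would also reveal that the displayed general formula is inconsistent with the displayed special cases by a factor of $q$: at $q=2$ the stated sum evaluates to $2sd(s+d)$, giving $2(d+s)/(ds+1)$ rather than $(d+s)/(ds+1)$, so the correct normalization carries an extra $1/q$ in front of the sum.
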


The average entropy of a random density matrix was conjectured by Page in \cite{page1993average} and later proved in \cite{foong1994proof,sanchez-ruiz1995simple,sen1996average}. 

\begin{proposition}
The average von Neumann entropy of a random density matrix having distribution $\nu_{d,s}$ is 
$$\mathbb E H(\rho) = \sum_{i = s+1}^{ds}{\frac 1 i} - \frac{d-1}{2s}.$$
\end{proposition}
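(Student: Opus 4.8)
The plan is to compute $\mathbb E H(\rho)$ directly from the joint eigenvalue density of $\nu_{d,s}$, exploiting the fact that the von Neumann entropy is a symmetric function of the eigenvalues. Write $H(\rho) = -\sum_{i=1}^d \lambda_i \log \lambda_i$; by symmetry of the eigenvalue density under permutations, $\mathbb E H(\rho) = -d \, \mathbb E[\lambda_1 \log \lambda_1]$, so it suffices to compute the expectation of the single scalar observable $f(\lambda_1) = -\lambda_1 \log \lambda_1$ against the one-point marginal of the joint density. The key object is therefore the \emph{density of eigenvalues} (the first correlation function) of the induced ensemble $\nu_{d,s}$, which, because the joint law $C_{d,s}\,\mathbf 1_{\sum \lambda_i = 1}\prod_i \lambda_i^{s-d}\prod_{i<j}(\lambda_i-\lambda_j)^2$ is a (constrained) Laguerre-type $\beta=2$ ensemble, can be written explicitly in terms of Laguerre polynomials via the standard Vandermonde-to-determinant trick.

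First I would pass from the constrained ensemble on the simplex to the unconstrained Wishart/Laguerre ensemble using Proposition \ref{prop:induced-measure-properties}: if $W \sim \mathcal W_{d,s}$ then $\rho = W/\trace W$ with $\rho$ and $\trace W$ independent and $\trace W \sim \chi^2_{ds}$. Hence for the unnormalised eigenvalues $x_i$ of $W$ (joint density $\propto \prod_i x_i^{s-d} e^{-x_i}\prod_{i<j}(x_i-x_j)^2$ on $\R_+^d$) one has $\lambda_i = x_i / \sum_j x_j$, and a short computation gives $\mathbb E[-\sum_i \lambda_i \log \lambda_i] = \mathbb E[-\sum_i \tfrac{x_i}{T}\log \tfrac{x_i}{T}]$ with $T = \sum x_j$; expanding $\log(x_i/T) = \log x_i - \log T$ and using independence of $\rho$ and $T$ together with $\sum_i \lambda_i = 1$, the $\log T$ terms telescope and one is left with $\mathbb E H(\rho) = \mathbb E_W[\log T] - \tfrac{1}{\mathbb E T}\,\mathbb E_W[\sum_i x_i \log x_i]$ after correctly accounting for the $\rho$–$T$ decoupling (this is the one slightly delicate algebraic identity). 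The term $\mathbb E[\log T]$ with $T\sim\chi^2_{ds}$ is classical and equals $\psi(ds) + \log 2$ (digamma), and $\psi(ds) = -\gamma + \sum_{k=1}^{ds-1}\tfrac1k$.

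Next I would compute $\mathbb E_W[\sum_i x_i \log x_i] = d\,\mathbb E_W[x_1 \log x_1]$ against the Laguerre one-point density $\varrho_{d,s}(x) = \frac{(d-1)!}{\Gamma(s)}\, x^{s-d} e^{-x} \sum_{k=0}^{d-1} \frac{k!}{(k+s-d)!}\big(L_k^{s-d}(x)\big)^2$, or equivalently via the Christoffel–Darboux kernel. The integral $\int_0^\infty x \cdot x^{s-d} e^{-x} (L_k^{s-d}(x))^2 \log x \, dx$ is evaluated by differentiating the known Mehta-type integral $\int_0^\infty x^{\alpha+t} e^{-x}(L_k^\alpha(x))^2\,dx$ in the parameter $t$ at $t=0$; the result is a finite sum of digamma values, and after summing over $k=0,\dots,d-1$ and simplifying (telescoping of the harmonic-sum pieces, with the $-\tfrac{d-1}{2s}$ term emerging from the derivative of the Gamma-function prefactor) one obtains $\mathbb E H(\rho) = \sum_{i=s+1}^{ds}\tfrac1i - \tfrac{d-1}{2s}$. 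The main obstacle is precisely this last step: controlling the weighted sum $\sum_{k=0}^{d-1}\frac{k!}{(k+s-d)!}\int_0^\infty x^{s-d+1}e^{-x}(L_k^{s-d}(x))^2 \log x\,dx$ and showing the resulting double sum of digammas collapses to the clean closed form — this requires either a careful induction on $d$ or a generating-function/integration-by-parts argument, and is where all the genuine work lies; everything else is bookkeeping. As an alternative to the Laguerre-polynomial route, one could instead follow the original arguments of \cite{foong1994proof,sanchez-ruiz1995simple,sen1996average}, which derive a recursion in $d$ for $\mathbb E H$ directly from the Selberg-type normalisation constant $C_{d,s}$ and its derivative, sidestepping explicit orthogonal polynomials; I would present whichever of the two is shorter, but both reduce the problem to the same elementary (if tedious) summation identity.
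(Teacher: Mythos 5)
First, note that the paper does not prove this proposition at all: it is quoted as Page's conjecture together with citations to \cite{foong1994proof,sanchez-ruiz1995simple,sen1996average}, so there is no in-paper argument to compare against. Your plan is essentially the S\'anchez-Ruiz route from that literature (decouple the trace via Proposition \ref{prop:induced-measure-properties}, then integrate $x\log x$ against the Laguerre one-point function), which is a viable and standard strategy.

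However, there is a concrete error in the one step you single out as ``slightly delicate''. Writing $T=\trace W$ and $H(\rho)=\log T-\frac1T\sum_i x_i\log x_i$, the independence of $\rho$ and $T$ is exploited by multiplying through by $T$: $\mathbb E[T]\,\mathbb E[H(\rho)]=\mathbb E[T H(\rho)]=\mathbb E[T\log T]-\mathbb E[\sum_i x_i\log x_i]$, hence
$$\mathbb E H(\rho)=\frac{\mathbb E[T\log T]}{\mathbb E[T]}-\frac{\mathbb E[\sum_i x_i\log x_i]}{\mathbb E[T]}=\psi(ds+1)-\frac{1}{ds}\,\mathbb E\Big[\sum_i x_i\log x_i\Big]$$
(in the normalization $T\sim\Gamma(ds,1)$). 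Your formula replaces the first term by $\mathbb E[\log T]=\psi(ds)$, which differs from $\psi(ds+1)$ by $1/(ds)$ and therefore yields the wrong answer: already for $d=s=2$ it gives $1/12$ instead of the correct $1/3$. (One cannot instead keep $\mathbb E[\tfrac1T\sum_i x_i\log x_i]$ and pull out $1/\mathbb E T$, since $1/T$ and $\sum_i x_i\log x_i$ are not independent.) Beyond this, the step where ``all the genuine work lies'' --- evaluating $\int_0^\infty x^{\alpha+1}e^{-x}\bigl(L_k^\alpha(x)\bigr)^2\log x\,dx$ and showing the resulting double sum of digamma values collapses to $\sum_{i=s+1}^{ds}\frac1i-\frac{d-1}{2s}$ --- is only asserted, not carried out; since this identity is exactly the content of the cited proofs, the proposal as written is an outline with one incorrect identity rather than a proof.
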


\subsection{Entanglement}

The notion of quantum entanglement has been recognized to be at the center of quantum mechanics from the early days of the theory. The reader interested in entanglement theory is referred to the excellent review paper \cite{horodecki2009quantum}. In this work, we will only deal with \emph{bipartite entanglement}, which is defined as follows. First, we say that a quantum state $\rho \in \mathcal D_{nk}$ is \emph{separable} iff it can be written as a convex combination of tensor product states:
$$\rho = \sum_{i=1}^r p_i \sigma_i \otimes \tau_i,$$
where $\sigma_i \in \mathcal D_n$, $\tau_i \in \mathcal D_k$ and $(p_i)$ is a probability vector: $p_i \geq 0$ and $\sum_i p_i =1$. The set of separable states is denoted by $\mathcal{SEP}_{n,k} \subseteq \mathcal D_{nk}$ and the states in its complement are called entangled. 

In this section, we are going to review some results about the (Euclidean) volume of the set of separable states. Equivalently, volumes can be expressed, up to a factor, from the probability that a quantum state is separable, under the induced measure $\nu_{nk,nk}$, see Remark \ref{rem:induced-euclidean}.

The first result in this direction is quite remarkable \cite{gurvits2002largest}. It has many interesting corollaries, one of them being that the set $\mathcal{SEP}$ of separable states has non-empty interior.

\begin{proposition}\label{prop:separable-ball}
The largest Euclidean ball centered at the maximally mixed state $I/(nk)$ and contained in $\mathcal D_{nk}$ is separable and has radius $[nk(nk-1)]^{-1/2}$.
\end{proposition}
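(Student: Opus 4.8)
The plan is to establish two things: first, that the Euclidean ball of radius $r_0 := [nk(nk-1)]^{-1/2}$ around $I/(nk)$ is contained in $\mathcal D_{nk}$ and consists only of separable states; second, that no larger ball centered at $I/(nk)$ fits inside $\mathcal D_{nk}$ at all, so the radius is optimal even before imposing separability. The second point is the easy one: the closest point of the boundary $\partial \mathcal D_{nk}$ to the center $I/(nk)$ in Hilbert--Schmidt distance is realized by a density matrix with one zero eigenvalue, and a direct optimization over eigenvalue vectors $(\lambda_i)$ with $\sum \lambda_i = 1$, $\sum (\lambda_i - 1/(nk))^2$ fixed, shows the minimal distance from the center to $\{\lambda_{\min} = 0\}$ is exactly $r_0$. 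Equivalently, the traceless Hermitian direction $\Delta = I/(nk) - P$ with $P$ a rank-one projector has $\|\Delta\|_2 = \sqrt{(1 - 1/(nk))}$, and scaling $\Delta$ so that $I/(nk)$ just stays positive pins the radius.

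For the containment/separability claim, the key step is the following: it suffices to show that any Hermitian traceless $\Delta$ with $\|\Delta\|_2 \le r_0$ has $I/(nk) + \Delta$ separable. I would first reduce to a normalization-free statement: writing $X = I + nk\,\Delta$, we need that every positive semidefinite $X$ with $\mathrm{Tr}(X) = nk$ and $\|X - I\|_2 \le 1$ is separable (after dividing by $nk$). The heart of the argument is a perturbative one: an operator of the form $I_{nk} + A$ with $\|A\|_2 \le 1$ has all eigenvalues in a controlled range, and one expands it in a Hermitian product basis. Concretely, pick orthonormal Hermitian bases $\{F_\mu\}_{\mu=0}^{n^2-1}$ of $\mathcal M_n^{sa}$ and $\{G_\nu\}$ of $\mathcal M_k^{sa}$ with $F_0 = I_n/\sqrt n$, $G_0 = I_k/\sqrt k$, so that $\{F_\mu \otimes G_\nu\}$ is an orthonormal Hermitian basis of $\mathcal M_{nk}^{sa}$. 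Expanding $X/(nk) = I/(nk) + \sum_{(\mu,\nu)\ne(0,0)} c_{\mu\nu}\, F_\mu \otimes G_\nu$, the constraint becomes $\sum |c_{\mu\nu}|^2 \le 1/(nk)^2 \cdot \text{(something)}$; one then shows this forces the state to lie in the image of the separability-preserving averaging maps, i.e. to be a mixture of product states, by comparing with the known separable states obtained from the "twirl" over local unitaries.

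The main obstacle is controlling the separability of near-maximally-mixed states quantitatively — the passage from "small Hilbert--Schmidt perturbation" to "explicit convex decomposition into product states." The cleanest route I would pursue: show that for $X = I + A$ with $A$ Hermitian, $\mathrm{Tr} A = 0$, $\|A\|_2 \le 1$, one can write $A$ as an average over rank-one-type corrections that keep positivity, using the fact that $\|A\|_\infty \le \|A\|_2 \le 1$ so $0 \le X \le 2I$; then invoke a criterion (e.g. the range condition or an explicit integral representation over product projectors $P\otimes Q$) to realize $X/\mathrm{Tr}X$ as $\int (P\otimes Q)\, d\mu$ for a suitable signed-then-corrected measure that is genuinely nonnegative precisely when $\|A\|_2 \le r_0 \cdot nk$. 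I expect the verification that the resulting measure is a bona fide probability measure — not just a quasi-measure — to be where the constant $[nk(nk-1)]^{-1/2}$ is actually consumed, and where one must be careful. The optimality in the other direction, as noted, is a short eigenvalue computation and should be dispatched first to fix the target constant.
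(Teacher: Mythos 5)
The paper itself does not prove this proposition --- it is quoted from Gurvits and Barnum \cite{gurvits2002largest} --- so I am comparing your sketch against the argument of that reference. Your treatment of the optimality direction is fine: the in-radius of $\mathcal D_{nk}$ at $I/(nk)$ is $[nk(nk-1)]^{-1/2}$, attained along the direction $I/(nk)-P$ with $P$ a rank-one projector, and this is indeed a short eigenvalue computation. The gap is in the other half, which is the entire content of the theorem: your argument that every state in the ball is \emph{separable} never materializes, and the routes you sketch do not reach the stated constant. Comparing with the image of the local-unitary twirl cannot work, because twirling projects onto a low-dimensional commutant and a generic small perturbation of $I/(nk)$ does not lie in it; the range criterion is only a \emph{necessary} condition for separability and cannot be invoked to certify it; and the ``signed-then-corrected'' integral over product projectors is precisely the object whose existence needs to be established --- you acknowledge yourself that this is where the constant is consumed. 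The bound $\|A\|_\infty\le\|A\|_2\le 1$ only yields positivity of $I+A$, not separability. There is also a quantitative slip in your reduction: imposing simultaneously $\mathrm{Tr}(X)=nk$ and $\|X-I\|_2\le 1$ covers only the ball of radius $1/(nk)$ around $I/(nk)$, which is strictly smaller than $[nk(nk-1)]^{-1/2}=(nk)^{-1}\sqrt{nk/(nk-1)}$; recovering the optimal radius from the normalization-free statement requires optimizing over $\mathrm{Tr}(X)$ rather than fixing it.

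The known proof is dual rather than primal. A state $\rho$ is separable iff $\mathrm{Tr}(\rho W)\ge 0$ for every block-positive $W$, i.e., every Hermitian $W$ with $\langle x\otimes y|W|x\otimes y\rangle\ge 0$ for all product vectors. Writing $\rho=I/(nk)+\Delta$ and minimizing $\mathrm{Tr}(\rho W)$ over traceless Hermitian $\Delta$ with $\|\Delta\|_2\le r$, one finds that the ball of radius $r=[nk(nk-1)]^{-1/2}$ lies in the separable cone if and only if
$$\mathrm{Tr}(W)^2\;\ge\;\mathrm{Tr}(W^2)$$
for every block-positive $W$. This inequality is trivial for positive semidefinite $W$ but genuinely nontrivial for block-positive $W$, which may have negative eigenvalues; it is the key lemma of \cite{gurvits2002largest} and the only place where the bipartite product structure enters. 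Without this lemma, or an equivalent explicit convex decomposition (which is not known to be easy to produce at the optimal radius), the separability claim is unsupported. I recommend you reorganize the proof around the witness duality and isolate this inequality as the statement to be proved.
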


In the case of the Euclidean measure $\nu_{nk,nk}$ is has been shown in \cite[Theorem 1]{aubrun2006tensor} that the ratio between the volume of $\mathcal{SEP}_{n,n}$ and $\mathcal D_{n^2}$ vanishes when $n \to \infty$. In the case where the parameter $s$ of the induced measure $\nu_{nk,s}$ grows to infinity, while $n$ and $k$ are kept fixed, the measure $\nu_{nk,s}$ concentrates around the maximally mixed state $I_{nk}$ (see Proposition \ref{prop:induced-measure-properties}), so 
$$\lim_{s \to \infty} \mathbb P_{\nu_{nk,s}}[ \rho \in \mathcal{SEP}_{nk}] = 1.$$

More precise estimates have been obtained in \cite{aubrun2014entanglement} in the case of the induced measures. In order to present these results, we need first to introduce the concept of \emph{thresholds}. 

Consider a family of sets of density matrices $X_d \subseteq \mathcal D_d$. The idea of a threshold captures the behavior of the probability that a quantum state $\rho \in \mathcal D_d$ is an element of $X_d$, when the probability is measured with the induced measure $\nu_{d,s}$; we would like to know, when $d \to \infty$, for which values of the parameter $s$, the probability vanishes or becomes close to $1$. More precisely, we say that a \emph{threshold phenomenon} with value $c_0$ on the scale $f$ occurs when the following holds: let $s_d \sim c f(d)$ for a constant $c>0$; Then
\begin{enumerate}
\item If $c < c_0$, $\lim_{d \to \infty} \mathbb P_{\nu_{d,s_d}}[ \rho \in X_d] = 0$.
\item If $c > c_0$, $\lim_{d \to \infty} \mathbb P_{\nu_{d,s_d}}[ \rho \in X_d] = 1$.
\end{enumerate}
This definition was first considered in the Quantum Information Theory literature by Aubrun in \cite{aubrun2012partial} to study the PPT criterion (see next section). 

We state now the main result in \cite{aubrun2014entanglement}, regarding the threshold for the sets $\mathcal{SEP}_{n,n}$. The following statement corresponds to \cite[Theorem 2.3]{aubrun2014entanglement}, which deals with the so-called \emph{balanced regime} $k=n$. For the \emph{unbalanced regime} $k \neq n$, see \cite[Section 7.2]{aubrun2014entanglement}. 

\begin{theorem}\label{thm:threshold-sep}
There exist constants $c,C$ and a function $f(n)$ satisfying
$$cn^3 < f(n) < Cn^3 \log^2(n)$$
such that
\begin{enumerate}
\item If $s_n < f(n)$, $\lim_{n \to \infty} \mathbb P_{\nu_{n^2,s_n}}[ \rho \in \mathcal{SEP}_{n,n}] = 0$.
\item If $s_n > f(n)$, $\lim_{n \to \infty} \mathbb P_{\nu_{n^2,s_n}}[ \rho \in \mathcal{SEP}_{n,n}] = 1$.
\end{enumerate}
\end{theorem}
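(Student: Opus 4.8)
The plan is to establish the two halves of the threshold separately, since they require genuinely different tools. For the lower bound (item 1: if $s_n < cn^3$ then $\rho$ is typically entangled), I would use the realization $\rho = W/\trace W$ with $W \sim \mathcal W_{n^2, s_n}$ from Proposition \ref{prop:induced-measure-properties}, so that $\rho$ concentrates around $I_{n^2}/n^2$ with fluctuations of operator-norm size governed by Marchenko-Pastur (Theorem \ref{thm:marchenko-pastur}): writing $\rho = (I_{n^2} + \Delta)/n^2$, the centered part $\Delta$ behaves like a GUE-type random matrix of norm $\asymp n/\sqrt{s_n}$ up to rescaling. The key is then to detect entanglement via the \emph{realignment} (or PPT) criterion, or more efficiently via a well-chosen entanglement witness — typically the partial transpose or a random product-vector witness. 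Concretely, one shows that $\mathrm{Tr}(W\,V)$ has negative expectation relative to its standard deviation for a suitable witness $V$ (e.g. $V$ related to the swap operator, testing $\langle e_i\otimes e_j, \rho\, e_j \otimes e_i\rangle$ sums), so that with high probability $\rho \notin \mathcal{SEP}_{n,n}$; this forces $s_n \gtrsim n^3$ for separability to even be possible. This ``first moment / witness'' direction is comparatively soft.

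For the upper bound (item 2: if $s_n > Cn^3\log^2 n$ then $\rho$ is typically separable) the strategy is the one pioneered by Aubrun–Szarek and used in \cite{aubrun2014entanglement}: combine the Gurvits–Barnum separable-ball estimate (Proposition \ref{prop:separable-ball}, giving an inscribed separable ball of radius $\asymp (n^2)^{-1}$ around $I_{n^2}/n^2$) with a concentration estimate showing that, for $s_n$ large enough, $\rho$ lies inside that ball with probability tending to $1$. The quantity to control is $\|\rho - I_{n^2}/n^2\|$ in the operator norm (or more precisely in the norm dual to the one in which the separable ball is large — one should be careful whether the Euclidean/Hilbert–Schmidt ball or an operator-norm ball is the efficient object, and iterate with a local-to-global ``absorbing into the ball'' argument). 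Using $\rho = W/\trace W$, independence of $\rho$ and $\trace W$, and the fact that $\trace W \approx n^2 s_n$ with Gaussian fluctuations, the problem reduces to: the extreme eigenvalues of the Wishart matrix $W$ deviate from $s_n$ by at most $O(\sqrt{n^2 s_n}) = O(n\sqrt{s_n})$ with overwhelming probability; dividing by $\trace W \approx n^2 s_n$ gives deviation $O(1/(n\sqrt{s_n}))$, which must beat the separable-ball radius $\asymp 1/n^2$, i.e. one needs $n\sqrt{s_n} \gtrsim n^2$, i.e. $s_n \gtrsim n^2$ — and the extra $\log^2 n$ and the jump from $n^2$ to $n^3$ come from the need to first project into a ``smeared'' version of the ball and then use a net/union-bound argument over the (exponentially many) directions, combined with the non-sharp constants in the available separable-ball-type estimates.

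The main obstacle is precisely this gap between $n^3$ and $n^3\log^2 n$, which reflects that no exact threshold is known: the lower bound's witness argument naturally produces the scale $n^3$, while the upper bound, which relies on fitting $\rho$ into an explicitly-known large separable set, only reaches $n^3$ up to logarithmic losses coming from $\varepsilon$-net discretization and from the fact that the largest \emph{known} separable set around the maximally mixed state (a Euclidean ball of radius $\asymp n^{-2}$, via Proposition \ref{prop:separable-ball}) is presumably not optimal. I would therefore spend the bulk of the effort on the upper bound: carefully choosing the norm in which to measure the deviation $\rho - I/n^2$, establishing the requisite large-deviation bound for Wishart extreme singular values at the relevant scale (this is where Theorem \ref{thm:marchenko-pastur}'s strong convergence, or rather its non-asymptotic refinements, enters), and handling the union bound efficiently. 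Once these ingredients are in place, combining them with the Gurvits–Barnum ball and the independence statement of Proposition \ref{prop:induced-measure-properties} yields the claimed $f(n)$ sandwiched between $cn^3$ and $Cn^3\log^2 n$.
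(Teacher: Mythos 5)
This survey does not reprove Theorem \ref{thm:threshold-sep}; it quotes \cite[Theorem 2.3]{aubrun2014entanglement}, so your attempt has to be measured against the Aubrun--Szarek--Ye argument. Against that benchmark, both halves of your plan break down quantitatively. For the lower bound, detecting entanglement via PPT or realignment cannot work at the scale $n^3$: as recorded in Proposition \ref{prop:thresholds-PPT} and Table \ref{tbl:thresholds-criteria}, the thresholds for $\mathcal{PPT}$ and $\mathcal{RLN}$ sit at $s\asymp n^2$, so in the whole range $n^2 \ll s \ll n^3$ the random state is PPT and passes realignment with probability tending to one while still being entangled -- this separation of scales is precisely the point the survey emphasizes. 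A single fixed witness (e.g.\ the swap operator) fares no better: the fluctuation of $\mathrm{Tr}(\rho V)$ around $\mathrm{Tr}(V)/n^2$ is far too small to beat the separable support function at scale $n^3$. What actually produces the exponent $3$ is an upper bound on the \emph{mean width} $w(\mathcal{SEP}_{n,n})\lesssim n^{-3/2}$ (a Chevet/Dudley-type supremum over all product vectors, i.e.\ a witness \emph{optimized over an exponential family}, not a first-moment computation), combined with a Gaussian approximation of $W-\mathbb{E}W$ by a GUE matrix and concentration of the gauge functional of the convex body $\mathcal{SEP}-I/n^2$.

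For the upper bound, the Gurvits--Barnum ball of Proposition \ref{prop:separable-ball} is a \emph{Hilbert--Schmidt} ball of radius $\asymp n^{-2}$, so fitting $\rho$ inside it requires $\|\rho - I/n^2\|_{HS}\approx s^{-1/2}\lesssim n^{-2}$, i.e.\ $s\gtrsim n^4$ -- one full order of magnitude worse than the claim; your computation compares an operator-norm deviation of size $(n\sqrt{s})^{-1}$ against the Euclidean radius and lands at $s\gtrsim n^2$, which would contradict the lower bound of the very theorem you are proving. No amount of net/union-bound refinement of the inscribed-ball argument recovers $n^3\log^2 n$; the actual proof bounds the mean width of the \emph{polar} of $\mathcal{SEP}-I/n^2$ from below, using an $MM^*$-type (\,$\ell$-position\,) estimate to pass from $w(K)$ to $w(K^\circ)$, which is where the $\log^2 n$ loss originates, and then invokes a sharp threshold theorem for the Gaussian measure of dilates $tK$ of a convex body. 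Finally, your proposal produces two separate constants but never addresses the existence of a single threshold function $f(n)$ with a genuine $0$--$1$ law on both sides; in \cite{aubrun2014entanglement} this follows from the concentration of $\|G\|_{K}$ around its mean together with monotonicity in $s$, and it is a step that must be argued, not assumed.
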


Note that the above result does not enter precisely in the threshold framework, as it was defined just above; one would need to eliminate the logarithm factors and to compute exactly the constants in the statement above to achieve this, see Question \ref{qst:threshold-sep}. The result is nevertheless an important achievement, given the fact that questions dealing directly with the set of separable states are usually very difficult.

\subsection{Entanglement criteria}
\label{sec:thresholds}

The question whether a given mixed quantum state is separable or entangled has been proven to be an NP-hard one \cite{gurvits2003classical}. To circumvent this worse-case intractability, \emph{entanglement criteria} are used. These are efficiently computable conditions which are necessary for separability; in other words, an entanglement criterion is a (usually convex) super-set $\mathcal X_d$ of the set of separable states, for which the membership problem is efficiently solvable. As in the previous section, from a probabilistic point of view, estimating the probability that a random quantum state (sampled from the induced ensemble) is an element of $\mathcal X_d$ is central. In what follows we shall tackle this problem for different entanglement criteria in the framework of thresholds. 

Let us start with the most used example, the \emph{positive partial transpose} criterion (PPT). The PPT criterion has been introduced by Peres in \cite{peres1996separability}: if a quantum state $\rho \in \mathcal D_{nk}$ is separable, then
$$\rho^\Gamma := [\mathrm{id} \otimes \mathrm{transp}](\rho) \geq 0.$$
Note that the positivity of $\rho^\Gamma$ is equivalent to the positivity of $\rho^{\scriptsize{\reflectbox{$\Gamma$}}} = [\mathrm{transp} \otimes \mathrm{id}](\rho)$,
so it does not matter on which tensor factor the transpose application acts. We denote by $\mathcal{PPT}_{n,k}$ the set of PPT states
$$\mathcal{PPT}_{n,k}:=\{\rho \in \mathcal{D}_{nk} \, : \, \rho^\Gamma \geq 0\} \supseteq \mathcal{SEP}_{n,k}.$$
This necessary condition for separability has been shown to be also sufficient for qubit-qubit and qubit-qutrit systems ($nk \leq 6$) in \cite{horodecki1996separability}. The PPT criterion for random quantum states has first been studied numerically in \cite{znidaric2007detecting}. The analytic results in the following proposition are from \cite{aubrun2012partial} (in the balanced case) and from \cite{banica2013asymptotic} (in the unbalanced case); see also \cite{fukuda2013partial} for some improvements in the balanced case and the relation to meanders. 

\begin{proposition}
\label{prop:thresholds-PPT}
Consider a sequence $\rho_n \in \mathcal D_{nk_n}$ of random quantum states from the induced ensemble $\nu_{nk_n,cnk_n}$, where $k_n$ is a function of $n$ and $c$ is a positive constant.

In the balanced regime $k_n = n$, the (properly rescaled) empirical eigenvalue distribution of the states $\rho_n$ converges to a semicircular measure $\mu_{SC(1,1/c)}$ of mean $1$ and variance $1/c$, see \eqref{eq:def-semicircular}. In particular, the threshold for the sets $\mathcal{PPT}_{n,n}$ ($n \to \infty$) is $c_0 = 4$.

In the unbalanced regime $k_n=k$ fixed, the (properly rescaled) empirical eigenvalue distribution of the states $\rho_n$ converges to a free difference of free Poisson distributions (see Section \ref{sec:free-probability} for the definitions)
$$\pi_{ck(k+1)/2} \boxminus \pi_{ck(k-1)/2}.$$
In particular, the threshold for the sets $\mathcal{PPT}_{n,k}$ ($k$ fixed, $n \to \infty$) is 
$$c_0 = 2+2\sqrt{1-\frac{1}{k^2}}.$$
\end{proposition}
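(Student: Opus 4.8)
The plan is to transfer the question to the Wishart ensemble, compute the moments of the partial transpose by the Wick/graphical calculus, reduce to a planar (non-crossing) sum via the distance lemma on $\mathcal S_p$, extract the two limiting laws through their free cumulants, and finally locate the vanishing points of their left edges. First, by Proposition~\ref{prop:induced-measure-properties} I realize $\rho_n \sim \nu_{d_n,s_n}$ (with $d_n = nk_n$, $s_n = cd_n$) as $\rho_n = W/\trace W$ with $W = GG^*$ and $G \in \M_{d_n\times s_n}(\C)$ Ginibre; the scalar $\trace W$ is a $\chi^2$ variable with $d_ns_n$ degrees of freedom and concentrates, $\trace W = (1+o(1))\,d_ns_n$ almost surely. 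Since partial transposition is a fixed isometry for the trace inner product, $\rho_n^\Gamma$ and $W^\Gamma$ have the same spectrum up to a deterministic $(1+o(1))$ factor, so it is enough to find the limiting eigenvalue distribution of the suitably rescaled $W^\Gamma$ (namely $s_n^{-1}W^\Gamma$ in the balanced case, $n^{-1}W^\Gamma$ in the unbalanced one).

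Writing $\C^{d_n} = \C^n \otimes \C^{k_n}$ and $\gamma = (1\,2\,\cdots\,p) \in \mathcal S_p$, I apply the Wick formula (Theorem~\ref{thm:wick-formula}), equivalently the Gaussian removal procedure of Theorem~\ref{thm:Wick_diag} to the diagram of $(W^\Gamma)^p$. The only effect of the partial transpose is to reverse the cyclic wiring of the $\C^{k_n}$ legs, so that
\[\E\,\trace[(W^\Gamma)^p] = \sum_{\sigma \in \mathcal S_p} n^{\#(\gamma\sigma)}\, k_n^{\#(\gamma^{-1}\sigma)}\, s_n^{\#\sigma},\]
to be compared with $\E\,\trace(W^p) = \sum_\sigma (nk_n)^{\#(\gamma\sigma)} s_n^{\#\sigma}$ for the untransposed matrix. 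A parallel second-moment computation (a genus-type bookkeeping) gives $\mathrm{Var}\,\trace[(W^\Gamma/\cdot)^p] = O(n^{-2})$, so Borel--Cantelli upgrades convergence of moments to almost sure convergence of the empirical spectral distribution.

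Substituting $s_n = c\,nk_n$ and using Lemma~\ref{lem:S_p}: since $|\gamma\sigma|+|\sigma|\ge p-1$ and $|\gamma^{-1}\sigma|+|\sigma|\ge p-1$, the power of $n$ carried by the $\sigma$-term is at most $2p+2$ in the balanced case ($k_n=n$) and at most $p+1$ in the unbalanced case ($k_n=k$ fixed), with equality precisely when $\sigma$ is a geodesic point between $\id$ and $\gamma^{-1}$ — and, in the balanced case, simultaneously between $\id$ and $\gamma$. By the last item of Lemma~\ref{lem:S_p} the geodesic points toward $\gamma$ (resp. $\gamma^{-1}$) are parametrised by non-crossing partitions, the parametrisation sending a block $\{a_1<\cdots<a_r\}$ to the cycle $(a_1\,a_2\,\cdots\,a_r)$ (resp. its reverse $(a_1\,a_r\,\cdots\,a_2)$); hence a $\sigma$ surviving in the balanced case must have all cycles of length $\le 2$, i.e. is a non-crossing partial pair partition. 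The surviving weight being multiplicative over the blocks of the underlying non-crossing partition, the limit is determined by its free cumulants, which are read off from the genuinely connected terms ($\#\sigma = 1$): in the balanced case only $p\in\{1,2\}$ contribute, giving $\kappa_1 = 1$, $\kappa_2 = 1/c$ and $\kappa_p=0$ for $p\ge 3$, hence the shifted semicircular law $\mu_{SC(1,1/c)}$ of \eqref{eq:def-semicircular}; in the unbalanced case the connected term is $\sigma = \gamma^{-1}$, which contributes $\kappa_p = c\,k^{\#(\gamma^{-2})}$, equal to $ck$ for odd $p$ and $ck^2$ for even $p$, and these are exactly the free cumulants of $\pi_{ck(k+1)/2} \boxplus (-\pi_{ck(k-1)/2}) = \pi_{ck(k+1)/2}\boxminus\pi_{ck(k-1)/2}$. (The asymptotic freeness behind this identification can also be seen structurally: for a single Gaussian column $g$, the Hermitian matrix $(gg^*)^\Gamma$ has exactly $k(k+1)/2$ positive eigenvalues on the symmetric subspace and $k(k-1)/2$ negative ones on the antisymmetric subspace, whence two asymptotically free Wisharts in the limit, in the spirit of Theorem~\ref{libre}.)

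Finally, the thresholds are the vanishing points of the left edges. The support of $\mu_{SC(1,1/c)}$ is $[1-2/\sqrt c,\,1+2/\sqrt c]$, which is nonnegative iff $c\ge 4$, giving $c_0=4$. For $\mu_c := \pi_{ck(k+1)/2}\boxminus\pi_{ck(k-1)/2}$ one has $R_{\mu_c}(z) = \frac{ck(k+1)/2}{1-z} - \frac{ck(k-1)/2}{1+z}$, so with $K(z) = z^{-1} + R_{\mu_c}(z)$ the left edge equals $0$ precisely when $K(z)=0$ and $K'(z)=0$ share a root; eliminating $z$ yields $c_0 = 2 + 2\sqrt{1-1/k^2}$. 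To pass to the stated probabilistic dichotomy, the direction $c < c_0$ uses only the bulk convergence just established (there is limiting mass strictly below $0$, so $\rho_n^\Gamma$ acquires negative eigenvalues with probability tending to $1$), whereas the direction $c > c_0$ additionally needs that no eigenvalue escapes below the left edge, i.e. strong (operator-norm) convergence — this is inherited from the strong convergence of the Ginibre/Wishart model (Theorem~\ref{thm:marchenko-pastur}) together with the fact that $W^\Gamma$ is a fixed $*$-expression in $G$ and $G^*$, so that strong-convergence results for polynomials in Ginibre matrices apply. The main obstacle is the unbalanced regime of the third step: proving the multiplicativity/asymptotic freeness that pins the non-crossing sum with $c$- and $k$-dependent weights to the free difference $\pi_{ck(k+1)/2}\boxminus\pi_{ck(k-1)/2}$, and then the analytic extraction of its left edge to obtain $c_0 = 2+2\sqrt{1-1/k^2}$; the second genuine input, needed only for the ``$c>c_0 \Rightarrow$ PPT almost surely'' half, is the strong-convergence control of the extreme eigenvalues rather than the bulk.
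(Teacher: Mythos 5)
Your route is, in substance, the same one used by the sources this survey quotes: the paper itself states Proposition \ref{prop:thresholds-PPT} without proof, citing \cite{aubrun2012partial} for the balanced case and \cite{banica2013asymptotic} for the unbalanced one, and both of those papers proceed exactly as you do — Wishart realization of the induced ensemble, Wick/genus expansion of $\E\,\trace[(W^\Gamma)^p]$ with the weight $n^{\#(\gamma\sigma)}k^{\#(\gamma^{-1}\sigma)}s^{\#\sigma}$, selection of geodesic permutations, identification of the limit through free cumulants, and location of the left edge. Your computations check out: the surviving $\sigma$ in the balanced regime are indeed the non-crossing partial pairings, giving $\kappa_1=1$, $\kappa_2=1/c$ and the shifted semicircle; in the unbalanced regime the claimed cumulants $ck$ (odd) and $ck^2$ (even) are exactly those of $\pi_{ck(k+1)/2}\boxminus\pi_{ck(k-1)/2}$; and your elimination of $K(z)=K'(z)=0$ reduces (with $u=ck$, $v=ck^2$) to $z=-2/u$ and $u^2=4(v-1)$, i.e.\ $c^2k^2=4(ck^2-1)$, whose relevant root is $c_0=2+2\sqrt{1-1/k^2}$. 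You are also right to flag the block-multiplicativity step (equivalently, the identity $\#(\gamma^{-1}\sigma)=1+\#\{\text{even cycles of }\sigma\}$ for geodesic $\sigma$) as the real content of the unbalanced case; it is asserted rather than proved in your sketch, but it is true and is precisely what \cite{banica2013asymptotic} establishes.

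The one step that would fail as written is your justification of the ``no eigenvalue below the limiting support'' half ($c>c_0$) in the \emph{balanced} regime. The argument that $W^\Gamma$ is a fixed $*$-expression in $G,G^*$ and deterministic matrices — e.g.\ $W^\Gamma=\sum_{i,j}(I_n\otimes e_je_i^*)\,W\,(I_n\otimes e_je_i^*)$ — so that strong-convergence theorems for polynomials in Ginibre matrices apply, is legitimate only when $k$ is fixed. When $k_n=n$, the number of terms in that expression grows with $n$, the deterministic coefficients do not form a fixed family with a joint limit, and the polynomial strong-convergence results (\cite{male2012norm}) do not apply as stated; this is why \cite{aubrun2012partial} controls the extreme eigenvalues of the partially transposed Wishart directly, by concentration-of-measure / high-moment estimates, to conclude that for $c>4$ the smallest eigenvalue converges to the left edge $1-2/\sqrt{c}>0$. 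So the $c>4$ direction of the balanced threshold needs that separate argument; with that substitution (and with a short monotonicity remark ensuring the left edge of $\pi_{ck(k+1)/2}\boxminus\pi_{ck(k-1)/2}$ is strictly positive for $c>c_0$), your outline matches the known proofs.
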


We consider next the \emph{reduction} criterion (RED). Introduced in \cite{horodecki1999reduction,cerf1999reduction}, the reduction criterion states that if a bipartite quantum state $\rho \in \mathcal D_{nk}$ is separable, then
$$\rho^{red}:= [\mathrm{id} \otimes R](\rho) \geq 0,$$
where $R:\mathcal M_k(\mathbb C) \to \mathcal M_k(\mathbb C)$ is the \emph{reduction map}, 
$$R(X) = I_k\cdot \mathrm{Tr}(X) - X.$$
We denote by $\mathcal{RED}_{n,k}$ the set of quantum states having positive-semidefinite reductions (on the second subsystem)
$$\mathcal{RED}_{n,k}:=\{\rho \in \mathcal{D}_{nk} \, : \, \rho^{red} \geq 0\} \supseteq \mathcal{SEP}_{n,k}.$$
Several remarks are in order at this point. First, it is worth mentioning that in the literature, the reduction criterion is sometimes defined to ask that \emph{both} reductions, on the first and on the second subsystems, are positive-semidefinite; since going from one reduction to the other one can be done by simply swapping the roles of $\mathbb C^n$ and $\mathbb C^k$, we focus in this work on the reduction on the second subsystem. We gather in the next lemma some basic properties of the set $\mathcal{RED}_{n,k}$, see, e.g.~\cite{jivulescu2014reduction} for the proof. 

\begin{lemma}
The reduction criterion is, in general, weaker than the PPT criterion:
$$\mathcal{SEP}_{n,k} \subseteq \mathcal{PPT}_{n,k} \subseteq \mathcal{RED}_{n,k} \subseteq \mathcal{D}_{nk}.$$
However, at $k=2$ (i.e.~when the system on which the reduction map acts is a qubit), the two criteria are equivalent
$$\mathcal{RED}_{n,k} = \mathcal{PPT}_{n,k}.$$
\end{lemma}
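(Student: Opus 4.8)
The plan is to reduce the whole statement to one structural fact about the reduction map $R:\M_k(\C)\to\M_k(\C)$, $R(X)=I_k\trace(X)-X$: namely that it factors through the transpose. Writing $T$ for the transpose on $\M_k(\C)$, one has $R=\Lambda\circ T$ with $\Lambda(Y):=I_k\trace(Y)-Y^T$. Two of the inclusions in the chain are essentially free: $\mathcal{RED}_{n,k}\subseteq\mathcal D_{nk}$ holds by definition, and $\mathcal{SEP}_{n,k}\subseteq\mathcal{PPT}_{n,k}$ is Peres' criterion recalled above (for $\rho=\sum_i p_i\sigma_i\otimes\tau_i$ one has $\rho^\Gamma=\sum_i p_i\sigma_i\otimes\tau_i^T\geq0$). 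So the real content is the middle inclusion $\mathcal{PPT}_{n,k}\subseteq\mathcal{RED}_{n,k}$ and, at $k=2$, its converse.

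\textbf{The inclusion $\mathcal{PPT}_{n,k}\subseteq\mathcal{RED}_{n,k}$.} First I would check that $\Lambda$ is completely positive. By Proposition~\ref{prop:stinespring_kraus} it suffices to show that its Choi matrix is positive-semidefinite, and
$$C_\Lambda=\sum_{i,j=1}^{k}E_{ij}\otimes\Lambda(E_{ij})=\sum_{i,j=1}^{k}E_{ij}\otimes(\delta_{ij}I_k-E_{ji})=I_k\otimes I_k-F,$$
where $F=\sum_{i,j}E_{ij}\otimes E_{ji}$ is the swap operator on $\C^k\otimes\C^k$, acting by $F(u\otimes v)=v\otimes u$. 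Since $F$ is a self-adjoint involution its spectrum lies in $\{-1,+1\}$, hence $C_\Lambda=I-F\geq0$ and $\Lambda$ is CP. Consequently $\mathrm{id}_n\otimes\Lambda$ maps positive-semidefinite matrices to positive-semidefinite matrices, and for any $\rho\in\mathcal{PPT}_{n,k}$,
$$\rho^{red}=[\mathrm{id}_n\otimes R](\rho)=[\mathrm{id}_n\otimes\Lambda]\bigl([\mathrm{id}_n\otimes T](\rho)\bigr)=[\mathrm{id}_n\otimes\Lambda](\rho^\Gamma)\geq0,$$
because $\rho^\Gamma\geq0$ by assumption. This proves $\mathcal{PPT}_{n,k}\subseteq\mathcal{RED}_{n,k}$, completing the first chain.

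\textbf{Equivalence at $k=2$.} Here I would upgrade the factorization: for $2\times2$ matrices one has $I_2\trace(Y)-Y^T=\sigma_y Y\sigma_y$, with $\sigma_y=\bigl(\begin{smallmatrix}0&-i\\ i&0\end{smallmatrix}\bigr)$ (this is the identity $\mathrm{adj}(Z)=\sigma_y Z^T\sigma_y$ applied to $Z=Y^T$, a form of Cayley--Hamilton in dimension $2$). Thus $\Lambda=\mathrm{Ad}_{\sigma_y}$ is conjugation by the unitary $\sigma_y$, and $R=\mathrm{Ad}_{\sigma_y}\circ T$. Therefore
$$\rho^{red}=(I_n\otimes\sigma_y)\,\rho^\Gamma\,(I_n\otimes\sigma_y)^*,$$
i.e. $\rho^{red}$ and $\rho^\Gamma$ are unitarily conjugate. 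Since conjugation by the unitary $I_n\otimes\sigma_y$ both preserves and reflects positivity, $\rho^{red}\geq0\iff\rho^\Gamma\geq0$, which is exactly $\mathcal{RED}_{n,2}=\mathcal{PPT}_{n,2}$.

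\textbf{Main obstacle.} The one genuinely nontrivial point is recognizing the decomposition $R=\Lambda\circ T$ with $\Lambda$ completely positive --- equivalently, that the reduction map is decomposable (indeed ``co-completely-positive''). Once $\Lambda$ is guessed, its complete positivity is a two-line Choi-matrix computation resting on the spectrum of the swap; but some care is needed with the partial-transpose bookkeeping (which tensor leg carries $T$ versus $\Gamma$, and the underlying identity $F=(k\Omega_k)^\Gamma$ relating the swap to the maximally entangled projector $\Omega_k$ of \eqref{eq:maximally-entangled-state}). Everything else --- the two trivial inclusions and the classical $2\times2$ identity $\mathrm{adj}(Z)=\sigma_y Z^T\sigma_y$ --- is routine linear algebra.
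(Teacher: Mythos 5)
Your proposal is correct, and it is essentially the standard argument: the paper itself states this lemma without proof, deferring to the cited reference, where the same decomposition $R=\Lambda\circ T$ with $\Lambda$ completely positive (Choi matrix $I-F\geq 0$) yields $\mathcal{PPT}_{n,k}\subseteq\mathcal{RED}_{n,k}$, and the identity $I_2\trace(Y)-Y^T=\sigma_y Y\sigma_y$ gives the equivalence at $k=2$. All your computations (the Choi matrix of $\Lambda$, the spectrum of the swap, and the $2\times 2$ adjugate identity) check out, so there is nothing to add.
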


Although the reduction criterion is weaker than the PPT criterion for the purpose of detecting entanglement, its interest stems from the connection with the distillability of quantum states, see \cite{horodecki1998mixed}.

We gather in the next proposition the values of the thresholds for the sets $\mathcal{RED}_{n,k}$. Since, in the case of the reduction criterion, the tensor factor on which the reduction map $R$ acts does matter, we need to consider two unbalanced regimes: one where $n$ is fixed and $k \to \infty$, and a second one where $n \to \infty$ and $k$ is kept fixed. The results below have been obtained in \cite{jivulescu2014reduction} (for the second unbalanced regime) and in \cite{jivulescu2015thresholds} (for the balanced regime and the first unbalanced regime).

\begin{proposition}
The thresholds for the sets $\mathcal{RED}_{n,k}$ are as follows:
\begin{enumerate}
\item In the balanced regime, where both $n,k \to \infty$, the threshold value for the parameter $s$ of the induced measure $\nu_{nk,s}$ is on the scale $s \sim cn$ at the value $c_0=1$.
\item In the first unbalanced regime, where $n$ is fixed and $k \to \infty$, the threshold value for the parameter $s$ of the induced measure $\nu_{nk,s}$ is on the scale $s \sim c$ at the value $c_0=n$.
\item In the second unbalanced regime, where $k$ is fixed and $n \to \infty$, the threshold value for the parameter $s$ of the induced measure $\nu_{nk,s}$ is on the scale $s \sim cnk$ at the value 
$$c_0=\frac{(1+\sqrt{k+1})^2}{k(k-1)}.$$
\end{enumerate}
\end{proposition}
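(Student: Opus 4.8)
The plan is to reduce each regime to a statement about the limiting spectral distribution of the reduced operator $\rho^{red}$ and then locate the threshold as the point where the left edge of this limiting distribution crosses zero. The key structural fact is Proposition \ref{prop:induced-measure-properties}: writing $\rho = W/\trace W$ with $W \sim \mathcal W_{nk,s}$, the random variable $\trace W$ concentrates (it is $\chi^2$ with $nks$ degrees of freedom, so $\trace W \sim nks$ almost surely after normalization), hence the positivity of $\rho^{red}$ is asymptotically equivalent to the positivity of $[\mathrm{id}_n \otimes R](W)$, suitably rescaled. Since $R(X) = I_k \trace(X) - X$, one has $[\mathrm{id}_n \otimes R](W) = (I_n \otimes I_k)\cdot(\mathrm{Tr}_2 \otimes \mathrm{id})(W)\text{-type terms} - W$; more precisely $[\mathrm{id}\otimes R](W) = \rho_1 \otimes I_k - W$ where $\rho_1 = (\mathrm{id}_n\otimes\trace_k)(W)$ is the first marginal. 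So the whole problem becomes: determine the smallest eigenvalue of $\rho_1 \otimes I_k - W$ in each asymptotic regime, and find the scaling of $s$ that makes it change sign.

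First I would treat the two unbalanced regimes, which are the easier ones. When $n$ is fixed and $k\to\infty$ with $s\sim c$ fixed: here $W \sim \mathcal W_{nk,s}$ has rank $\min(nk,s) = s$ when $s$ is small, so $W$ is a low-rank perturbation, while $\rho_1 = (\mathrm{id}_n \otimes \trace_k)(W)$ is an $n\times n$ Wishart-type matrix whose entries involve sums of $k$ independent contributions; after rescaling by $k$, $\rho_1/k$ converges to a multiple of $I_n$ (by a law of large numbers in the $k$ direction, analogous to the ``trivial regime'' $\rho_s \to d^{-1}I_d$). Comparing the bulk of $\rho_1 \otimes I_k$ (of size $\sim ck\cdot$ something) against the operator norm of $W$ gives a clean threshold; I expect the critical constant $c_0 = n$ to drop out of matching the per-mode normalization. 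For the second unbalanced regime ($k$ fixed, $n\to\infty$, $s\sim cnk$), the operator $W/(nk)$ has a Marchenko--Pastur bulk via Theorem \ref{thm:marchenko-pastur} (here the effective aspect ratio is $s/(nk) = c$), the marginal $\rho_1/(nk)$ converges to $(1/n)I_n\cdot$(free Poisson moments) — really $\rho_1$ behaves like a Wishart matrix of parameters $(n, ks)$ divided appropriately, so its spectrum is Marchenko--Pastur $\pi_{cks/n}$-ish after rescaling — and asymptotic freeness (Theorem \ref{libre}, using the Haar-unitary angular part from point (3) of the induced-measure proposition) makes $\rho_1 \otimes I_k$ and $W$ asymptotically free. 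The left edge of the free difference of the two limiting laws then gives $c_0 = (1+\sqrt{k+1})^2/(k(k-1))$; the $\sqrt{k+1}$ is the signature of a Marchenko--Pastur edge with that particular aspect ratio.

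For the balanced regime ($n,k\to\infty$, $s\sim cn$ — note it is $cn$, not $cnk$), the scaling is more delicate and this is where I expect the main obstacle. Here $W \sim \mathcal W_{nk, cn}$, so the aspect ratio $s/(nk) = c/k \to 0$, meaning $W/(nk)$ degenerates; one must instead rescale by $s = cn$, after which the spectrum of $W/s$ is, up to the low-dimensional structure, close to that of a projection-like object, but the relevant fluctuations around the mean are what matter. The marginal $\rho_1$ again concentrates, $\rho_1 \approx (s/n) I_n = c\, I_n$ after the natural normalization, so $\rho_1 \otimes I_k \approx c\,(I_n\otimes I_k)$, a scalar; then the positivity of $\rho^{red}$ becomes the question of whether $c\cdot I_{nk} - (\text{normalized } W)$ stays positive, i.e.\ whether the largest eigenvalue of the normalized Wishart-type matrix exceeds $c$. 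This is governed by the edge of the relevant spectral distribution in the regime where both the number of degrees of freedom and the fluctuation scale compete; carrying out the concentration of $\rho_1$ to the right order (one needs the error $\rho_1 - cI_n$ to be genuinely negligible in operator norm compared to the spectral width of $W$) and then pinning the critical constant to exactly $c_0 = 1$ is the technical heart. I would do this by computing the limiting moments of the normalized operator $\rho^{red}$ via the Weingarten/graphical calculus of Section \ref{sec:graphical-Wg-wick} — expanding $\mathbb{E}\,\trace((\rho^{red})^q)$ in genus-type expansions, identifying the dominant planar contributions, and recognizing the moment sequence of the claimed limiting law — and then invoking strong convergence (as in Theorem \ref{thm:marchenko-pastur}, via \cite{male2012norm}) to upgrade convergence of moments to convergence of the extreme eigenvalue, which is what actually controls the sign of $\rho^{red}$. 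The crossing of the left/right spectral edge through $0$ then yields $c_0 = 1$, completing the proof. The bookkeeping in the Weingarten expansion — tracking which pairs of permutations in $\S_q$ contribute at leading order given the unusual $s\sim cn$ scaling — is the step most likely to require care.
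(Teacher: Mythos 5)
The paper itself offers no proof of this proposition (it is quoted from \cite{jivulescu2014reduction} and \cite{jivulescu2015thresholds}), so I am judging your argument against what those proofs actually do. Your overall reduction is the right one: positivity of $\rho^{red}$ is \emph{exactly} (not just asymptotically) equivalent to positivity of $W^{red}:=[\mathrm{id}\otimes R](W)=W_1\otimes I_k-W$ with $W\sim\mathcal W_{nk,s}$ and $W_1=[\mathrm{id}_n\otimes\trace_k](W)\sim\mathcal W_{n,ks}$, and the threshold is located by tracking the extreme eigenvalue. Your sketches for the balanced regime and the first unbalanced regime are essentially correct and can be completed with much less machinery than you propose: strong convergence of Wishart spectra gives $\lambda_{\min}(W_1\otimes I_k)-\|W\|\approx(k-1)(s-n)-4\sqrt{nks}$, whose sign change reproduces $c_0=1$ on the scale $s\sim cn$ (as $k\to\infty$) and $c_0=n$ on the scale $s\sim c$; testing against the top eigenvector of $W$ gives the converse direction. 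No genus expansion is needed there.

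The genuine gap is in the second unbalanced regime. Your claim that $W_1\otimes I_k$ and $W$ are asymptotically free is false: $W_1$ is a marginal of $W$, not an independent rotation of it, and Theorem \ref{libre} does not apply. Concretely, with $\phi=\frac{1}{nk}\E\trace$ one computes $\phi\bigl(W\,(W_1\otimes I_k)\bigr)-\phi(W)\,\phi(W_1\otimes I_k)=sn$, which after normalizing by $(nk)^2$ equals $c/k$ while the individual variances are $c$; the correlation coefficient tends to $1/k\neq 0$. Consequently the free difference of the two marginal laws $D_{1/k}\pi_{ck^2}\boxminus\pi_c$ has the wrong variance (already $\kappa_2=2c$ instead of the correct $2c(k-1)/k$), so its left edge, and hence your threshold, would come out wrong. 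The correct route is to compute the limit of $W^{red}$ directly: writing $W^{red}=\sum_{i=1}^s[\mathrm{id}\otimes R](y_iy_i^*)$ as a sum of i.i.d.\ unitarily invariant matrices of rank $\leq k^2$ whose nonzero eigenvalues, after dividing by $n$, converge to $1$ (multiplicity $k^2-1$) and $1-k$ (multiplicity $1$), one finds that $W^{red}/n$ converges to the free compound Poisson with free cumulants $\kappa_p=c\bigl[(k^2-1)+(1-k)^p\bigr]$, i.e.\ $\pi_{c(k^2-1)}\boxplus D_{1-k}\pi_c$; a sanity check is that for $k=2$ this coincides with the partial-transpose limit $\pi_{3c}\boxminus\pi_c$, consistent with $\mathcal{RED}_{n,2}=\mathcal{PPT}_{n,2}$, whereas your law does not. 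Solving for the left edge of this law to vanish yields $c_0=(1+\sqrt{k+1})^2/(k(k-1))$. This direct identification of the limit (equivalently, a Wick/Weingarten moment computation for the full operator $W^{red}$, plus a strong-convergence argument to rule out outlier eigenvalues) is the step your proposal is missing.
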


Let us mention now that both thresholds for the PPT and the RED criterion, in the unbalanced case, have been treated, in a unified manner, in the recent preprint \cite{arizmendi2015asymptotic}. A general framework is developed in \cite{arizmendi2015asymptotic} in which many examples of entanglement criteria fit. 

Criteria of the type $[\mathrm{id} \otimes f](\rho) \geq 0$ have been studied from a random matrix theory perspective in \cite{collins2015random} in the case of random linear maps $f$. In \cite{collins2015random}, the authors introduce a family of entanglement criteria index by probability measures. The main idea is to consider maps $f$ between matrix algebras obtained from random Choi matrices. More precisely, consider a compactly supported probability measure $\mu$, and let $X_d \in \mathcal M_{nd}(\mathbb C)$ a sequence of unitarily invariant random matrices converging in distribution, as $d \to \infty$, to $\mu$ ($n$ being kept fixed). Let $f_d: \mathcal M_n(\mathbb C) \to \mathcal M_d(\mathbb C)$ be a (random) linear map such that the Choi matrix \eqref{eq:def-Choi-matrix} of $f_d$ is $X_d$. Then, the positivity of the map $f_d$, asymptotically as $d \to \infty$, depends only on $\mu$ and its free additive convolution powers \cite[Theorem 4.2]{collins2015random} (see Section \ref{sec:free-probability} for the definition of convolutions in free probability). 

\begin{theorem}\label{thm:k-positivity-from-mu}
The sequence of random linear maps $(f_{d})_d$ has the following properties:
\begin{enumerate}
\item If $\mathrm{supp}(\mu^{\boxplus n/k})\subset (0,\infty)$, then, almost surely as $d \to \infty$, $f_d$ is $k$-positive.
\item If $\mathrm{supp}(\mu^{\boxplus n/k})\cap (-\infty ,0)\neq \emptyset$, then, almost surely as $d \to \infty$, $f_d$ is not $k$-positive.
\end{enumerate}
\end{theorem}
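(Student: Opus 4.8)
The plan is to reduce the $k$-positivity of $f_d$ to the positivity of a family of compressions of its Choi matrix $X_d$, and then to feed this into free probability. First I would record the elementary fact that a linear map $f\colon\M_n(\C)\to\M_d(\C)$ is $k$-positive if and only if, for every isometry $W\colon\C^k\to\C^n$, the map $\M_k(\C)\ni Y\mapsto f(WYW^*)\in\M_d(\C)$ is completely positive: indeed every $\psi\in\C^k\otimes\C^n$ has Schmidt rank at most $k$, so $\psi=(\I_k\otimes W)\psi'$ for some isometry $W\colon\C^k\to\C^n$ and some $\psi'\in\C^k\otimes\C^k$, and for maps out of $\M_k(\C)$ complete positivity coincides with $k$-positivity (Choi). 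Applying Choi's criterion to $Y\mapsto f_d(WYW^*)$, whose Choi matrix equals $(W^*\otimes\I_d)X_d(W\otimes\I_d)$ (up to replacing $W$ by its entrywise conjugate, harmless since both run over all isometries), I obtain the working characterization
\[
f_d\ \text{is $k$-positive}\quad\Longleftrightarrow\quad (W^*\otimes\I_d)\,X_d\,(W\otimes\I_d)\ \geq\ 0\ \text{ for every isometry }W\colon\C^k\to\C^n .
\]
One may assume $1\le k\le n$; for $k\ge n$ this is just complete positivity, i.e.\ $X_d\geq0$, a degenerate case of what follows.

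Next I would use unitary invariance to recognise all these compressions, asymptotically, as one and the same object. Fix the canonical inclusion $W_0\colon\C^k\hookrightarrow\C^n$; a general isometry is $W=uW_0$ with $u\in\U(n)$, and $(W^*\otimes\I_d)X_d(W\otimes\I_d)=(W_0^*\otimes\I_d)\big[(u^*\otimes\I_d)X_d(u\otimes\I_d)\big](W_0\otimes\I_d)$, which for each fixed $W$ has the same law as the corner $Y_d:=(W_0^*\otimes\I_d)X_d(W_0\otimes\I_d)$ because $X_d$ is unitarily invariant. Writing $X_d=V_d\Lambda_dV_d^*$ with $V_d$ Haar on $\U(nd)$ and $\Lambda_d$ diagonal of empirical spectral distribution tending to $\mu$, the matrix $Y_d$ is the compression of $V_d\Lambda_dV_d^*$ by the fixed projection $P_d=W_0W_0^*\otimes\I_d$ of normalised rank $k/n$. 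By asymptotic freeness of a Haar-conjugated matrix and a constant projection (Theorem~\ref{libre}), $V_d\Lambda_dV_d^*$ and $P_d$ are asymptotically free, so the empirical spectral distribution of $Y_d$ converges almost surely to the free compression $[\mu]_{k/n}$ of $\mu$ by the ratio $k/n$. Comparing $R$-transforms, $R_{[\mu]_{t}}(z)=R_\mu(tz)=R_{D_t(\mu^{\boxplus 1/t})}(z)$ (with $D_s$ the dilation by $s>0$), hence $[\mu]_{k/n}=D_{k/n}\!\left(\mu^{\boxplus n/k}\right)$; in particular $\mathrm{supp}([\mu]_{k/n})$ is contained in $(0,\infty)$, resp.\ meets $(-\infty,0)$, exactly when $\mathrm{supp}(\mu^{\boxplus n/k})$ does, since $D_{k/n}$ scales by a positive factor.

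Finally I would upgrade the spectral convergence to the edges and make the estimate uniform in $W$. Assuming, as in the models of interest, that $X_d$ converges strongly to $\mu$, the pair $(X_d,P_d)$ converges strongly, and applying this to the self-adjoint polynomials $P_dX_dP_d-cP_d$ yields $\lambda_{\min}(Y_d)\to\min\mathrm{supp}([\mu]_{k/n})$ and $\lambda_{\max}(Y_d)\to\max\mathrm{supp}([\mu]_{k/n})$ almost surely. For the uniformity, $u\mapsto\lambda_{\min}\!\big((W_0^*\otimes\I_d)(u^*\otimes\I_d)X_d(u\otimes\I_d)(W_0\otimes\I_d)\big)$ is $2\|X_d\|$-Lipschitz on $\U(n)$ and $\|X_d\|$ is almost surely bounded, so covering the compact group $\U(n)$ by an $\varepsilon$-net whose cardinality does not depend on $d$ and combining with the edge convergence at each net point gives $\inf_{W}\lambda_{\min}\!\big((W^*\otimes\I_d)X_d(W\otimes\I_d)\big)\to\min\mathrm{supp}([\mu]_{k/n})$ almost surely. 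Hence if $\mathrm{supp}(\mu^{\boxplus n/k})\subset(0,\infty)$ this limit is strictly positive, so almost surely for $d$ large all the compressions are positive and $f_d$ is $k$-positive; and if $\mathrm{supp}(\mu^{\boxplus n/k})$ meets $(-\infty,0)$ then already $\lambda_{\min}(Y_d)$ is eventually negative, so $f_d$ is not $k$-positive (here no uniformity is needed, a single corner suffices). The main obstacle is the edge step: weak convergence of the bulk is not enough for an almost-sure conclusion about positivity, so one genuinely needs a strong-convergence hypothesis on $X_d$ together with the (nontrivial) fact that strong convergence survives compression by an asymptotically free projection; the Lipschitz/$\varepsilon$-net bookkeeping and the $R$-transform identity are comparatively routine.
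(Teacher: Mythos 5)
Your proposal is correct and follows essentially the same route as the proof in the cited source \cite{collins2015random}: reduce $k$-positivity to positivity of the compressions of the Choi matrix by isometries $\C^k\to\C^n$, identify the limiting spectrum of such a compression as the free compression $D_{k/n}(\mu^{\boxplus n/k})$ via asymptotic freeness of the unitarily invariant matrix with a fixed projection, and pass to edges uniformly over the (fixed-dimensional, compact) set of isometries by a Lipschitz/net argument. Your closing observation is also on target: the almost-sure positivity in part (1) genuinely requires strong convergence of $X_d$ (no outlying eigenvalues), a hypothesis present in the original reference but suppressed in the review's abbreviated statement, whereas part (2) only needs bulk convergence.
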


From the above result, if follows that probability measures $\mu$ with the property that the maps they yield are positive, but not completely positive, give interesting entanglement criteria. It was shown in \cite[Theorem 5.4]{collins2015random} that such maps can be obtained from shifted semicircular measures \eqref{eq:def-semicircular}, and that they can detect PPT entanglement. The global usefulness of such entanglement criteria is left open (see Question \ref{qst:random-entanglement-criteria}).

We discuss next the \emph{realignment} criterion (RLN), introduced in \cite{rudolph2003cross,chen2002matrix}, is of different nature than the two other criteria we already discussed. For any matrix $X \in \mathcal M_n(\mathbb C) \otimes \mathcal M_k(\mathbb C)$, define
$$X^{rln} = L(X) \in \mathcal M_{n^2 \times k^2}(\mathbb C),$$
where $L$ is the realignment map, defined on elementary tensors by
$$L(e_ie_j^* \otimes f_a f_b^*) = e_if_a^* \otimes e_jf_b^*.$$
The realignment criterion states that a separable quantum state $\rho \in \mathcal D_{nk}$ satisfies
$$\|\rho^{rln}\|_1 \leq 1,$$
where $\|\cdot \|_1$ is the Schatten 1-norm (or the nuclear norm). As usual, we denote by $\mathcal{RLN}_{n,k}$ the set of quantum states satisfying the realignment criterion
$$\mathcal{RLN}_{n,k}:=\{\rho \in \mathcal{D}_{nk} \, : \, \|\rho^{rln}\|_1 \leq 1\} \supseteq \mathcal{SEP}_{n,k}.$$

The realignment criterion is not comparable to the PPT criterion, hence there are PPT entangled states detected by the RLN criterion. Since the inclusion partial relation cannot be used to compare the two sets/criteria, the notion of threshold is particularly interesting in this situation. The result below is from \cite{aubrun2012realigning}.

\begin{proposition}
The thresholds for the sets $\mathcal{RLN}_{n,k}$ are as follows:
\begin{enumerate}
\item In the balanced regime, where $n=k \to \infty$, the threshold value for the parameter $s$ of the induced measure $\nu_{n^2,s}$ is on the scale $s \sim cn^2$ at the value $c_0=(8/3\pi)^2 \simeq 0.72$.
\item In the unbalanced regime, where $n \to \infty$ and $k$ is fixed, the threshold value for the parameter $s$ of the induced measure $\nu_{nk,s}$ is on the scale $s \sim c$ at the value $c_0=k^2$.
\end{enumerate}
In particular, comparing the values above with the ones in Proposition \ref{prop:thresholds-PPT}, one can conclude that, from a volume perspective, the realignment criterion is weaker than the PPT criterion (i.e.~the thresholds for RLN are smaller than the thresholds for PPT).
\end{proposition}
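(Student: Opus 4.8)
The plan is to identify the limiting value of the random quantity $\|\rho^{rln}\|_1$ and to read off the thresholds from the defining inequality $\|\rho^{rln}\|_1\le 1$ of $\mathcal{RLN}_{n,k}$. By Proposition~\ref{prop:induced-measure-properties} we write $\rho=W/\trace W$ with $W=GG^*\sim\mathcal W_{d,s}$, $d=nk$, $G\in\M_{d\times s}(\C)$ a Ginibre matrix, and we let $g_m\in\M_{n\times k}(\C)$ denote the reshaping of the $m$-th column of $G$ along $\C^{nk}\simeq\C^n\otimes\C^k$. A direct computation from $L(e_ie_j^*\otimes f_af_b^*)=e_if_a^*\otimes e_jf_b^*$ shows that realignment factors through the columns of $G$:
\[ W^{rln}=\sum_{m=1}^s g_m\otimes\overline{g_m}\in\M_{n^2\times k^2}(\C),\qquad \E\,W^{rln}=s\,\ketbra{\zeta_n}{\zeta_k},\quad \zeta_\ell:=\sum_{i=1}^\ell e_i\otimes e_i. \]
The mean is rank one with trace norm $s\sqrt{nk}$, whereas $\trace W$ concentrates around $ds=nks$; hence $\|\E W^{rln}\|_1/\trace W=O((nk)^{-1/2})\to 0$, and by the triangle inequality for $\|\cdot\|_1$ we may work with the centred matrix $W^{rln}-\E W^{rln}$ throughout.

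In the balanced regime $k=n$, $s=s_n\sim cn^2$, I would compute the limiting singular value distribution of $(n\sqrt s)^{-1}W^{rln}$ by the moment method. From $W^{rln}(W^{rln})^*=\sum_{m,m'}(g_mg_{m'}^*)\otimes\overline{(g_mg_{m'}^*)}$ and the identity $\trace[\prod_{i=1}^p(C_i\otimes\overline{C_i})]=|\trace(C_1\cdots C_p)|^2$ one obtains
\[ \E\,\trace[(W^{rln}(W^{rln})^*)^p]=\sum_{m_1,m_1',\dots,m_p,m_p'=1}^s \E\,|\trace(g_{m_1}g_{m_1'}^*\,g_{m_2}g_{m_2'}^*\cdots g_{m_p}g_{m_p'}^*)|^2, \]
and the Gaussian expectation is a sum over pairings by the Wick formula (Theorem~\ref{thm:wick-formula}). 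A genus-type count should show that after normalising by $n^{2+2p}s^p$ only the non-crossing pairings survive, so the eigenvalue distribution of $(n^2s)^{-1}W^{rln}(W^{rln})^*$ converges to the Mar\v{c}enko--Pastur law $\pi_1$; equivalently the singular values of $(n\sqrt s)^{-1}W^{rln}$ converge to the quarter-circle law of density $\tfrac1\pi\sqrt{4-x^2}\,\mathbf 1_{(0,2)}$. Combined with a high-probability operator-norm bound $\|W^{rln}-\E W^{rln}\|=O(n\sqrt s)$ (supplying the uniform integrability needed to pass to the first moment), concentration of measure for $\|\rho^{rln}\|_1$, and $\int_0^2 x\,\tfrac1\pi\sqrt{4-x^2}\,dx=\tfrac{8}{3\pi}$, this gives
\[ \|\rho^{rln}\|_1=\frac{\|W^{rln}\|_1}{\trace W}\ \sim\ \frac{\tfrac{8}{3\pi}\,n^3\sqrt s}{n^2 s}\ =\ \frac{8}{3\pi}\cdot\frac{n}{\sqrt s}\ \longrightarrow\ \frac{8}{3\pi\sqrt c}, \]
using $s\sim cn^2$. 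The threshold $c_0$ then solves $8/(3\pi\sqrt{c_0})=1$, i.e.~$c_0=(8/3\pi)^2\simeq 0.72$.

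In the unbalanced regime, $k$ fixed and $n\to\infty$ with $s$ of constant order, there is no averaging over the summands and the argument is elementary. For each $m$ we have $n^{-1}g_m^*g_m\to I_k$ in operator norm, and the random $k$-dimensional ranges of the $g_m$ inside $\C^n$ are asymptotically orthogonal; hence the operators $\Pi_m:=(n^{-1/2}g_m)\otimes\overline{(n^{-1/2}g_m)}:\C^{k^2}\to\C^{n^2}$ satisfy $\Pi_m^*\Pi_{m'}\to\delta_{mm'}I_{k^2}$, so that $(\sum_m\Pi_m)^*(\sum_m\Pi_m)\to sI_{k^2}$. Thus $W^{rln}=n\sum_m\Pi_m$ has exactly $k^2$ nonzero singular values, all converging to $n\sqrt s$, whence $\|W^{rln}\|_1\sim k^2 n\sqrt s$ and, since $\trace W\sim nks$,
\[ \|\rho^{rln}\|_1=\frac{\|W^{rln}\|_1}{\trace W}\ \longrightarrow\ \frac{k^2 n\sqrt s}{nks}=\frac{k}{\sqrt s}, \]
so the threshold is $c_0=k^2$. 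In both regimes, once $\|\rho^{rln}\|_1$ converges in probability to a deterministic constant different from $1$, the probability $\P_{\nu_{d,s}}[\rho\in\mathcal{RLN}_{n,k}]$ tends to $0$ below the threshold and to $1$ above it, which is the claim. The concluding comparison is then immediate: $(8/3\pi)^2\simeq 0.72<4$ in the balanced case, while in the unbalanced case the RLN threshold sits at a constant value of $s$ whereas by Proposition~\ref{prop:thresholds-PPT} the PPT threshold lives on the much larger scale $s\sim cn$ with $c=2+2\sqrt{1-k^{-2}}$; so in both regimes the RLN thresholds are the smaller ones.

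The main obstacle is the balanced-regime moment computation: organising the Wick pairings of the Gaussian factors so as to isolate the leading-order contribution and recognise the Catalan numbers $\mathrm{Cat}_p$, together with the operator-norm bound that upgrades the weak convergence of the singular-value distribution to convergence of the trace norm $\|\rho^{rln}\|_1$. The unbalanced regime, the concentration estimates, and the numerical comparison are comparatively routine.
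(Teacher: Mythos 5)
Your argument is correct and reconstructs essentially the proof of the cited reference \cite{aubrun2012realigning} to which the survey defers: the identity $W^{rln}=\sum_m g_m\otimes\overline{g_m}$, the negligibility of the rank-one mean $s\,\zeta_n\zeta_k^*$, the quarter-circle limit for the singular values of the centred realigned Wishart matrix in the balanced regime (whence $\|\rho^{rln}\|_1\to 8/(3\pi\sqrt c)$ and $c_0=(8/3\pi)^2$), and the asymptotic-orthogonality argument giving the limit $k/\sqrt s$ and $c_0=k^2$ in the unbalanced regime. The one step you leave as a sketch --- the Wick/genus count showing that only non-crossing pairings survive the normalisation $n^{2+2p}s^p$ and yield the Catalan moments --- is precisely the technical core of that reference, and your low-order consistency checks confirm you have the right normalisation for it.
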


We gather in Table \ref{tbl:thresholds-criteria} the values of the thresholds for the different entanglement criteria discussed in this section, as well as for the set of separable states itself. The striking feature of these values is the fact that the (bounds for the) thresholds for the set $\mathcal{SEP}$, obtained in the important work \cite{aubrun2014entanglement}, are one order of magnitude above the thresholds for the various entanglement criteria. This means that, from a volume perspective, the set $\mathcal{SEP}$ is much smaller than the set of states satisfying the different entanglement criteria. 

\begin{table}
\begin{center}
\begin{tabular}{cc|c|c|c|}
\cline{2-4}
&\multicolumn{1}{|c|}{\multirow{1}{*}{Balanced regime}} &\multicolumn{2}{|c|}{Unbalanced regime}\\
&\multicolumn{1}{|c|}{\multirow{1}{*}{$n,k\to\infty$}}
&\multicolumn{2}{|c|}{\multirow{1}{*}{$n \to\infty$, $k$ fixed}}\\
\cline{1-4}
\multicolumn{1}{ |c  }{$\mathcal{SEP}$}
&\multicolumn{1}{ |c| }{\quad $n^3\lesssim s\lesssim n^3\log^2 n$ \, $[n=k]$ \quad}
&\multicolumn{2}{ |c| }{\quad$nk^2\lesssim s\lesssim nk^2\log^2 (nk)$ \quad} \\
\cline{1-4}
\multicolumn{1}{|c}{\multirow{2}{*}{$\mathcal{PPT}$ } }
&\multicolumn{1}{|c|}{$s\sim cnk$}
&\multicolumn{2}{|c|}{$s\sim cnk$}  \\
\cline{2-4}
\multicolumn{1}{ |c  }{}
&\multicolumn{1}{ |c| }{$c_0=4$ \, $[n=k]$}
&\multicolumn{2}{c|}{$c_0=2+2\sqrt{1-\frac{1}{k^2}}$  } \\
\cline{1-4}
\multicolumn{1}{ |c  }{\multirow{2}{*}{$\mathcal{RED}$} }
&\multicolumn{1}{ |c| }{$s\sim cn$}
&\multicolumn{2}{|c|}{$s\sim cnk$} \\
\cline{2-4}
\multicolumn{1}{ |c  }{}
&\multicolumn{1}{ |c| }{$c_0=1$}
&\multicolumn{2}{|c|}{$c_0=\frac{(1+\sqrt{k+1})^2}{k(k-1)}$} \\
\cline{1-4}
\multicolumn{1}{ |c  }{\multirow{2}{*}{$\mathcal{RLN}$} }
&\multicolumn{1}{ |c| }{$s\sim cnk$}
&\multicolumn{2}{|c|}{$s$ fixed} \\
\cline{2-4}
\multicolumn{1}{ |c  }{}
&\multicolumn{1}{ |c| }{$c_0=(8/3\pi)^2$ \, $[n=k]$}
&\multicolumn{2}{|c|}{$c_0=k^2$} \\
\cline{1-4}\\
\end{tabular}
\end{center}
\caption {Thresholds for the sets of separable states $\mathcal{SEP}$, PPT states $\mathcal{PPT}$, states satisfying the reduction criterion $\mathcal{RED}$, and states satisfying the realignment criterion $\mathcal{RLN}$.} 
\label{tbl:thresholds-criteria}
\end{table}

Finally, in \cite{lancien2015extendibility}, Lancien studies the performance of $r$-extendibility criteria for random quantum states. Recall that a bipartite quantum state $\rho_{AB} \in \mathcal D_{nk}$ is said to be \emph{$r$-extendible} if there exists a $(r+1)$-partite state $\sigma_{AB^r} \in \mathcal D_{nk^r}$ which is invariant under all permutations of the $B$-systems and has $\rho_{AB}$ as a marginal:
$$[\mathrm{id}_{nk} \otimes \mathrm{Tr}_{k^{r-1}}](\sigma_{AB^r}) = \rho_{AB}.$$
Obviously, any separable state $\rho_{AB}$ is $r$-extendible, for all $r \geq 1$. Doherty, Parrilo, and Spedalieri have shown in \cite{doherty2004complete} that these conditions are also sufficient.
\begin{theorem}
A bipartite quantum state $\rho_{AB} \in \mathcal D_{nk}$ is separable if and only if it is $r$-extendible with respect to the system $B$ for all $r \in \mathbb N$.
\end{theorem}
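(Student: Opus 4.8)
I would prove the two implications separately, the forward one being immediate and the converse being the substantive part, which rests on the quantum de Finetti theorem.

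\emph{Easy direction.} If $\rho_{AB} = \sum_i p_i\, \sigma_i \otimes \tau_i$ is separable, then for every $r \in \N$ the state $\sigma_{AB^r} := \sum_i p_i\, \sigma_i \otimes \tau_i^{\otimes r} \in \mathcal D_{nk^r}$ is invariant under all permutations of the $r$ copies of the $B$ system and satisfies $[\mathrm{id}_{nk} \otimes \mathrm{Tr}_{k^{r-1}}](\sigma_{AB^r}) = \rho_{AB}$; hence $\rho_{AB}$ is $r$-extendible for every $r$.

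\emph{Hard direction.} Assume $\rho_{AB}$ is $r$-extendible for all $r$, and for each $r$ fix a symmetric extension $\sigma^{(r)} \in \mathcal D_{nk^r}$. Since the $\sigma^{(r)}$ are a priori unrelated, the first step is to upgrade them to a \emph{consistent tower}: for fixed $m$, the partial traces $\mathrm{Tr}_{B_{m+1}\cdots B_r}\sigma^{(r)}$ ($r \geq m$) all lie in the set of states on $AB^m$ that are permutation-symmetric in the $B$'s and have $AB$-marginal $\rho_{AB}$, and this set is closed in the compact state space of $\M_n(\C)\otimes\M_k(\C)^{\otimes m}$. A diagonal argument over $m$ then produces states $\tilde\sigma^{(m)}$ on $AB^m$, symmetric in the $B$'s, with $\mathrm{Tr}_{B_m}\tilde\sigma^{(m)} = \tilde\sigma^{(m-1)}$ and $\tilde\sigma^{(1)} = \rho_{AB}$. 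By the inductive-limit (Kolmogorov-type) extension of states on the chain $\M_n(\C)\otimes\M_k(\C)^{\otimes m}$, this consistent family defines a state $\omega$ on $\M_n(\C)\otimes\bigotimes_{i=1}^\infty \M_k(\C)$ invariant under all finite permutations of the $B$-factors, with $\rho_{AB}$ as its $AB$-marginal. Now I apply the quantum de Finetti theorem in the version with a distinguished subsystem (Størmer, Hudson--Moody): such an $\omega$ admits an integral representation $\omega = \int \rho_\tau \otimes \tau^{\otimes \infty}\, d\mu(\tau)$, where $\mu$ is a probability measure on the state space of $\M_k(\C)$ and $\tau \mapsto \rho_\tau$ a measurable family of states on $\M_n(\C)$. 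Restricting to the first $AB$ block gives $\rho_{AB} = \int \rho_\tau \otimes \tau\, d\mu(\tau)$, an average of product states; since $\mathcal{SEP}_{n,k}$ is compact, and in particular closed, in finite dimension, this average lies in $\mathcal{SEP}_{n,k}$, so $\rho_{AB}$ is separable.

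\emph{Main obstacle.} The two delicate points are (i) passing from the unrelated $r$-extensions to a genuinely consistent exchangeable tower, which is exactly what forces the compactness/diagonal argument above, and (ii) invoking the correct form of the infinite de Finetti theorem --- namely the version that tolerates the distinguished system $A$ and a $\tau$-dependent $\rho_\tau$ --- and checking that its output genuinely lands in the \emph{closed} set of separable states rather than in some larger limiting set. A route that avoids the infinite tensor product altogether is the \emph{quantitative finite} de Finetti theorem: the $AB$-marginal of any symmetric extension to $AB^r$ is within $O(k^2/r)$ in trace norm of a separable state, so letting $r \to \infty$ exhibits $\rho_{AB}$ as a trace-norm limit of separable states, and closedness of $\mathcal{SEP}_{n,k}$ concludes.
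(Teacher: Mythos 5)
The paper does not actually prove this statement: it is quoted as a known result of Doherty, Parrilo and Spedalieri, with a pointer to \cite{doherty2004complete} and no argument given. So there is no in-paper proof to compare against; what can be said is that your argument is correct and is essentially the standard proof from the literature you would find behind that citation. The easy direction is fine as written. For the converse, your compactness-plus-diagonal extraction of a consistent exchangeable tower from the a priori unrelated $r$-extensions is exactly the right move (consistency of the limits $\tilde\sigma^{(m)}$ follows from continuity of the partial trace along the diagonal subsequence), and the reduction to a permutation-invariant state on $\M_n(\C)\otimes\bigotimes_{i\geq 1}\M_k(\C)$ followed by the de Finetti/St{\o}rmer decomposition is the classical route. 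The one point worth being careful about, which you correctly flag, is that the plain St{\o}rmer theorem decomposes only the exchangeable $B$-part; the version with a distinguished subsystem $A$ and a $\tau$-dependent $\rho_\tau$ requires a disintegration of $\omega$ over the (Choquet-simplex) decomposition of its $B$-marginal, which is standard but not automatic. Your alternative route via the quantitative finite de Finetti theorem is also sound and arguably cleaner for a finite-dimensional statement: a permutation-invariant extension to $AB^r$ forces the $AB$-marginal to be within $O(k^2/r)$ in trace norm of $\mathcal{SEP}_{n,k}$, and closedness of $\mathcal{SEP}_{n,k}$ finishes the argument without ever leaving finite dimensions; the only care needed there is to invoke the version of the finite de Finetti bound valid for permutation-invariant (not merely symmetric-subspace-supported) states, obtained by purifying into $\mathrm{Sym}^r(\C^k\otimes\C^k)$, which is where the $k^2$ rather than $k$ in the error comes from.
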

In \cite{lancien2015extendibility}, besides computing estimates on the average width of the set of $r$-extendible states, Lancien computes a lower bound for the threshold value of these sets, for fixed $r$. 
\begin{proposition}\cite[Theorem 6.4]{lancien2015extendibility}
\label{prop:extendibility}
Fix $r \geq 1$ and consider balanced random quantum states $\rho_n \in \mathcal D_{n^2}$ having distribution $\nu_{n^2,s_n}$. For any $\varepsilon>0$, if the function $s_n$ is asymptotically smaller than $(1-\varepsilon)(r-1)^2/(4r)n^2$ as $n \to \infty$, then, 
$$\lim_{n \to \infty} \mathbb P[\rho_n \text{ is $r$-extendible}] = 0.$$
In other words, the threshold $c_0$ for the set of $r$-extendible states in the scaling $s \sim cn^2$ is larger than $(r-1)^2/(4r)$. 
\end{proposition}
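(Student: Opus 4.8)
The plan is to certify non-$r$-extendibility of a typical $\rho_n\sim\nu_{n^2,s_n}$ by a single well-chosen observable, and to control the two sides of the resulting inequality by Mar\v{c}enko--Pastur asymptotics and free probability. Write $B_1,\dots,B_r$ for the $r$ copies of $\mathbb{C}^n$, and for a Hermitian $M$ on $\mathbb{C}^n_A\otimes\mathbb{C}^n_B$ let $M_{AB_j}\in\mathcal{M}_{n^{r+1}}(\mathbb{C})$ denote $M$ placed on the $A$-factor and the $j$-th $B$-copy (identity elsewhere). If $\rho_{AB}$ is $r$-extendible, with symmetric extension $\sigma$, then from $\mathrm{Tr}_{B_2\cdots B_r}\sigma=\rho_{AB}$ and the permutation invariance of $\sigma$ one gets
$$\mathrm{Tr}(M\rho_{AB})=\mathrm{Tr}(M_{AB_1}\sigma)=\mathrm{Tr}\!\Big(\tfrac1r\textstyle\sum_{j=1}^rM_{AB_j}\cdot\sigma\Big)\le\lambda_{\max}\!\Big(\tfrac1r\textstyle\sum_{j=1}^rM_{AB_j}\Big).$$
So it is enough to produce one $M$ breaking this inequality for $\rho_n$. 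I would take $M=\rho_n$; since $(I_{n^2}/n^2)_{AB_j}=I_{n^{r+1}}/n^2$ for every $j$, testing with $\rho_n$ or with $\rho_n-I_{n^2}/n^2$ gives the same condition, namely
$$\mathrm{Tr}(\rho_n^2)>\lambda_{\max}\!\Big(\tfrac1r\textstyle\sum_{j=1}^r(\rho_n)_{AB_j}\Big).$$

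Next I would analyze both sides in the regime $s_n/n^2\to c$ (the general case $s_n\le(1-\varepsilon)\tfrac{(r-1)^2}{4r}n^2$ being reduced to this by a subsequence argument). By Theorem~\ref{thm:marchenko-pastur} together with Proposition~\ref{prop:induced-measure-properties} (or the explicit moment formula for $\nu_{d,s}$), $s_n\rho_n$ converges to $\pi_c$, so $n^2\,\mathrm{Tr}(\rho_n^2)\to c^{-2}\int x^2\,d\pi_c=\tfrac{c+c^2}{c^2}=\tfrac{1+c}{c}$. For the right-hand side set $Y:=n^2\rho_n=n^2W/\mathrm{Tr}(W)$ for a Wishart matrix $W$, so that $n^2\lambda_{\max}\big(\tfrac1r\sum_j(\rho_n)_{AB_j}\big)=\lambda_{\max}\big(\tfrac1r\sum_jY_{AB_j}\big)$. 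Here $Y$ is unitarily invariant, with limiting spectral distribution the dilation $D_{1/c}\pi_c$ (the pushforward of $\pi_c$ under $x\mapsto x/c$), and each $Y_{AB_j}$ is a unitary conjugate of $Y\otimes I_{(\mathbb{C}^n)^{\otimes(r-1)}}$, hence has the same limiting distribution.

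The key point I would have to prove — and this is the technical heart — is that $Y_{AB_1},\dots,Y_{AB_r}$ are asymptotically \emph{free}, each of distribution $D_{1/c}\pi_c$: they all act on the common $A$-factor, but the $A$-conditional expectation of $Y$ (i.e.\ $n^{-1}\mathrm{Tr}_BY$) concentrates on a scalar multiple of $I_A$, so the operator-valued freeness over $M_n(\mathbb{C})_A$ of the tensor-translated copies collapses to scalar freeness in the limit; concretely, writing $Y=U\Lambda U^*$ with $U$ Haar on $\mathbb{C}^{n^2}$ and $\pi_{1j}$ the transposition exchanging $B_1$ and $B_j$, one has $Y_{AB_j}=\big(\pi_{1j}(U\otimes I)\big)(\Lambda\otimes I)\big(\pi_{1j}(U\otimes I)\big)^*$, and a mixed-moment (genus-type) computation reduces the joint distribution to that of a free family. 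Granting this, $\tfrac1r\sum_jY_{AB_j}$ has limiting distribution $\tfrac1r(D_{1/c}\pi_c)^{\boxplus r}=\tfrac1rD_{1/c}(\pi_c^{\boxplus r})=\tfrac1rD_{1/c}\pi_{rc}$, using that dilation commutes with $\boxplus$ and that the free Poisson law is freely infinitely divisible. Promoting this to operator-norm convergence — by the strong asymptotic freeness of Haar-unitary conjugations (Haagerup--Thorbj{\o}rnsen, Male, Collins--Male, as already used for Theorem~\ref{thm:marchenko-pastur}) — gives $n^2\lambda_{\max}\big(\tfrac1r\sum_j(\rho_n)_{AB_j}\big)\to\tfrac1{rc}\max\operatorname{supp}(\pi_{rc})=\tfrac{(1+\sqrt{rc})^2}{rc}$.

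Putting the two sides together, the witness condition $\mathrm{Tr}(\rho_n^2)>\lambda_{\max}(\cdots)$, after multiplying by $n^2$ and passing to the limit, reads $\tfrac{1+c}{c}>\tfrac{(1+\sqrt{rc})^2}{rc}$, equivalently $r(1+c)>1+2\sqrt{rc}+rc$, equivalently $r-1>2\sqrt{rc}$, equivalently $c<\tfrac{(r-1)^2}{4r}$. Hence, if $\limsup_ns_n/n^2\le(1-\varepsilon)\tfrac{(r-1)^2}{4r}$, then along any subsequence on which $s_{n_k}/n_k^2$ converges its limit is $<\tfrac{(r-1)^2}{4r}$, so $\mathbb{P}\big[\mathrm{Tr}(\rho_{n_k}^2)>\lambda_{\max}(\tfrac1r\sum_j(\rho_{n_k})_{AB_j})\big]\to1$, and therefore $\mathbb{P}[\rho_n\text{ is $r$-extendible}]\to0$. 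The main obstacle is exactly the third paragraph: establishing the asymptotic freeness of the tensor-translated copies $Y_{AB_1},\dots,Y_{AB_r}$ and the upgrade from weak to strong convergence, so that $\lambda_{\max}$ is genuinely governed by the right edge of $\tfrac1rD_{1/c}\pi_{rc}$; the remaining steps are essentially bookkeeping.
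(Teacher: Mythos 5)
First, note that the review article itself does not prove this proposition; it only cites \cite[Theorem 6.4]{lancien2015extendibility}, so there is no in-paper argument to match against. Your overall strategy does coincide with the one used in the cited reference: the witness inequality $\mathrm{Tr}(M\rho_{AB})\le\lambda_{\max}\bigl(\tfrac1r\sum_{j}M_{AB_j}\bigr)$ for $r$-extendible states, the choice $M=\rho_n$ (equivalently $\rho_n-I/n^2$), the purity asymptotics $n^2\,\mathrm{Tr}(\rho_n^2)\to(1+c)/c$, and the arithmetic $\tfrac{1+c}{c}>\tfrac{(1+\sqrt{rc})^2}{rc}\iff c<\tfrac{(r-1)^2}{4r}$ are all correct and are exactly what produces the stated constant.

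The genuine gap is the one you yourself flag: the claim that $Y_{AB_1},\dots,Y_{AB_r}$ are asymptotically free with the stated strong (norm) convergence is asserted, not proven, and it does not follow from any tool available in this paper. Theorem \ref{libre} gives asymptotic freeness for \emph{independent} Haar unitaries conjugating constant matrices, whereas here the $r$ operators are all deterministic functions of a \emph{single} random matrix $Y$ and overlap on the common $A$-register; the permuted unitaries $\pi_{1j}(U\otimes I)\pi_{1j}$ are far from independent. Your heuristic (concentration of $\mathrm{Tr}_B Y$ around $nI_A$ collapsing operator-valued freeness to scalar freeness) is the right intuition, but turning it into a bound on $\lambda_{\max}$ requires a genuine combinatorial moment estimate on $\E\,\mathrm{Tr}\bigl[\bigl(\sum_j Y_{AB_j}\bigr)^{2p}\bigr]$ with $p$ growing with $n$, in the spirit of the Ambainis--Harrow--Hastings computation quoted in Section \ref{sec:other}; this is precisely the technical core of the cited work. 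Two smaller remarks: (i) for the witness argument you only need a one-sided \emph{upper} bound on $\lambda_{\max}\bigl(\tfrac1r\sum_j Y_{AB_j}\bigr)$ matching the right edge of $\tfrac1r D_{1/c}\pi_{rc}$, not full strong convergence, which makes the task strictly easier than you describe; (ii) your subsequence reduction is incomplete at the endpoint $c=0$ — if $s_{n_k}/n_k^2\to0$ the fixed-$c$ Mar\v{c}enko--Pastur analysis does not apply and the low-rank regime needs a separate (easier) estimate, since both sides of the comparison diverge like $1/c$ there.
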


Notice that the analysis in \cite{lancien2015extendibility} does not give any upper-bounds on the threshold $c_0$, see Question \ref{qst:extendibility}.

\subsection{Absolute separability}

Whether a quantum state $\rho$ is separable or entangled does not only depend on the spectrum of $\rho$: there are, for example, rank one (pure) states which are separable ($\rho = ee^* \otimes ff^*$) and other states which are entangled ($\rho = \Omega$, see \eqref{eq:maximally-entangled-state}). In other words, the separability/entanglement of a quantum state depends also on its eigenvectors. In order to eliminate this dependence, in \cite{kus2001geometry} the authors introduced the set of \emph{absolutely separable states}
$$\mathcal{ASEP}_{n,k} = \bigcap_{U \in \mathcal U_{nk}} U \cdot \mathcal{SEP}_{n,k} \cdot U^* = \{\rho \, : \, U\rho U^* \text{ is separable }\forall U \in \mathcal U_{nk}\} \subset \mathcal D_{nk}.$$
Obviously, the truth value of $\rho \in \mathcal{ASEP}_{n,k}$ depends only on the  spectrum $\lambda$ of the density operator $\rho$, so one could simply use
$$\Delta_{nk} \ni \widetilde{\mathcal{ASEP}}_{n,k} = \{ \lambda \, : \, \mathrm{diag}(\lambda) \in \mathcal{ASEP}_{n,k} \}.$$
Similarly, one can define absolute versions (and the corresponding spectral variants) for the sets $\mathcal{PPT}$, $\mathcal{RED}$, and $\mathcal{RLN}$.

An explicit description of the set $\mathcal{APPT}$ has been obtained in \cite{hildebrand2007positive}, as a finite set of positive-semidefinite conditions. The analogue question for $\mathcal{ARED}$ has been settled in \cite{jivulescu2015positive}, whereas the problem of finding an explicit description of the set $\mathcal{ARLN}$ remains open. Interestingly, it has been shown in \cite{johnston2013separability} that for qubit-qudit systems ($\min(n,k)=2$), absolute separability is equivalent to the absolute PPT property. Later, in \cite{arunachalam2014absolute} evidence towards the general conjecture $\mathcal{ASEP}_{n,k} = \mathcal{APPT}_{n,k}$ (for all $n,k$) has been collected; in particular, the authors show that for all $n,k$, $\mathcal{APPT}_{n,k} \subseteq \mathcal{ARLN}_{n,k}$. 

At the level of thresholds, the values (and even the scales) for $\mathcal{ASEP}$ and $\mathcal{ARLN}$ are completely open. The following results, for $\mathcal{APPT}$ and $\mathcal{ARED}$ are from \cite{collins2012absolute}, and respectively \cite{jivulescu2015thresholds}.

\begin{proposition}
The thresholds for the sets $\mathcal{APPT}_{n,k}$ are as follows:
\begin{enumerate}
\item In the balanced regime, where $n \geq k \to \infty$, the threshold value for the parameter $s$ of the induced measure $\nu_{nk,s}$ is on the scale $s \sim cnk^3$ at the value $c_0=4$.
\item In the unbalanced regime, where $n \to \infty$ and $k$ is fixed, the threshold value for the parameter $s$ of the induced measure $\nu_{nk,s}$ is on the scale $s \sim cnk$ at the value $c_0=(k+\sqrt{k^2-1})^2$.
\end{enumerate}

The thresholds for the sets $\mathcal{ARED}_{n,k}$ are as follows:
\begin{enumerate}
\item In the balanced regime, where $n, k \to \infty$, the threshold value for the parameter $s$ of the induced measure $\nu_{nk,s}$ is on the scale $s \sim cnk$ at the value $c_0=1$.
\item In the first unbalanced regime, where $k \to \infty$ and $n$ is fixed, the threshold value for the parameter $s$ of the induced measure $\nu_{nk,s}$ is on the scale $s \sim ck$ at the value $c_0=n-2$.
\item In the second unbalanced regime, where $n \to \infty$ and $k$ is fixed, the threshold value for the parameter $s$ of the induced measure $\nu_{nk,s}$ is on the scale $s \sim cnk$ at the value 
$$c_0=\left(1+\frac{2}{k}+\frac{2}{k}\sqrt{k+1}\right)^2.$$
\end{enumerate}

\end{proposition}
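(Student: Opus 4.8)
The plan is to derive both families of thresholds from two ingredients: the \emph{explicit spectral descriptions} of the absolute sets, and the \emph{asymptotic spectral theory} of the induced ensemble $\nu_{d,s}$ (the complete arguments are in \cite{collins2012absolute} for $\mathcal{APPT}$ and in \cite{jivulescu2015thresholds} for $\mathcal{ARED}$). Since membership in $\mathcal{APPT}_{n,k}$ or $\mathcal{ARED}_{n,k}$ depends only on the ordered eigenvalues $\lambda_1\ge\cdots\ge\lambda_{nk}$ of $\rho$, and since for $\rho\sim\nu_{nk,s}$ the eigenvalues and the Haar-distributed eigenvectors are independent (Proposition \ref{prop:induced-measure-properties} and Remark \ref{rem:induced-euclidean}), the whole question is about the asymptotics of that ordered spectrum. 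I would start from \cite{hildebrand2007positive}: for $k\le n$, $\rho\in\mathcal{APPT}_{n,k}$ if and only if a specific $k\times k$ matrix built linearly from the $O(k)$ largest and $O(k)$ smallest eigenvalues is positive semidefinite --- its diagonal entries being (sums of) bottom eigenvalues $\lambda_{nk-O(1)}$ and its off-diagonal entries being differences $\lambda_{nk-O(1)}-\lambda_{O(1)}$; for $k=2$ this reads $4\lambda_{nk}\lambda_{nk-2}\ge(\lambda_1-\lambda_{nk-1})^2$. For $\mathcal{ARED}_{n,k}$ I would use the analogous (piecewise-linear) spectral description of \cite{jivulescu2015positive}.

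The next step is to locate the relevant eigenvalues of $\rho\sim\nu_{nk,s}$. Writing $\rho=W/\trace(W)$ with $W\sim\mathcal W_{nk,s}$ (Proposition \ref{prop:induced-measure-properties}) and using $\trace(W)=nks\,(1+o(1))$ almost surely, everything is governed by the spectrum of the Wishart matrix $W$ of ratio $s/(nk)$. In the \emph{unbalanced} regimes $k$ is fixed; then the ratio tends to a constant (equal to $c$ once $s$ is written on its scale $s\sim cnk$, resp.\ $s\sim ck$), and the strong convergence in Theorem \ref{thm:marchenko-pastur} forces the finitely many top eigenvalues of $nks\cdot\rho$ to converge to the Mar\v{c}enko--Pastur upper edge and the bottom ones to its lower edge $(1-\sqrt{\text{ratio}})^2$ --- or to $0$ when the ratio is $\le1$, since there $\rho$ is rank-deficient. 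In the \emph{balanced} $\mathcal{APPT}$ regime $s\sim cnk^3$, so the ratio $s/(nk)\sim ck^2$ tends to infinity and one is in the crossover regime from Mar\v{c}enko--Pastur to semicircle (a standard companion of Theorem \ref{thm:marchenko-pastur}): one gets $\lambda_i(\rho)=\frac1{nk}\bigl(1+\frac{\gamma_i}{k\sqrt c}+o(1)\bigr)$ with the $\gamma_i$ tracing the semicircle on $[-2,2]$, and since $k=o(nk)$ the top-$2k$ and bottom-$2k$ eigenvalues concentrate at $\mu_\pm:=\frac1{nk}\bigl(1\pm\frac2{k\sqrt c}\bigr)$, with individual fluctuations and with the spread inside each block of order $o(\mu_+-\mu_-)$.

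Substituting these asymptotics into the spectral conditions then yields the constants. In every case the asymptotically binding $k\times k$ matrix becomes, up to negligible errors, $(\mu_-+\mu_+)I_k+(\mu_--\mu_+)J_k$ (with $J_k$ the all-ones matrix), whose least eigenvalue is $(k+1)\mu_--(k-1)\mu_+$; one checks that any auxiliary conditions hold with asymptotic slack. For the unbalanced $\mathcal{APPT}$ regime this gives $(k+1)(1-\sqrt c)^2\ge(k-1)(1+\sqrt c)^2$, which rearranges to $c\ge(k+\sqrt{k^2-1})^2$, while for $c$ below this value it fails almost surely (in particular it always fails when $c\le1$, where $\mu_-=0$ kills the diagonal); hence $c_0=(k+\sqrt{k^2-1})^2$. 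For the balanced $\mathcal{APPT}$ regime the same inequality with $\mu_\pm=\frac1{nk}(1\pm\frac2{k\sqrt c})$ collapses to $\sqrt c\ge2$, i.e.\ $c_0=4$; note that the scale $nk^3$ is exactly what makes $\mu_+-\mu_-$ comparable to $\mu_-/k$, which is the size of gap the $k\times k$ inequality tolerates. For $\mathcal{ARED}_{n,k}$ the same scheme applies verbatim with the description of \cite{jivulescu2015positive}; substituting the appropriate Mar\v{c}enko--Pastur or semicircle edges into that (lower-order) condition and solving for the critical value yields $c_0=1$ on the scale $s\sim cnk$ (balanced), $c_0=n-2$ on the scale $s\sim ck$ (first unbalanced, $n$ fixed), and $c_0=(1+\tfrac2k+\tfrac2k\sqrt{k+1})^2$ on the scale $s\sim cnk$ (second unbalanced, $k$ fixed); as a consistency check, at $k=2$ one has $\mathcal{RED}_{n,2}=\mathcal{PPT}_{n,2}$, and correspondingly $(1+\tfrac2k+\tfrac2k\sqrt{k+1})^2$ and $(k+\sqrt{k^2-1})^2$ both equal $(2+\sqrt3)^2$ there.

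The main obstacle is the eigenvalue-localization step: one needs control of \emph{individual} extreme eigenvalues of the induced ensemble, not just of its empirical distribution. In the unbalanced regimes this is precisely what the strong convergence in Theorem \ref{thm:marchenko-pastur} supplies (no outliers at either edge), so there the proof is essentially the algebraic computation above. In the balanced $\mathcal{APPT}$ regime one needs the sharper input that the top-$k$ and bottom-$k$ eigenvalues all lie within $o(\mu_+-\mu_-)$ of the edges --- eigenvalue rigidity in the Wishart-to-semicircle crossover --- together with the bookkeeping that the $O(k)$ entrywise errors in the $k\times k$ matrix cannot accumulate enough to move its least eigenvalue across $0$ away from the critical scale. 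A secondary, purely combinatorial point is to pin down which member of Hildebrand's (respectively the reduction) family of conditions is the one that binds asymptotically.
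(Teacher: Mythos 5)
Your proposal is correct and follows essentially the same route as the sources this survey cites for the result (\cite{collins2012absolute} for $\mathcal{APPT}$, \cite{jivulescu2015thresholds} for $\mathcal{ARED}$): reduce membership to the purely spectral conditions of \cite{hildebrand2007positive} and \cite{jivulescu2015positive}, localize the extreme eigenvalues of the induced ensemble via the Wishart representation and the Mar\v{c}enko--Pastur (resp.\ semicircle crossover) edges, and solve the binding $k\times k$ inequality --- your reduction of that inequality to $(k+1)\mu_- \geq (k-1)\mu_+$ and the resulting constants all check out. The paper itself states the proposition by citation only, so there is no internal proof to compare against beyond noting that the eigenvalue-rigidity input you flag for the $s\sim cnk^3$ regime is exactly the technical step supplied in \cite{collins2012absolute}.
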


\section{Deterministic input through random quantum channels}
\label{sec:single-output}

Although a global understanding of the 
typical properties
of a random channel is desirable (and this is
the object of Section \ref{sec:random-output}),
obtaining results for interesting outputs
of given random channels is of intrinsic interest.
For example, as we explain subsequently in Section \ref{sec:conjugate-products}, 
the image of highly entangled states under
the tensor product of random channels
is an important question, as it 
is one of the keys to obtaining 
violations for the 
additivity of the minimum output entropy.

Our first model is a one-channel model that
consists in considering
matrices $X_n$ which have a macroscopic scaling $\trace(X^p) \sim n \cdot  \phi(x^p)$, 
where $x$ is some non-commutative random variable.
In order to obtain states, we normalize:
\[\tilde X = \frac{X}{\trace X}.\]
Therefore, the moments of the output matrix $Z = \Phi(\tilde X)$ are given by
\[\E[\trace(Z^p)] = \E[\trace(\Phi(\tilde X)^p)] = \E\left[ \trace \frac{\Phi(X)^p}{(\trace X)^p}\right] = \frac{\E[\trace(\Phi(X)^p)]}{(\trace X)^p}.\]

We consider different asymptotic regimes for the integer parameters $n$ and $k$. 
It turns out that the computations in the case of the
regime $k$ fixed, $n \to \iy$ is more involved, and its understanding requires  free probabilistic tools. To an integer $k$ and a probability measure $\mu$, we associate the measure $\mu_{(k)}$ defined by
\[\mu_{(k)} = \left( 1 - \frac 1 k \right) \delta_0 + \frac 1 k \mu.\]

\begin{proposition}
The almost sure behavior of the output matrix $Z=\Phi(\tilde X)$ is given by:
\begin{enumerate}
\item[(I)] When $n$ is fixed and $k \to \iy$, $Z$ converges almost surely to the maximally mixed state
\[\rho_* = \frac{1}{n}\I_n.\]
\item[(II)] When $k$ is fixed and $n \to \iy$, the empirical spectral distribution of $\bar \mu k n Z$ converges to the probability measure $\nu = [\mu_{(k)}]^{\boxplus k^2}$, where $\boxplus$ denotes the free additive convolution operation, $\mu$ is the probability distribution of $x$ with respect to $\phi$: $\phi(x^p) = \int t^p \; d\mu(t)$ and $\bar \mu$ is the mean of $\mu$, $\bar \mu = \phi(x)$. 
\item[(III)] When $n,k \to \iy$ and  $k/n \to c$, the empirical spectral distribution of the matrix $nZ$ converges to the Dirac mass $\delta_1$.
\end{enumerate}
\end{proposition}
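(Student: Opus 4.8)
I realise the random channel $\Phi$ in Stinespring form (Proposition~\ref{prop:stinespring_kraus}): fix a unit vector $e_1\in\C^k$, let $U\in\U(nk)$ be Haar distributed, and set $\Phi(A)=\trace_{\C^k}\big[U(A\otimes e_1e_1^*)U^*\big]$. The whole proof is a moment computation. Writing $\gamma=(1\,2\,\cdots\,p)\in\mathcal{S}_p$ for the full cycle and applying the Weingarten formula of Theorem~\ref{thm:Weingarten-formula} (equivalently, the removal procedure of Theorem~\ref{thm:Wg_diag}) to the $p$ copies of $U$ and of $\ol U$ occurring in $\trace(\Phi(X)^p)$, one gets
\begin{equation*}
\E\big[\trace(\Phi(X)^p)\big]=\sum_{\sigma,\tau\in\mathcal{S}_p}k^{\#\sigma}\,n^{\#(\gamma\sigma)}\,\trace_\tau(X)\,\Wg(nk,\tau\sigma^{-1}),
\end{equation*}
where $k^{\#\sigma}$ records running the ancilla index over the cycles of $\sigma$, $n^{\#(\gamma\sigma)}$ running the output index over the cycles of $\gamma\sigma$, and $\trace_\tau(X)$ collects the input contractions. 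Since $\Phi$ is linear, $Z=\Phi(\tilde X)=\Phi(X)/\trace(X)$, so $\E[\trace(Z^p)]=\E[\trace(\Phi(X)^p)]/\trace(X)^p$; in regimes (II)--(III) I also use $\trace(X)=n\bar\mu(1+o(1))$ (the $p=1$ instance of the macroscopic scaling). The only input on $\Wg$ needed is the asymptotics $\Wg(nk,\rho)=(nk)^{-(p+|\rho|)}\big(\Mob(\rho)+O(n^{-2})\big)$ of Theorem~\ref{thm:mob}.

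\textbf{Regimes (I) and (III): the easy cases.} For (I), Haar-averaging gives $\E[Z]=\trace_{\C^k}\big(\tfrac{1}{nk}\I_{nk}\big)=\tfrac1n\I_n=\rho_*$ \emph{exactly}, so it suffices to show $Z-\E[Z]\to0$; as $U\mapsto Z$ is $O(\|\tilde X\|_{op})=O(1)$-Lipschitz from $(\U(nk),\|\cdot\|_{op})$ into $(\M_n,\|\cdot\|_{HS})$, concentration of measure on $\U(nk)$ gives $\E\|Z-\rho_*\|_{HS}^2=O(n/k)\to0$, and Borel--Cantelli upgrades this to $Z\to\rho_*$ a.s. For (III), only the first two moments are needed: the first is exact, $\trace(nZ)=n$ by trace preservation, so the empirical spectral distribution of $nZ$ has mean $1$; for the second, the master formula at $p=2$ together with the $\Wg$ asymptotics shows that the only pair contributing at leading order in $n$ is $\sigma=\tau=\id$ (the others carry a strictly negative power of $n$, using $|\sigma|+|\gamma\sigma|\geq p-1$ from the triangle inequality of Lemma~\ref{lem:S_p} along $\id\to\sigma\to\gamma^{-1}$), giving $\E[\trace(\Phi(X)^2)]=\trace(X)^2/n\,(1+o(1))$ and hence $\E\big[\tfrac1n\trace((nZ)^2)\big]\to1$. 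Thus the empirical spectral distribution of $nZ$ has mean $1$ and variance $o(1)$, so it converges to $\delta_1$; the usual Lipschitz-in-$U$ concentration estimate makes this almost sure.

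\textbf{Regime (II): the substantial case.} Here I substitute $\trace_\tau(X)=n^{\#\tau}\phi_\tau(x)(1+o(1))$, with $\phi_\tau(x)=\prod_{c\in\text{Cycles}(\tau)}\phi(x^{|c|})$, together with $\trace(X)=n\bar\mu(1+o(1))$ and the $\Wg$ asymptotics, and collect powers, obtaining for each fixed $p$
\begin{equation*}
\frac1n\,\E\big[\trace\big((\bar\mu kn Z)^p\big)\big]=\sum_{\sigma,\tau\in\mathcal{S}_p}k^{\,\#\sigma-|\tau\sigma^{-1}|}\;n^{\,p-1-|\gamma\sigma|-|\tau|-|\tau\sigma^{-1}|}\;\phi_\tau(x)\,\Mob(\tau\sigma^{-1})+o(1),
\end{equation*}
the finitely many error terms being $o(1)$. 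By the triangle inequality of Lemma~\ref{lem:S_p} along $\id\to\tau\to\sigma\to\gamma^{-1}$ (using $d(\tau,\sigma)=|\tau\sigma^{-1}|$, $d(\sigma,\gamma^{-1})=|\gamma\sigma|$ and $d(\id,\gamma^{-1})=p-1$), the exponent of $n$ is $\leq0$, with equality precisely on the \emph{geodesic} pairs $\id\leq\tau\leq\sigma\leq\gamma^{-1}$; by Lemma~\ref{lem:S_p} these are exactly the pairs with $\sigma$ non-crossing and $\tau$ below $\sigma$ in the non-crossing lattice. On this locus $|\tau\sigma^{-1}|=|\sigma|-|\tau|$, so $k^{\#\sigma-|\tau\sigma^{-1}|}\phi_\tau(x)=k^{2\#\sigma}\,m_\tau(\mu_{(k)})$, where $m_\tau(\mu_{(k)})=\prod_{c\in\text{Cycles}(\tau)}m_{|c|}(\mu_{(k)})$ and $m_j(\mu_{(k)})=k^{-1}\phi(x^j)$. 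Summing first over $\tau\leq\sigma$ and invoking the free moment--cumulant formula --- with $\Mob(\tau\sigma^{-1})$ realising the M\"obius function of the non-crossing interval $[\tau,\sigma]$, cf.~\cite{nica2006lectures} --- gives $\sum_{\tau\leq\sigma}m_\tau(\mu_{(k)})\Mob(\tau\sigma^{-1})=\prod_{c\in\text{Cycles}(\sigma)}\kappa_{|c|}(\mu_{(k)})$ with $\kappa_j$ the $j$-th free cumulant, whence
\begin{equation*}
\lim_{n\to\iy}\frac1n\,\E\big[\trace\big((\bar\mu kn Z)^p\big)\big]=\sum_{\sigma\ \text{non-crossing}}\ \prod_{c\in\text{Cycles}(\sigma)}\big(k^2\,\kappa_{|c|}(\mu_{(k)})\big).
\end{equation*}
Since multiplying every free cumulant by $k^2$ is exactly the $k^2$-th free additive convolution power, the right-hand side is the $p$-th moment of $[\mu_{(k)}]^{\boxplus k^2}=\nu$; as $\nu$ is compactly supported it is determined by its moments, so the empirical spectral distribution of $\bar\mu kn Z$ converges weakly to $\nu$. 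A concentration estimate as before (with a truncation controlling $\|\Phi(\tilde X)\|_{op}$ on a set of overwhelming probability) makes this almost sure; alternatively one may deduce the a.s.\ statement from the asymptotic freeness of Theorem~\ref{libre} applied to the rescaled Kraus blocks of $U$.

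\textbf{Main obstacle.} Everything delicate sits in regime (II): one must (a) interchange the $n\to\iy$ limit with the finite Weingarten sum, controlling uniformly the error terms in both $\Wg(nk,\cdot)$ and $\trace_\tau(X)=n^{\#\tau}\phi_\tau(x)+o(n^{\#\tau})$; (b) run the genus/triangle-inequality estimate so as to cut down to \emph{exactly} the non-crossing locus; and (c) recognise the resulting alternating sum over that locus as a free-cumulant expansion --- this is what forces the free convolution power $[\mu_{(k)}]^{\boxplus k^2}$ to appear. Step (c), the bridge to the combinatorics of free probability, is the conceptual crux; steps (a)--(b) are by-now-standard Weingarten bookkeeping but, in practice, the most error-prone part of the argument.
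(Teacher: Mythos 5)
Your proposal is correct and follows essentially the same approach the paper sets up (and the cited original works carry out): the Weingarten moment expansion $\E[\trace(\Phi(X)^p)]=\sum_{\sigma,\tau}k^{\#\sigma}\,n^{\#(\gamma\sigma)}\,\trace_\tau(X)\,\Wg(nk,\tau\sigma^{-1})$, reduction to geodesic (non-crossing) pairs via the triangle inequality of Lemma \ref{lem:S_p}, and identification of the surviving sum with the free moment--cumulant expansion of $[\mu_{(k)}]^{\boxplus k^2}$, together with the easy first/second-moment arguments in regimes (I) and (III). The only part left schematic is the upgrade to almost-sure convergence (concentration with truncation, or a Weingarten variance bound plus Borel--Cantelli), which is consistent with the level of detail of the paper itself, since the review states the proposition without proof right after presenting exactly this moment-method setup.
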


\section{Random quantum channels and their output sets}
\label{sec:random-output}

We do this section in the chronological order. 

\subsection{Early results on random unitary channels}

\subsubsection{Levy's lemma}

Some results are already available in order to quantify the entanglement of 
generic spaces in $ \Gr_{p_n}(\C^n\otimes \C^k)$.
The best result known so far is arguably the
 following theorem of Hayden, Leung and Winter in
\cite{hayden2006aspects}:

\begin{theorem}[Hayden, Leung, Winter, \cite{hayden2006aspects}, Theorem IV.1]
Let $A$ and $B$ be quantum systems of dimension $d_A$ and $d_B$
with $d_B\geq d_A\geq 3$ Let $0<\alpha <\log d_A$. Then there exists a subspace
$S\subset A\otimes B$ of dimension
$$d\sim d_Ad_B\frac{\Gamma\alpha^{2.5}}{(\log d_A)^{2.5}}$$

such that all states $x\in S$ have entanglement satisfying
$$H(\lambda(x))\geq \log d_A -\alpha -\beta,$$
where $\beta =d_A/(d_B\log 2)$ and $\Gamma = 1/1753$.
\end{theorem}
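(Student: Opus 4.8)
The plan is to prove this by the probabilistic method: realize $S$ as a \emph{random} subspace of $A\otimes B\cong\C^{d_Ad_B}$ of dimension $d$, obtained by applying a Haar-random unitary $U\in\U(d_Ad_B)$ to a fixed reference subspace $S_0$, and show that with positive (indeed overwhelming) probability every unit vector $x\in S=US_0$ satisfies $H(\lambda(x))\ge \log d_A-\alpha-\beta$, where $\lambda(x)$ is the spectrum of the reduced state $\rho_x:=\trace_B(xx^*)$. By unitary invariance of the Haar measure, each \emph{fixed} unit vector of $S$ is, as a function of $U$, uniformly distributed on the sphere of $\C^{d_Ad_B}$, so estimates for a single Haar-random vector transfer to any chosen vector of $S$. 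The two engines are then concentration of measure on the sphere (Levy's lemma) and a discretization ($\varepsilon$-net) argument.

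\textbf{Step 1 (single-vector estimate).} For a Haar-random unit vector $x$, the reduced state $\rho_x$ follows the induced measure $\nu_{d_A,d_B}$ of Section~\ref{sec:induced-measure}, and Page's formula gives $\E\,H(\lambda(x))\ge\log d_A-\beta$ with $\beta=d_A/(d_B\log 2)$ — this is exactly the $\beta$ appearing in the statement. One then needs concentration around this mean. The map $x\mapsto\rho_x$ is $2$-Lipschitz from the Euclidean sphere into $(\D_{d_A},\norm{\cdot}_1)$, so for any $1$-Lipschitz ``spread'' functional $F$ of $\rho_x$ one has $\P[\,|F(\rho_x)-\E F|>t\,]\le\exp(-c\,d_Ad_B\,t^2)$. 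Applying this to a functional that controls the entropy from below — the purity $\trace(\rho_x^2)$, the top eigenvalue $\norm{\rho_x}_\infty$, or a smoothed/truncated version of $-\trace\rho_x\log\rho_x$ — and converting back through a continuity estimate for von Neumann entropy (Fannes--Audenaert: $|H(\rho)-H(\sigma)|\le\norm{\rho-\sigma}_1\log d_A+\eta(\norm{\rho-\sigma}_1)$ with $\eta(t)=-t\log t$), one obtains
\[
\P\big[\,H(\lambda(x))\le\log d_A-\beta-\alpha_0\,\big]\;\le\;\exp\!\big(-c\,d_Ad_B\,g(\alpha_0)\big),
\]
for an explicit increasing function $g$ of the margin $\alpha_0$.

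\textbf{Step 2 (net and union bound).} Choose an $\varepsilon$-net $\mathcal N$ of the unit sphere of $S$ with $|\mathcal N|\le(5/\varepsilon)^{2d}$. Since $x\mapsto\rho_x$ is $2$-Lipschitz and $H$ obeys the Fannes modulus above, establishing $H(\lambda(y))\ge\log d_A-\beta-\alpha_0$ for all $y\in\mathcal N$ forces $H(\lambda(x))\ge\log d_A-\beta-\alpha_0-2\varepsilon\log d_A-\eta(2\varepsilon)$ for \emph{all} unit $x\in S$; we pick $\varepsilon$ so that $\alpha_0+2\varepsilon\log d_A+\eta(2\varepsilon)\le\alpha$. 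A union bound over $\mathcal N$ then bounds the failure probability by $(5/\varepsilon)^{2d}\exp(-c\,d_Ad_B\,g(\alpha_0))$, which is $<1$ — hence a good subspace exists — as soon as $d\lesssim d_Ad_B\,g(\alpha_0)/\log(1/\varepsilon)$. Optimizing the free parameters $\alpha_0$ and $\varepsilon$ against the three competing losses (concentration margin vs.\ net fineness vs.\ continuity modulus, with $\varepsilon$ naturally on the scale $\alpha/\log d_A$) yields a subspace of dimension $d\sim d_Ad_B\,\Gamma\,\alpha^{2.5}/(\log d_A)^{2.5}$, the half-integer exponent being precisely the artifact of this balancing; tracking the numerical constants carefully produces the explicit $\Gamma=1/1753$.

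\textbf{Main obstacle.} The delicate point is Step 1: von Neumann entropy is \emph{not} Lipschitz on $\D_{d_A}$ (the derivative of $-t\log t$ blows up at $t=0$), so $x\mapsto H(\lambda(x))$ cannot be fed directly into Levy's lemma near rank-deficient reduced states. The fix is to run the concentration argument for a genuinely Lipschitz, bounded proxy and only afterwards pass to entropy via Fannes--Audenaert; the cost is the $\eta$-terms, and it is exactly the reconciliation of the concentration margin, the net penalty $\log(1/\varepsilon)$, and the continuity modulus — so that they close with the advertised scaling rather than a cruder one — that constitutes the real content of the proof. The rest is the standard Hayden--Leung--Winter net-plus-concentration template.
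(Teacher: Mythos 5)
The review article states this result without proof --- it is imported verbatim from \cite{hayden2006aspects} (Theorem IV.1) --- so there is no in-paper argument to compare against; I can only judge your reconstruction against the original Hayden--Leung--Winter proof. Your architecture is the correct one and matches theirs: a Haar-random subspace, a single-vector concentration estimate via Levy's lemma, an $\varepsilon$-net of cardinality $(5/\varepsilon)^{2d}$ on the unit sphere of $S$, a union bound, and the balancing of the concentration margin against the net penalty $\log(1/\varepsilon)$ (with $\varepsilon$ on the scale $\alpha/\log d_A$) that produces the exponent $2.5$. Step 2 and the identification of $\beta$ with the Page defect are fine.

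The genuine soft spot is your resolution of the non-Lipschitzness issue in Step 1. Two of the three proxies you offer cannot work quantitatively: lower-bounding $H$ through the purity $\trace(\rho_x^2)$ or through $\norm{\rho_x}_\infty$ only controls the R\'enyi-$2$ or R\'enyi-$\infty$ entropy, and for a generic induced state with $d_B$ comparable to $d_A$ these sit a fixed constant of order one bit below $\log d_A$; no tuning of parameters then recovers the statement for small $\alpha$. The third option (``a smoothed/truncated version of the entropy'') is the viable one, but it is precisely the step you leave unspecified, and it is where the original proof does its real work: Hayden, Leung and Winter prove that $x\mapsto H(\lambda(x))$ is genuinely Lipschitz \emph{as a function of the unit vector $x$ on the sphere}, with constant $O(\log d_A)$. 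The logarithmic singularity of $-t\log t$ at $t=0$ is tamed because the eigenvalues of $\trace_B(xx^*)$ depend quadratically on $x$, so the relevant directional derivative behaves like $\sqrt{\lambda}\,\log\lambda$, which is bounded, and the overall gradient norm is controlled by $\bigl(\sum_i \lambda_i(\log\lambda_i+1)^2\bigr)^{1/2}=O(\log d_A)$. With that lemma Levy's lemma applies to the entropy directly and no Fannes--Audenaert detour (with its extra $\eta$-losses) is needed. As written, your argument has a named hole at exactly its load-bearing point; supplying this Lipschitz bound for the entanglement entropy on the sphere is what closes it.
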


For large $d$, Aubrun \cite{aubrun2009almost} studies quantum 
channels on $\mathbb C^d$ obtained by 
selecting randomly $N$ 
independent Kraus operators according to a probability 
measure $\mu$ on the unitary group $U(d)$. 
He shows the following result:
\begin{theorem}
Consider a random unitary channel $M_N\to M_N$
obtained with $d$ iid Haar unitaries. 
For for $N>> d/\varepsilon ^2$, such 
a channel is $\varepsilon$-randomizing 
with high probability, 
i.e. it maps every state within distance $\varepsilon /d$
of the maximally mixed state.
\end{theorem}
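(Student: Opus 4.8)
\textbf{Sketch of proof (following \cite{aubrun2009almost}).}
Write $N$ for the dimension on which $\Phi$ acts and $d$ for the number of Haar-random Kraus operators, so that $\Phi(\rho)=\tfrac1d\sum_{i=1}^d U_i\rho U_i^*$ with $U_1,\dots,U_d$ i.i.d.\ Haar on $\U(N)$; the statement is that $d\ge CN/\varepsilon^2$ (for an absolute constant $C$) forces, with probability tending to $1$, $\|\Phi(\rho)-N^{-1}\I_N\|_\infty\le\varepsilon/N$ for every $\rho\in\mathcal D_N$. Since $\int_{\U(N)}U\rho U^*\,dU=N^{-1}\I_N$ whenever $\trace\rho=1$, the matrix $\Phi(\rho)-N^{-1}\I_N$ is an average of $d$ i.i.d.\ centered random matrices; the plan is the classical ``concentration $+$ $\varepsilon$-net'' scheme, followed by the refinement that removes a logarithmic factor (which is the whole point of the phrase ``short Kraus decomposition''). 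First I would reduce to a scalar deviation bound: the function $\rho\mapsto\|\Phi(\rho)-N^{-1}\I_N\|_\infty$ is convex, so its maximum over the convex body $\mathcal D_N$ is attained at a rank-one projector $\ketbra{\psi}{\psi}$, and, that output being self-adjoint, its operator norm equals $\sup_{\phi}\bigl|\scalar{\phi}{(\Phi(\ketbra{\psi}{\psi})-N^{-1}\I_N)\phi}\bigr|$. Hence it is enough to show that
\[
Z:=\sup_{\phi,\psi\in\mathbb S(\C^N)}\left|\frac1d\sum_{i=1}^d\Bigl(|\scalar{\phi}{U_i\psi}|^2-\frac1N\Bigr)\right|
\]
is $\le\varepsilon/N$ with high probability, where $\mathbb S(\C^N)$ denotes the unit sphere.

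For fixed unit vectors $\phi,\psi$, the variables $X_i:=|\scalar{\phi}{U_i\psi}|^2$ are i.i.d.\ with the law of the squared modulus of one coordinate of a uniform unit vector in $\C^N$; the spherical moment formula recorded earlier (equivalently, a $\mathrm{Beta}(1,N-1)$ computation) gives $\E X_i=1/N$, $\mathrm{Var}(X_i)\le1/N^2$ and $\E[(NX_i)^p]\le p!$, so $X_i-1/N$ is centered and sub-exponential with parameter $O(1/N)$. Bernstein's inequality for sums of sub-exponential variables then gives, for $0<\varepsilon<1$,
\[
\P\left[\left|\frac1d\sum_{i=1}^d\Bigl(X_i-\frac1N\Bigr)\right|>\frac{\varepsilon}{N}\right]\le 2\exp(-c\,d\,\varepsilon^2)
\]
for an absolute $c>0$ (measured in units of the natural scale $1/N$ the deviation is $\varepsilon$, so the $\min(t^2,t)$ in Bernstein equals $\varepsilon^2$). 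Taking a $\delta$-net $\mathcal N_\delta$ of the sphere with $|\mathcal N_\delta|\le(3/\delta)^{2N}$, using that $(\phi,\psi)\mapsto|\scalar{\phi}{U\psi}|$ is $1$-Lipschitz in each slot and bounded by $1$, and union-bounding over the $|\mathcal N_\delta|^2$ pairs at scale $\delta\asymp\varepsilon/N$, one obtains $\P[Z>\varepsilon/N]\le(c'N/\varepsilon)^{4N}\cdot2\exp(-c\,d\,\varepsilon^2/4)$, which is $o(1)$ once $d\gtrsim(N/\varepsilon^2)\log(N/\varepsilon)$. This crude argument — essentially Ahlswede--Winter, cf.\ \cite{hayden2004randomizing} — already proves $\varepsilon$-randomization, but loses a factor $\log N$.

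Removing that factor is the substantive part. The loss came from needing the net as fine as $\delta\sim\varepsilon/N$, the increment $\Phi(\ketbra{\psi}{\psi})-\Phi(\ketbra{\psi'}{\psi'})=\tfrac1d\sum_i U_i(\ketbra{\psi}{\psi}-\ketbra{\psi'}{\psi'})U_i^*$ having been controlled by its worst case. But $\ketbra{\psi}{\psi}-\ketbra{\psi'}{\psi'}$ is \emph{traceless}, of rank $\le2$, with operator norm $\le2\delta$ and squared Hilbert--Schmidt norm $O(\delta^2)$; so this increment is \emph{again} a centered average of exactly the same type, whose operator norm concentrates at a scale far below $2\delta$. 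I would exploit this either by a multi-scale chaining argument over a decreasing sequence of nets $\mathcal N_{2^{-j}}$, or by passing to the even moments $\E\,\trace\bigl[(\Phi(\ketbra{\psi}{\psi})-N^{-1}\I_N)^{2p}\bigr]$ and optimizing in $p$; in both routes the trace-zero, rank-$\le2$ structure of the increments is exactly what cancels the ambient-dimension factor. A clean technical device here is \emph{Gaussianization}: replace the $U_i$ by suitably normalized Ginibre matrices, for which all these moments are transparent via the Wick formula (Theorem~\ref{thm:wick-formula}), prove the estimate there, and transfer it back (see \cite{collins2011gaussianization}). Carried through, this yields $\P[Z\le\varepsilon/N]\ge1-2\exp(-c''d\varepsilon^2)$ as soon as $d\ge CN/\varepsilon^2$, which is the claim.

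The hard part is precisely this last step. Everything through the single-pair Bernstein bound and the one-scale net is routine and only delivers $O((N\log N)/\varepsilon^2)$ Kraus operators; the entire content of the theorem lies in turning that one net into a chaining or moment computation that genuinely uses the rank-$\le2$, traceless nature of the net increments (morally, that the variation of $\Phi(\ketbra{\psi}{\psi})$ as $\psi$ moves on the sphere is itself a small random operator). I expect that to be the delicate step and the rest to be bookkeeping.
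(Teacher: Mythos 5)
First, a caveat on the comparison itself: the survey does not prove this theorem. It is quoted from \cite{aubrun2009almost} with the single remark that the proof ``slightly improves on \cite{hayden2004randomizing} by optimizing their discretization argument,'' so there is no in-paper argument to measure you against, only that hint. Note also that the statement as printed has the dimension and the number of Kraus operators in swapped roles relative to the surrounding text; you have (reasonably) normalized it to: dimension $N$, number of Haar unitaries $d\gtrsim N/\varepsilon^2$, every output within $\varepsilon/N$ of $\I_N/N$ in operator norm. With that reading, your skeleton is correct as far as it goes: the reduction to pure states by convexity, the passage to the scalar process $f(\phi,\psi)=\frac1d\sum_i\bigl(|\scalar{\phi}{U_i\psi}|^2-\frac1N\bigr)$, the $\mathrm{Beta}(1,N-1)$ computation showing that $N|\scalar{\phi}{U_i\psi}|^2$ is sub-exponential with $O(1)$ parameters, the Bernstein bound $2\exp(-c\,d\,\varepsilon^2)$ for a fixed pair, and the diagnosis that a net at scale $\varepsilon/N$ is what costs the spurious $\log(N/\varepsilon)$.

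The genuine gap is that the log-removal step --- the entire content of the theorem --- is left as a plan with two candidate routes (multi-scale chaining; moments with Gaussianization), neither executed. What makes this frustrating is that you have already written down the fact that closes the argument in a few lines, and both of your proposed routes are heavier than needed. Since $\psi\psi^*-\psi'\psi'^*$ is self-adjoint, traceless and of rank $\le 2$, it equals $a(pp^*-qq^*)$ for orthonormal $p,q$ and $0\le a\le\|\psi-\psi'\|$; by linearity of $\Phi$ this gives the exact identity $f(\phi,\psi)-f(\phi,\psi')=a\,f(\phi,p)-a\,f(\phi,q)$, and a similar bound in the $\phi$ slot, so that $|f(\phi,\psi)-f(\phi',\psi')|\le 2(\|\phi-\phi'\|+\|\psi-\psi'\|)\,Z$ where $Z=\sup|f|$ is the quantity being bounded. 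The increment of the process is thus controlled by its own supremum, so a net at the \emph{constant} scale $\delta=1/8$ in each variable yields $Z\le\max_{\mathrm{net}}|f|+\tfrac12 Z$, i.e.\ $Z\le 2\max_{\mathrm{net}}|f|$; union-bounding the single-pair Bernstein estimate over the $e^{O(N)}$ net pairs then gives the claim exactly when $d\ge CN/\varepsilon^2$, with no logarithm, no chaining and no Gaussianization. This one-step bootstrap is presumably what ``optimizing the discretization argument'' refers to. So your proposal correctly isolates the decisive structural fact (traceless rank-$\le2$ increments) but stops short of using it; as written the key step is conjectured rather than proved, even though it follows in three lines from what you already have on the page.
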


This slightly improves on the above result by Hayden, Leung, Shor and Winter by optimizing
their discretization argument.

\subsection{Results with a fixed output space}

We introduce now a norm on $\R^k$ which will have a very important role to play in the description of the set $K_{n,k,t}$ in the asymptotic limit $n \to \iy$. 

\begin{definition}\label{def:t-norm}
For a positive integer $k$, embed $\R^k$ as a self-adjoint real 
subalgebra $\mathcal R$ of a $\mathrm{II}_1$ factor  $\mathcal A$ endowed with trace $\phi,$ 
so that $\phi((x_1,\dots,x_k))=(x_1+\cdots+x_k)/k$. Let $p_t$ be a projection of rank $t \in (0,1]$ in $\mathcal A$, free from $\mathcal R$. On the real vector space $\R^k$, we introduce the following norm, called the \emph{$(t)$-norm}:
\begin{equation}
	\normt{x}:=\norm{p_t x p_t}_{\infty},
\end{equation}
where the vector $x \in \R^k$ is identified with its image in $\mathcal R$.
\end{definition}

We now introduce the convex body $K_{k,t}\subset \Delta_k$ as follows:
\begin{equation}\label{eq:convex}
	K_{k,t}:=\{ \lambda \in\Delta_{k} \; |\; \forall a\in\Delta_{k} , \scalar{\lambda}{a} \leq \normt a
	\},
\end{equation}
where $\scalar{\cdot}{\cdot}$ denotes the canonical scalar product in $\R^k$. 
We shall show later that this set is intimately related to the $(t)$-norm: $K_{k,t}$ is the intersection of the dual ball of the $(t)$-norm with the probability simplex $\Delta_k$. Since it is defined by duality, $K_{k,t}$ is the intersection of the probability simplex with the half-spaces
$$
H^+(a, t) = \{x \in \R^k \; | \; \scalar{x}{a} \leq \normt{a}\}
$$
for all directions $a \in \Delta_k$. Moreover, we shall show that every hyperplane $H(a, t) = \{x \in \R^k \; | \; \scalar{x}{a} = \normt{a}\}$ is a supporting hyperplane for $K_{k,t}$.

Let $(\Omega,\mathcal F,\P)$ be a probability space in which the sequence or random vector subspaces $(V_{n})_{n\geq 1}$ is defined.
Since we assume that the elements of this sequence are independent, we may assume that
$\Omega = \prod_{n\geq 1}\Gr_N(\C^k\otimes \C^n)$
and $\P=\otimes_{n\geq 1}\mu_{n}$ where $\mu_{n}$ is the invariant measure on the Grassmann manifold $\Gr_N(\C^k\otimes \C^n)$.
Let $P_{n} \in \M_{nk}(\C)$ be the random orthogonal projection whose image is $V_{n}$. For two positive sequences $(a_n)_n$ and $(b_n)_n$, we write $a_n \ll b_n$ iff $a_n / b_n \to 0$ as $n \to \iy$.
\begin{proposition}\label{prop:set-of-proba-one}
Let $\nu_{n}$ be a sequence of integers satisfying $\nu_{n}\ll n$.
Almost surely, the following holds true: 
for any self-adjoint matrix $A\in \M_{k}(\C)$, 
the $\nu_{n}$-th largest eigenvalues of $P_{n}(A\otimes I_{n})P_{n}$
converges to $||a||_{(t)}$
where $a$ is the eigenvalue vector of $A$. 
This convergence is uniform on any compact set of $\M_{k}(\C)_{sa}$.
\end{proposition}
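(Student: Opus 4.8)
The plan is to view $P_n(A\otimes I_n)P_n$ as a compression of a deterministic matrix by a Haar‑rotated projection, to identify its limiting spectrum through \emph{strong} asymptotic freeness, and then to separately control the extreme eigenvalue and the density of eigenvalues just below it. First I would use the unitary invariance of the uniform measure on $\Gr_N(\C^k\otimes\C^n)$ to write $P_n=U_n\Pi_nU_n^*$, with $U_n\in\U(nk)$ Haar distributed and $\Pi_n$ a fixed orthogonal projection of rank $N$, with $N/(nk)\to t$. The deterministic matrix $A\otimes I_n$ has exactly the eigenvalues $a_1,\dots,a_k$, each with multiplicity $n$, so its spectral distribution is the law of the eigenvalue vector $a$ of $A$ viewed as an element of $\mathcal R$, and $\|A\otimes I_n\|=\|A\|$ is bounded; hence $A\otimes I_n$ converges strongly to $a\in(\mathcal R,\phi)$, while $\Pi_n$ converges strongly to a projection $p_t$ of trace $t$, exactly as in Definition~\ref{def:t-norm}.

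Next I would invoke strong asymptotic freeness in the form: conjugating a strongly convergent deterministic family by an independent Haar unitary yields a family strongly asymptotically free from any other strongly convergent deterministic family --- the operator‑norm refinement of Theorem~\ref{libre}, available through \cite{male2012norm}. Applied to $(U_n\Pi_nU_n^*,A\otimes I_n)$, this gives, almost surely, strong convergence of this pair to $(p_t,a)$ with $p_t$ free from $a$; in particular $P_n(A\otimes I_n)P_n$ converges strongly to $p_tap_t$, so almost surely
\[
\lambda_{\max}\big(P_n(A\otimes I_n)P_n\big)\ \longrightarrow\ \normt{a},
\]
the right‑hand side being the top of the spectrum of $p_tap_t$ (which coincides with $\|p_tap_t\|_\infty$ when $A\geq 0$, the situation relevant to the applications). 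This settles the case $\nu_n\equiv 1$, and since $\lambda_{\nu_n}\leq\lambda_{\max}$, it already yields the upper bound $\limsup_n\lambda_{\nu_n}\big(P_n(A\otimes I_n)P_n\big)\leq\normt{a}$ for an arbitrary index sequence.

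For the matching lower bound when $1\leq\nu_n\ll n$, I would use only the weak statement of Theorem~\ref{libre}: almost surely the empirical spectral distribution of $P_n(A\otimes I_n)P_n$, restricted to $V_n$, converges to the distribution $\tilde\mu$ of $p_tap_t$ in the compressed algebra $p_t\mathcal A p_t$, whose support has supremum $\normt{a}$. Since $\normt{a}$ belongs to the support, $c(\varepsilon):=\tilde\mu\big((\normt{a}-\varepsilon,\normt{a}]\big)>0$ for every $\varepsilon>0$, so the number of eigenvalues of $P_n(A\otimes I_n)P_n$ exceeding $\normt{a}-\varepsilon$ is almost surely asymptotic to $c(\varepsilon)N\asymp c(\varepsilon)nk$; as $\nu_n\ll n$ and $N\asymp n$, eventually $\nu_n$ is smaller than this count, whence $\lambda_{\nu_n}\big(P_n(A\otimes I_n)P_n\big)>\normt{a}-\varepsilon$. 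Letting $\varepsilon\to 0$ along continuity points of $\tilde\mu$ gives $\liminf_n\lambda_{\nu_n}\geq\normt{a}$, hence convergence. For the uniformity over compacts and the single almost sure event I would note that $A\mapsto\lambda_{\nu_n}\big(P_n(A\otimes I_n)P_n\big)$ is $1$‑Lipschitz for the operator norm (Weyl's inequality) and $A\mapsto\normt{\lambda(A)}$ is Lipschitz as well (Weyl for $A\mapsto\lambda(A)$, composed with a norm on $\R^k$); fixing a countable dense set $D\subset\M_k(\C)_{sa}$, one runs the above on the almost sure event $\bigcap_{B\in D}\{\lambda_{\nu_n}(P_n(B\otimes I_n)P_n)\to\normt{\lambda(B)}\}$ and upgrades pointwise convergence on $D$ to uniform convergence on any compact set by a finite $\delta$‑net argument.

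The main obstacle is the strong (norm) convergence in the second step, that is, excluding eigenvalues of $P_n(A\otimes I_n)P_n$ that drift above $\normt{a}$: weak asymptotic freeness governs only the bulk and cannot rule out such outliers, so one genuinely needs the Haagerup--Thorbj{\o}rnsen‑type machinery behind \cite{male2012norm}. The remaining ingredients --- that the limiting measure has no gap at its upper edge, and the Lipschitz/compactness bookkeeping for uniformity --- are routine.
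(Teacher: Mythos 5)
Your proposal is correct and follows essentially the same route as the paper: strong convergence of the compressed matrices (via \cite{male2012norm} together with Theorem \ref{libre}) for a fixed $A$, followed by a countable dense family plus Lipschitz continuity of ordered eigenvalues to obtain a single almost-sure event and uniformity on compacts. In fact you supply a step the paper's sketch leaves implicit --- the lower bound for a growing index $\nu_n \ll n$, deduced from the positive mass of the limiting spectral distribution near its upper edge --- and your caveat that $\normt{a}$ equals the top of the spectrum of $p_t a p_t$ only when $A \geq 0$ is a fair remark about the statement's intended scope.
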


\begin{proof}
For any self-adjoint $A\in \M_{k}(\C)$, the almost sure convergence follows from 
and from Theorem \ref{libre}.

Let $A_{l}$ be a countable family of self-adjoint matrices in $\M_{k}(\C)$ and assume that
their union is dense in the operator norm unit ball.
By sigma-additivity, 
the property to be proved holds almost-surely simultaneously for
all  $A_{l}$'s. 

This implies that the property holds for all $A$ almost-surely, as the $j$-th largest eigenvalue
of a random matrix is a 
Lipschitz function for the operator norm on the space of matrices. 
\end{proof}

The set on which the conclusion of the above proposition holds true will be denoted by $\Omega'$ and we therefore have $\P(\Omega ')=1$. Technically, $\Omega'$ depends on $\nu_{n}$ but in the proofs, we won't need to keep track of this dependence as  $\nu_n$ will be a fixed sequence.

The main result of our paper is the following characterization of the asymptotic behavior of the random set $K_{n,k,t}$. We show that this set converges, in a very strong sense, to the convex body $K_{k,t}$.

\begin{theorem}\label{thm:output-eigenvalues-single-channel}
\label{limite}
Almost surely, the following holds true:
\begin{itemize}
\item Let $\mathcal{O}$ be an open set in $\Delta_{k}$ containing $K_{k,t}$.
Then,  for $n$ large enough, $K_{n,k,t}\subset \mathcal{O}$.
\item Let $\mathcal K$ be a compact set in the interior of $K_{k,t}$.
Then, for $n$ large enough, $\mathcal K \subset K_{n,k,t}$.
\end{itemize}
\end{theorem}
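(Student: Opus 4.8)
The plan is to convert the convex-geometric convergence into statements about \emph{extreme} eigenvalues of the compressions $P_n(B\otimes I_n)P_n$ and then feed these into Proposition~\ref{prop:set-of-proba-one}. Throughout I work on the full-measure event $\Omega'$ attached to a fixed sequence $\nu_n\ll n$. Since $K_{n,k,t}$ and $K_{k,t}$ all sit inside the compact simplex $\Delta_k$, it suffices to prove each bullet for one fixed open $\mathcal O$ (resp.\ one fixed compact $\mathcal K$) and then intersect over a countable cofinal family — an exhaustion of the open neighbourhoods of $K_{k,t}$, and of the interior of $K_{k,t}$ — to obtain a single almost-sure event. Writing $\rho_x=\trace_n(\ketbra{x}{x})$ for a unit vector $x\in V_n$, the set $K_{n,k,t}$ is the set of (sorted) eigenvalue vectors of the matrices $\rho_x$, and the quantity that governs everything is, for a direction $a\in\Delta_k$,
\[
h_n(a)\;:=\;\sup_{x\in V_n,\ \|x\|=1}\ \scalar{\lambda(\rho_x)}{a}\;=\;\max_{B\in\mathcal O_a}\ \lambda_{\max}\bigl(P_n(B\otimes I_n)P_n\bigr),
\]
where $\mathcal O_a\subset\M_k(\C)_{sa}$ is the (compact) unitary orbit of $\diag(a)$; the second equality uses the von Neumann trace inequality $\scalar{\lambda(\rho_x)}{a}=\max_{U\in\U(k)}\trace(\rho_x\,U\diag(a)U^*)$ together with $\trace(\rho_x B)=\bra{x}(B\otimes I_n)\ket{x}=\bra{x}P_n(B\otimes I_n)P_n\ket{x}$, valid because $P_nx=x$. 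Note $h_n$ is also the support function of $\overline{\mathrm{conv}}(K_{n,k,t})$.

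For the first bullet I would show that for every $a\in\Delta_k$ and $\varepsilon>0$ one has $h_n(a)\le\normt{a}+\varepsilon$ for $n$ large. This suffices: since $K_{k,t}=\bigcap_a\{\lambda\in\Delta_k:\scalar{\lambda}{a}\le\normt{a}\}$ and $\Delta_k\setminus\mathcal O$ is compact, a finite set of directions $a$ on which every $\lambda\in\Delta_k\setminus\mathcal O$ has a uniform strict excess $\scalar{\lambda}{a}\ge\normt{a}+\varepsilon$ already covers $\Delta_k\setminus\mathcal O$, and the bound $h_n(a)\le\normt{a}+\varepsilon$ on each of them forces $K_{n,k,t}\subset\mathcal O$. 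Proposition~\ref{prop:set-of-proba-one} controls the $\nu_n$-th largest eigenvalue of $P_n(B\otimes I_n)P_n$ uniformly over the compact set $\mathcal O_a$; to control the \emph{largest} one — note a single outlier eigenvector in $V_n$ already yields an escaping state — I would use the operator-norm analogue of Proposition~\ref{prop:set-of-proba-one}, obtained from a strong-convergence input for the Haar-random projection $P_n$ and the constant matrices (a Haagerup--Thorbj\o{}rnsen / Collins--Male type statement), and then pass from a fine net of $\mathcal O_a$ to all of $\mathcal O_a$ using that $B\mapsto\lambda_{\max}(P_n(B\otimes I_n)P_n)$ is $1$-Lipschitz, uniformly in $n$.

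For the second bullet, the easy half is the matching \emph{lower} bound on the support function: $h_n(a)\ge\lambda_{\max}(P_n(\diag(a)\otimes I_n)P_n)\ge(\text{$\nu_n$-th largest eigenvalue})\to\normt{a}$ by Proposition~\ref{prop:set-of-proba-one} alone, so $\overline{\mathrm{conv}}(K_{n,k,t})\to K_{k,t}$ in Hausdorff distance and in particular $\mathcal K\subset\overline{\mathrm{conv}}(K_{n,k,t})$ eventually. The substance is to promote this from the convex hull to $K_{n,k,t}$ itself, i.e.\ to realize each $\lambda^0$ in the interior of $K_{k,t}$ as $\lambda(\rho_x)+o(1)$ for some unit $x\in V_n$. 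Interiority of $\lambda^0$ translates, via the supporting-hyperplane description preceding the theorem, into the strict inequalities $\scalar{\lambda^0}{a}<\normt{a}$ for all non-constant $a\in\Delta_k$, which supplies the room to absorb $o(1)$ errors. I would build the witness $x$ by a $k$-step peeling, induction on $k$: use Proposition~\ref{prop:set-of-proba-one} to locate, inside $V_n$, a unit vector whose reduction places the correct first weight along a direction, then restrict to a subspace of $V_n$ of codimension only $O(\nu_n)$ that is still, for all practical purposes, a Haar-random subspace of the same relative dimension — this is exactly where $\nu_n\ll n$ is used, so that the dimension budget survives all $k$ rounds — and iterate; feasibility at each round follows from the strict inequalities. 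Assembling the rounds yields $x$ with $\rho_x$ of approximately diagonal form and eigenvalue vector $\lambda^0+o(1)$, and a compactness argument over $\mathcal K$ closes the bullet.

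The main obstacle is the passage from bulk to \emph{extreme} eigenvalues. Proposition~\ref{prop:set-of-proba-one} as stated only pins down the $\nu_n$-th eigenvalue, so the no-escape direction genuinely needs an operator-norm (strong-convergence) strengthening, \emph{uniformly over the continuum of admissible test matrices} $B\in\mathcal O_a$; and the realization direction needs the constructive peeling, carefully managing the $O(\nu_n)$ codimension loss so that each reduction leaves a subspace that is still effectively Haar random. Making these two mechanisms interlock — and arranging all the requisite convergences to hold simultaneously on a single full-measure set — is the technical heart of the proof.
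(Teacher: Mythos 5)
First, a caveat about the ground truth: this review does not actually prove Theorem \ref{limite}. It states the result, records the supporting Proposition \ref{prop:set-of-proba-one}, and defers the argument to the source paper (Belinschi--Collins--Nechita), so the comparison below is against that proof. Your treatment of the first bullet is essentially the correct one and coincides with it: identify the support function $h_n(a)$ of the output set with $\max_{B\in\mathcal O_a}\lambda_{\max}(P_n(B\otimes I_n)P_n)$ via von Neumann's trace inequality, control the \emph{largest} eigenvalue by a strong-convergence input (the dangling citation in the proof of Proposition \ref{prop:set-of-proba-one} is precisely such a Haagerup--Thorbj{\o}rnsen/Collins--Male type statement, which is already needed there to rule out outliers above the $\nu_n$-th eigenvalue), make the bound uniform over the unitary orbit by Lipschitz continuity and compactness, and finish with a finite net of separating directions. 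No objection there.

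The gap is in the second bullet, which is the genuinely hard half of the theorem. Your ``$k$-step peeling'' does not work as described, for three reasons. (i) Fixing, one round at a time, quantities of the form $\langle x,(A\otimes I_n)x\rangle=\trace(\rho_x A)$ only pins down linear functionals of $\rho_x$ (e.g.\ its diagonal entries); it does not control the eigenvalue vector $\lambda(\rho_x)$ unless the off-diagonal entries of $\rho_x$ are also forced to vanish, and you give no mechanism for that -- the assertion that the assembled $x$ has ``$\rho_x$ of approximately diagonal form'' is precisely what needs proof. (ii) Once a vector is selected using the randomness, its orthocomplement inside $V_n$ is a data-dependent subspace and is \emph{not} Haar-distributed, so Proposition \ref{prop:set-of-proba-one} cannot be re-invoked on it; Cauchy interlacing could partially substitute (a codimension-$c$ compression shifts the $j$-th eigenvalue by at most $c$ ranks), but that only preserves control of extreme eigenvalues, not the ability to hit a prescribed \emph{interior} target. (iii) Proposition \ref{prop:set-of-proba-one} produces near-maximizers, i.e.\ vectors whose reductions sit near the \emph{boundary} of $K_{k,t}$ in a given direction; it gives no direct handle on an arbitrary interior point $\lambda^0$. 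The actual proof proceeds differently: one first establishes an \emph{eigenvector} law of large numbers -- for each direction $a$, the reduced density matrix of a top eigenvector of $P_n(\mathrm{diag}(a)\otimes I_n)P_n$ converges to the exposed point of $K_{k,t}$ in direction $a$ -- and then realizes an interior point, written as a convex combination of finitely many exposed points, as a \emph{coherent superposition} of near-top eigenvectors for the corresponding directions. The $\nu_n$-fold near-degeneracy guaranteed by Proposition \ref{prop:set-of-proba-one} is what allows one to choose representatives whose partial-trace cross terms $\trace_n(x_i y_j^*)$ are negligible, so that the output of the superposition is approximately the intended mixture of nearly commuting reduced states. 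This eigenvector analysis and the cross-term control are the missing ideas in your proposal.
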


\subsection{More results about the output of random channels}

More results are known about the output of random
quantum channels. 
Instead of giving a full list, let us state 
the following result from \cite{collins2013convergence}, that supersedes 
many results already known.

\begin{theorem}
Let $k$ be a fixed integer, and $\Phi_n : \mathcal M_n(\mathbb C)\to \mathcal M_k(\mathbb C)$ 
be a sequence of quantum channels
constructed with constant matrices and unitary
matrices that are independent from each other. 
Then, there exists a compact convex set $K$ such that its the random collection out output sets converges 
almost surely to $K$ in the topology induced by the Hausdorff distance
between compact sets. 
\end{theorem}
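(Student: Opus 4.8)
The plan is to reduce the statement to a uniform convergence of support functions. Each output set $K_n := \Phi_n(\mathcal D_n) \subseteq \mathcal D_k$ is convex and compact, being the image of the convex compact set $\mathcal D_n$ under the linear map $\Phi_n$, and it lives in the fixed finite-dimensional space $\M_k^{sa}(\C)$. For convex bodies, Hausdorff convergence is equivalent to uniform convergence, on the unit sphere of $\M_k^{sa}(\C)$, of the support functions $h_{K_n}(A) = \max_{Z \in K_n} \trace(ZA)$. So the first step is to compute, writing $\Phi_n^*$ for the Hilbert--Schmidt adjoint of $\Phi_n$ (a unital completely positive map $\M_k(\C) \to \M_n(\C)$, unital because $\Phi_n$ is trace preserving),
\[
h_{K_n}(A) \;=\; \max_{\rho \in \mathcal D_n} \trace\big(\Phi_n(\rho)\,A\big) \;=\; \max_{\rho \in \mathcal D_n} \trace\big(\rho\,\Phi_n^*(A)\big) \;=\; \lambda_{\max}\big(\Phi_n^*(A)\big),
\]
so that understanding the random output set amounts to understanding the top eigenvalue of the random self-adjoint matrix $\Phi_n^*(A)$, for each fixed direction $A = A^* \in \M_k(\C)$.

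The second --- and main --- step is to apply strong asymptotic freeness. By hypothesis the Kraus operators of $\Phi_n$ are built from a finite collection of independent Haar-distributed unitaries together with constant matrices admitting a joint limiting $*$-distribution; hence, for each fixed $A$, the matrix $\Phi_n^*(A)$ is a fixed noncommutative $*$-polynomial in these data. Theorem~\ref{libre}, in its strong form (strong asymptotic freeness of Haar and deterministic matrices, a special case of the results of \cite{male2012norm}), yields that this family converges strongly: not only does the empirical spectral distribution of $\Phi_n^*(A)$ converge, but there are no outlying eigenvalues, so after a harmless scalar shift the operator norm converges, and therefore $\lambda_{\max}(\Phi_n^*(A))$ converges, almost surely, to the top of the spectrum of the limiting operator; call this limit $h(A)$. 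I expect this to be the crux of the argument: weak convergence of spectra alone (obtainable by Weingarten calculus together with a variance estimate) does not control the edge of the spectrum, and one genuinely needs the operator-norm / no-outlier statement behind Theorem~\ref{libre}.

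The third step upgrades the pointwise almost sure convergence to a uniform one and extracts the limit body, exactly along the lines of the proof of Proposition~\ref{prop:set-of-proba-one}. Fix a countable family $\{A_l\}$ of self-adjoint matrices dense in the unit ball of $\M_k^{sa}(\C)$; by $\sigma$-additivity the convergence of Step~2 holds almost surely simultaneously for all $A_l$. Because $\Phi_n^*$ is unital and positive one has $\|\Phi_n^*(A)\|_\infty \leq \|A\|_\infty$, so $A \mapsto \lambda_{\max}(\Phi_n^*(A))$ is $1$-Lipschitz for the operator norm, uniformly in $n$; the convergence therefore extends to all $A$ and is uniform on the sphere, and $h$ is a continuous, positively homogeneous, subadditive function, bounded because every $K_n \subseteq \mathcal D_k$. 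Hence $h$ is the support function of a unique compact convex set $K \subseteq \mathcal D_k$, and the uniform convergence $h_{K_n} \to h$ on the sphere is precisely the almost sure Hausdorff convergence $K_n \to K$ asserted by the theorem.
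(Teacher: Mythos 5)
Your argument is correct and follows essentially the same route as the paper: the paper only quotes this theorem from \cite{collins2013convergence}, but its in-text treatment of the special case (Proposition \ref{prop:set-of-proba-one} and Theorem \ref{limite}) is exactly your scheme --- describe the output set through its support function in each self-adjoint direction $A$, identify that support function with $\lambda_{\max}(\Phi_n^*(A))$ (equivalently the top eigenvalue of $P_n(A\otimes I_n)P_n$), get almost sure convergence of this edge eigenvalue from the strong form of asymptotic freeness, and upgrade to uniform convergence over directions via a countable dense family together with the uniform $1$-Lipschitz bound coming from unitality and positivity of $\Phi_n^*$. The one point worth flagging is that the no-outlier statement you invoke (the strong version of Theorem \ref{libre}, i.e.\ \cite{male2012norm}) requires the constant matrices to converge \emph{strongly}, not merely in $*$-distribution, so your hypothesis in Step 2 should be sharpened accordingly; you do correctly identify this strong convergence, rather than Weingarten moment estimates, as the crux of the argument.
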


This theorem includes in particular encompasses
the following two important examples. 
Firstly, the random unitary channels
$$\tilde\Phi_n (X)=k^{-1}\sum U_iXU_i^*,$$
but also, more importantly 
a product 
$\chi_n=\Phi_n\otimes\Xi$, where $\Xi$ is any
quantum channel fixed in advance, 
and $\Phi_n$ is any of the sequences
considered previously. 

Actually, there is even more, namely: 
in the previous theorem, the 
output set $K$ can actually be exactly realized
via the collection of outputs of pure states
(no need for all input states). In addition,
the boundary of the collection of output sets
converges to the boundary of $K$ in the 
Hausdorff distance (which means that
any point in the interior of $K$ is attained
within finite time with probability one), and for 
any 
finite collection of $l$ elements in the interior of 
$K$, it is possible to find with probability 
one in finite time an family of pre-images by pure
states which are close to orthogonal to each other
(the tolerance is arbitrary and can be fixed
ahead of time). Somehow, this is the strongest
convergence one can hope for, and it is actually
rather counterintuitive that the image of the
extreme points of a convex body (the input states)
end up filling exactly the image of the convex body.

As a corollary, however, we obtain the following:

\begin{corollary}
In all examples of random channels taken so 
far, the Holevo capacity converges with probability 
one. In particular, if the image set $K$ contains
the identity, with probability one, 
$$\chi_{\Phi_n}+H_{min}(\Phi_n)\to \log k.$$\end{corollary}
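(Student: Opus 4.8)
The plan is to deduce the convergence of the Holevo capacity from the convergence of the output sets in Hausdorff distance, which is the content of the preceding theorem. First I would recall the relevant definitions: for a channel $\Phi_n : \M_n(\C) \to \M_k(\C)$, the Holevo capacity is $\chi_{\Phi_n} = \max \left( H\left(\sum_i p_i \Phi_n(\rho_i)\right) - \sum_i p_i H(\Phi_n(\rho_i)) \right)$, the maximum running over ensembles $\{(p_i, \rho_i)\}$. The output set $K_n$ is the collection $\{\Phi_n(\rho) : \rho \in \D_n\} \subseteq \D_k$ (identified with a subset of $\R^k$ via eigenvalues, or kept as a subset of $\D_k$), and by the theorem $K_n \to K$ almost surely in Hausdorff distance. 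The key structural fact I would invoke is the standard reformulation of the Holevo quantity: since $H$ is concave on $\D_k$, the minimum of $\sum_i p_i H(\Phi_n(\rho_i))$ subject to $\sum_i p_i \Phi_n(\rho_i) = \sigma$ is the lower convex envelope of $H$ restricted to $K_n$ evaluated at $\sigma$; hence $\chi_{\Phi_n} = \max_{\sigma \in \mathrm{conv}(K_n)} \left( H(\sigma) - \widehat{H}_{K_n}(\sigma) \right)$ where $\widehat{H}_{K_n}$ is the convexification of $H|_{K_n}$. Because $K_n$ is already convex here (the theorem gives a convex limit $K$, and in the cited setting the output sets are convex), this is $\chi_{\Phi_n} = \max_{\sigma \in K_n} H(\sigma) - \min_{\sigma \in K_n}\left(\text{contributions}\right)$; more carefully, one shows $\chi_{\Phi_n}$ is a \emph{continuous functional of the compact set} $K_n$ with respect to Hausdorff distance, because both $\sigma \mapsto H(\sigma)$ and the convex-envelope operation are continuous under Hausdorff perturbations of the underlying set (on the compact domain $\D_k$, $H$ is uniformly continuous, in fact Lipschitz on the set of states bounded away from the boundary, and one handles the boundary by a standard $\epsilon$-net/continuity-of-entropy argument).

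Next I would argue the main quantitative step: if $d_H(K_n, K) \to 0$, then $\chi_{\Phi_n} \to \chi_\infty := \max_{\sigma \in K} H(\sigma) - \widehat H_K(\sigma)$. Here I would use that for any compact $K' \subseteq \D_k$, the functional $\Psi(K') = \max_{\sigma \in \mathrm{conv}(K')}\bigl(H(\sigma) - \widehat{H}_{K'}(\sigma)\bigr)$ is continuous in the Hausdorff metric. One direction: given a near-optimal ensemble for $K$, perturb each output point to a nearby point of $K_n$ (possible within $d_H(K_n,K)$), and use continuity of $H$ to control the change in both the entropy of the average and the averaged entropies — this yields $\liminf \chi_{\Phi_n} \ge \chi_\infty - o(1)$. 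The reverse direction is symmetric. Then, by the theorem, $d_H(K_n, K) \to 0$ almost surely, so $\chi_{\Phi_n} \to \chi_\infty$ almost surely.

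Finally I would specialize to the case where $K$ contains the maximally mixed state $\I_k/k$. Then $\max_{\sigma \in K} H(\sigma) = H(\I_k/k) = \log k$, attained at an interior point. I claim $\chi_\infty = \log k$: the upper bound $\chi_{\Phi_n} \le \log k$ holds for any channel into $\M_k$, so $\chi_\infty \le \log k$; for the lower bound, note that since $\I_k/k$ is in the interior of $K$ and the preceding discussion says the output set is realized by outputs of \emph{pure} states with near-orthogonal pre-images, one can find (in the limit) an ensemble of pure inputs whose outputs average to $\I_k/k$ while each output is itself (close to) a rank-deficient, i.e. low-entropy, state — more precisely, using the strengthened statement about orthogonal pre-images one drives $\sum_i p_i H(\Phi_n(\rho_i)) \to 0$, which is exactly the minimum output entropy behaviour. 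This gives $\chi_{\Phi_n} \to \log k$ and, since $H_{\min}(\Phi_n) = \min_\rho H(\Phi_n(\rho))$ and the argument shows this minimum is achieved asymptotically by the pure states realizing the extreme structure, $\chi_{\Phi_n} + H_{\min}(\Phi_n) \to \log k$. The main obstacle I anticipate is the continuity of the Holevo functional at the \emph{boundary} of $\D_k$: the von Neumann entropy is continuous but not Lipschitz near rank-deficient states, and the convex-envelope operation interacts delicately with Hausdorff perturbations there; handling this requires a careful uniform-continuity argument (e.g. via the Fannes–Audenaert inequality) together with the fact, guaranteed by the theorem, that the extremal outputs sit at controlled locations. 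Once that continuity is in hand, the rest is a routine combination of the Hausdorff convergence and the observation that $\max_{\sigma \in K} H(\sigma) = \log k$ when $\I_k/k \in K$.
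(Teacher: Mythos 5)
Your first step --- treating the Holevo quantity as a functional of the output set and proving its continuity with respect to Hausdorff convergence, then invoking the almost sure convergence $K_n\to K$ --- is sound and is in the spirit of what the paper (which states the corollary without a detailed proof, deferring to the cited work) intends. The genuine gap is in the specialization. You claim $\chi_\infty=\log k$, arguing that one can find an ensemble of pure inputs whose outputs average to the maximally mixed state while $\sum_i p_i H(\Phi_n(\rho_i))\to 0$, i.e.\ implicitly that $H_{\min}(\Phi_n)\to 0$. This is false in the regime at hand: with $k$ fixed and the output spectra filling the limit body $K$, every output has entropy at least $\min_{\lambda\in K}H(\lambda)$, which is strictly positive (for the isometric model with $d\sim tkn$, $t\in(0,1)$, outputs of pure states are far from pure --- this strictly positive limiting $H^{\min}$ is exactly what makes the additivity analysis of Theorem \ref{thm:violation-p-1} nontrivial). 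Moreover the universal bound $\chi(\Phi)\leq \log k - H_{\min}(\Phi)$ (subadditivity of the Holevo expression: the first term is at most $\log k$, the subtracted average is at least $H_{\min}$) shows that $\chi_\infty=\log k$ would force $H_{\min}\to 0$; the corollary asserts precisely the opposite bookkeeping, namely $\chi_{\Phi_n}\to \log k-\lim H_{\min}(\Phi_n)$, so your two claims ($\chi\to\log k$ and the displayed limit) are only compatible under the false premise $H_{\min}\to 0$.

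The correct lower bound runs differently: the limiting output set is described through spectra, hence is stable under unitary conjugation, so if it contains the identity one may take the entropy minimizer $\rho^*$ (with spectrum $\lambda^*$ achieving $\min_{\lambda\in K}H(\lambda)$) and conjugate it by the $k^2$ discrete Weyl--Heisenberg unitaries; this produces an ensemble of points of $K$, each of entropy $\min_K H$, averaging exactly to $I_k/k$. The realization statement preceding the corollary (interior points of $K$ are attained by outputs of pure states for $n$ large) then yields an admissible input ensemble with
\begin{equation*}
\chi_{\Phi_n}\;\geq\; \log k - \min_{\lambda\in K}H(\lambda) - o(1),
\end{equation*}
which, matched against the trivial upper bound $\chi_{\Phi_n}\leq \log k - H_{\min}(\Phi_n)$ and the almost sure convergence $H_{\min}(\Phi_n)\to\min_{\lambda\in K}H(\lambda)$ (itself a consequence of the Hausdorff convergence and continuity of $H$ on $\D_k$), gives $\chi_{\Phi_n}+H_{\min}(\Phi_n)\to\log k$. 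Your continuity machinery is usable, but the construction of the near-optimal ensemble must be by unitary (or spectral) symmetrization around the entropy minimizer, not by pushing the output entropies to zero.
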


\section{The additivity problem for tensor products of random quantum channels}
\label{sec:additivity}

\subsection{The classical capacity of quantum channels and the additivity question}

The following theorem summarizes some of the most important breakthroughs in quantum information
theory in the last decade. It is based in particular on the papers \cite{hastings2009superadditivity,hayden2008counterexamples}.

\begin{theorem}
For every $p \in [1, \infty]$, there exist
quantum channels $\Phi$ and $\Psi$ such that
\begin{equation}
H_p^{\min}(\Phi \otimes \Psi) < H_p^{\min}(\Phi) + H_p^{\min}(\Psi).
\end{equation}
\end{theorem}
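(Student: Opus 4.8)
The plan is to prove the result via the random channel construction pioneered by Hayden--Winter and Hastings, using the random matrix tools developed earlier in this excerpt. The core idea is to take $\Phi$ and $\Psi = \bar\Phi$ to be conjugate random quantum channels of the form $\Phi(X) = [\mathrm{id}_k \otimes \mathrm{Tr}_d]\big(V X V^*\big)$, where $V : \mathbb C^n \to \mathbb C^k \otimes \mathbb C^d$ is a Haar-random isometry (equivalently, the first $n$ columns of a Haar unitary in $\mathcal U(kd)$), and $\bar\Phi$ is built from $\bar V$. The strategy is then to exhibit a single good input state for the product channel that beats the best possible output entropy of either factor alone.

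First I would analyze the \emph{single-channel} side. Using the Weingarten calculus (Theorem \ref{thm:Weingarten-formula}, Theorem \ref{thm:mob}), or equivalently the results quoted in Proposition 5.1 / Section \ref{sec:single-output} on outputs of random channels, one shows that for $\rho$ an arbitrary input the output $\Phi(\rho)$ has spectrum concentrated near the Marchenko--Pastur law $\pi_c$ with $c = d/k$ (after rescaling), so that $H_p^{\min}(\Phi)$ is, with high probability, within $o(1)$ of $\log k$ minus an explicit positive constant $\delta_p = \delta_p(k,c)$ coming from the gap between the entropy of $\pi_c$ and $\log k$. The same bound holds for $\bar\Phi$ by symmetry, so $H_p^{\min}(\Phi) + H_p^{\min}(\Psi) \leq 2\log k - 2\delta_p + o(1)$.

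Next I would analyze the \emph{product} channel on the maximally entangled input $\Omega_n = \omega_n\omega_n^*$. The key algebraic fact is that $[\Phi \otimes \bar\Phi](\Omega_n)$ has a distinguished eigenvalue bounded below by $d/(k^2 \cdot \text{(something)})$ — concretely, $\langle \omega_k, [\Phi\otimes\bar\Phi](\Omega_n)\,\omega_k\rangle \geq 1/d + (\text{correction})$, which is \emph{order} $1/d$ larger than the $1/k^2$ one would expect from a generic state. This "conjugate trick" forces the output spectrum to be far from uniform: one eigenvalue $\geq \lambda_{\max}$ with $\lambda_{\max}$ bounded away from $k^{-2}$, which caps $H_p\big([\Phi\otimes\bar\Phi](\Omega_n)\big) \leq 2\log k - \Delta_p + o(1)$ for an explicit $\Delta_p > 0$. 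The estimate on the large eigenvalue uses only second-moment Weingarten computations; the control of the rest of the spectrum can be done crudely (e.g.\ all other eigenvalues $\leq$ some bound, or a moment bound on $\mathrm{Tr}(\text{(output)}^p)$).

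The whole argument then reduces to an arithmetic comparison: choose the parameters $k$ and $d$ (with $d \sim k^2$ or $d$ large depending on $p$) so that $\Delta_p > 2\delta_p$, i.e.\ the violation gained on the entangled input through the product exceeds twice the single-channel deficit. For $p > 1$ this comparison is relatively soft and goes through for modest $k$; the genuinely hard case is $p = 1$ (von Neumann entropy, the original Hastings theorem), where $\delta_1 = O(1/k)$ and $\Delta_1 = O(1/k)$ are both of the same order, so the inequality $\Delta_1 > 2\delta_1$ survives only by a careful constant-tracking — this is the main obstacle, and it is exactly where Hastings's delicate estimate (later streamlined by Brandão--Horodecki, Fukuda--King--Moser, and via the Weingarten/free-probability approach of the authors) is needed. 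One takes $d$ proportional to $k$ with the right constant and shows the entropy of the output of $\Phi\otimes\bar\Phi$ on $\Omega_n$ dips below the single-channel sum by a term of order $(\log k)/k$. Finally, since all estimates hold with probability $\to 1$ as $n \to \infty$, a non-random pair $(\Phi,\Psi)$ with the desired property exists; the case $p = \infty$ follows by a limiting argument (or directly, since $H_\infty^{\min}$ is controlled by $\lambda_{\max}$ alone). I would present $p>1$ first as a warm-up and then treat $p=1$ with the sharp constants.
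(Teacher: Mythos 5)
Your proposal follows essentially the same route the paper itself indicates for this theorem: the paper states it with references to Hayden--Winter and Hastings and then develops exactly your strategy in Sections \ref{sec:conjugate-products}--\ref{sec:counterexamples} (conjugate random channels $\Phi\otimes\bar\Phi$, the maximally entangled input, the large-eigenvalue bound of Lemma \ref{lem:HW-trick} refined by Theorem \ref{thm:eigenvalues-output-2-conjugate}, and the delicate uniform lower bound on the single-channel minimum output entropy for the hard case $p=1$, culminating in Theorem \ref{thm:violation-p-1}). One slip to fix: your displayed bound $H_p^{\min}(\Phi)+H_p^{\min}(\Psi)\leq 2\log k-2\delta_p+o(1)$ has the wrong direction --- the violation argument needs the \emph{lower} bound $H_p^{\min}(\Phi)+H_p^{\min}(\Psi)\geq 2\log k-2\delta_p-o(1)$, i.e.\ the statement that \emph{no} input to a single channel does better than the deficit $\delta_p$, which is precisely the hard, uniform-over-inputs part of the proof (an $\varepsilon$-net/concentration or output-set argument, not just the Mar{\v{c}}enko--Pastur behaviour of a typical output), whereas the $\leq$ direction is trivial and useless here.
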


Except for some particular cases ($p>4.73$, \cite{werner2002counterexample} and $p>2$, \cite{grudka2010constructive}), the proof of this theorem uses 
the random method, i.e. the channels $\Phi, \Psi$ are random channels, and the above inequality occurs with 
non-zero probability. At this moment, we are not aware of any explicit, non-random choices for $\Phi, \Psi$ in the case  $1 \leq p \leq 2$.

The additivity property for the minimum output entropy $H^{\min}(\cdot)$ was related in \cite{shor2004equivalence} to the additivity of another important entropic quantity, the \emph{Holevo quantity}
$$\chi(\Phi) = \max_{\{p_i,X_i\}} \left[ H_1\left(\sum_i p_i \Phi(X_i)\right ) -\sum_i p_iH_1( \Phi(X_i) ) \right].$$
The regularized Holevo quantity provides \cite{holevo1998capacity,schumacher1997sending} the classical capacity of a quantum channel $\Phi$, i.e.~the maximum rate at which classical information can be reliably sent through the noisy channel
$$.$$

\subsection{Conjugate quantum channels and the MOE of their tensor product}
\label{sec:conjugate-products}

In this subsection we gather some known results about the MOE of tensor products of conjugate channels $\Psi = \Phi \otimes \bar \Phi$. These results will be used in the next subsection on counterexamples. Let us stress from the beginning that in there is much less known about the output eigenvalues of $\Psi$ than about those of a single random channel $\Phi$. In particular, we do not have an explicit description of the output set of $\Psi$, such as the one from Theorem \ref{thm:output-eigenvalues-single-channel}. Actually, we have mostly upper bounds in this case, coming from the trivial inequality
\begin{equation}\label{eq:H-min-max-ent}
H^p_{\min}(\Psi) \leq H^p ([\Phi \otimes \bar \Phi](\Omega)),
\end{equation}
where $\Omega$ is the maximally entangled state \eqref{eq:maximally-entangled-state}.

The first result in this direction is a non-random one, giving a lower bound on the larges eigenvalue of the output of the maximally entangled state. To fix notation, let $\Phi : M_d(\mathbb C) \to M_k(\mathbb C)$ a quantum channel coming from an isometry $V:\mathbb C^d \to \mathbb C^k \otimes \mathbb C^n$. In \cite{hayden2008counterexamples},
the authors observed
that in the context of two random channels
given by two dilations $V_1$ (resp. $V_2$),
it is relevant to introduce the further symmetry $V_2 = \bar V_1$, as it ensures that at least one eigenvalue is always big. 

\begin{lemma}\label{lem:HW-trick}
The larges eigenvalue of the output state $Z = [\Phi \otimes \bar \Phi](\Omega_d)$ satisfies the following inequality:
$$\|Z\| \geq \frac{d}{nk}.$$
\end{lemma}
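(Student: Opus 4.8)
The plan is to compute the relevant quantity explicitly by unfolding the definitions of the maximally entangled state, the Stinespring dilation, and the conjugate channel. Write $\Phi(X) = \trace_{\C^n}[V X V^*]$ for an isometry $V : \C^d \to \C^k \otimes \C^n$, so that $\bar\Phi(X) = \trace_{\C^n}[\bar V X \bar V^*]$, where the bar denotes entrywise complex conjugation. The output state is $Z = [\Phi \otimes \bar\Phi](\Omega_d) \in \M_{k^2}(\C)$, with $\Omega_d = \omega_d\omega_d^*$ and $\omega_d = d^{-1/2}\sum_i e_i \otimes e_i$.

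First I would lower bound $\|Z\|$ by testing it against a well-chosen unit vector. The natural candidate is the maximally entangled state $\omega_k \in \C^k \otimes \C^k$ on the two output copies: $\|Z\| \geq \scalar{\omega_k}{Z\,\omega_k}$. The key algebraic identity to exploit is the "transpose trick": for any matrix $M$, $(M \otimes I)\omega = (I \otimes M^t)\omega$ on a maximally entangled state, and $\scalar{\omega}{(A\otimes \bar B)\omega} = \frac{1}{d}\trace(A B^*)$ (up to normalization conventions). Applying this, $\scalar{\omega_k}{Z\,\omega_k}$ should collapse, after pushing the conjugations and partial traces through, to something proportional to $\frac{1}{d}\|\,\trace_{\C^n}[V\Omega_d V^*]\,\|$-type expression — more precisely to $\frac{1}{k^2}\trace\big((\Phi\otimes\bar\Phi)(\Omega_d)\cdot \Omega_k\big)\cdot k$, which via the conjugate symmetry reduces to $\frac{1}{k}\sum_{\text{something}}|\cdots|^2$. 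The bookkeeping with the isometry $V$ and the two partial traces is where one must be careful: I would track indices through $V_{(ab),i}$ and compute $\scalar{\omega_k}{Z\,\omega_k}$ as a sum of squared moduli of entries of $V$, then recognize the sum $\sum_{a,b,i}|V_{(ab),i}|^2 = \trace(V^*V) = \trace(I_d) = d$ coming from the isometry condition, while the normalizations $1/d$ from $\Omega_d$ and $1/k$ from $\omega_k$ combine to give $d/(nk)$ after also using $\sum_i 1 = n$ somewhere, or rather the $1/n$ does not appear and instead one gets exactly $d/(k n)$ — I would verify the powers of $n$ carefully.

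Alternatively, and perhaps more cleanly, I would use that $\|Z\| \geq \trace(Z^2)/\trace(Z) = \trace(Z^2)$ is false in general, so instead I would use $\|Z\| \geq \frac{\trace(Z P)}{\trace(P)}$ for any projection $P$, taking $P = \Omega_k$ the rank-one maximally entangled projector on $\C^k\otimes\C^k$; then $\trace(Z\Omega_k) = \scalar{\omega_k}{Z\omega_k}$ and $\trace(\Omega_k) = 1$, recovering the same estimate. The conjugate structure $V_2 = \bar V_1$ is precisely what makes $\scalar{\omega_k}{Z\omega_k}$ nonnegative and large rather than potentially small: the entangled test vector "sees" the correlation built into the conjugate pair.

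The main obstacle I anticipate is purely the index bookkeeping: correctly propagating the two complex conjugations and the two $\C^n$ partial traces through the maximally entangled states on both the input side ($\Omega_d$, living on $\C^d\otimes\C^d$, the domain of $\Phi\otimes\bar\Phi$... wait, one must be careful that $\Omega_d$ here lives on $\C^d \otimes \C^d$ and $\Phi, \bar\Phi$ each act on one $\C^d$ factor) and the output side ($\omega_k$ on $\C^k\otimes\C^k$), and getting the normalization constants to assemble into exactly $d/(nk)$ rather than off by a factor of $n$ or $k$. A graphical/diagrammatic computation in the style of Section \ref{sec:graphical-tensors} would make this transparent: the diagram for $\scalar{\omega_k}{[\Phi\otimes\bar\Phi](\Omega_d)\omega_k}$ has all wires forced to close up into a single loop weighted by $\|V\|$-type contractions, immediately yielding the claimed bound. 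I would present the diagrammatic version as the clean proof and relegate the index computation to a remark.
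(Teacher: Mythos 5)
Your overall route --- lower bounding $\|Z\|$ by the overlap $\scalar{\omega_k}{Z\,\omega_k}$ with the maximally entangled unit vector on the two output factors, and exploiting the conjugate symmetry $V_2=\bar V_1$ --- is exactly the ``Hayden--Winter trick'' that the paper cites for this lemma (\cite[Lemma 2.1]{hayden2008counterexamples}, \cite[Lemma 6.6]{collins2010random}). However, there is a genuine gap at the point where you expect the computation to ``collapse'' to exactly $d/(nk)$ by index bookkeeping plus the isometry condition. Unfolding the definitions, with $(W_a)_{j,i}:=V_{(ja),i}$ denoting the $k\times d$ blocks of the isometry indexed by the ancilla basis index $a\in\{1,\dots,n\}$, one finds
$\scalar{\omega_k}{Z\,\omega_k}=\frac{1}{dk}\sum_{a,b=1}^{n}\bigl|\trace(W_aW_b^*)\bigr|^2$.
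This quantity depends on $V$ and is \emph{not} equal to $d/(nk)$ in general: for instance, if $V x=(Ax)\otimes g_1$ with $A:\C^d\to\C^k$ an isometry, the overlap equals $d/k$, much larger than $d/(nk)$. In particular, no exact identity of the kind you anticipate exists; the relation $\sum_{a,j,i}|V_{(ja),i}|^2=\trace(V^*V)=d$ is linear in $|V|^2$, whereas the overlap is quadratic in such sums, so it cannot be evaluated from the isometry condition alone.

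The missing step is an inequality, not bookkeeping: discard the nonnegative off-diagonal terms $a\neq b$ and apply Cauchy--Schwarz (equivalently, convexity) over the ancilla index,
$\sum_{a=1}^{n}\bigl(\trace(W_aW_a^*)\bigr)^2\geq\frac{1}{n}\Bigl(\sum_{a=1}^{n}\trace(W_aW_a^*)\Bigr)^2=\frac{d^2}{n}$,
using $\sum_a \trace(W_aW_a^*)=\trace(V^*V)=d$. This is precisely where the factor $1/n$ --- about which you were explicitly unsure (``the $1/n$ does not appear \dots I would verify the powers of $n$ carefully'') --- enters, and it gives $\scalar{\omega_k}{Z\,\omega_k}\geq\frac{1}{dk}\cdot\frac{d^2}{n}=\frac{d}{nk}$. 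With this step inserted, your argument is complete and coincides with the standard proof; note that the diagrammatic version you propose would need the same step, since the diagram for the overlap does not close into a single loop but leaves a double sum over the two ancilla wires.
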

This result appeared several times in the literature (and it is sometimes referred to as the ``Hayden-Winter trick''), see \cite[Lemma 2.1]{hayden2008counterexamples} or \cite[Lemma 6.6]{collins2010random} for a proof using the graphical (non-random) calculus from Section \ref{sec:graphical-tensors}.

In the context of random quantum channels, one can improve on the result above, by computing the asymptotic spectrum of the output state $Z_n$. This has been done in \cite{collins2010random} in different asymptotic regimes. Since in this review we focus on the regime where $k$ is fixed and $d \sim tnk \to \infty$, we state next Theorem 6.3 from \cite{collins2010random}.

\begin{theorem}\label{thm:eigenvalues-output-2-conjugate}
Consider a sequence of random quantum channels coming from random isometries $\Phi_n:M_{d_n}
(\mathbb C) \to M_k(\mathbb C)$ where $d_n$ is a sequence of integers satisfying $d_n \sim tnk$ as $n \to \infty$ for fixed parameters $k \in \mathbb N$ and $t \in (0,1)$. The eigenvalues of the output state
$$M_{k^2}(\mathbb C) \ni Z_n = [\Phi_n \otimes \bar \Phi_n](\Omega_{d_n})$$
converge, almost surely as $n \to \infty$, to 
\begin{itemize}
\item $t+ \frac{1-t}{k^2}$, with multiplicity $1$;
\item $\frac{1-t}{k^2}$, with multiplicity $k^2-1$.
\end{itemize}
\end{theorem}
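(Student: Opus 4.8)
The plan is to compute the asymptotic spectrum of $Z_n = [\Phi_n \otimes \bar\Phi_n](\Omega_{d_n})$ via the Weingarten graphical calculus of Theorem~\ref{thm:Wg_diag}, exactly as in the single-channel computations of Section~\ref{sec:single-output}. First I would unfold the definitions: writing $\Phi_n(X) = \trace_n[V_n X V_n^*]$ for a random isometry $V_n : \C^{d_n} \to \C^k \otimes \C^n$ obtained by truncating a Haar unitary $U_n \in \U(nk)$ (i.e.\ $V_n = U_n (\, \cdot \, \otimes e_1^{\otimes})$, or more precisely $V_n = U_n P$ for a fixed rank-$d_n$ projection $P$), one expresses $Z_n = [\Phi_n \otimes \bar\Phi_n](\omega_{d_n}\omega_{d_n}^*)$ diagrammatically: it involves two copies of $U_n$, two copies of $\bar U_n$, the maximally entangled vector $\omega_{d_n}$ and its conjugate, and partial traces over the two ancilla spaces $\C^n$. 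The goal is to compute $\E \trace(Z_n^p)$ for each fixed $p$, show it converges, and identify the limit moments with those of the measure $(t + \frac{1-t}{k^2})\cdot\frac{1}{k^2}\delta_{\,\cdot\,}$; combined with the almost-sure concentration (norm convergence, via Theorem~\ref{thm:marchenko-pastur}/\cite{male2012norm}-type strong convergence, or a Borel--Cantelli argument on the variance) this upgrades convergence in moments to almost-sure convergence of the empirical spectral distribution, and since the limit is atomic with $k^2$ atoms this pins down the individual eigenvalues with their multiplicities.

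**Next**, the combinatorial heart: applying Theorem~\ref{thm:Wg_diag} to $\E \trace(Z_n^p)$ produces a sum over removals $r = (\alpha,\beta)$ with $\alpha,\beta \in \S_{2p}$ (since each copy of $Z_n$ contributes one $U_n$ and one $\bar U_n$ to each of the two tensor legs, giving $2p$ boxes of type $U$ after accounting for the conjugate channel), weighted by $\Wg(nk, \alpha\beta^{-1})$ times the scalar value $T_{\D_r}$ of the removed diagram. Using the asymptotics $\Wg(nk,\sigma) \sim (nk)^{-(2p+|\sigma|)}\Mob(\sigma)$ from Theorem~\ref{thm:mob} together with the fact that $\trace X \sim d_n \sim tnk$ (so the normalization $1/(\trace X)^p$ contributes a factor $(tnk)^{-p}$ up to fluctuations controlled by concentration), one does the standard power-counting in $n$: each removed diagram $\D_r$ evaluates to a power $n^{f(\alpha,\beta)} k^{g(\alpha,\beta)}$ coming from the loops formed by the wires and the partial traces, the $\omega_{d_n}$ identifications contributing the geometric-type structure. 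The conjugate symmetry $\Phi_n \otimes \bar\Phi_n$ forces the $\bar U$ legs to mirror the $U$ legs, which is precisely what makes one ``diagonal'' term survive with a coefficient $t$ (this is the algebraic source of the eigenvalue $t + \frac{1-t}{k^2}$, consistent with the Hayden--Winter lower bound in Lemma~\ref{lem:HW-trick}). The dominant contributions come from the $(\alpha,\beta)$ that saturate the relevant inequalities in the distance $d$ on $\S_{2p}$ of Lemma~\ref{lem:S_p}; subdominant terms vanish as $n \to \infty$.

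**Then** I would assemble the surviving terms into a closed form. The leading removals should organize into two families corresponding to the two eigenvalue values, so that $\lim_n \E\trace(Z_n^p) = \left(t + \frac{1-t}{k^2}\right)^p + (k^2 - 1)\left(\frac{1-t}{k^2}\right)^p$, which is exactly $\trace$ of the claimed limiting diagonal matrix; a short computation checks these are indeed the moments of the spectral measure $\delta_{t + (1-t)/k^2} + (k^2-1)\delta_{(1-t)/k^2}$ normalized appropriately (noting $\trace Z_n = 1$ forces the values to sum correctly: $1\cdot(t + \frac{1-t}{k^2}) + (k^2-1)\cdot\frac{1-t}{k^2} = t + (1-t) = 1$, a useful sanity check). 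Finally, the almost-sure upgrade: one shows $\mathrm{Var}\,\trace(Z_n^p) = O(n^{-2})$ by the same Weingarten expansion applied to the second moment (genus-type estimate, the variance being controlled by non-separating permutation configurations), invokes Borel--Cantelli, and combines with an operator-norm bound (the top eigenvalue cannot escape, again by a Weingarten/strong-convergence argument or by Lemma~\ref{lem:HW-trick} giving a matching lower bound) to conclude that the full spectrum converges almost surely.

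**The main obstacle** will be the bookkeeping of the power of $n$ attached to each removed diagram $\D_r$ and identifying exactly which pairs $(\alpha,\beta) \in \S_{2p} \times \S_{2p}$ are dominant --- in particular, disentangling how the maximally entangled input $\omega_{d_n}$ (which itself carries a $d_n^{-1/2} = (tnk)^{-1/2}$ and a delta-function structure linking the two tensor legs) interacts with the partial traces and the Weingarten pairings. This is the step where the conjugate structure $\bar\Phi_n$ matters most: it is responsible for the asymmetry between the single large eigenvalue and the $(k^2-1)$-fold small one, and getting the coefficient $t$ (rather than, say, $t$ replaced by something smaller, as happens without the conjugation) requires carefully tracking the ``mirror'' removal that is forced by the symmetry. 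Everything else --- the Weingarten asymptotics, the concentration, the moment-to-distribution passage --- is by now routine in this framework, and indeed the full details are carried out in \cite[Section~6]{collins2010random}.
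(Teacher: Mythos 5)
Your proposal matches the paper's argument essentially step for step: a moment computation $\E \trace(Z_n^p)$ via the graphical Weingarten calculus, reduction to a sum over pairs $(\alpha,\beta)\in\S_{2p}\times\S_{2p}$ weighted by $\Wg_{nk}(\alpha\beta^{-1})$ and powers of $n$, $k$, $d_n$, identification of the small set of dominating pairs by power counting with $d_n\sim tnk$, and a variance estimate plus the fixed output dimension $k^2$ to upgrade moment convergence to almost sure convergence of the individual eigenvalues, exactly as carried out in \cite[Section 6]{collins2010random}. The only (inessential) slip is importing the $1/(\trace X)^p$ normalization from the single-channel model of Section \ref{sec:single-output}; here the input is the already-normalized state $\Omega_{d_n}$, whose $d_n^{-1}$ prefactor is what produces the $d_n^{\#(\beta^{-1}\delta)}$ factors in the moment formula.
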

In order to prove such results, one uses the method of moments: using the Weingarten formula \eqref{eq:Weingarten} from Section \ref{sec:Weingarten}, it is shown in \cite[Section 6.1]{collins2010random} that, for all $p \geq 1$,
$$\frac{1}{k^2}\mathbb E \mathrm{Tr}(Z_n^p) = \sum_{\alpha,\beta \in \mathcal S_{2p}} n^{\#(\alpha^{-1}\gamma)}k^{-2 + \#\alpha} d_n^{\#(\beta^{-1}\delta)} \mathrm{Wg}_{nk}(\alpha,\beta).$$
where $\gamma, \delta$ are some fixed permutations in $\mathcal{S}_{2p}$; we present in Figure \ref{fig:Phi-bar-Phi} the diagram for the output matrix $Z_n$. We then compute the dominating terms in the above sums, by finding the pairs $(\alpha,\beta)$ corresponding to the terms having the largest $n$ powers; this is done by replacing $d_n = tkn + o(n)$ and using the asymptotic expression for the Weingarten factor from Theorem \ref{thm:mob}. It turns out that the set of dominating pairs $(\alpha,\beta)$ is small, and one can compute, up to $o(1)$ terms, the sum, proving the result. Since the matrices $Z_n$ live in a space of fixed dimension ($k^2$), a simple variance computation allows to go from the convergence in moments to the almost sure convergence of the individual eigenvalues.

Note that Theorem \ref{thm:eigenvalues-output-2-conjugate} improves on Lemma \ref{lem:HW-trick} in two ways: the norm of the output is larger, and we obtain information on the other eigenvalues too. This turns out to be useful in obtaining better numerical constants for the counterexamples to additivity, see the discussion in Section \ref{sec:counterexamples}.

\begin{figure}[htbp] 
\includegraphics{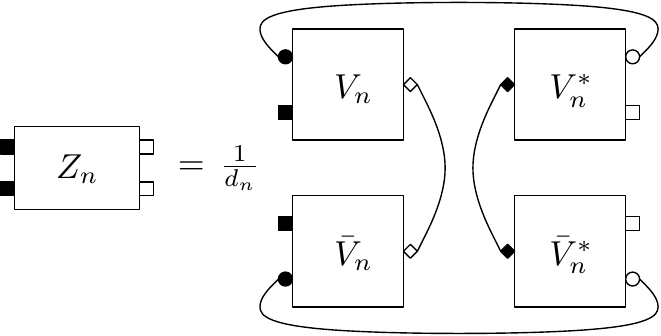}
\caption{Diagram for the output matrix $Z_n$, obtained by putting a maximally entangled state through a product of two conjugate quantum channels.} 
\label{fig:Phi-bar-Phi}
\end{figure}

Finally, the last result we would like to discuss in relation to products of conjugate channels is \cite[Theorem 5.2]{fukuda2014asymptotically}. The setting here is more general: the authors consider not one copy of a channel and its conjugate, but $2r$ channels (in what follows, $r$ is an arbitrary fixed positive integer):
$$\Psi_n = \Phi_n^{\otimes r} \otimes \bar \Phi_n^{\otimes r}.$$
Informally, \cite[Theorem 5.2]{fukuda2014asymptotically} states that, among a fairly large class of input states, the tensor products of Bell states ($\pi \in \mathcal S_r$ is an arbitrary permutation)
$$\Omega^\pi_{d_n} = \bigotimes_{i=1}^r \Omega_{d_n}^{i,\pi(i)}$$
are the ones producing outputs with least entropy. In the equation above, the maximally entangled state acts on the $i$-th copy of $\mathbb C^{d_n}$ corresponding to non-conjugate channels $\Phi_n$ and on the $\pi(i)$-th copy of $\mathbb C^{d_n}$ corresponding to conjugate channels $\bar \Phi_n$. The class of inputs among which the products of maximally entangled states are optimal are called ``well-behaved'', in the sense that they obey a random-matrix eigenvalue statistics; see \cite[eq.~(43)]{fukuda2014asymptotically} for more details.

The result above shows that inequality \eqref{eq:H-min-max-ent} is tight, when restricting the minimum on the left hand side to the class of well-behaved input states; the general question is open for random quantum channels, see Question \ref{qst:Bell-optimal}. Moreover, an important point raised in \cite{fukuda2014asymptotically} is that the optimality of maximally entangled inputs extends to tensor products of channels. This result might be useful for analyzing regularized versions of the minimum output entropies, in relation to the classical capacity problem.

\subsection{Early results in relation to the violation of MOE, history and the state of the art}
\label{sec:counterexamples}

We present next a short history of the various counterexamples to the additivity question, discussing different values of the parameter $p$. 

In the range $p>1$, the first counterexample was obtained by Werner and Holevo \cite{werner2002counterexample}: they have shown that the channel $(1-x)\mathrm{id} + x \mathrm{transp}$, acting on $\mathcal M_3(\mathbb C)$, for $x=-1/(d-1)$, violates the additivity of the $p$-R\'enyi entropy for all $p > 4.79$. Then, Hayden and Winter proved, in their seminal work \cite{hayden2008counterexamples}, that random quantum channels violate additivity with large probability, for all $p \geq 1$. The same result, using this time free probability, was obtained in \cite{collins2011random}, with smaller system dimensions. Also in the range $p>1$, Aubrun, Szarek and Werner proved violations of random channels, using this time Dvoretzky's theorem \cite{aubrun2010nonadditivity}. For $p$ close to $0$, violations of additivity were proved in \cite{cubitt2008counterexamples}.

The most important case, $p=1$, turned out to be much more difficult. The difficulty comes from the fact that one needs a precise control of the \emph{entire output spectrum}, while for $p>1$ controlling the largest eigenvalue turned out to be sufficient. The breakthrough was achieved by Hastings in \cite{hastings2009superadditivity}, where he showed that random mixed unitary channels violate additivity of the von Neumann entropy. Several authors, using similar techniques as Hastings, improved, generalized, and extended his result \cite{fukuda2010comments,brandao2010hastings, fukuda2010entanglement}. An improved version of Dvoretzky's theorem was used in \cite{aubrun2011hastingss} to show violations at $p=1$. Later, Fukuda provided a simpler proof of violation \cite{fukuda2014revisiting}, using this time $\varepsilon$-net arguments and Levy's lemma, the techniques used also in the pioneering work \cite{hayden2006aspects}. In \cite{belinschi2012eigenvectors,belinschi2013almost}, the authors use free probability theory to compute \emph{exactly} the minimum output entropy of a random quantum channel \cite[Theorem 5.2]{belinschi2013almost}. These results lead to the largest value of the violation known to date ($1$ bit), and the smallest output dimension ($k=183$), see Theorem \ref{thm:violation-p-1} below. 

Finally, let us mention that the majority of the violation results above use random constructions. The exceptions are the results in \cite{werner2002counterexample} ($p > 4.79$) and \cite{grudka2010constructive} ($p>2$, using the antisymmetric subspace); the question of finding other explicit counterexamples is open to this day, see Question \ref{qst:non-random-counterexamples}.

We state next the best result to date concerning violations of additivity for the minimum output entropy  \cite[Theorem 6.3]{belinschi2013almost}.

\begin{theorem}\label{thm:violation-p-1}
Consider a sequence $\Phi_n : \mathcal M_{\lfloor tkn \rfloor}(\mathbb C) \to \mathcal M_k(\mathbb C)$ of random quantum channels, obtained from random isometries
$$V_n:\mathbb C^{\lfloor tkn \rfloor} \to \mathbb C^k \otimes \mathbb C^n.$$
For any output dimension $k \geq 183$, in the limit $n \to \infty$, there exist values of the parameter $t \in (0,1)$ such that almost all random quantum channels violate the additivity of the von Neumann minimum output entropy. For any $\varepsilon$, there are large enough values of $k$ such that the violation can be made larger than $1$ bit. 

Moreover, in the same asymptotic regime, for all $k < 183$, the von Neumann entropy of the output state $[\Phi_n \otimes \bar \Phi_n](\Omega_{\lfloor tkn \rfloor})$ is almost surely larger than $2 H^{\min}(\Phi_n)$. Hence, in this case, one can not exhibit violations of the additivity using the Bell state \eqref{eq:maximally-entangled-state} as an input for the product of conjugate random quantum channels.
\end{theorem}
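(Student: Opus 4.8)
The plan is to bound the minimum output entropy of the product channel from above using the Bell state, to pin down (exactly, in the limit) the minimum output entropy of the single channel, and then to reduce the whole additivity question to the sign of one deterministic function of $t$, which we then analyze quantitatively.

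Fix the scaling $d_n=\lfloor tkn\rfloor$ and write $\eta(x)=-x\log x$. First, the upper bound: by the trivial inequality \eqref{eq:H-min-max-ent} together with Theorem \ref{thm:eigenvalues-output-2-conjugate}, the output of the Bell state through $\Phi_n\otimes\bar\Phi_n$ has, almost surely as $n\to\infty$, eigenvalues converging to $t+\tfrac{1-t}{k^2}$ (once) and $\tfrac{1-t}{k^2}$ (with multiplicity $k^2-1$), whence
\[
\limsup_n H^{\min}(\Phi_n\otimes\bar\Phi_n)\ \le\ g(t,k):=\eta\!\Bigl(t+\tfrac{1-t}{k^2}\Bigr)+(k^2-1)\,\eta\!\Bigl(\tfrac{1-t}{k^2}\Bigr)\quad\text{a.s.}
\]
Second, the single-channel MOE: since $H^1$ is a function of eigenvalues only, $H^{\min}(\Phi_n)=\min_{\lambda\in K_{n,k,t}}H(\lambda)$, and feeding the strong convergence $K_{n,k,t}\to K_{k,t}$ of Theorem \ref{limite} into the continuous concave function $H$ (whose minimum over the convex body $K_{k,t}$ is attained at an extreme point) gives, almost surely, $\lim_n H^{\min}(\Phi_n)=m(t,k):=\min_{\lambda\in K_{k,t}}H(\lambda)$. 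This $m(t,k)$ is explicit: the entropy-minimizing vertex of $K_{k,t}$ is identified from the structure of the $(t)$-norm of Definition \ref{def:t-norm} and its dual body \eqref{eq:convex} — this is \cite[Theorem 5.2]{belinschi2013almost} — and its largest coordinate equals $\normt{(1,0,\dots,0)}$, the norm of a free compression of a trace-$1/k$ projection by a trace-$t$ projection.

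Combining the two ingredients, at parameter $t$ a violation holds almost surely for $n$ large as soon as $D_k(t):=g(t,k)-2\,m(t,k)<0$. So the first assertion is equivalent to ``$\min_{t\in(0,1)}D_k(t)<0$ for every $k\ge 183$'', and the last assertion is exactly ``$D_k(t)>0$ for all $t\in(0,1)$ when $k\le 182$'' (which is the stated strict inequality $H([\Phi_n\otimes\bar\Phi_n](\Omega_{d_n}))>2H^{\min}(\Phi_n)$ passed to the limit). The proof is now a one-variable problem, and I would carry it out by: (i) writing $g$ and $m$ out in closed form; (ii) Taylor-expanding $D_k$ near $t=0,1$ and around its interior critical point $t_k$, together with the $1/k$-asymptotics of the coefficients, to localize $t_k$ and show that $D_k(t_k)$ changes sign exactly once in $k$; and (iii) verifying by controlled numerics that this sign change occurs between $k=182$ and $k=183$. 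Step (iii) is where the constant $183$ is actually produced, and it is the main obstacle: the asymptotics are not sharp enough near $k\approx 183$, so one must keep rigorous error control on every $O(1/k)$ and $O(1/\sqrt k)$ term as well as on the optimization in $t$, with no slack — this is also where having the \emph{exact} minimizing vertex of $K_{k,t}$, rather than a crude majorization bound, becomes essential.

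For the quantitative (``one bit'') statement I would instead let $k\to\infty$ with $t$ chosen near $\tfrac12$. The mechanism is a cancellation of divergences: the $\asymp k$ small eigenvalues of the optimal single-channel output contribute $(1-t)\log k$ to $m(t,k)$, the $\asymp k^2$ small eigenvalues of the Bell output contribute $2(1-t)\log k$ to $g(t,k)$, and these cancel in $2m-g$. Using $\normt{(1,0,\dots,0)}\to t$ and $\tfrac{1-t}{k^2}\to 0$, the surviving bounded limit is the binary entropy $\eta(t)+\eta(1-t)$, whose supremum over $t\in(0,1)$ is $\log 2$, i.e.\ one bit, attained at $t=\tfrac12$. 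Hence for any $\varepsilon>0$, taking $k$ large and $t$ near $\tfrac12$ forces the guaranteed violation $2H^{\min}(\Phi_n)-H^{\min}(\Phi_n\otimes\bar\Phi_n)\ge 2m(t,k)-g(t,k)$ to exceed $1-\varepsilon$ bit, and working with the exact finite-$k$ expressions lets one push it past one full bit once $k$ is large enough. The obstruction here, beyond bookkeeping the cancellation, is again the requirement of knowing $m(t,k)$ to order $O(1)$ and not merely to leading order.
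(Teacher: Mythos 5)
Your skeleton is the one the paper itself relies on (it is the argument of \cite{belinschi2012eigenvectors,belinschi2013almost}, which this review only states and cites): bound $H^{\min}(\Phi_n\otimes\bar\Phi_n)$ from above through the Bell state via \eqref{eq:H-min-max-ent} and Theorem \ref{thm:eigenvalues-output-2-conjugate}, obtain the exact almost-sure limit $m(t,k)$ of $H^{\min}(\Phi_n)$ from the convergence $K_{n,k,t}\to K_{k,t}$ of Theorem \ref{limite} together with the identification of the entropy-minimizing point of $K_{k,t}$ through the $(t)$-norm, and reduce everything to the sign of $D_k(t)=g(t,k)-2m(t,k)$, with the constant $183$ coming out of a rigorous finite-$k$ optimization; both halves of the statement are correctly reformulated this way, and delegating the exact minimizer and the numerics to \cite{belinschi2013almost} is consistent with what the review does.

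The genuine flaw is your closing claim that ``working with the exact finite-$k$ expressions lets one push it past one full bit once $k$ is large enough.'' That is false for this method. The largest coordinate of the entropy-minimizing vector of $K_{k,t}$ is not $t$ but $a=\normt{e_1}=t+\frac{1-2t}{k}+\frac{2\sqrt{t(1-t)(k-1)}}{k}$, i.e.\ it exceeds $t$ by about $2\sqrt{t(1-t)}/\sqrt{k}$, and this surplus is multiplied by $\log(k-1)$ in $m(t,k)$. Carrying out the cancellation you describe at $t=1/2$, the divergent terms cancel at leading order but leave a \emph{negative} remainder of order $(\log k)/\sqrt{k}$, so that
\begin{equation*}
\sup_{t\in(0,1)}\bigl(2m(t,k)-g(t,k)\bigr)\;<\;\log 2\quad\text{for all }k,
\end{equation*}
with equality only in the limit $k\to\infty$. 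Hence the Bell-state bound yields violations larger than $1-\varepsilon$ bits for $k$ large, but never more than one bit — this is exactly why the source result is an ``almost one bit'' violation; exceeding one bit would require a better input than $\Omega$ for the product channel, which is precisely the open Question \ref{qst:Bell-optimal}. The statement in the review should be read with the ``for any $\varepsilon$'' attached to ``$1-\varepsilon$ bit''; your derivation up to that point is fine, but the extra push past one bit has no justification and the finite-$k$ corrections in fact go the other way.
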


The above theorem leaves open the maximal possible value of the violation for conjugate random quantum channels, due to the fact that the maximally entangled state is not known to be optimal in this scenario, see Question \ref{qst:Bell-optimal}.

\section{Other applications of RMT
to quantum spin chains volume laws}
\label{sec:other}

\subsection{Maximum entropy principle for random matrix product states}
\label{sec:maximum-entropy-random-MPS}

Random matrix techniques play other roles
in quantum spin chain theory. In this section
we follow \cite{collins2013matrix}.

In the theory of quantum spin chains,  it is nowadays widely well justified, both numerically \cite{white1992density} and analytically \cite{hastings2007entropy}, that  ground states can be represented by the set of Matrix Product States with {\it polynomial} bond dimension. 
In the situation of a chain with boundary effects in exponentially small regions of size $b$ at both ends, homogeneity in the bulk and experimental access to an exponentially small central region of size $l$.
Tracing out the boundary terms leads to a bulk state given by \begin{equation}\label{eq:inicio}
\rho=\sum_{i_{b+1},...i_{N-b},j_{b+1},...j_{N-b}=1}^d tr (L A_{i_{b+1}} \cdots A_{i_{N-b}} R  A^*_{j_{N-b}} \cdots A^*_{j_{b+1}} )\ket {i_{b+1}...i_{N-b}} \bra {j_{b+1}...j_{N-b}},
\end{equation}
where all $A_i$, $L\ge 0$ and $R\ge 0$ are $D\times D$ matrices with $D={\rm poly}(N)$. 

In other words, the prior information can be
understood as restricting the bulk-states of our system as having the form (\ref{eq:inicio}).

It is known from the general theory of MPS \cite{perez-garcia2006matrix} that this set has a natural (over)para\-metrization by the group $U(dD)$, 
via the map $U\mapsto A_i=\bra 0 U \ket i$. 
In $U(dD)$, one can use the symmetry-based assignment of prior distributions to sample from the Haar measure. Similarly, the fact that the map $X\mapsto \sum_i A_iXA_i^*$ is trace-preserving leads to consider $tr(R)=1$, $\|L\|_\infty\le 1$, and gives natural ways of sampling also the boundary conditions (see below).
Looking for the generic reduced density matrix $\rho_l$ of $l\ll N$ sites
then becomes a natural problem. 
It corresponds to asking about generic observations of 1D quantum systems. This idea has been already exploited for the non-translational invariant case in \cite{garnerone2010typicality}. The aim of the present work is to show that $\rho_l$ has generically maximum entropy:

\begin{theorem}
Let $\rho_l$ be taken at random from the ensemble introduced with $D\geq N^{1/5}$. Then
$\| \rho_l/\mathrm{Tr} \rho_l-1/d^l\|_{\infty} \leq (d^l-1)\sqrt{d^l}O(D^{-1/10})$  except with probability exponentially small in D.
\end{theorem}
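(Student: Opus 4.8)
The plan is to reduce the statement to a concentration-of-measure argument on $U(dD)$ combined with a moment computation for the ``averaged'' reduced density matrix. First I would write $\rho_l$ explicitly: tracing out all but the $l$ central sites in \eqref{eq:inicio} produces, for each choice of the matrices $A_i=\bra{0}U\ket{i}$ and of the boundary terms $L,R$, a state of the form
\[
\rho_l = \sum_{\mathbf{i},\mathbf{j}} \trace\!\left( \Lambda_L\, A_{i_1}\cdots A_{i_l}\, \Lambda_R\, A^*_{j_l}\cdots A^*_{j_1}\right) \ketbra{\mathbf{i}}{\mathbf{j}},
\]
where $\Lambda_L,\Lambda_R$ are the effective $D\times D$ boundary operators obtained after contracting the $b$ boundary sites and the $N-2b-l$ bulk sites on either side of the observed region. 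The key structural observation is that the ``transfer operator'' $T(X)=\sum_i A_i X A_i^*$ is, for $U$ Haar on $U(dD)$, a random quantum channel of the type studied in Sections \ref{sec:single-output}--\ref{sec:random-output}: its Choi matrix is unitarily invariant, so when composed $N-2b-l \sim N$ times it contracts exponentially fast (with gap controlled by $D$) toward its fixed point, which is the maximally mixed operator $I_D/D$. Thus the bulk contractions wash out $\Lambda_L$ (resp.\ $\Lambda_R$) to something proportional to $I_D$ up to an error that is exponentially small in $N$, and since $D\geq N^{1/5}$ this error is $O(D^{-c})$ for the relevant polynomial rate.

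Next I would compute $\E_U \rho_l$, or rather the expectation of $\trace(\Lambda_L A_{i_1}\cdots A_{i_l}\Lambda_R A^*_{j_l}\cdots A^*_{j_1})$, by the Weingarten formula \eqref{eq:Weingarten}: each $A_{i_m}=\bra{0}U\ket{i_m}$ contributes one $U$ and one $\overline U$, so the integral is a sum over pairs of permutations $(\sigma,\tau)\in\S_{2l}$ (or $\S_l$ after using $\Lambda_L\approx I_D/D$, $\Lambda_R\approx I_D/D$) weighted by $\Wg(dD,\cdot)$. Using Theorem \ref{thm:mob} to extract the dominant term — the identity pairing, of order $(dD)^{-l}$ — one finds $\E_U \rho_l = d^{-l} I_{d^l} + O(D^{-1})$ in, say, Hilbert--Schmidt norm; the subleading permutations are suppressed by extra powers of $(dD)^{-1}$ coming from the length function $|\sigma^{-1}\tau|$ as in Lemma \ref{lem:S_p}. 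The same machinery gives $\E_U \trace(\rho_l^2) = d^{-l} + O(D^{-1})$, hence $\E_U \| \rho_l/\trace\rho_l - d^{-l} I\|_2^2$ is $O(D^{-1})$, and converting $\|\cdot\|_\infty \leq \sqrt{d^l}\,\|\cdot\|_2$ explains the $\sqrt{d^l}$ factor in the statement. The $(d^l-1)$ prefactor comes from the crude operator-norm bound on a traceless matrix of this size; I would not try to optimize it.

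To upgrade the estimate on the expectation to a high-probability (exponentially small failure probability) statement I would invoke Levy's lemma / concentration of measure on the unitary group $U(dD)$: the map $U\mapsto \| \rho_l(U)/\trace\rho_l(U) - d^{-l}I\|_\infty$ is Lipschitz with constant $O(\mathrm{poly}(d^l))$ (the only delicate points being the denominator $\trace\rho_l$, which concentrates near a positive constant, and the boundary normalizations $\trace(R)=1$, $\|L\|_\infty\leq 1$, which are handled by sampling $R$ from an induced measure and $L$ from a suitably normalized ensemble as indicated after \eqref{eq:inicio}). Concentration on $U(dD)$ has Gaussian tails with variance $\sim 1/(dD)$, so deviations of size $D^{-1/10}$ occur with probability $\exp(-\Omega(D \cdot D^{-1/5})) = \exp(-\Omega(D^{4/5}))$, which is exponentially small in $D$ as claimed (the exponent $1/10$ rather than something larger is exactly the slack needed for the Lipschitz constant and the bulk-contraction error to be absorbed).

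The main obstacle, I expect, is controlling the bulk transfer-operator contraction quantitatively and uniformly: one needs that the second-largest singular value of the random channel $T$ is bounded away from $1$ by a gap that is at least inverse-polynomial in $D$, with overwhelming probability, and that $N-2b-l$ iterations therefore bring $\Lambda_L,\Lambda_R$ within $D^{-O(1)}$ of multiples of the identity — this is where $D\geq N^{1/5}$ enters and where the argument is genuinely more than a Weingarten computation. Everything else (the moment computation, the Lipschitz estimate, the final concentration bound) is routine given the tools in Sections \ref{sec:RMT-FPT} and \ref{sec:random-output}.
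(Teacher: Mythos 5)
Your decomposition --- first let the bulk transfer operator $T(X)=\sum_i A_iXA_i^*$ contract $\Lambda_L,\Lambda_R$ to multiples of $I_D$, then run a fresh Weingarten computation over the $l$ observed sites --- has a gap that is not merely technical. In the translation-invariant model every site carries the \emph{same} matrices $A_i=\bra{0}U\ket{i}$, i.e.\ the same Haar unitary $U$ appears in each of the $N-2b$ factors of \eqref{eq:inicio}. You therefore cannot condition on the spectral gap of $T$ (which is a function of $U$) and afterwards integrate the central block $A_{i_1}\cdots A_{i_l}$ against the Haar measure as if it were independent of the bulk: the conditional law of these blocks given the bulk data is not Haar, and the Weingarten formula \eqref{eq:Weingarten} does not apply to it. The route the paper indicates in Section \ref{sec:maximum-entropy-random-MPS} is a single graphical Weingarten computation over all $\sim N$ boxes $U$ and $\ol U$ at once. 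This is precisely why a \emph{uniform} estimate of the Weingarten function is needed: the number of boxes $p\sim N$ grows with the dimension $n=dD$ (indeed $N$ may be as large as $D^{5}$ under the hypothesis $D\geq N^{1/5}$), so the fixed-$p$ asymptotics of Theorem \ref{thm:mob} --- the only Weingarten estimate your proposal ever invokes --- is not applicable. Your plan confines Weingarten to $l$ boxes and delegates the rest to a spectral-gap argument, which sidesteps exactly the step the paper identifies as the heart of the matter and replaces it with a conditioning step that is not valid.

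A secondary issue is the concentration step. The map $U\mapsto\rho_l(U)/\trace\rho_l(U)$ involves an $(N-2b)$-fold product of blocks of $U$ together with a normalization whose denominator must be bounded below with high probability; its Lipschitz constant is a priori polynomial in $N$, hence polynomial in $D^{5}$, and this must be tracked explicitly before Levy's lemma can deliver a failure probability exponentially small in $D$. The tail bound you quote, $\exp(-\Omega(D\cdot D^{-1/5}))$, ignores the Lipschitz constant entirely; since the exponent $1/10$ in the statement is exactly the slack available to absorb such losses, they cannot be waved away.
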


Note that, since the accessible region $l$ is exponentially smaller than the system size, the bound can be made arbitrary small while keeping the size of the matrices $D$ polynomial in the system size.

To prove the theorem, one needs the graphical Weingarten calculus provided in \cite{collins2010random} (see Sections \ref{sec:graphical-tensors} and \ref{sec:graphical-Wg-wick}) 
and an uniform estimate of the Weingarten function, more subtle than the one stated in Theorem \ref{thm:mob}.

Finally, in the same context of condensed matter physics, let us mention the work of Edelman and Movassagh, containing applications of random matrix theory and free probability to the study of the eigenvalue distribution of quantum many body systems having generic interactions \cite{movassagh2011density}.

\subsection{Multiplicative bounds for random quantum channels}

Once the additivity questions for the minimum $p$-R\'enyi entropy of random quantum channels had been settled in \cite{hayden2008counterexamples} and \cite{hastings2009superadditivity}, the attention shifted towards the \emph{amount} of the possible violations of the minimum output entropy. In \cite{montanaro2013weak}, Montanaro shows that random quantum channels are not very far from being additive by bounding the minimum output $\infty$-R\'enyi entropy of a tensor power of a channel by the same quantity for one copy of the channel. His idea is to bound the desired entropy by a \emph{additive quantity}, the norm of the partial transposition of the projection on the image subspace of the random isometry defining the channel. The following theorem is a restatement of \cite[Theorem 3]{montanaro2013weak}. 

\begin{theorem}
Let $\Phi: \mathcal M_d(\mathbb C) \to \mathcal M_k(\mathbb C)$ be a random quantum channel having ancilla dimension $n$. Suppose $k\leq n$, $\min \{d,k\} \geq 2 (\log_2 n)^{3/2}$ and $d=o(kn)$. 
Then, for any $p>1$, with high probability as $n \to \infty$, the following inequality holds
$\frac 1 r H^{\min}_p (\Phi^{\otimes r}) \geq \beta(1-1/p) H^{\min}_p (\Phi) $ where
$$\beta \simeq \begin{cases} 1/2 & \text{if } d \geq n/k\\ 
1 & \text{if } d \leq n/k\end{cases}.$$
\end{theorem}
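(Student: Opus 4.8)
The plan is to follow Montanaro's strategy, whose core is that the minimum output $\infty$-R\'enyi entropy of any tensor power of $\Phi$ is controlled by a genuinely \emph{additive} quantity attached to the dilation. Write $\Phi(X)=\trace_{\C^n}\!\big(V X V^*\big)$ for a Stinespring isometry $V:\C^d\to\C^k\otimes\C^n$, let $R=VV^*\in\M_{kn}(\C)$ be the orthogonal projection onto $\mathrm{Im}(V)$, and let $R^\Gamma$ denote its partial transpose on the ancilla factor $\C^n$. The target is the pair of inequalities
\[
\tfrac1r H_p^{\min}(\Phi^{\otimes r})\ \geq\ h\ \geq\ \beta(1-1/p)\,H_p^{\min}(\Phi),\qquad h:=-\log_2\|R^\Gamma\|_\infty,
\]
the left one being deterministic (for every $r$), the right one a comparison of random quantities that holds with high probability.

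\emph{Step 1 (deterministic reductions).} Two elementary inequalities do the bookkeeping. First, $\trace(\sigma^p)\geq\|\sigma\|_\infty^{\,p}$ for every state $\sigma$ and every $p\geq1$; maximising over the inputs of $\Phi$ turns this into $(1-1/p)\,H_p^{\min}(\Phi)\leq H_\infty^{\min}(\Phi)$. Second, unfolding $\|\Phi^{\otimes r}(\rho)\|_\infty$ and optimising over pure inputs and over the traced-out ancillae,
\[
2^{-H_\infty^{\min}(\Phi^{\otimes r})}
=\max_{\chi,\phi}\ \langle\chi\otimes\phi|\,R^{\otimes r}\,|\chi\otimes\phi\rangle
\ \leq\ \lambda_{\max}\!\big((R^{\otimes r})^{\Gamma}\big)
\ =\ \big\|(R^\Gamma)^{\otimes r}\big\|_\infty
\ =\ \|R^\Gamma\|_\infty^{\,r},
\]
where $\chi$ runs over unit vectors of $(\C^k)^{\otimes r}$ and $\phi$ over those of $(\C^n)^{\otimes r}$: the inequality uses only that a product-state projector stays a product-state projector under partial transpose, and the last two equalities record that the partial transpose of a tensor power is the tensor power of the partial transpose, so this quantity is \emph{exactly multiplicative}. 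As $H^p$ is non-increasing in $p$, $H_p^{\min}(\Phi^{\otimes r})\geq H_\infty^{\min}(\Phi^{\otimes r})\geq rh$. Combining with the first inequality, it remains to prove that, with high probability, $\|R^\Gamma\|_\infty\leq\mu_1^{\,\beta}$, where $\mu_1:=2^{-H_\infty^{\min}(\Phi)}=\max_{\chi\in\C^k}\big\|R(\ketbra{\chi}{\chi}\otimes I_n)R\big\|_\infty$ is the one-copy maximal output purity.

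\emph{Step 2 (random matrix estimates).} Both sides of $\|R^\Gamma\|_\infty\leq\mu_1^\beta$ concern the random projection $R$ onto a Haar-random $d$-dimensional subspace of $\C^k\otimes\C^n$. For the lower bound on $\mu_1$, note that even for a fixed direction $\chi_0$ the positive operator $R(\ketbra{\chi_0}{\chi_0}\otimes I_n)R$ has rank at most $\min(d,n)$ and trace $\langle\chi_0|\trace_n(R)|\chi_0\rangle$, which concentrates around $d/k$ by concentration of measure for $V$; hence $\mu_1\gtrsim\max(d,n)/(kn)$, in particular $\mu_1\gtrsim 1/k$ and $\mu_1\gtrsim d/(kn)$, with high probability. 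For the upper bound on $\|R^\Gamma\|_\infty$, write $R^\Gamma=\tfrac{d}{kn}I_{kn}+\big(R-\tfrac{d}{kn}I_{kn}\big)^\Gamma$; since $d=o(kn)$ one has $R\approx(kn)^{-1}GG^*$ with $G\in\M_{kn\times d}(\C)$ Gaussian, so the second term is, up to lower order, $(kn)^{-1}$ times the partial transpose of a centred Wishart matrix -- a Wigner-type random matrix on $\C^{kn}$ with off-diagonal entries of order $\sqrt d/(kn)$, whose spectral edge is therefore of order $\sqrt{d/(kn)}$. Making this a high-probability bound by the method of moments (estimating $\E\,\trace\big[(R^\Gamma-\tfrac{d}{kn}I)^{2m}\big]$ via a genus expansion and applying Markov with $m\sim\log n$) gives $\|R^\Gamma\|_\infty\lesssim\big(d/(kn)\big)^{1/2}$. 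Comparing with the previous line: since $d/(kn)\leq\mu_1$ always, $\|R^\Gamma\|_\infty\lesssim\mu_1^{1/2}$ -- this is $\beta\simeq\tfrac12$; and if moreover $d\leq n/k$ then $\big(d/(kn)\big)^{1/2}\leq 1/k\lesssim\mu_1$ -- this is $\beta\simeq1$. That is the dichotomy in the statement.

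The main obstacle is the upper bound on $\|R^\Gamma\|_\infty$. Partial transposition destroys positivity, so one loses the variational description of the top eigenvalue of a projection and must control the full operator norm of a non-positive random matrix, and doing so \emph{with high probability} -- rather than merely in expectation or as a spectral limit in the sense of Proposition~\ref{prop:thresholds-PPT} -- forces the number of moments to grow like $\log n$ while the non-leading terms of the genus expansion are suppressed only by inverse powers of $d$ and $k$; this is exactly where the hypothesis $\min\{d,k\}\geq 2(\log_2 n)^{3/2}$ is used, and also the reason the argument delivers only $\beta\simeq\tfrac12$ once $d\geq n/k$. Everything else -- the two deterministic inequalities and the lower bound on $\mu_1$ -- is routine.
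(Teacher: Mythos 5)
Your proposal is correct and takes exactly the route the paper attributes to Montanaro: control $H^{\min}_\infty(\Phi^{\otimes r})$ by the additive quantity $-\log\|R^\Gamma\|_\infty$ attached to the projection onto the image of the Stinespring isometry, then compare $\|R^\Gamma\|_\infty \approx \sqrt{d/(kn)}$ with the one-copy purity $\max(1/k, d/(kn))$ to get the dichotomy in $\beta$. The review itself gives no proof beyond this one-line idea and a citation to \cite{montanaro2013weak}, and your deterministic reductions and random-matrix estimates are a faithful (and correct) reconstruction of that argument, including the role of the hypothesis $\min\{d,k\}\geq 2(\log_2 n)^{3/2}$ in the moment-method step.
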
 

Soon after, Montanaro's ideas were pursued in \cite{fukuda2015additivity}. There, different additive quantities (e.g.~the operator norm of the partial transpose of the Choi matrix of the quantum channel) were used to bound the minimum output $2$-R\'enyi entropy. The results provide slight improvements, in the case of interest $p=1$ over the bounds from \cite{montanaro2013weak}. The following statement follows from \cite[Theorem 8.4]{fukuda2015additivity}.

\begin{theorem} Consider a sequence of random quantum channels $\Phi_n:\mathcal M_d(\mathbb C) \to \mathcal M_k(\mathbb C)$ with ancilla dimension $n$, where $k$ is a fixed parameter and $d \sim tnk$ for a fixed $t \in (0,1)$. Then, almost surely as $n \to \infty$, for all $p\in[0,2]$, there exist constants  $\alpha_p \in [0,1]$ such that, for all $r \geq 1$,
\begin{equation}
\frac 1 r H_p^{\min} (\Phi_n^{\otimes r}) \geq  \alpha_p H_p^{\min}(\Phi_n).
\end{equation}
The constants $\alpha_p$ satisfy the following relations
\begin{enumerate}
\item When $0<t<1/2$ is a constant,
$$ \alpha_p = o(1) + \frac{p-1}{2p} \left [ 1 + \frac {2 \log 2 + \log (1-t)}{\log t} \right] \cdot \mathbf{1}_{(1,2]}(p).$$
\item When $k$ is large and $t \asymp k^{-\tau}$ with $\tau >0$,
$$ \alpha^\Gamma_{p,k,t} = o(1) +  \begin{cases}
 \frac{p-1}{2p} &\quad \text{ if }  0< \tau \leq 1-1/p\\
 \tau  /2 &\quad \text{ if } 1-1/p \leq \tau \leq 2\\
1 &\quad \text{ if }  \tau \geq 2.
\end{cases} $$
\end{enumerate}
\end{theorem}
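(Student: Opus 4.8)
The plan is to follow the Montanaro strategy (with the $2$-R\'enyi refinement of \cite{fukuda2015additivity}): bound the intractable quantity $H_p^{\min}(\Phi_n^{\otimes r})$ from below by an entropic functional of $\Phi_n$ that is \emph{additive} under tensor products, and then evaluate both sides asymptotically with the Weingarten and free probability tools of Sections~\ref{sec:Weingarten}--\ref{sec:free-probability}. First a reduction: for $p\in(1,2]$ the map $x\mapsto x^p$ is convex on $[0,1]$, so $\rho\mapsto\trace(\Phi_n^{\otimes r}(\rho)^p)$ is convex and its maximum --- equivalently the minimum of $H_p^{\min}(\Phi_n^{\otimes r})$ --- is attained at a pure state, and the same holds for $\Phi_n$ itself; the analogous concavity argument handles $p\in[0,1)$, but there $H_p^{\min}\ge0$ while the asserted constant is $\alpha_p=o(1)$, so the inequality is essentially vacuous, and the endpoint $p=1$ follows by continuity of $p\mapsto H_p^{\min}$. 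From now on fix $p\in(1,2]$ and a pure input $\psi\in(\mathbb{C}^d)^{\otimes r}$.

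Write $\Phi_n(X)=\trace_{\mathbb{C}^n}(V_nXV_n^*)$ with $V_n$ a Haar random isometry and $P_n=V_nV_n^*\in \M_k(\mathbb{C})\otimes \M_n(\mathbb{C})$ the rank-$d$ image projection. The heart of the argument is a \emph{deterministic} inequality (the Montanaro bound, together with its $2$-R\'enyi refinement) bounding $\|\Phi_n^{\otimes r}(\psi\psi^*)\|_\infty$ --- or, in the sharper version, $\trace(\Phi_n^{\otimes r}(\psi\psi^*)^2)$ --- by the $r$-th power of an additive functional of $\Phi_n$ built from the operator norm of a suitable partial transpose of $P_n$ or of the Choi matrix $C_{\Phi_n}$. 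Multiplicativity of partial transposition, $\|(A^{\otimes r})^{\Gamma}\|_\infty=\|A^{\Gamma}\|_\infty^{\,r}$, is precisely what makes this bound additive on the logarithmic scale. Combined with the elementary sandwich $H^\infty\le H^p\le\frac{p}{p-1}H^\infty$, one obtains
\begin{equation*}
\tfrac1r H_p^{\min}(\Phi_n^{\otimes r})\ \ge\ \tfrac1r H_\infty^{\min}(\Phi_n^{\otimes r})\ \ge\ a_n,\qquad H_p^{\min}(\Phi_n)\ \le\ \tfrac{p}{p-1}\,H_\infty^{\min}(\Phi_n)+o(1),
\end{equation*}
where $a_n$ denotes $-\log$ of the per-copy additive functional. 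It therefore suffices to obtain the asymptotics of $a_n$ and of $H_\infty^{\min}(\Phi_n)$ and to check $a_n\ge\alpha_p\,H_p^{\min}(\Phi_n)+o(1)$.

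Two random matrix inputs are then needed, both in the regime $d\sim tnk$. For $a_n$: the normalized projection $P_n/d$ is the maximally mixed state on a Haar random $d$-dimensional subspace of $\mathbb{C}^k\otimes\mathbb{C}^n$, so the spectrum of its partial transpose is accessible by the machinery recalled around Proposition~\ref{prop:thresholds-PPT} --- the moment method of Section~\ref{sec:Weingarten} together with free probability --- giving a free difference of free Poisson laws (or a non-centered semicircular law in the balanced-type regime) with an explicit right edge; in the $k$-fixed, $t$-constant case this edge carries the combination $\log(4t(1-t))$ that appears in the statement, while for $t\asymp k^{-\tau}$ its order changes with $\tau$, and once the ancilla is overwhelmingly large ($\tau\ge2$) the partial transpose of $P_n$ is already asymptotically positive, so no improvement over the trivial $a_n=H_\infty^{\min}(\Phi_n)$ is available and $\alpha_p=1$. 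For $H_\infty^{\min}(\Phi_n)$ (and, if needed, $H_p^{\min}(\Phi_n)$ for $p<\infty$): by Theorem~\ref{libre} the largest squared Schmidt coefficient of a unit vector in the range of $V_n$ converges to $\|p_t\,q_{1/k}\,p_t\|_\infty$, where $p_t,q_{1/k}$ are free projections of traces $t$ and $1/k$ --- equivalently, the minimal $\infty$-R\'enyi (resp.\ $p$-R\'enyi) entropy over the limiting output body $K_{k,t}$ of Theorem~\ref{thm:output-eigenvalues-single-channel} --- which is again explicit. Dividing the two asymptotics and simplifying the logarithms reproduces $\frac{p-1}{2p}\bigl[1+\frac{2\log2+\log(1-t)}{\log t}\bigr]$ in the first regime; taking the better of the two candidate additive functionals (projection versus Choi matrix) and intersecting with the trivial bound $\alpha_p\le1$ produces the three-way split $\frac{p-1}{2p}$ / $\tau/2$ / $1$ in the second regime, with the crossovers occurring exactly at $\tau=1-1/p$ and $\tau=2$. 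Finally, since $H_\infty^{\min}(\Phi_n)$ and $\|P_n^{\Gamma}\|_\infty$ are Lipschitz functions of $V_n$, concentration of measure (L\'evy's lemma) upgrades all these limits to almost sure statements, absorbing the fluctuations into the $o(1)$.

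The main obstacle is the sharp estimate of the right edge of the spectrum of the partial transpose --- of $P_n$ and of $C_{\Phi_n}$ --- in \emph{each} regime, with the correct constants, together with the bookkeeping of which additive functional is optimal for which pair $(p,\tau)$; checking that the resulting ratio of logarithms collapses to exactly the stated piecewise formula, with crossovers at $\tau=1-1/p$ and $\tau=2$, is where the real work sits. By contrast the reductions and the deterministic bound are soft, and the concentration step is routine.
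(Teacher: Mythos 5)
Your skeleton is the right one, and it is essentially the route the paper attributes to \cite{montanaro2013weak} and \cite{fukuda2015additivity}: lower bound the tensor-power entropy by an additive functional given by the operator norm of a partial transpose (of the image projection $P_n$ or of the Choi matrix), compute its almost sure limit by Weingarten/free-probability edge estimates, and compare with the asymptotics of the single-copy minimum output entropy. However, there is a genuine gap in your reduction: you dismiss $p\in[0,1)$ as ``essentially vacuous'' on the grounds that the asserted constant is $\alpha_p=o(1)$. That is only true in the first regime ($t$ constant, where the indicator $\mathbf{1}_{(1,2]}(p)$ kills the main term). In the second regime ($k$ large, $t\asymp k^{-\tau}$), since $1-1/p\le 0$ for $p\le 1$, the theorem asserts the nontrivial values $\alpha_p=\tau/2+o(1)$ for $0<\tau\le 2$ and $\alpha_p=1+o(1)$ for $\tau\ge 2$, also for $p\in[0,1]$. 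Your displayed chain cannot recover these, because the sandwich $H^p\le\frac{p}{p-1}H^\infty$ degenerates at $p\le 1$. The repair --- and the reason the theorem is stated precisely for $p\in[0,2]$, as the paper's own prose indicates --- is to route the deterministic bound through the $2$-R\'enyi entropy: the additive quantity lower-bounds $H_2^{\min}$, so $H_p^{\min}(\Phi_n^{\otimes r})\ge H_2^{\min}(\Phi_n^{\otimes r})\ge r\,a_n$ for every $p\le 2$, and for $p\le 1$ one divides by the trivial denominator bound $H_p^{\min}(\Phi_n)\le\log k$, which produces $\tau/2$; the cap $\alpha_p\le 1$ (from subadditivity with product inputs) gives the third branch. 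Your $H_\infty$-based chain (Montanaro's projection bound) is correct as far as it goes, but as written it only yields the $p>1$ branch.

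A second, smaller inaccuracy: you attribute the crossover at $\tau=1-1/p$ to ``taking the better of the two candidate additive functionals (projection versus Choi matrix)''. The additive functionals do not depend on $p$, so a $p$-dependent crossover cannot arise from that choice; it arises from comparing the two available upper bounds on the single-copy denominator, namely $H_p^{\min}(\Phi_n)\le\frac{p}{p-1}H_\infty^{\min}(\Phi_n)$ (whose limit scales like $\frac{p}{p-1}\,\tau\log k$ for small $\tau$) against $H_p^{\min}(\Phi_n)\le\log k$; these two bounds exchange roles exactly at $\tau=1-1/p$. With these two corrections --- the $H_2$ routing for all $p\in[0,2]$ and the denominator bookkeeping --- your outline matches the actual argument, and the remaining substance is, as you say, the edge computation for the partially transposed Choi matrix in each scaling of $t$, together with the almost sure upgrade by concentration.
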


Incidentally, since the limiting spectrum of the partial transposition of the Choi matrix is computed in \cite{fukuda2015additivity}, the authors show the existence of PPT quantum channels violating generically the additivity of the minimum $p$-R\'enyi entropy, for all $p \geq 30.95$, see \cite[Theorem 10.5]{fukuda2015additivity}.

\subsection{Sum of random projections on tensor products}

Ambainis, Harrow and Hastings \cite{ambainis2012random}
consider a problem in random matrix theory that is inspired by quantum information theory:
determining the largest eigenvalue of a sum of p random product states 
in $(\C^d){\otimes k}$
 where $k$ and $p/d^k$
are fixed while $d\to\infty$. When $k = 1$, the Mar{\v{c}}enko-Pastur law determines asymptotically the largest eigenvalue 
$(1+\sqrt{p/d^k})^2$, the smallest eigenvalue, and 
the spectral density. 

More precisely, their setup is as follows:
for each dimension $d$, let $(p_d^{(i)})_{i\in \{1,\ldots , k\} }$
be independent uniformly distributed rank one random 
projections on $\C^d$. 
\begin{theorem}
As $d\to\infty$, the operator norm of
$$\sum_i p_d^{(1)}\otimes \cdots \otimes p_d^{(k)}$$
still behaves almost surely like
$(1+\sqrt{p/d^k})^2$ and the spectral density approaches that of the
Mar{\v{c}}enko-Pastur law \eqref{eq:Marchenko-Pastur}.
\end{theorem}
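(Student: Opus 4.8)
We first fix notation: write $\Pi_a=\bigotimes_{j=1}^{k}\psi_a^{(j)}(\psi_a^{(j)})^*$ for the $a$-th random product projection, where the $\psi_a^{(j)}$, $1\le a\le p$, $1\le j\le k$, are i.i.d.\ uniform unit vectors of $\C^d$, and set $S_d=\sum_{a=1}^{p}\Pi_a\in\M_{d^k}(\C)$, where $p=p_d$ grows with $d$ so that $p_d/d^k\to c$. The plan is to run the \emph{method of moments}, taking as a guide the observation that if the product vectors $\psi_a^{(1)}\otimes\cdots\otimes\psi_a^{(k)}$ were replaced by uniform unit vectors of $\C^{d^k}$ then $S_d$ would be (essentially) a Wishart matrix and Theorem~\ref{thm:marchenko-pastur} would give precisely the assertion, with limiting spectral distribution $\pi_c$ and norm $(1+\sqrt c)^{2}$; for $k=1$ the two ensembles share the same limit and the claim is Theorem~\ref{thm:marchenko-pastur} itself. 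The substance of the statement is that the tensor structure is invisible in the limit, even though a product vector is far from a generic vector of $\C^{d^k}$: its fourth moment exceeds that of a uniform vector by a factor $\sim 2^{k}$, so the top of the spectrum is not governed by the crude bounds that would suffice for a true Wishart matrix.

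The first step is to prove $d^{-k}\,\E\,\trace(S_d^{m})\to\int x^{m}\,d\pi_c(x)$ for each fixed integer $m$. Expanding the power and using independence across the $k$ tensor legs,
\[
\E\,\trace(S_d^{m})=\sum_{(a_1,\dots,a_m)\in\{1,\dots,p\}^{m}}\ \prod_{j=1}^{k}\E\big[\langle\psi_{a_1}^{(j)},\psi_{a_2}^{(j)}\rangle\langle\psi_{a_2}^{(j)},\psi_{a_3}^{(j)}\rangle\cdots\langle\psi_{a_m}^{(j)},\psi_{a_1}^{(j)}\rangle\big].
\]
Each leg-expectation depends only on the set partition $\pi$ of $\{1,\dots,m\}$ defined by $i\sim i'\iff a_i=a_{i'}$, and the number of tuples with a given kernel $\pi$ is $p^{\downarrow|\pi|}\sim c^{|\pi|}d^{k|\pi|}$. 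I would evaluate the leg-expectation asymptotically using that $\psi_a^{(j)}$ has the law of the first column of a Haar unitary (Proposition~\ref{prop:random-pure-states}(3)) together with the Weingarten formula and its $n\to\infty$ asymptotics (Theorems~\ref{thm:Weingarten-formula} and~\ref{thm:mob}); equivalently one may use the Gaussian model $\psi_a^{(j)}=g_a^{(j)}/\|g_a^{(j)}\|$ with the Wick formula (Theorem~\ref{thm:wick-formula}), or the graphical removal procedure (Theorem~\ref{thm:Wg_diag}). A short induction on $m$ (with the distance bookkeeping of Lemma~\ref{lem:S_p} organizing the error terms) then shows that for a \emph{non-crossing} $\pi$ with $r$ blocks the leg-expectation equals $d^{-(r-1)}(1+O(d^{-1}))$, whereas for a \emph{crossing} $\pi$ it is of strictly smaller order in $d$. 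Consequently the $\pi$-contribution to $d^{-k}\E\,\trace(S_d^{m})$ converges to $c^{|\pi|}$ when $\pi$ is non-crossing and to $0$ otherwise, and summing over $NC(m)$ gives the free-Poisson moments $\sum_{\pi\in NC(m)}c^{|\pi|}=\int x^{m}\,d\pi_c(x)$ --- exactly as in the plain Wishart computation, the single extra feature being the overall exponent $k$, which only rescales the competition without changing its winners.

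The second step promotes this to an almost sure statement for the empirical spectral distribution. A variance estimate of the same combinatorial flavour --- expanding $\E[\trace(S_d^{m})^2]-(\E\,\trace(S_d^{m}))^{2}$ over a pair of cycles and checking that the putative leading ``disconnected'' contributions cancel --- shows that $\mathrm{Var}\big(d^{-k}\trace(S_d^{m})\big)$ decays polynomially in $d$; combined with a standard interpolation argument (or, alternatively, with L\'evy-type concentration for functions of the $\psi_a^{(j)}$ on the product of unit spheres), this yields a.s.\ weak convergence of the empirical spectral distribution of $S_d$ to $\pi_c$. In particular $\liminf_d\|S_d\|\ge(1+\sqrt c)^{2}$ almost surely, since $\pi_c$ puts mass arbitrarily close to its upper edge $(1+\sqrt c)^{2}$.

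The last step --- the operator-norm upper bound, i.e.\ the absence of eigenvalue outliers --- is the main obstacle. Here I would re-run the moment computation with a slowly growing exponent $m=m_d\to\infty$ and aim to establish $\E\,\trace(S_d^{2m_d})\le d^{k}(1+\sqrt c)^{4m_d}(1+o(1))$; then Markov's inequality applied to $\|S_d\|^{2m_d}\le\trace(S_d^{2m_d})$, followed by Borel--Cantelli with $m_d$ of order $\log d$, gives $\limsup_d\|S_d\|\le(1+\sqrt c)^{2}$ a.s.\ and finishes the proof. The difficulty, and the precise point at which the tensor structure bites, is that one must show the \emph{total} contribution of all crossing partitions of $\{1,\dots,2m_d\}$ stays negligible after summing over the super-exponentially many of them, with constants that grow exponentially in the number of ``crossing overlaps'' (the factor $2^{k}$ noted above being the $m=4$ instance of such a constant); this forces a delicate joint choice of $m_d$ against $d$ together with control of the sub-leading terms in the Weingarten/Wick expansion sharper than the crude $O(n^{-2})$ of Theorem~\ref{thm:mob}. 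A naive substitute --- an $\varepsilon$-net over the unit sphere of $\C^{d^k}$, of cardinality $e^{O(d^{k})}$, together with a large-deviation bound for $\langle v,S_d v\rangle=\sum_a\prod_{j}|\langle v_j,\psi_a^{(j)}\rangle|^{2}$ and a continuity argument --- fails exactly because the inflated higher moments of product vectors make the per-point deviation estimate too weak to beat the size of the net, so the refined moment bookkeeping appears to be unavoidable.
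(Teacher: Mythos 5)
Your overall strategy --- the method of moments, with fixed moments for the bulk and moments of slowly growing order for the edge --- is exactly the one the paper attributes to Ambainis, Harrow and Hastings; the survey itself offers no proof beyond the remark that ``direct computation of moments of high order allow to conclude,'' so there is no more detailed argument in the text to compare against. Your first two steps are sound: the factorization of $\E\,\trace(S_d^m)$ over the $k$ independent legs, the reduction to the kernel partition $\pi$, the count $p^{\downarrow|\pi|}\sim c^{|\pi|}d^{k|\pi|}$, the estimate $d^{-(r-1)}(1+O(d^{-1}))$ per leg for non-crossing $\pi$ with $r$ blocks, and the conclusion that only non-crossing partitions survive, yielding $\sum_{\pi\in NC(m)}c^{|\pi|}$, are all correct; the variance bound then upgrades this to almost sure weak convergence of the empirical spectral distribution, hence to the lower bound $\liminf_d\|S_d\|\ge(1+\sqrt c)^2$.

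The gap is the operator-norm upper bound, which is the actual content of the theorem beyond the spectral-density statement. You correctly identify that one must take $m=m_d$ of order $\log d$ and prove $\E\,\trace(S_d^{2m_d})\le d^k(1+\sqrt c)^{4m_d}(1+o(1))$, and you correctly identify why this is hard: the number of crossing partitions of $\{1,\dots,2m_d\}$ is super-exponential in $m_d$, their individual contributions carry combinatorial constants that compound (your $2^k$ fourth-moment observation being the simplest instance), and the $O(n^{-2})$ error in Theorem~\ref{thm:mob} is not uniform enough over $\mathcal S_{2m_d}$ when $m_d\to\infty$. But having named the obstacle you do not overcome it: no bound on the total crossing contribution is proved, no admissible choice of $m_d$ against $d$ is justified, and you explicitly concede that the ``refined moment bookkeeping'' is left undone. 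As it stands the proposal establishes the spectral-density half of the statement and only the easy direction of the norm statement; the edge estimate, which is precisely where the tensor structure could in principle produce outliers, remains unproved and would require reproducing the quantitative partition counting of \cite{ambainis2012random} (or a Schwinger--Dyson argument of comparable precision).
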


Their proof is essentially based on moment
methods. 
Direct computation of moments of high order 
allow to conclude. Various methods are proposed
by the author, including methods of Schwinger-Dyson
type. It would be interesting to 
see whether these methods that are well established
in theoretical physics and random matrix
theory could be of further use in quantum
information theory. 
This result generalizes  the random matrix theory result to the random tensor case, and for the
records, this is arguably one of the first
precise results about the convergence of 
norms of sums of tensor products when 
the dimensions of each legs are the same. 

The original motivation of the authors
emanates in part from problems
 related to quantum data-hiding. 
 We refer to \cite{ambainis2012random}
 for the proofs and motivations.

\subsection{Area laws for random quantum states associated to graphs}
\label{sec:area-laws}

We would like to generalize now Proposition \ref{prop:area-law-adapted} to the general case of non-adapted marginals. The theorem in this section makes use of random matrix theory techniques, more precisely is build on the moment computation done in \cite[Theorem 5.4]{collins2010randoma}.

Before we state the area law, we need to properly define the \emph{boundary} of a the marginal induced by a partition $\{S,T\}$ of the total Hilbert space. In the adapted marginal case discussed in Section \ref{sec:random-states-graphs}, this definition was natural; the general situation described here requires a preliminary optimization procedure. 

To keep things simple, assume that all local Hilbert spaces have the same dimension $N$. A partition $\{S,T\}$ defines, at each vertex of the graph, a pair of non-negative integers $(s(v), t(v))$ such that $s(v) + t(v) = \mathrm{deg}(v)$ and $\sum_v s(v) = |S|$, $\sum_v t(v) = |T|$. The randomness in the unitary operators $U_v$ acting on the vertices of $G$ introduces an ``incertitude'' on the choice of the copies of $\mathbb C^N$ which should be traced out at each vertex $v \in G$. The following definition of the boundary volume removes this incertitude by performing an optimization over all possible choices for the partial trace. Note that the case of adapted marginals (see Definition \ref{def:boundary-volume-adapted}) does not require this optimization step, since there is no incertitude (at each vertex, either all or none of the subsystems are traced out).

\begin{definition}\label{def:area-boundary}
For a graph $G$ and a marginal $\rho_S$ of the graphs state $\varphi_G$ defined by a partition $\{S,T\}$, define the \emph{boundary volume} of the partition as 
$$|\partial S| = \max_\alpha \mathrm{cr}(\alpha),$$
where $\alpha$ is a function $\alpha : [2m] \to \{S,T\}$ defining which copies of $\mathbb C^n$ are traced out, and $\mathrm{cr}(\alpha)$ is the number of \emph{crossings} in the assignment $\alpha$, that is the number of edges in $G$ having one vertex in $\alpha^{-1}(S)$ and the other one in $\alpha^{-1}(T)$. 
\end{definition}

The following theorem is the main result of \cite{collins2013area}, showing that the area law holds for random graph states, with the appropriate definition of the boundary volume. Moreover, one can compute the correction term to the area law, a quantity which depends on the topology of the graph $G$. We refer the interested reader to \cite[Sections 5,6]{collins2013area} for the definition of the correction term $h_{G,S}$ and the proofs.

\begin{theorem}
Let $\rho_S$ be the marginal $\{S, T\}$ of a graph state $\varphi_G$. Then, as $N \to \iy$, the  \emph{area law} holds, in the following sense
\begin{equation}\label{eq:area-law}
	\E H(\rho_S) = |\partial S| \log N - h_{\Gamma, S} + o(1),
\end{equation}
where $|\partial S|$ is the area of the boundary of the partition $\{S,T\}$ from Definition \ref{def:area-boundary} and $h_{\Gamma, S}$ is a positive constant, depending on the topology of the graph $G$ and on the partition $\{S,T\}$ (and independent of $N$).
\end{theorem}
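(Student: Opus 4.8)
The plan is to reduce the computation of $\E H(\rho_S)$ to a moment computation for $\trace(\rho_S^p)$ and then to analyze the resulting sum over permutations, extracting the leading $\log N$ term and the constant correction $h_{\Gamma,S}$. First I would recall the graphical/Weingarten setup: the pure state $\varphi_G = (\bigotimes_i U_i)(\bigotimes_j \omega_j)$ is built from Haar unitaries $U_i$ on the local Hilbert spaces (each of dimension $N^{d_i}$) and maximally entangled states $\omega_j$ along the edges, so $\rho_S = [\id_S \otimes \trace_{S^c}](\varphi_G\varphi_G^*)$ is a random matrix whose moments can be computed by the Weingarten formula \eqref{eq:Weingarten}. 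Using the graphical calculus of Section \ref{sec:graphical-tensors} and Theorem \ref{thm:Wg_diag}, one writes, for each integer $p\geq 1$,
\[
\E \trace(\rho_S^p) = \sum_{(\alpha_i,\beta_i)_i} \left(\prod_i \Wg(N^{d_i}, \alpha_i\beta_i^{-1})\right) N^{\text{(some explicit exponent)}},
\]
where the permutations $\alpha_i,\beta_i \in \S_p$ are attached to the vertices $i$ of $G$, and the exponent of $N$ is a combinatorial functional counting loops formed in the removed diagram — this is precisely the content of \cite[Theorem 5.4]{collins2010randoma}, which I would invoke rather than rederive.

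Next I would identify the leading order as $N\to\infty$. Using the asymptotics of the Weingarten function from Theorem \ref{thm:mob}, namely $\Wg(N^{d_i},\sigma) = N^{-d_i(p+|\sigma|)}(\Mob(\sigma)+O(N^{-2}))$, together with the geodesic inequalities on $\S_p$ from Lemma \ref{lem:S_p} (triangle inequality, parity, and the characterization of geodesics via non-crossing partitions), one shows that the dominant contribution comes from configurations where all $\alpha_i = \beta_i$ lie on geodesics between $\id$ and the full cycle, and that the top power of $N$ in $\E\trace(\rho_S^p)$ is $N^{(1-p)|\partial S|}\cdot\dim(\rho_S)$-type scaling, i.e. the rescaled matrix $N^{|\partial S|}\rho_S$ has a limiting moment sequence. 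The optimization defining $|\partial S| = \max_\alpha \mathrm{cr}(\alpha)$ in Definition \ref{def:area-boundary} appears exactly here: among the choices of which copies of $\C^N$ to trace out at each vertex (encoded by a function $\alpha:[2m]\to\{S,T\}$), the dominant one maximizes the number of surviving edge-crossings, because each crossing edge contributes a factor $N$ to the rank of the surviving maximally entangled piece. The limiting distribution of $N^{|\partial S|}\rho_S$ is then a free-probabilistic object — a mixture/free convolution of a projection with free Poisson pieces governed by the non-crossing geometry of $G$ relative to the partition — and I would read off the constant $h_{\Gamma,S}$ as (minus) the derivative at $p=1$ of the limiting Rényi functional, which by a standard interchange-of-limits argument equals the von Neumann entropy of the limiting object minus the $|\partial S|\log N$ divergence.

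Finally, to pass from moments to the entropy itself I would use concentration of measure (Lévy's lemma on the unitary groups $\U(N^{d_i})$, as in Section \ref{sec:random-output}) to control fluctuations, and a standard analytic argument to exchange $\lim_{N}$ with the $p\to 1$ derivative defining $H = -\trace\rho\log\rho$ — this requires a uniform-in-$N$ bound on the moments near $p=1$ and control of the smallest nonzero eigenvalue, which follows from the strong convergence in Theorem \ref{libre}. The main obstacle, I expect, is precisely this last interchange together with the bookkeeping of the subleading $O(1)$ term: extracting $h_{\Gamma,S}$ requires not just the leading power of $N$ but the full $O(1)$ coefficient in the expansion of each $\E\trace((N^{|\partial S|}\rho_S)^p)$, which means tracking the $\Mob$-weighted sum over \emph{all} configurations achieving the maximal crossing number (not just the dominant one), and showing this sum assembles into the moments of an explicit, topology-dependent limiting measure whose entropy is finite. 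This combinatorial identification of the correction term — showing it depends only on the graph topology and the partition, independently of $N$ — is the technical heart of \cite{collins2013area}, and I would structure the proof so that everything reduces to it.
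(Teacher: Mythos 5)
The paper does not prove this theorem itself: it states it as the main result of \cite{collins2013area}, built on the moment computation of \cite[Theorem 5.4]{collins2010randoma}, and your outline follows exactly that route --- graphical Weingarten moment computation, dominant-term analysis identifying the maximal crossing number of Definition \ref{def:area-boundary}, a limiting spectral object for the rescaled marginal $N^{|\partial S|}\rho_S$ whose entropy functional produces $h_{\Gamma,S}$, and a final moments-to-entropy passage. So at the level of strategy your proposal matches the proof the paper relies on. One caveat: your justification of the moments-to-entropy interchange invokes ``strong convergence in Theorem \ref{libre}'', but that theorem only gives almost sure convergence in $*$-distribution (no operator-norm or smallest-eigenvalue control), and since the statement concerns the \emph{average} entropy, L\'evy's lemma is not the operative ingredient either; in \cite{collins2013area} this step rests on uniform moment estimates for the rescaled marginal rather than on the concentration-plus-edge-control route you sketch.
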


\section{Conclusions and open questions}
\label{sec:questions}

We finish this review article with
a series of questions that seem to be 
of interest at the intersection
of random matrix related techniques, 
and quantum information / quantum mechanics. 

In relation to the various threshold result from Section \ref{sec:thresholds}, we list next several important open questions. 

\begin{question}\label{qst:threshold-sep}
Is it possible to remove the $\log$ factors from Theorem \ref{thm:threshold-sep} and to obtain a sharper threshold result for the set $\mathcal{SEP}$ of separable states?
\end{question}

Regarding the hierarchy of $r$-extendibility criteria, the upper bound corresponding to the threshold result in \ref{prop:extendibility} is open, see \cite[Section 9.2]{lancien2015extendibility}. 

\begin{question}\label{qst:extendibility}
Find a constant $c_1 \geq (r-1)^2/(4r)$ such that random quantum states having distribution $\nu_{n^2, c_1n^2}$ are, with high probability as $n \to \infty$, $r$-extendible. 
\end{question}

Regarding the random entanglement criteria introduced in Theorem \ref{thm:k-positivity-from-mu}, one can define
$$K_{\mu,m}=\{ \rho \in \mathcal D_{nm} \, : \, [f_d\otimes \mathrm{id}_m] (\rho) > 0 \text{ almost surely, for $d$ large enough}\}.$$
The following question, addressing the global power of such random criteria, was left open in \cite{collins2015random}.
\begin{question}\label{qst:random-entanglement-criteria}
Define the set of quantum states satisfying \emph{all} random criteria from Theorem \ref{thm:k-positivity-from-mu}
$$K^{free}_{n,k,m} := \bigcap_{\mu \, : \, \mathrm{supp} \left( \mu^{\boxplus n/k} \right) \subset [0, \infty)} K_{\mu , m}.$$
Can one give an analytical description of $K^{free}_{n,k,m}$? It was shown in \cite[Proposition 3.7]{collins2015random} that the only pure states contained in $K^{free}_{n,k,m}$ are the separable (product) ones. Are there values of the parameters $n,k,m$ for which the set $K^{free}_{n,k,m}$ is precisely the set of $k$-separable states from $\mathcal D_{nm}$?
\end{question}

In Section \ref{sec:maximum-entropy-random-MPS}, we have discussed a random model for matrix product states, and we have shown it obeys the maximum entropy principle of Jaynes. There are also natural questions related 
to quantum spin chains: 
\begin{question}
In Section \ref{sec:area-laws},
we stated an area law for random quantum states. 
Given a random Hamiltonian $H_N$ acting on 
$\C^N$, let $H^{(i)}$ be the operator 
obtained from $H_N$ acting 
on $(\C^N)^{\otimes k}$ by the action of
$H_N$ on the $i$th leg, and identity elsewhere. 
We assume that we come up with a model
with a gap, i.e. the difference between 
its smallest eigenvalue and its second
smallest eigenvalue is uniform. 
It follows from results by 
Hastings \cite{hastings2007entropy}
that the ground state of the Hamiltonian 
$\sum H_N^{(i)}$ satisfies an area law. 
If $H_N$ has some randomness in addition,
can we obtain more precise results, e.g.
regarding the distribution of the ground
state? 
In the same vein, can random techniques
allow us to obtain results for other
topologies, e.g. in the 2D context?
\end{question}

Let us now consider some open questions in quantum information theory, related to random matrices. 

As discussed in Section \ref{sec:random-states}, there are several ways in which one can define random quantum states. All classes of probability measures discussed in Section \ref{sec:random-states} are very well motivated, both from the mathematical and the physical standpoints. In \cite{nechita2012random}, the authors introduce a new ensemble of random quantum states, by considering iterations of random quantum channels. The following question was asked in \cite[Section 4]{nechita2012random}. 

\begin{question}
Compute the statistics of the probability measure $\nu_{b}$ on the set of quantum states $\mathcal D_n$ defined as follows. For a probability vector $b \in \Delta_k$, consider the quantum channel 
$$\Phi(X) = [\mathrm{id}_n \otimes \mathrm{Tr}_k](U(X \otimes \mathrm{diag}(b))U^*),$$
where $U \in \mathcal U_{nk}$ is a random Haar unitary. Then, $\nu_b$ is the probability distribution of the \emph{unique} invariant state of $\Phi$ (uniqueness is shown in \cite[Theorem 4.4]{nechita2012random}. 
\end{question}

Regarding the various counterexamples in the literature for the minimum output entropy and other capacity-related questions, we list next several open problems. 

\begin{question}\label{qst:non-random-counterexamples}
Construct explicit, non-random counterexamples to the additivity of the $p$-R\'enyi entropy, in the range $p \in [1,2]$. 
\end{question}

\begin{question}\label{qst:Bell-optimal}
Is the maximally entangled state $\Omega$ the actual minimizer of the minimum output entropy for a pair of conjugate random quantum channels $\Phi \otimes \bar \Phi$?
\end{question}

\begin{question}
Does a pair $\Phi \otimes \Psi$ of \emph{independent} random quantum channels violate additivity of the quantities $H_p^{\min}(\cdot)$? 
\end{question}

Regarding the known violations of the additivity of the MOE entropy for pairs of conjugate channels, it is important to note that Theorem \ref{thm:violation-p-1} only allows to obtain bounds on the output dimension of the random channels. Previous results (see, e.g.~\cite{fukuda2010comments}) allow to bound also the input dimension. The approach used in \cite{belinschi2012eigenvectors,belinschi2013almost}, using free probability, uses estimates of objects existing at the limit where the input dimension is infinity. It would thus be desirable, in this framweork, to be able to work at finite input dimension, and thus bound \emph{all} the relevant parameters which allow for additivity violations. 
\begin{question}
A random contraction is known to be determinantal \cite{metcalfe2013universality} 
and the determinant involves contour integrals.
So far, many random matrix techniques 
used for QIT rely either on concentration of
measure, or on moment methods. Is it possible
to use complex analysis methods
(steepest descent, Riemann-Hilbert problem
analysis) in order to refine existing estimates.
For example, can such estimates give 
bounds for dimensions of input spaces for 
violation of MOE?
\end{question}

Finally, we would like to end the current review with a very important open question, regarding different regularized quantities for random quantum channels. 
\begin{question}
Compute the almost sure limit of the regularized $H_p^{\min}(\cdot)$ quantities, the Holevo capacity and the classical capacity for random quantum channels.
\end{question}

\bigskip

\noindent \textit{Acknowledgements.}
B.C.'s research was partly supported by NSERC, ERA, and Kakenhi funding. I.N.'s research has been supported by a von Humboldt fellowship and by the ANR projects {OSQPI} {2011 BS01 008 01} and {RMTQIT}  {ANR-12-IS01-0001-01}. Both B.C. and I.N. were supported by the ANR project {STOQ}  {ANR-14-CE25-0003}.

\bibliography{review-rmt-jmp}{}
\bibliographystyle{alpha}

\end{document}